\newcommand{\R}{\mathbb{R}}\newcommand{\Z}{\mathbb{Z}}\newcommand{\F}{\mathcal{F}}\newcommand{\C}{\mathbb{C}}\newcommand{\N}{\mathbb{N}}
\newcommand{\tf}{\mathrm{tf}}
\newcommand{\lin}{\mathrm{lin}}
\newcommand{\abs}{\mathrm{abs}}
\newcommand{\ri}{\mathrm{i}}
\newcommand{\El}{\mathcal{L}}
\newcommand{\ord}{\mathcal{O}}
\newcommand{\A}{\mathcal{A}}
\newcommand{\re}{\mathrm{e}}
\newcommand{\de}{\mathrm{d}}
\newcommand{\Es}{\mathcal{S}}
\newcommand{\E}{\mathcal{E}}
\newcommand{\D}{\mathcal{D}}
\newcommand{\dell}{\epsilon_2}
\newcommand{\Y}{\mathcal{Y}}
\newcommand{\W}{\mathcal{W}}
\newcommand{\G}{\mathcal{G}}
\newcommand{\Zet}{\mathcal{Z}}
\newcommand{\U}{\mathcal{U}}
\newcommand{\B}{\mathcal{B}}
\newcommand{\ad}{\mathrm{ad}}
\numberwithin{equation}{section}
\def\epsilon{\varepsilon}
\def\Re{\mathrm{Re}}
\def\Im{\mathrm{Im}}
\newtheorem{theorem}{Theorem}[section]\newtheorem{lemma}[theorem]{Lemma}
\newtheorem{definition}[theorem]{Definition}
\newtheorem{proposition}[theorem]{Proposition}
\newtheorem{corollary}[theorem]{Corollary}
\newtheorem{remark}[theorem]{Remark}
\newenvironment{Acknowledgment}%
 {\begin{trivlist}\item[]\textbf{Acknowledgments.}}{\end{trivlist}}
\begin{document}

\title{Spectral stability of pattern-forming fronts in the complex Ginzburg-Landau equation with a quenching mechanism}
\author{Ryan Goh and Bj\"{o}rn de Rijk}

\newcommand{\Addresses}{{
  \bigskip
  \footnotesize

  R.~Goh, \textsc{Department of Mathematics and Statistics, Boston University, 111 Cummington Mall, Boston,  MA 02215, USA. ~}\nopagebreak
  \textit{E-mail address:} \texttt{rgoh@bu.edu}

  \medskip

  B.~de Rijk, \textsc{Zentrum Mathematik, Technische Universit\"at M\"unchen, Boltzmannstr.~3, 85748 Garching bei M\"unchen, Germany. ~}\nopagebreak
  \textit{E-mail address:} \texttt{bjoern.de-rijk@tum.de}
}}

\maketitle
\begin{abstract}
We consider pattern-forming fronts in the complex Ginzburg-Landau equation with a traveling spatial heterogeneity which destabilizes, or quenches, the trivial ground state while progressing through the domain. We consider the regime where the heterogeneity propagates with speed $c$ just below the linear invasion speed of the pattern-forming front in the associated homogeneous system. In this situation, the front locks to the interface of the heterogeneity leaving a long intermediate state lying near the unstable ground state, possibly allowing for growth of perturbations. This manifests itself in the spectrum of the linearization about the front through the accumulation of eigenvalues onto the absolute spectrum associated with the unstable ground state. As the quench speed $c$ increases towards the linear invasion speed, the absolute spectrum stabilizes with the same rate at which eigenvalues accumulate onto it allowing us to rigorously establish spectrally stability of the front in $L^2(\R)$.

The presence of unstable absolute spectrum poses a technical challenge as spatial eigenvalues along the intermediate state no longer admit a hyperbolic splitting and standard tools such as exponential dichotomies are unavailable. Instead, we projectivize the linear flow, and use Riemann surface unfolding in combination with a superposition principle to study the evolution of subspaces as solutions to the associated matrix Riccati differential equation on the Grassmannian manifold.  Eigenvalues can then be identified as the roots of the meromorphic Riccati-Evans function, and can be located using winding number and parity arguments.

\textbf{Keywords.} Pattern-forming fronts, spectral stability, heterogeneity, absolute spectrum, geometric desingularization, Riccati-Evans function.

\textbf{Mathematics Subject Classification.} 35B36, 35B35, 34A26.
\end{abstract}

\Addresses

\section{Introduction}
In many physical settings spatial heterogeneities and growth processes have been shown to effectively mediate and control the formation of regular spatial patterns. Here, instead of building a periodic structure from small random fluctuations or a localized perturbation of a homogeneous unstable state, a heterogeneity moves through a spatial domain, progressively exciting, ``triggering," or ``quenching" a system into an unstable state, from which patterns can nucleate. By controlling this excitation, patterns can be precisely selected, while suppressing the formation of defects. Examples of such phenomena occur in various natural and experimental settings such as light-sensing reaction-diffusion systems~\cite{konow2019turing,cima,somberg2019growth}, directional solidification of crystals~\cite{zigzageutectic}, ion-beam milling~\cite{bradley}, and phase separative systems~\cite{foard, krekhov,ly2020two}.

Most of the recent mathematical work has focused on bifurcation and existence of pattern-forming fronts in the presence of such a heterogeneity, connecting a stable pattern to the stable ground state ahead of the heterogeneity. In a variety of prototypical partial differential equation models with such spatial heterogeneities~\cite{avery2019growing,beekie,GS14,gs2,gs3,monteiro2017phase,monteiro2020swift}, tools from dynamical systems theory and functional analysis have been employed to establish fronts for various nonlinearities and quenching speeds in one or two spatial dimensions. Broadly speaking, these works focus on how characteristics of the pattern (i.e.~wavenumber, amplitude, and orientation) depend on the speed and structure of the heterogeneity. In comparison, relatively little work has been done to characterize temporal stability of such pattern-forming fronts. Here one wishes to understand how localized or bounded perturbations of the front grow or decay in a suitable norm.

In this work, we consider stability of pattern-forming fronts (see Figure~\ref{fig:prof}) in a prototypical model for pattern-formation: the  complex Ginzburg-Landau (cGL) equation with \emph{supercritical} cubic nonlinearity,
\begin{equation}
A_t = (1+\ri \alpha) A_{xx} + \chi(x - ct) A - (1+\ri \gamma) A|A|^2, \quad \chi(\xi) = -\mathrm{sgn}(\xi),\quad A\in \C, \,\, x,t\in \R,\label{e:cgl0}
\end{equation}
posed in one space dimension, with dispersion parameters $\alpha,\gamma \in \R$, and where $\chi \colon \R \to \R$ is a heterogeneity, traveling with fixed speed $c \in \R$. The spatially homogeneous cGL equation with $\chi \equiv 1$ can be derived and justified as a universal amplitude equation near the onset of an oscillatory instability of a trivial ground state in dissipative systems, such as a reaction-diffusion system or the Couette-Taylor problem, see~\cite{AK02,MIE02,SCH94b,SCHN98}. Pattern formation in such systems is initiated by a localized perturbation of the destabilized ground state. Under such idealized conditions, the localized perturbation leads to a spatial invasion process which leaves a periodic pattern in its wake, whose wavenumber does not depend on the perturbation but only on the system parameters;  see for instance~\cite{DOELEX} and references therein. Thus, through the Ginzburg-Landau approximation, this spatial invasion process corresponds to the existence of an invading front solution to the cGL equation connecting a periodic pattern to the unstable ground state $A = 0$.

In spatially homogeneous models, it cannot be expected that such pattern-forming fronts are stable against bounded or $L^2$-localized perturbations (or in fact against perturbations that are small in any translational invariant norm) as any perturbation ahead of the front will grow exponentially in time due to the instability of the ground state. In contrast, for heterogeneous models which progressively excite a system into an unstable state, the unstable state is only established in the wake of the heterogeneity after which patterns start to nucleate. Consequently, perturbations cannot grow far ahead of the  interface of the pattern-forming front. This begs the question of whether stability of quenched fronts can be rigorously established against perturbations which are small in a translational invariant norm.

In this paper, we make the first step towards answering this question in the affirmative. We prove spectral stability in $L^2(\R)$ of pattern-forming fronts in the cGL equation~\eqref{e:cgl0} with a step-function heterogeneity $\chi(\xi) = -\text{sgn}(\xi)$. That is, we establish that the spectrum of the linearization of~\eqref{e:cgl0} about the front is confined to the open left-half plane, except for a simple (embedded) eigenvalue at zero (due to gauge invariance of the cGL equation) and a parabolic touching of continuous spectrum at the origin (due to the diffusive stability of the periodic pattern).

Under similar spectral conditions, nonlinear stability of source defects in the cGL equation has been obtained in~\cite{BEC}. Thus, we strongly expect that, using a similar approach as in~\cite{BEC}, our spectral results can be employed to prove nonlinear stability of the pattern-forming front as a solution to~\eqref{e:cgl0} against $L^2$-localized perturbations.

\begin{figure}[h!]
\centering
\vspace{-0.0in}
  \hspace{-0.5in}\includegraphics[trim = 0.5in 0.05in 0.5in 0.2in, clip,width = 0.99\textwidth]{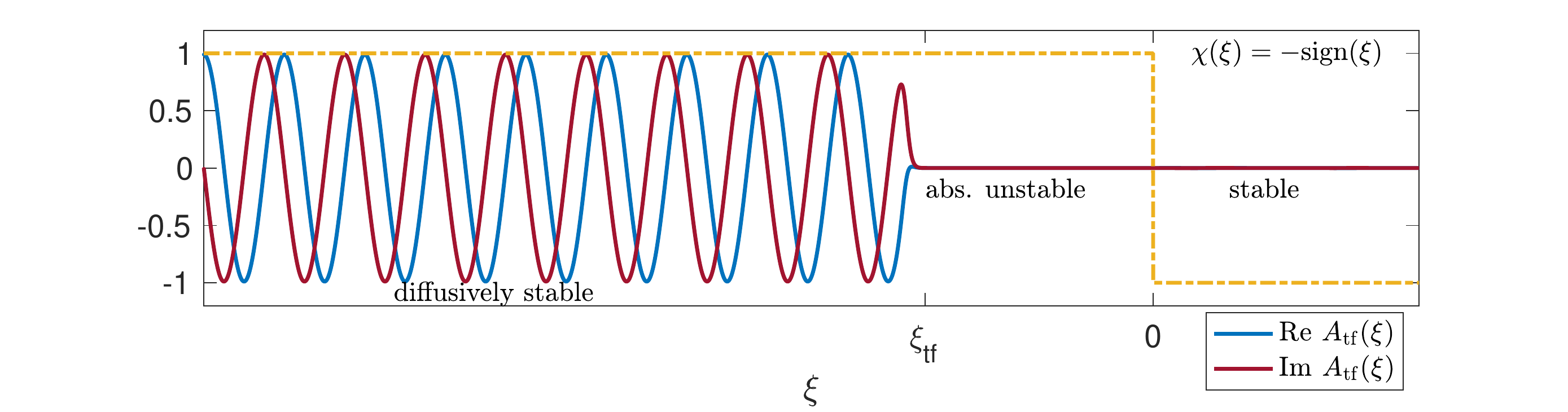}
   \hspace{-0.5in}
   \caption{Plot of the real and imaginary parts of the triggered/quenched front solution $A_\mathrm{tf}(\xi)$ to~\eqref{e:cgl0} in the co-moving frame with speed $c$. Included is a plot of the step-function heterogeneity and the stability of the pure associated homogeneous solutions, which are the diffusively stable periodic state $A_p$ behind the front interface for $\xi\leq \xi_\tf$, the absolutely unstable trivial state $A = 0$ for $\xi\in(\xi_\mathrm{tf},0)$, and the stable homogeneous state $A = 0$ ahead of the quenching interface for $\xi\geq0$, which are ``glued" together to form the front solution.}
   \label{fig:prof}
\end{figure}

We again emphasize that, in contrast to the spatially homogeneous situation, we are able to obtain spectral stability of the invading front in the space $L^2(\R)$, which has a translational invariant norm. In order to properly position our result in a broader perspective, it is useful to explore this dichotomy in more depth and first discuss spectral stability and instability of the base state and of pattern-forming fronts in the \emph{spatially homogeneous} system in~\S\ref{ss:inv} before stating our main result in~\S\ref{ss:main}.

\subsection{Invasion into the unstable trivial state: convective and absolute instability}\label{ss:inv} 

In the spatially homogeneous cGL equation~\eqref{e:cgl0} with $\chi \equiv 1$, a pattern-forming front connects the unstable equilibrium $A = 0$ at $x = +\infty$ to a periodic pattern at $x =-\infty$. The speed at which this front travels through the domain is commonly referred to as the \emph{spreading speed}. It is often the case that the linear information about the rest state can be used to predict properties, including the spreading speed and the spatio-temporal oscillation frequency of the pattern formed in the wake, of such an invasion front resulting from a compact perturbation of the unstable ground state. In this case, since the linear information of the state ahead of the front dictates its invasion properties, such fronts are referred to as \emph{pulled} fronts. 

The linearly selected speed can be heuristically thought of as the co-moving frame speed at which the base state transitions from \emph{convective} to \emph{absolute} instability. In the case of the former, perturbations of the unstable state grow but are convected into the far-field, while in the latter  perturbations grow both in $L^2$-norm and also pointwise. In other words, one can think of this speed as the minimal co-moving frame speed for which perturbations of the constant state do not grow pointwise; see ~\cite{holz,SAN} and~\cite[\S 2]{vS}.

The transition between convective and absolute instability can also be understood in terms of essential and absolute spectrum of the associated linearization $\smash{\mathcal{L}_0:= (1+\ri\alpha)\partial_{\xi}^2 + 1 + c\partial_{\xi}}$, written in a co-moving frame $\xi = x - ct$ and posed on $L^2(\R,\C)$. The essential spectrum $\sigma_\mathrm{ess}(\El_0)$ of this operator is defined as the set of $\lambda\in\C$ for which $\mathcal{L}_0 - \lambda$ is not Fredholm with index 0; see Appendix~\ref{appess1}. To characterize this set,  one inserts $A = \re^{\lambda t + \nu \xi}$ into the linearized equation $A_t = \mathcal{L}_0 A$, obtaining a \emph{linear dispersion relation}
\begin{equation}
0=d(\lambda,\nu;c) = (1 + \ri\alpha)\nu^2 + 1 - \lambda, \qquad\nu,\lambda\in \C. \label{e:ldisp}
\end{equation}
Discontinuous changes in the Fredholm index of $\mathcal{L}_0 - \lambda$ are then found when~\eqref{e:ldisp} has a root $\nu(\lambda)\in \ri\R$, and thus the essential spectrum is given by
$$
 \sigma_{\mathrm{ess}}(\El_0) := \left\{-(1+\ri \alpha) \ell^2 + 1 + c\ri \ell  : k \in \R\right\}.
$$
To understand the behavior of localized perturbations of the base state in the co-moving frame, one can pose $\mathcal{L}_0$ on a weighted $L^2$-space with norm $\smash{\| f\|_{L^2_\beta(\R,\C)}^2:= } \smash{\int |\re^{\beta \xi}f(\xi)|^2 d\xi}$, which penalizes or allows asymptotic growth depending on the weight $\beta\in\R$. In this space, for $\beta>0$, the Fredholm boundaries are shifted so that the essential spectrum is given by
$$
\sigma_{\mathrm{ess},\beta}(\El_0) := \left\{(1+\ri \alpha)(\beta + \ri k)^2 + 1 + c (\beta + \ri k) \colon k \in \R\right\},
$$
penalizing perturbations at $\xi=+\infty$, and allowing growth at $\xi = -\infty$. If there exists a $\beta$ such that $\sigma_{\mathrm{ess},\beta}(\El_0)$ is contained in the open left-half plane, the base state is only convectively unstable.  If there does not exist such a $\beta$, perturbations grow pointwise, and thus the base state is absolutely unstable.

 \paragraph{Absolute spectrum} The transition between both types of instabilities, as $c$ is varied, is mediated by the location of a set in the complex plane known as the \emph{absolute spectrum}, see Appendix~\ref{ss:app_absp}. In our case, the absolute spectrum, $\Sigma_{*,\mathrm{abs}}$, consists of $\lambda$-values for which the linear dispersion relation~\eqref{e:ldisp} has a pair of roots with the same real part. One readily observes that there are always points in the set $\sigma_{\mathrm{ess},\beta}(\El_0)$, which consists of $\lambda$-values for which~\eqref{e:ldisp} possesses a root $\nu(\lambda)$ with $\Re \, \nu(\lambda) = \beta$, lying to the right of $\Sigma_{*,\mathrm{abs}}$, no matter the value of $\beta \in \R$. Thus,  while not technically part of the spectrum, intersections of $\Sigma_{*,\mathrm{abs}}$  with the right half-plane indicate absolute instabilities. We find
\begin{equation}\label{e:abs1}
 \Sigma_{*,\mathrm{abs}}:=\left\{1 - \frac{c^2}{4(1+\mathrm{i \alpha})} - (1+\mathrm{i}\alpha )\ell^2 : \ell\geq0  \right\}.
\end{equation}
In our case, and in many other prototypical equations, the right most part of the absolute spectrum consists of branch points which are ``double-roots" of the dispersion relation~\eqref{e:ldisp}.  That is, they are $(\lambda,\nu)$-pairs which satisfy
\begin{align*}
0 = d(\lambda, \nu;c),\qquad 0 = \partial_\nu d(\lambda,\nu;c).
\end{align*}
Such a root, readily calculated to be
$$
\lambda_{*,\mathrm{br}}(c)= 1-\frac{c^2}{4(1+\ri \alpha)},\qquad \nu_{*,\mathrm{br}}(c) =  \frac{c}{2(1+\ri \alpha)},
$$
thus dictates the absolute instability of the base state, with transitions occurring at speeds $c$ with $\mathrm{Re}\,\lambda_{*,\mathrm{br}}(c) = 0$.

\paragraph{Nonlinear invasion}
In our setting, with a supercritical cubic nonlinearity, the above linear information indeed characterizes the nonlinear invading front connecting a periodic pattern to the unstable base state. That is, the front invades with the \emph{linear spreading speed} $c_\mathrm{lin} = \sup\{ c\,:\, \mathrm{Re}\,\lambda_{*,\mathrm{br}}(c) >0\}$, has temporal oscillation frequency given by $\omega_\mathrm{lin} = \mathrm{Im} \, \lambda_{*,\mathrm{br}}(c_\mathrm{lin})$ and has leading-order spatial decay rate  $\nu_\mathrm{lin} := \nu_{*,\mathrm{br}}(c_\mathrm{lin})$. In our specific case, one calculates~\cite{GS14}
$$
c_\mathrm{lin} = 2\sqrt{1+\alpha^2},\qquad \omega_\mathrm{lin} =\alpha  ,\qquad \nu_\mathrm{lin} =  \frac{1-\ri\alpha}{\sqrt{1+\alpha^2}}.
$$

Periodic patterns in the homogeneous nonlinear equation~\eqref{e:cgl0} are relative equilibria with respect to the gauge action $A\mapsto \re^{s\ri}A$, $s\in \R$, and thus take the form $\sqrt{1-k^2} \re^{\ri (\omega t - kx)}$, with a \emph{nonlinear dispersion relation}, in the co-moving frame of speed $c$, of the form
\begin{equation}
\omega(k) = -\alpha k^2 - \gamma \left(1-k^2\right) - ck.\label{e:disp}
\end{equation}
Through this relation, the linear prediction for the temporal oscillation frequency $\omega_\mathrm{lin}$ then gives the selected spatial wavenumber of the pattern formed:
$$
k_\mathrm{lin} = -\frac{\gamma+\alpha}{\sqrt{1+\alpha^2}+\sqrt{1+\gamma^2}}.
$$
Existence of pattern-forming fronts in certain parameter regimes for the homogeneous cGL equation are considered, for example, in~\cite{van1992fronts} where a phase-amplitude decomposition was used to construct PDE fronts as traveling-waves in a real three-dimensional ODE.

\paragraph{Stability of freely invading front}
After this front in the homogeneous cGL equation~\eqref{e:cgl0} with $\chi \equiv 1$ has been established, one expects perturbations behind the front interface to decay diffusively as long as the periodic pattern at $x=-\infty$ is stable with spectrum lying in the closed left-half plane only touching the imaginary axis in a quadratic tangency at the origin, see~\S\ref{sec:ess} for the spectral calculations confirming this in our case. If the front were to propagate fast enough, that is with speed $c > c_{\text{lin}}$, exponentially localized perturbations are convected into the periodic bulk of the front after which, if remaining small, they decay diffusively. This idea, which was first proposed by Sattinger~\cite{SAT77}, can be used to prove stability of ``fast'' pattern-forming fronts in exponentially weighted spaces~\cite{COEC87,ECSCH00,ESCH02}. Thus, upon introducing an exponential weight, the unstable spectrum of the state ahead of the front can be stabilized.

If the front invades with the linear spreading speed $c_{\lin}$, established above, we are right on the boundary where spectrum could be stabilized with an exponential weight. In this case the stability argument is more subtle, since the linearization has, after introducing the exponential weight, spectrum up to the imaginary axis. Although stability analyses in this regime have been carried out for invading fronts connecting a stable state to an unstable homogeneous rest state~\cite{FH19a,GAL94,KIR92}, the authors are not aware of any results for \emph{pattern-forming} fronts propagating with the spreading speed $c_\lin$.

\subsection{Main result} \label{ss:main}

\paragraph{Previous existence result}
Our main result concerns the spectral stability of pattern-forming fronts in the cGL equation~\eqref{e:cgl0} with a step-function heterogeneity $\chi(\xi) = -\text{sgn}(\xi)$. The fronts take the form $A(x,t) = \re^{\ri\omega t} A_\mathrm{tf}(x-ct)$, where $\omega\in \R$ gives the temporal frequency, and $A_\mathrm{tf}(\xi)$ is a function of the co-moving frame variable $\xi = x - ct$, which satisfies the traveling-wave ODE
\begin{align}
0 = (1+\ri\alpha) A_{\xi\xi} + c A_\xi + (\chi(\xi) - \ri \omega) A - (1+\ri \gamma) A|A|^2, \label{e:TW}
\end{align}
and connects a periodic pattern to the trivial state. That is, it has the asymptotics
\begin{align}
\lim_{\xi\rightarrow-\infty} |A_\mathrm{tf}(\xi) - A_p(\xi)| = 0, \qquad  \lim_{\xi\rightarrow\infty} A_\tf(\xi)  = 0, \label{asymptrigger}
\end{align}
where $ A_p(\xi) := \sqrt{1-k^2}\re^{-\ri k \xi}$
is a periodic solution to~\eqref{e:TW} for $\chi \equiv 1$. The wavenumber $k\in (-1,1)$ relates to the frequency $\omega$ through the nonlinear dispersion relation~\eqref{e:disp}.

The gauge action in the traveling-wave ODE can be factored out by writing~\eqref{e:TW} as the first-order system in the variables $\rho = |A|$ and $z = A_\xi/A$,
\begin{align}
\begin{split}
z_\xi &= -\frac{1}{1+\ri\alpha}\left(c z + \chi(\xi) - \ri \omega - (1+\ri \gamma) \rho^2\right) - z^2,\\
\rho_\xi &= \frac{1}{2}\rho\left(z+\overline{z}\right).
\end{split} \label{e:TW3}
\end{align}
 The front solution then arises as a heteroclinic connection between the points $(-\ri k,1-k^2\big)$ and $(z_+,0)$ with
\begin{align*}
z_+ := -\frac{c + \sqrt{c^2 + 4(1+\ri \alpha)(1+\ri \omega)}}{2(1+\ri \alpha)},
\end{align*}
which are equilibria of~\eqref{e:TW3} for $\chi \equiv 1$ and $\chi \equiv -1$, respectively. Thus, in addition to~\eqref{asymptrigger}, the asymptotic behavior of $A_\tf$ is characterized by
\begin{align}
\lim_{\xi \to \infty} \frac{A_\tf'(\xi)}{A_\tf(\xi)} = z_+. \label{asymptrigger2}
\end{align}

The previous work~\cite{GS14} rigorously established existence of such fronts and determined expansions for frequency/wavenumber selection curves $(\omega_{\tf},k_{\tf})(c)$ for speeds $0\ll c < c_\mathrm{lin}$. In this regime, just below the linear invasion speed $c_\mathrm{lin}$, the pattern-forming instability wants to invade the domain faster than the speed of the inhomogeneity causing the front to ``lock'' to the quenching point at $\xi = 0$, see Figure~\ref{fig:prof}. In this situation, it was found that the leading-order dependence of the wavenumber on the quenching speed $c$ is determined by the intersection of the absolute spectrum with the imaginary axis which, using~\eqref{e:abs1}, is found to be
\begin{equation*}
\Sigma_{*,\mathrm{abs}}\cap \mathrm{i}\R = \{\mathrm{i} \omega_\mathrm{abs}\}, \qquad
\omega_\mathrm{abs} = -\alpha + \frac{\alpha c^2}{2(1+\alpha^2)}. 
\end{equation*}
Technically, the front is the outcome of a heteroclinic bifurcation analysis in~\eqref{e:TW3}, which employs geometric desingularization and invariant foliations to describe the unfolding in the parameters $(c,\omega)$ at $(c_\mathrm{lin},\omega_\mathrm{lin})$ of the equilibria $(z_{0,\pm},0)$ in~\eqref{e:TW3} for $\chi \equiv 1$ given by
\begin{align*}
z_{0,\pm} := -\frac{c \pm \sqrt{c^2 - 4(1+\ri \alpha)(1-\ri \omega)}}{2(1+\ri \alpha)}.
\end{align*}
We summarize this result in the following statement. The corresponding trajectory, wavenumber, and front interface location curves are depicted in Figure~\ref{fig:front_info}.

\begin{theorem}[\cite{GS14}] \label{t:ex0}
Let $\delta > 0$ and $\alpha, \gamma \in \R$ be fixed such that $|\gamma - \alpha|$ and $\delta$ are sufficiently small. Then, provided $0<\Delta c:=c_\mathrm{lin}-c\ll1$, there exists a pattern-forming front solution $\re^{\ri \omega_{\tf} t} A_\mathrm{tf}(x-ct)$ to~\eqref{e:cgl0} with frequency $\omega_\tf$, where $A_\tf(\xi)$ is a heteroclinic solution to~\eqref{e:TW} whose interface
\begin{align*}
\xi_\tf := \inf\big\{\xi : |A_\mathrm{tf}(y)| < \delta \text{ for } y > \xi \big\},
\end{align*}
is located to the left of the jump heterogeneity at $\xi = 0$ at which $A_\tf(\xi)$ is continuously differentiable. The asymptotic behavior of $A_\tf(\xi)$ is described by~\eqref{asymptrigger} and~\eqref{asymptrigger2} with wavenumber $k = k_\tf$ and frequency $\omega = \omega_\tf$, which are related through the nonlinear dispersion relation~\eqref{e:disp}. Finally, the frequency $\omega_\tf$, the wavenumber $k_\tf$ and the rescaled front interface $\smash{\sqrt{\Delta c} \, \xi_\tf}$ are smooth functions of $\smash{\sqrt{\Delta c}}$ and we have the expansions
\begin{align*}
\omega_\tf = \omega_\abs + \ord\left(\left(\Delta c\right)^{3/2}\right), \qquad k_\tf = k_{\lin} + \ord\left(\Delta c\right), \qquad \sqrt{\Delta c}\,\xi_\tf &= \pi\sqrt{1 + \alpha^2} + \ord\left(\sqrt{\Delta c}\right).
\end{align*} 
\end{theorem}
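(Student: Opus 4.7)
The plan is to construct $A_\tf$ as a $C^1$ heteroclinic orbit of the piecewise-autonomous first-order system~\eqref{e:TW3}, gluing at $\xi = 0$ a backward-in-$\xi$ solution from the hyperbolic sink $P_+ := (z_+, 0)$ of the right subsystem ($\chi \equiv -1$) to a forward-in-$\xi$ solution from the equilibrium $P_- := (-\ri k_\tf, \sqrt{1 - k_\tf^2})$ of the left subsystem ($\chi \equiv 1$). The unknowns $(\omega_\tf, k_\tf, \xi_\tf)$ are to be determined by continuity of $(z,\rho)$ at $\xi = 0$ together with the nonlinear dispersion relation~\eqref{e:disp}. The central obstruction is that on the invariant slice $\{\rho = 0\}$ the left subsystem has two equilibria $z_{0,\pm}$ that coalesce into a single non-hyperbolic point $z^* := -c_\lin/(2(1+\ri\alpha))$ as $(c,\omega) \to (c_\lin, \omega_\lin)$, producing a saddle-node bifurcation; for $0 < \Delta c \ll 1$ the two equilibria remain separated only by $\ord(\sqrt{\Delta c})$, so any trajectory threading between them spends slow $\xi$-time of order $1/\sqrt{\Delta c}$.

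To unfold this degeneracy I would perform a geometric desingularization (blow-up) of the coalescence point in the extended phase space $(z,\rho,\Delta c, \omega - \omega_\lin)$. In a rescaling chart of the form $z = z^* + \sqrt{\Delta c}\,\tilde z$, $\rho = \sqrt{\Delta c}\,\tilde\rho$, $\xi = \tau/\sqrt{\Delta c}$, and with $\omega - \omega_\lin = \Delta c\,\Omega$, the leading-order flow in $\tau$ reduces to a planar complex Riccati-type equation on the invariant slice $\{\tilde\rho = 0\}$ coupled to a slowly modulated $\tilde\rho$-equation. The Riccati subsystem is hyperbolic away from its (shifted) equilibria, and for $\Omega$ near the value selected by the absolute-spectrum condition $\omega = \omega_\abs$, its passage time between fixed entry/exit cross-sections is an elementary integral evaluating to $\pi\sqrt{1+\alpha^2}$, directly producing the leading coefficient of $\sqrt{\Delta c}\,\xi_\tf$. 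I would combine this with Fenichel-style invariant foliations to track the forward image of $W^u(P_-)$ through the blow-up chart uniformly in $\sqrt{\Delta c}$, arriving at a parametrization of this manifold at $\{\xi = 0\}$ smooth in $\sqrt{\Delta c}$.

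For $\xi \ge 0$, since $P_+$ is a hyperbolic sink of the right subsystem, its stable manifold extends backward to $\{\xi = 0\}$ as a smooth submanifold depending smoothly on $(c,\omega)$. Matching it with the tracked $W^u(P_-)$ at $\xi = 0$ yields two complex equations in the three real unknowns $(\omega_\tf, k_\tf, \xi_\tf)$; together with~\eqref{e:disp} one obtains a closed system which, after treating $\sqrt{\Delta c}\,\xi_\tf$ as the unknown in place of $\xi_\tf$, is solvable by the implicit function theorem in the parameter $\sqrt{\Delta c}$. Expanding the matching equations at next order gives $k_\tf = k_\lin + \ord(\Delta c)$, while enforcing consistency in the rescaled chart fixes $\Omega$ to leading order at the value producing $\omega_\tf = \omega_\abs + \ord((\Delta c)^{3/2})$. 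The hard part will be the blow-up step itself: producing coordinates on the blow-up chart, uniform across the long slow-passage region, in which $W^u(P_-)$ appears as a graph smooth in $\sqrt{\Delta c}$, and confirming that the resulting matching system inherits this $\sqrt{\Delta c}$-regularity rather than only regularity in $\Delta c$.
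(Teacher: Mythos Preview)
Your proposal is correct and essentially matches the approach of~\cite{GS14} as summarized in this paper (\S3--4 and the matching analysis of~\S\ref{sec:track}): geometric desingularization of the saddle-node of $z_{0,\pm}$ on the invariant slice $\{\rho=0\}$, Fenichel foliations to track $W^u(P_-)$ through the slow-passage region, and an implicit-function-theorem matching in the parameter $\sqrt{\Delta c}$ (equivalently $\epsilon_1$). The only refinement worth noting is that the original proof first performs the explicit rescaling/reparameterization of~\S\ref{sec:resc}, which effectively removes the dispersion parameter $\alpha$ and reduces the slow dynamics on $\{R=0\}$ to the scalar Riccati equation~\eqref{dyninvman}; the blow-up is then executed concretely via a Riemann-surface unfolding and M\"obius transformation (cf.~\eqref{unfolding} and the surrounding computation), which makes the passage-time integral and the $\sqrt{\Delta c}$-smoothness of the matching explicit rather than relying on abstract blow-up machinery.
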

\begin{figure}[h!]
\centering
\includegraphics[trim=0.3in 4in 0 4in, clip,width = 0.525\textwidth]{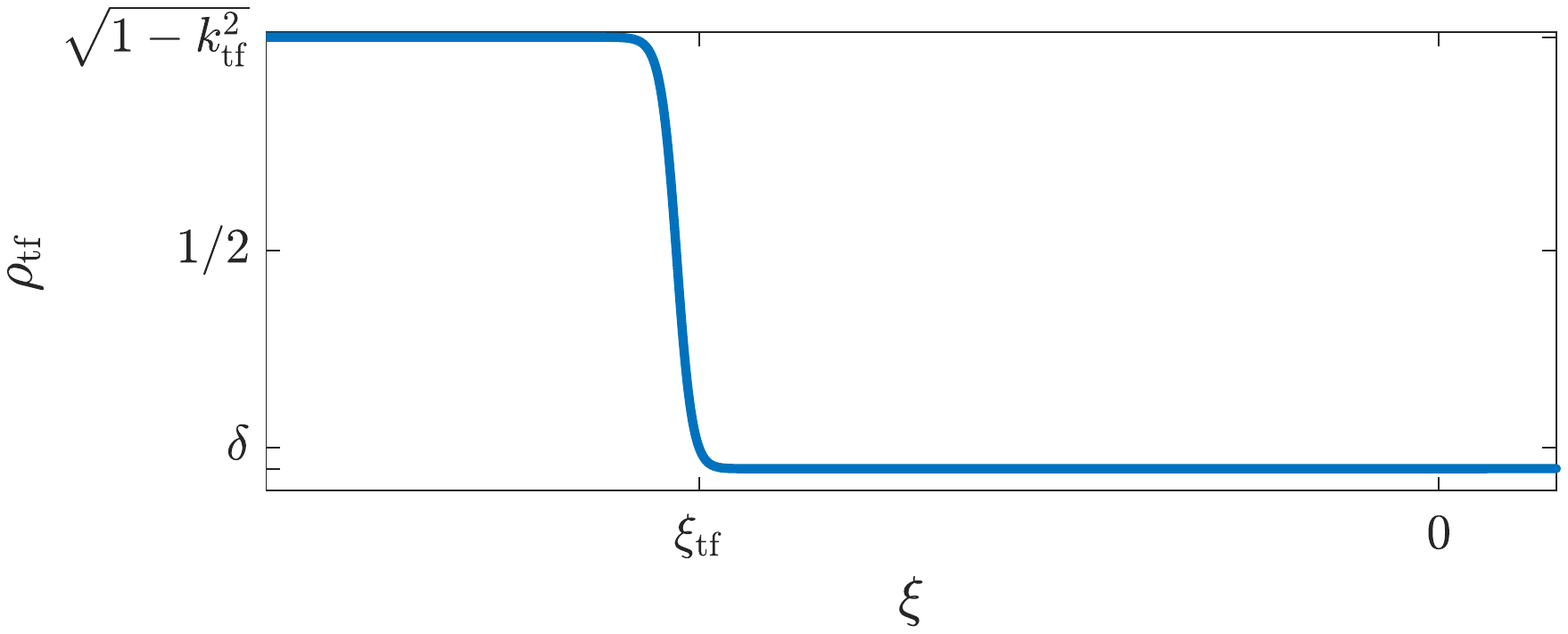}\hspace{-0.5in}
\includegraphics[trim=0 4in 1in 4in, clip,width = 0.5\textwidth]{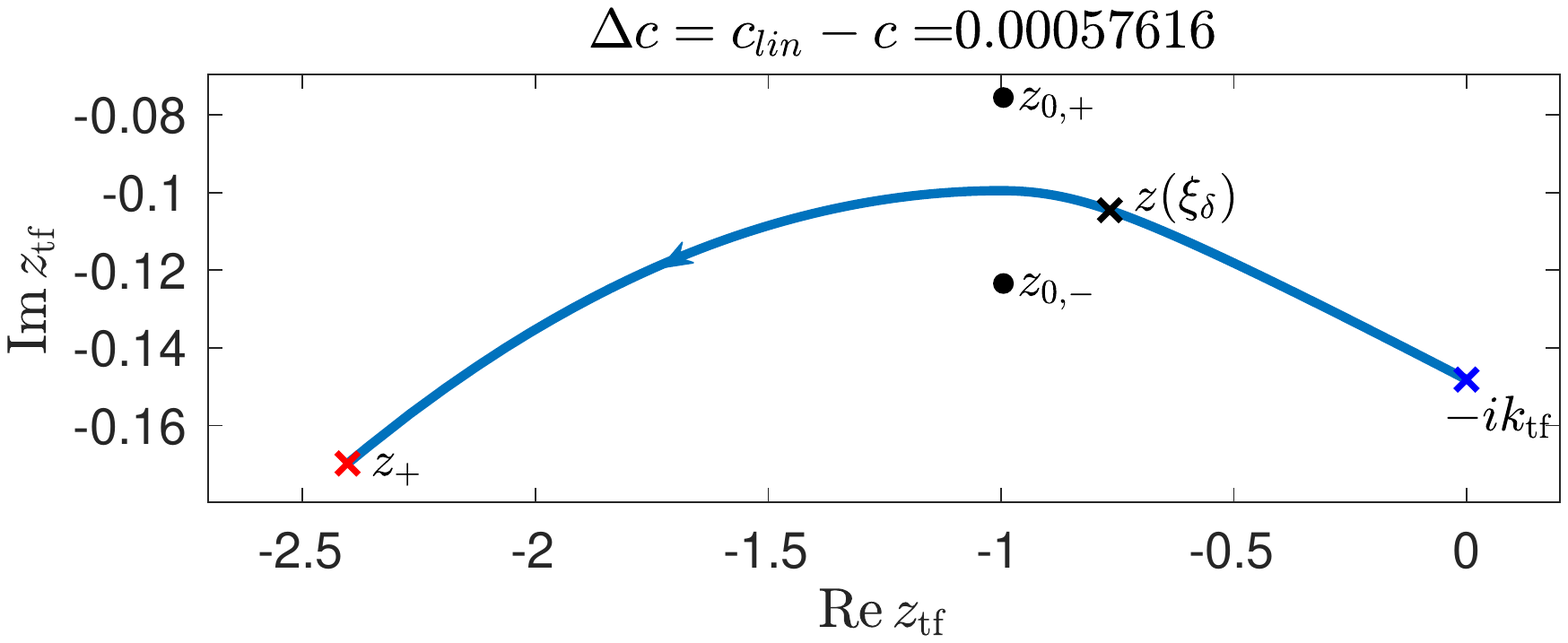}\\
  \includegraphics[trim=0in 2.5in 1in 2.75in, clip,width = 0.5\textwidth]{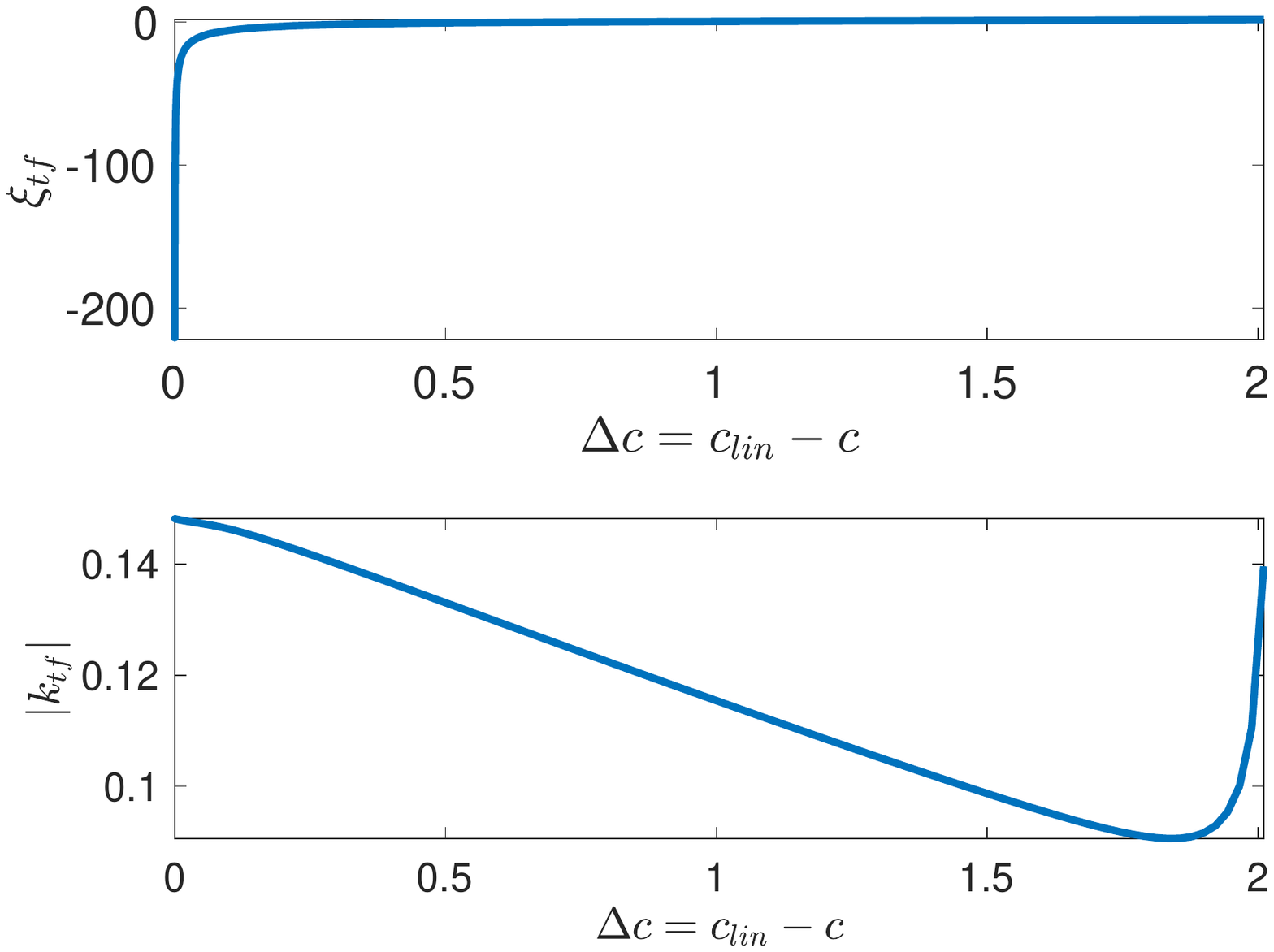}\hspace{-0.1in}
  \includegraphics[trim=0 2.5in 1in 2.75in, clip,width = 0.5\textwidth]{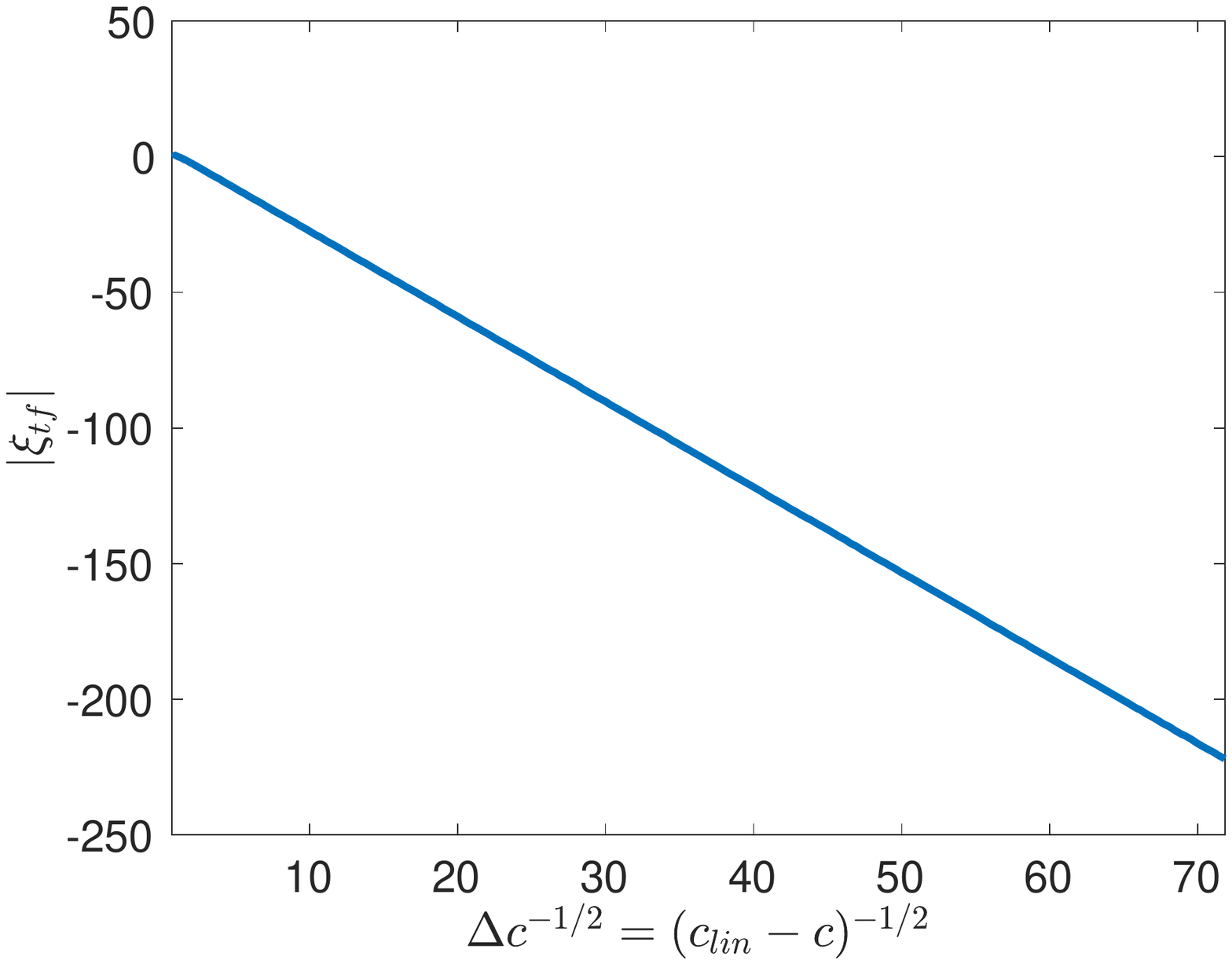}

   \caption{All for $\alpha = -0.1, \gamma = -0.2, c = 2.0094$ so $\Delta c = 0.00058$; (top left): Plot of the $\rho$-component of the solution trajectory; (top right): $z$-component of the solution trajectory, $z_{0,\pm}$ denote the equilibria of~\eqref{e:TW3} with $\chi \equiv 1$, $z_+$ denotes the $z$-unstable equilibrium of~\eqref{e:TW3} with $\chi \equiv -1$; (center left) plot of front interface location  dependence on quenching speed $c$; (bottom right): scaled plot depicting the leading-order $\mathcal{O}(|\Delta c|^{-1/2})$-behavior of $\xi_\mathrm{tf}$ for $\Delta c$ small; (bottom left) dependence of asymptotic wavenumber $k_\tf$ on $c$. Results obtained by continuing~\eqref{e:TW3} as a boundary value problem in parameters $(c,\omega)$ using AUTO07p (see \S~\ref{disc} and~\cite{GS14} for more detail.)} 
   \label{fig:front_info}
\end{figure}

This result shows that the quenching speed $c$ selects the pattern wavenumber $k_\tf$ and the location of the front interface $\xi_\mathrm{tf}$. In particular, the front interface moves away from $\xi = 0$ as $c \nearrow c_\mathrm{lin}$, leaving a long intermediate, or  ``plateau", region where its amplitude $|A|$ is small, see Figure~\ref{fig:front_info}. Thus, for $\xi\in[\xi_\mathrm{tf},0]$, the profile is close to the equilibrium state $A = 0$, which is unstable as a solution to~\eqref{e:cgl0} with $\chi \equiv 1$. Hence, one might intuitively expect quenched fronts to be unstable in this regime, with unstable modes arising in this plateau region. But, as $c\nearrow c_\mathrm{lin}$ and the plateau region expands, the corresponding base state $A=0$ becomes less unstable, in the sense that the absolute spectrum $\Sigma_{*,\mathrm{abs}}$, defined in~\eqref{e:abs1} above, moves to the left, intersecting less and less of the right-half plane.

\paragraph{Perturbation setup and stability result}
This begs the question of whether modes arising from the plateau state are actually unstable. This subtle mechanism, described heuristically in the next subsection, underpins our spectral stability analysis and makes the upcoming stability result somewhat unexpected, especially given that, in the spatially homogeneous setting as discussed in~\S\ref{ss:inv}, unstable absolute spectrum of the base state always yields unstable (essential) spectrum of the pattern-forming front (no matter the chosen exponential weight).

We introduce the necessary concepts to state our main result, which concerns the spectral stability of the pattern-forming front $\re^{\ri \omega_{\tf} t} A_\mathrm{tf}(x-ct)$ as a solution to~\eqref{e:cgl0}, or equivalently, of $A_\tf(\xi)$ as a stationary solution to
\begin{align}
A_t = (1+\ri\alpha) A_{\xi\xi} + c A_\xi + (\chi(\xi) - \ri \omega_\tf) A - (1+\ri \gamma) A|A|^2. \label{e:TW2}
\end{align}
Note that the temporal detuning by $\ri \omega_{\tf}$ moves the absolute spectrum~\eqref{e:abs1} of the base state via the shift $\lambda\mapsto\lambda - \ri\omega_{\tf}$. Thus, the absolute spectrum of the base state $A \equiv 0$ as a solution to~\eqref{e:TW2} with $\chi \equiv 1$ is now given by
\begin{align} \label{e:abs0}
\Sigma_{0,\mathrm{abs}}:=\left\{1 - \frac{c^2}{4(1+\mathrm{i \alpha})} - \ri \omega_\tf - (1+\mathrm{i}\alpha )\ell^2 : \ell\geq0  \right\},
\end{align}
with associated branch point $\lambda_{\mathrm{br}}(c) = 1 - \frac{c^2}{4(1+\ri \alpha)} - \ri \omega_\tf$.

We exploit the gauge invariance present in the cGL equation by decomposing in polar coordinates,
\begin{align} A_\tf(\xi) = r_\tf(\xi) \re^{\ri \phi_\tf(\xi)} \label{polar}.\end{align}
Substituting the perturbed solution $A(\xi,t) = A_\tf(\xi) + a(\xi,t)\re^{\ri\phi_\tf(\xi)}$ into~\eqref{e:TW2} yields a nonlinear evolution equation for the complex-valued perturbation $a(\xi,t)$. The stability of $A_\tf(\xi)$ as a solution to~\eqref{e:TW2} can be determined by studying the dynamics of small solutions to this perturbation equation. Therefore, we wish to split the perturbation equation in a linear and purely nonlinear part, which is at least quadratic in $a(\xi,t)$. However, since complex conjugation is not a linear operation, the obtained nonlinear part contains the term $-(1+\ri\gamma)r_\tf(\xi)^2 \overline{a}$, which is not quadratic in $a$. This problem can be resolved by introducing the variable $b = \overline{a}$. Thus, the resulting perturbation equation reads
\begin{align}
\begin{split}
a_t &= L_\tf(a,b) + \mathcal{N}_\tf(b,a),\\
b_t &= \overline{L_\tf\left(\overline{b},\overline{a}\right)} + \overline{\mathcal{N}_\tf\left(\overline{b},\overline{a}\right)},
\end{split} \label{pertubeq}
\end{align}
where $L_\tf$ denotes the asymptotically constant linear operator
\begin{align*}
\begin{split}
L_\tf(a,b) &= (1+\ri\alpha)\left(a_{\xi\xi} + 2\ri \phi_\tf'(\xi) a_\xi + \left(\ri \phi_\tf''(\xi) - \left( \phi_\tf'(\xi)\right)^2\right)a\right)\\
&\qquad + c\left(a_\xi+\ri\phi_\tf'(\xi) a \right) + \chi(\xi) a - \ri \omega_\tf a - (1+\ri \gamma)r_\tf(\xi)^2 \left(2a + b\right),
\end{split}
\end{align*}
and $\mathcal{N}_\tf$ is the nonlinearity
\begin{align*}
\mathcal{N}_\tf(a,b) = -(1+\ri\gamma) \left((r_\tf(\xi)+a)^2(r_\tf(\xi) + b) - r_\tf(\xi)^3 - r_\tf(\xi)^2(2a+b)\right).
\end{align*}
We observe that the nonlinearity in~\eqref{pertubeq} is indeed quadratic in $(a,b)$. 

The main result of this paper is a statement about the spectrum of the linearization $\El_\tf$ of~\eqref{e:TW2} about $A_\tf(\xi)$, which is given by the linear part of the perturbation equation~\eqref{pertubeq}, and reads
\begin{align*} \El_\tf\begin{pmatrix} a \\ b\end{pmatrix} = \begin{pmatrix} L_\tf(a,b) \\ \overline{L_\tf\left(\overline{b},\overline{a}\right)}\end{pmatrix}.\end{align*}
We note that $\El_\tf$ is a linear differential operator on $L^2\big(\R,\C^2\big)$ with domain $H^2\big(\R,\C^2\big)$, but can also be posed on the weighted space $L_\kappa^2\big(\R,\C^2\big)$ with domain $H^2_\kappa\big(\R,\C^2\big)$, where the Sobolev spaces $H^r_\kappa\big(\R,\C^2\big)$ are defined through their norms
$$\|u\|_{H^r_\kappa}^2 = \sum_{i = 0}^r \int_{\R} \re^{-2\kappa |\xi|} |\partial^i_\xi u(\xi)|^2 \mathrm d \xi,$$
for $r \in \N_{0}$ and $\kappa \geq 0$, and we denote $L^2_\kappa\big(\R,\C^2\big) := H^0_\kappa\big(\R,\C^2\big)$. We note that perturbations in $L^2\big(\R,\C^2\big)$ are localized, whereas perturbations in $L^2_\kappa\big(\R,\C^2\big)$ are allowed to grow as $\xi \to \pm \infty$ with exponential rate less than $ \kappa$. Hence, the spaces $L^2_\kappa\big(\R,\C^2\big)$ are different from the exponentially weighted spaces used in the spatially homogeneous setting in~\S\ref{ss:inv} to stabilize the spectrum of the unstable rest state at $+\infty$.

As mentioned before, we require that the periodic end state $A_p$ of the pattern-forming front at $-\infty$ is spectrally stable in $L^2(\R)$ as a solution to~\eqref{e:cgl0} with $\chi \equiv 1$. That is, the spectrum of its linearization is confined to the open left-half  plane except for a parabolic touching at the origin due to translational invariance. Necessary and sufficient conditions for spectral stability of periodic traveling waves (or wave trains) in the cGL equation have been obtained in~\cite{vH94}. In the relevant regime $|\alpha - \gamma| \ll 1$ and $0 < c_\lin - c \ll 1$ of Theorem~\ref{t:ex0}, a sufficient condition for spectral stability is $|\alpha| < \frac{1}{2}\smash{\sqrt{2}}$, whereas spectral instability holds for $|\alpha| > \frac{1}{2}\smash{\sqrt{2}}$.

We are now able to state our main result which is also depicted schematically in Figure~\ref{fig:basespec}.
\begin{theorem} \label{theospecstab} 
Let $\alpha\in\big(\!-\!\frac{1}{2}\sqrt{2}, \frac{1}{2}\sqrt{2}\big)$, fix $\kappa \in \smash{\big(0,\frac{1}{2c_{\lin}}\big)}$, and take the same assumptions as in Theorem~\ref{t:ex0}. Then, the pattern-forming front is spectrally stable as a solution to~\eqref{e:cgl0}, which entails:
\begin{itemize}
 \item[i)] The spectrum of $\El_\tf$, posed on $L^2\big(\R,\C^2\big)$, does not intersect the closed right-half plane, except at the origin as a parabolic curve.
 \item[ii)] Posed on the exponentially weighted space $L^2_\kappa\big(\R,\C^2\big)$, the operator $\El_\tf$ has no spectrum in the closed right-half plane, except for an algebraically simple eigenvalue, which resides at the origin. Furthermore, eigenvalues near the origin lie $\mathcal{O}(|c_\mathrm{lin} - c|)$-close to $\Sigma_{0,\abs} \cup \overline{\Sigma_{0,\abs}}$.
\end{itemize}
\end{theorem}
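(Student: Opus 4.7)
My plan is to split the spectrum into its essential and point parts and analyze each separately. For the eigenvalue problem $\El_\tf U = \lambda U$, I would first rewrite it as a first-order spatial ODE on $\C^4$, whose asymptotic systems at $\xi = \pm\infty$ are constant coefficient: at $+\infty$ the coefficient matrix encodes the linearization about the stable rest state $A \equiv 0$ with $\chi \equiv -1$, while at $-\infty$ it encodes the linearization about the periodic pattern $A_p$. By standard Fredholm boundary arguments, the essential spectrum of $\El_\tf$ on $L^2(\R,\C^2)$ is determined by those $\lambda$ for which one of these asymptotic matrices has a purely imaginary spatial eigenvalue. The asymptotic system at $+\infty$ produces Fredholm boundaries strictly in the open left half-plane, and the assumption $|\alpha| < \tfrac{1}{2}\sqrt{2}$ together with van Harten's criterion~\cite{vH94} delivers stability of the essential spectrum at $-\infty$ apart from the expected parabolic touching at the origin coming from gauge invariance. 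Introducing the weight $\re^{-\kappa|\xi|}$ shifts both Fredholm boundaries symmetrically; for $\kappa \in \bigl(0,\tfrac{1}{2c_\lin}\bigr)$ the touching at the origin detaches and the shifted essential spectrum moves into the open left half-plane, leaving the zero eigenvalue isolated. This yields the essential-spectrum halves of (i) and (ii).

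The bulk of the work is to rule out unstable point spectrum. The obstruction is concentrated on the intermediate plateau $[\xi_\tf,0]$, where the profile is exponentially close to the absolutely unstable state $A \equiv 0$: for $\lambda$ to the right of $\Sigma_{0,\abs}$ the spatial dispersion relation does not admit a hyperbolic splitting, so the usual Evans-function construction via exponential dichotomies breaks down in precisely the region where eigenvalues accumulate. To circumvent this, I would projectivize the spatial flow onto a Grassmannian, turning the linear subspace dynamics into a nonlinear matrix Riccati differential equation. To resolve the branch-point singularity of the spatial eigenvalues $\nu(\lambda)$ at the edge of $\Sigma_{0,\abs}$, I would unfold onto a suitable Riemann surface on which $\nu$ becomes single valued and analytic; the asymptotic stable and unstable subspaces then extend regularly across the plateau. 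A superposition principle splits the four-dimensional dynamics into two independent two-dimensional projective flows, whose endpoint coincidence produces a meromorphic Riccati-Evans function $E(\lambda)$ on the unfolded domain; eigenvalues of $\El_\tf$ correspond to zeros of $E$, with its pole structure dictated explicitly by the Grassmannian chart transitions.

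The final step is to locate the zeros of $E$. I would combine three ingredients: (a) large-$|\lambda|$ asymptotics, where the spatial ODE is a regular perturbation of its asymptotic form, forcing $E$ to a nonzero limit and excluding eigenvalues at infinity; (b) a winding-number computation along the boundary of a fixed bounded region of the closed right half-plane, showing that $E$ has a single simple zero at the origin and no others inside; and (c) a parity/homotopy argument transporting these counts across the sheets of the unfolding as $\lambda$ crosses branch cuts. Near the origin, the zeros of $E$ are then pinned $\ord(\Delta c)$-close to $\Sigma_{0,\abs} \cup \overline{\Sigma_{0,\abs}}$, which itself lies $\ord(\Delta c)$-close to the imaginary axis in the left half-plane; the sharp match between the rate at which eigenvalues accumulate and the rate at which $\Sigma_{0,\abs}$ stabilizes as $c \nearrow c_\lin$ is what keeps these accumulating eigenvalues strictly off the imaginary axis, yielding (ii). I expect the hardest step to be executing the winding-number/parity argument on the Riemann-surface unfolding: because $E$ is meromorphic rather than entire and is defined only after branch-cut resolution, the pole structure and sheet transitions must be tracked uniformly in $\Delta c$, and the quantitative estimates driving the accumulation rate must survive this homotopy without slack.
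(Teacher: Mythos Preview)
Your strategy matches the paper's in its essentials: essential spectrum via Fredholm boundaries, projectivization to a matrix Riccati flow, Riemann-surface unfolding to handle the branch points, the Riccati--Evans function, and a combination of winding-number and parity arguments near the origin. A few of your descriptions diverge from what the paper actually does, and these are worth flagging since they affect how the hardest step is executed.

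First, the paper does not run a single winding/parity argument over the whole bounded right half-plane; it splits into three regions $R_1$ (near the origin), $R_2$ (intermediate annulus), and $R_3$ (large $|\lambda|$). Your sketch covers $R_1$ and $R_3$, but omits the $R_2$ step: there the paper perturbs from a coupled Sturm--Liouville problem and rules out eigenvalues by an $L^2$-energy estimate, not by Riccati tracking. Second, the superposition principle is not used to decouple into two independent projective flows; rather, the leading-order Riccati equation has an invariant set of \emph{diagonal} solutions (two scalar Riccati equations), and superposition is used to show the full trajectory stays close to this diagonal set. Third, and most importantly, your step (c) mischaracterizes the parity argument: there is no tracking across sheets or branch cuts of the unfolding. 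The Riccati--Evans function $\E_\epsilon$ lives on the original $\lambda$-plane; the unfolding is used only to locate the poles of the scalar Riccati solutions. The winding number of $\E_\epsilon$ on a small contour $\Gamma$ about the origin is $0$, so zeros equal poles, and the pole count (at most two) comes from a separate Rouch\'e argument on the denominator $\det(\hat X_-)$. This leaves \emph{up to two} zeros in the critical disk, not one: gauge invariance pins one at $\lambda=0$, and the other is forced to be real and negative by a sign argument using $\E_\epsilon'(0)>0$ (computed via a Melnikov integral) together with the conjugation symmetry making $\E_\epsilon$ real on the real axis. Your expectation that the hardest step is the winding/parity argument is correct, but the mechanism is a real-line sign count rather than a sheet-transition homotopy.
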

\begin{figure}[h!]
\centering
  \vspace{-0.2in}\includegraphics[trim=0 0.0in 0 0, clip, width = 0.35\textwidth]{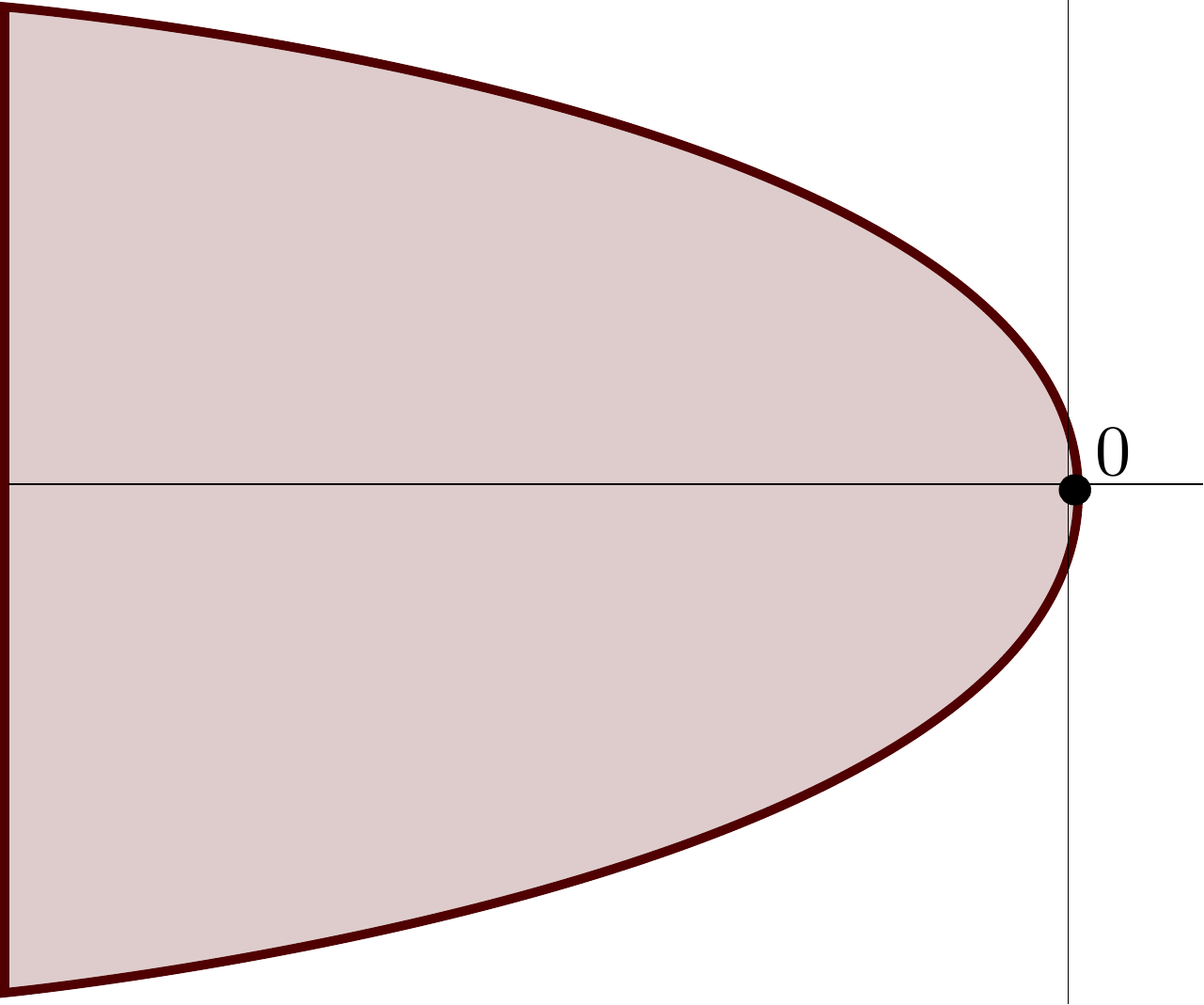}
   \includegraphics[trim=0 0 0 0.0in, clip , width = 0.35\textwidth]{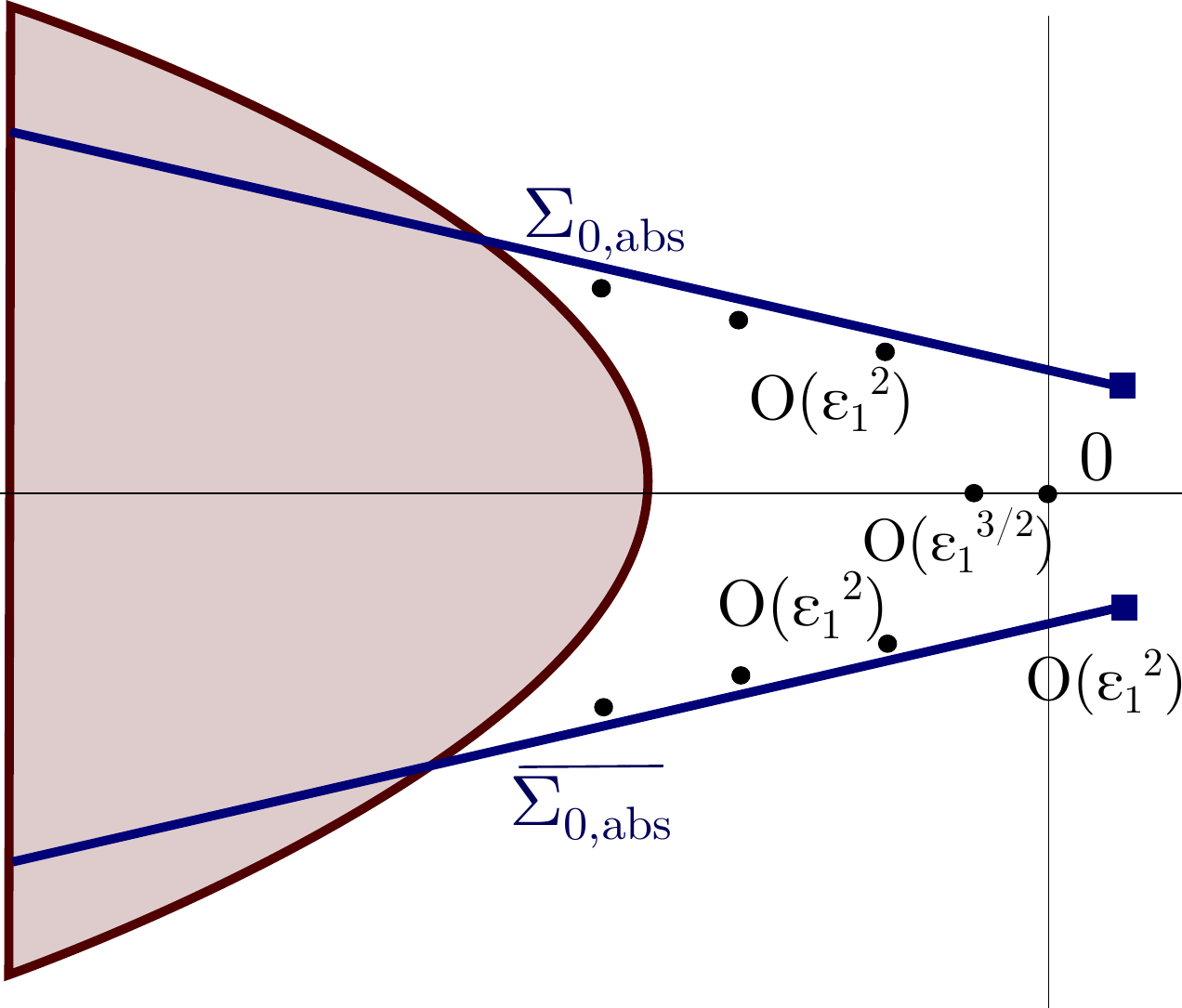}
   \caption{Schematic depiction of the spectrum of the linearization $\El_\tf$ about the pattern-forming front, in both the unweighted space $L^2(\R,\C^2)$ (left) and the weighted space $L^2_{\kappa}(\R,\C^2)$ (right), including essential spectrum (red) and point spectrum (black dots). The right plot also includes the absolute spectrum (blue curves), $\Sigma_{0,\mathrm{abs}}$, and its complex conjugate, of the trivial state $A\equiv0$ in~\eqref{e:TW2} for $\chi\equiv1$, with branch points $\lambda_\mathrm{br}(c),\overline{\lambda_\mathrm{br}}(c)$ denoted as blue squares. The various ``big-oh" notations give the distance of the eigenvalues and the branch points from the origin as $\epsilon_1 \sim \Delta c^{1/2} \searrow0.$} 
   \label{fig:basespec}
\end{figure}

\begin{remark} 
{\rm We emphasize for the parameter range in the above result that the unstable set $\Sigma_{0,\mathrm{abs}} \cup \overline{\Sigma_{0,\mathrm{abs}} }$ is not contained in the absolute spectrum of the linearization about the front, $\mathcal{L}_\tf$. This of course is because the absolute spectrum of the asymptotically constant operator $\mathcal{L}_\tf$ is determined by its end states
, see Appendix~\ref{ss:app_absp}. As the absolute spectrum of each of these states must lie to the left of the essential spectrum of the linear operator, our parameter assumptions imply that they must be contained in the open left-half plane, bounded away from the imaginary axis.}
\end{remark}

The first assertion in Theorem~\ref{theospecstab} yields spectral stability of the pattern-forming front in the translational invariant space $L^2(\R,\C^2)$, whereas the exponential weight in assertion ii) shifts the spectrum associated with the periodic end state at $-\infty$ to the left, and thus reveals the embedded eigenvalue at the origin, see Figure~\ref{fig:basespec}. This simple eigenvalue arises due to gauge invariance of the cGL equation. We expect that the spectral information in Theorem~\ref{theospecstab}, i.e.~assertions i) and ii) combined, is sufficient to prove nonlinear stability of the pattern-forming front as a solution to~\eqref{e:cgl0} via a similar approach as in~\cite{BEC} (cf.~Hypothesis~2.3 in~\cite{BEC}). We do note that, in contrast to the spatially homogeneous setting in~\cite{BEC}, the inhomogeneous cGL equation~\eqref{e:cgl0} is not translational invariant in $\xi$ and, thus, possesses no additional eigenvalue at $0$. We refer to~\S\ref{ss:nonlstab} for further discussion.

To gain intuitive understanding of Theorem~\ref{theospecstab}, and to stress the difference with the spatially homogeneous setting described in~\S\ref{ss:inv}, we next formally outline the expected mechanism for stability of pattern-forming fronts in~\eqref{e:cgl0} with $\chi(\xi) = -\text{sgn}(\xi)$.

\subsection{Heuristic mechanism: stability of a front with absolutely unstable plateau }

To develop understanding of how the spectrum of the linearization $\El_\tf$ about the pattern-forming front solution $A_\tf(\xi)$ to~\eqref{e:TW2} behaves, one can view this solution as a composite, matching, or ``gluing", of three states: the diffusively stable asymptotic periodic pattern $A_p(\xi)$ for $\xi \in( -\infty,\xi_{\tf})$, the stable rest state $A \equiv 0$ for $\xi\in(0,\infty)$, and the unstable plateau state $A \equiv 0$ for $\xi\in (\xi_{\mathrm{tf}},0)$ in between; see Figure~\ref{fig:prof}.
Here, we recall that $A\equiv 0$ is stable as a solution to~\eqref{e:TW2} for $\chi \equiv -1$ and (absolutely) unstable for $\chi \equiv 1$.

Since the two asymptotic states of the front $A_\tf(\xi)$ are stable (so that the essential spectrum of the front is also stable), one only needs to focus on point spectrum arising from the plateau region and from the interfaces between each state. Viewing the composite front as a gluing of two separate fronts, one between the stable rest state and the unstable rest state across the inhomogeneity at $\xi = 0$ and another between the unstable rest state and the stable periodic pattern, the main result of~\cite{SAN3} gives that all but finitely many eigenvalues accumulate onto the absolute spectrum of the plateau state as the length of the plateau region increases, or in other words, as $c\nearrow c_\mathrm{lin}$.  Due to the introduction of the new variable $b = \overline{a}$ in the perturbation equation~\eqref{pertubeq}, the absolute spectrum~\eqref{e:abs0} of the plateau state is now given by $\Sigma_{0,\mathrm{abs}} \cup \smash{\overline{\Sigma_{0,\abs}}}$.

Furthermore, the result in~\cite{SAN3} also implies that, as the plateau width $|\xi_{\tf}| \sim (\Delta c)^{-1/2}$ increases, the point spectrum accumulates with rate $\mathcal{O}(1/|\xi_\mathrm{tf}|^2) = \mathcal{O}(\Delta c)$ onto the branch points, $\lambda_{\mathrm{br}}(c),\overline{\lambda_{\mathrm{br}}(c)}$, which lie at the right most part of the absolute spectrum $\Sigma_{0,\mathrm{abs}} \cup \smash{\overline{\Sigma_{0,\abs}}}$.  At the same time, recall that as $\Delta c\searrow0$ (i.e.~$c\nearrow c_\mathrm{lin}$) these branch points become less unstable, satisfying
$$
\mathrm{Re}\, \lambda_{\mathrm{br}}(c_\mathrm{lin} - \Delta c) = \frac{\Delta c}{\sqrt{1+\alpha^2}} - \frac{\Delta c^2}{4(1+\alpha^2)}. 
$$
In sum, \emph{as $c\nearrow c_\mathrm{lin}$, point spectrum accumulates onto the absolute spectrum with the same rate as the absolute spectrum stabilizes}, indicating that it may be possible for point spectrum to in fact be \emph{stable}. See Figure~\ref{fig:abs_spec} for a schematic depiction of this phenomena, and Figure~\ref{fig:num_spec} below for numerical computations of the spectrum indicating that this is indeed the case. From a phenomenological point of view, one could interpret this potential stability as follows. If the pattern-forming front is locally perturbed in the plateau region,  the absolute instability of the nearby state $A = 0$ indicates that this perturbation should grow with rate $\Delta c$, but if the plateau domain is not long enough,  the perturbation might get convected into the bulk of the front and then diffusively decay away before it can grow and saturate the domain. 
\begin{figure}
\centering
     \hspace{-0.5in}\includegraphics[width = 0.6\textwidth]{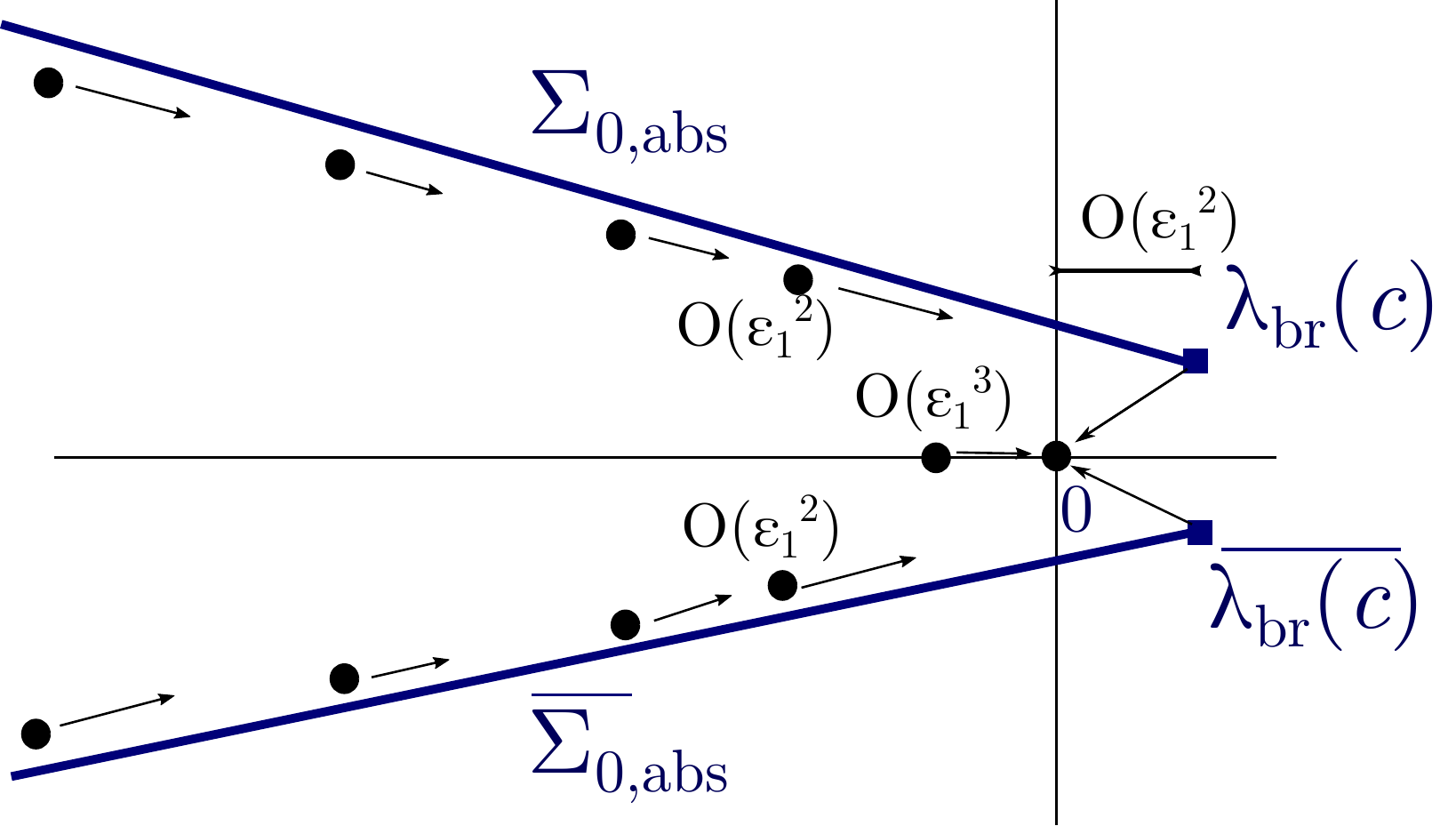}
   \hspace{-0.5in}
   \caption{ The absolute spectrum of the base state (blue lines) and eigenvalues (black dots) which limit onto $\Sigma_{0,\mathrm{abs}} \cup \smash{\overline{\Sigma_{0,\abs}}}$ and accumulate onto the branch-points $\lambda_\mathrm{br}, \overline{\lambda_{\mathrm{br}}}$, as $\epsilon_1\searrow0$. The various ``big-oh" notations give the rate at which the eigenvalues and the branch points of the absolute spectrum converge (as $\epsilon_1\searrow0$). } 
   \label{fig:abs_spec}
\end{figure}

\subsection{Overview of our approach} \label{sec:overview}

We start by reducing complexity in the existence and eigenvalue problems through rescaling and reparameterization, factoring out the gauge invariance in the existence problem and roughly eliminating the dispersion parameter $\alpha$. Since the (transformed) linearization has asymptotically constant coefficients, we can explicitly determine its essential spectrum. This leads us to focus on the point spectrum, posed on the exponentially weighted function space $L^2_\kappa\big(\R,\C^2\big)$ where eigenfunctions may grow exponentially as $\xi \to \pm \infty$ with rate smaller than $\kappa$. By formulating the eigenvalue problem as a first-order system with the unbounded spatial variable $\xi$ taking the role of the evolutionary variable, eigenfunctions can be constructed as intersections of invariant subspaces with certain asymptotic decay properties at $\xi = \pm \infty.$ In previous works studying stability of fronts, pulses, and coherent structures, such intersections were tracked using a complex analytic function of the spectral parameter $\lambda$, known as the \emph{Evans function}~\cite{KAP,SAN2}, whose zeros give eigenvalue locations, including (algebraic) multiplicity.

The main difficulty of this work arises for spectral parameter values $\lambda$ near $\Sigma_{0,\abs} \cup \smash{\overline{\Sigma_{0,\abs}}}$, the absolute spectrum of the plateau state.  In this region, the spatial eigenvalues of the associated linear system in the plateau region, $\xi\in (\xi_{\tf},0)$, lack a uniform spectral gap, precluding one from regaining hyperbolicity with an exponential weight as mentioned above. Thus, to evolve subspaces in this region, one would need a smooth, parameter-dependent change of coordinates (sometimes referred to as an Arnold normal form~\cite{arnold1971matrices,GS15}) to unfold the dynamics in $\lambda$ and $c$. Instead of taking this approach, we take inspiration from the existence problem~\cite{GS14} and projectivize the linear flow, studying the evolution of invariant subspaces as trajectories on the relevant Grassmannian manifold. By performing an analogous ``blow-up" of the linear system, and coordinatizing the Grassmannian with \emph{frame coordinates}, such trajectories are described as solutions to a matrix Riccati differential equation for each coordinate chart of the manifold. Hence one can construct eigenfunctions by finding intersections of corresponding trajectories in the matrix Riccati equation~\cite{beck2015computing,ledoux2010grassmannian,LED,SAN4}. Given a coordinate chart of the Grassmannian manifold, intersections can be located using a meromorphic function of the spectral parameter $\lambda$, known as the \emph{Riccati-Evans function}, whose zeros are in 1-to-1 correspondence to the eigenvalues, including (algebraic) multiplicity, and whose poles indicate $\lambda$-values at which trajectories have left the coordinate chart. Such poles often occur for $\lambda$-values close to the absolute spectrum where the eigenvalue problem exhibits highly oscillatory behavior. The Riccati-Evans function was introduced quite recently (mostly as a numerical tool) to study stability~\cite{HARLEY201536,harley2019instability}, but has not been used in the presence of absolute spectrum.

We split our spectral analysis into parts by dividing the complex plane into three regions, a neighborhood of the origin, where the branch points $\lambda_{\mathrm{br}}(c),\overline{\lambda_{\mathrm{br}}(c)}$ of the absolute spectrum are located and where we find most eigenvalues accumulate as $c \nearrow c_\lin$, an intermediate region bounded away from the absolute spectrum, and a neighborhood of infinity. In the last region, standard scaling arguments preclude the existence of eigenvalues. In the intermediate region, the eigenvalue problem reduces to a coupled Sturm-Liouville problem in the limit $c \nearrow c_\lin$, whose eigenvalues can be bounded away using an $L^2$-energy estimate, after which a perturbative approach allows us to approximate the Riccati-Evans function showing it does not vanish anywhere in the region. The first region, lying near the origin, is the most critical as it is where absolute spectrum lies and where eigenvalues accumulate as $c\nearrow c_\mathrm{lin}$. Here, we use a Riemann surface unfolding in combination with the superposition principle to track the relevant trajectories in the matrix Riccati equation. Using winding number arguments we can enclose eigenvalues in a discrete set of $\mathcal{O}((\Delta c)^{3/2})$-disks $D_j$, where the number $j \in \Z \setminus \{0\}$ can be interpreted as the number of times eigenfunctions wind around the fixed points of the associated Riccati equation, cf.~Remark~\ref{junifrem}. Only the most critical disk, which contains at most two eigenvalues and is centered at the origin, intersects the closed right-half plane. One of these eigenvalues must be 0 due to gauge invariance, whereas the other one must be real and negative by a subtle parity argument involving the derivative of the Riccati-Evans function at $0$.

\paragraph{Contributions}

To our knowledge, this work is the first rigorous result considering the spectral stability of a quenched pattern-forming front. Thereby, it is a first step towards addressing nonlinear stability of such pattern-forming fronts against perturbations which are small in translational invariant norms. Broadly speaking, our result indicates that, as long as pattern-formation is controlled by a spatial inhomogeneity progressively quenching the system in an unstable state, the invasion process is expected to be stable against ``natural'' classes of perturbations.

We expect our analysis to be prototypical in the sense that similar mechanisms (i.e.~the same subtle dance between accumulating point spectrum and stabilizing weakly unstable absolute spectrum) will govern the stability of pattern-forming fronts in other important spatially inhomogeneous models where the quenching is suitably ``fast", such as the Swift-Hohenberg equation, the Cahn-Hilliard equation, and certain reaction-diffusion systems. This is discussed more in~\S\ref{disc}. At the phenomenological level, our result shows the somewhat subtle and unexpected phenomena of a (spectrally) \emph{stable} pattern-forming front with a long plateau state lying near an \emph{absolutely unstable} base state. More generally, it contributes a novel and explanatory example to the recently growing set of works where absolute spectrum plays a role in governing the stability and bifurcation of coherent structures~\cite{CRS,DHM,SS20}. We note that our case is novel as we exhibit a situation where the absolute spectrum of the plateau state is unstable while the spectrum of the linearization about the front is stable. 

On the technical side, we give an example of how a matrix Riccati formulation can be used to rigorously explore subtle behaviors in the stability problem and unfold dynamics for spectral parameters where no hyperbolicity of spatial eigenvalues can be recovered. Our work should also provide useful information and insight into studying more complicated problems where the first-order system formulation of the eigenvalue problem is infinite-dimensional. 

Finally, to briefly comment on the other prototypical type of invasion front which can be controlled by a quenching mechanism, we also provide numerical results for spectral stability and instability of quenched fronts in the cGL equation with a \emph{subcritical} cubic-quintic nonlinearity. In this case the free invasion front in the homogeneous, non-quenched, system is not pulled, but \emph{pushed}, that is the front spreads faster than predicted by the linear information about the unstable base state. Our simulations indicate that stability is not governed by the absolute spectrum of the plateau state, but by a single fold eigenvalue, reminiscent of snaking phenomena; see~\S\ref{ss:push}.

\paragraph{Outline of paper}

First, we introduce the Riccati-Evans function in~\S\ref{sec:ric} as a tool to locate point spectrum of general second-order operators and outline its relation to the standard Evans function. Subsequently, in~\S\ref{sec:rep} we reduce complexity in the existence and eigenvalue problems through rescaling and reparameterization. In~\S\ref{sec:exis} we summarize and slightly extend the existence result of pattern-forming fronts in~\cite{GS14} to suit our needs. In~\S\ref{secsetup} we formulate our spectral stability result in the rescaled and reparameterized coordinates and sketch the set-up of our spectral analysis dividing the complex plane into three regions. The essential spectrum is then determined in~\S\ref{sec:ess}, whereas the analysis of the point spectrum, which is the core of this paper, is the content of~\S\ref{sec:point}-\S\ref{sec:pointR1}. We conclude the proof of our main result, Theorem~\ref{theospecstab}, in~\S\ref{sec:proof}. Finally, in~\S\ref{disc} we discuss potential applications and extensions of our work, as well as provide numerical results for spectral stability of fronts for parameters outside the regime rigorously considered here, and for quenched fronts in the cGL equation with subcritical  nonlinearity. The appendices~\ref{appexpdi} and~\ref{appess} provide some background information on exponential dichotomies, essential spectrum and absolute spectrum.

\section{The Riccati-Evans function} \label{sec:ric}

In this paper, we use the Riccati-Evans function as a tool to locate the critical point spectrum. Below, we construct the Riccati-Evans function for general second-order operators posed on exponentially weighted $L^2$-spaces.

\begin{definition} \label{defspaces} {\upshape
Let $\kappa_\pm \in \R$. For $r \in \N_0$ and $n \in \N$, we define the weighted Sobolev spaces $H^r_{\kappa_-,\kappa_+}(\R,\C^n)$ of $r$-times weakly differentiable functions $\phi \colon \R \to \C^n$ through
the associated norm
\begin{align*} \|\phi\|_{H^r_{\kappa_-,\kappa_+}}^2 = \sum_{j = 0}^r \left(\int_{-\infty}^0 \left\|\partial_x^j \phi(x)\right\|^2 \re^{-2\kappa_- x} \de x + \int_0^\infty \left\|\partial_x^j \phi(x)\right\|^2 \re^{-2\kappa_+ x} \de x\right).\end{align*}
We denote $L^2_{\kappa_-,\kappa_+}(\R,\C^n) = H^0_{\kappa_-,\kappa_+}(\R,\C^n)$ and abbreviate $H^r_{-\kappa,\kappa}(\R,\C^n) = H^r_\kappa(\R,\C^n)$ for $\kappa \in \R$.}
\end{definition}

Let $k_\pm \in \R$ and $n \in \N$. Let $\Omega \subset \C$ be an open and bounded region, where we wish to locate point spectrum of a second-order elliptic operator $\El \colon D(\El) \subset L^2_{\kappa_-,\kappa_+}(\R,\C^n) \to L^2_{\kappa_-,\kappa_+}(\R,\C^n)$, with domain $D(\El) = H^2_{\kappa_-,\kappa_+}(\R,\C^n)$, given by
\begin{align*} \El u = D\partial_{x}^2 u + A_1(x) \partial_x u + A_0(x)u, \end{align*}
with $D \in \C^{n \times n}$ a positive matrix and bounded and continuous coefficients functions $A_0, A_1 \colon \R \to \C^{n\times n}$. The eigenvalue problem $\El u = \lambda u$ can be written as a first-order system
\begin{align}
 \phi_x = A(x,\lambda) \phi, \qquad \phi \in \C^{2n}, \label{linsysan}
\end{align}
with $A \in C(\R \times \Omega, \C^{2n \times 2n})$ such that $A(x,\cdot)$ is analytic on $\Omega$ for each $x \in \R$, and $\partial_\lambda A(\cdot,\lambda)$ is bounded on $\R$ for each $\lambda \in \Omega$. We assume that the weighted eigenvalue problems
\begin{align}
 \phi_x = (A(x,\lambda) - \kappa_\pm) \phi, \qquad \phi \in \C^{2n}, \label{linsysan2}
\end{align}
admit for each $\lambda \in \Omega$ \emph{exponential dichotomies}, see Appendix~\ref{appexpdi}, on $[0,\infty)$ and $(-\infty,0]$ with rank $n$ projections $P_\pm(x,\lambda)$ on $\C^{2n}$, which depend analytically on $\lambda$. Typically, $P_+$ is called the stable projection for $x\rightarrow+\infty$, and $P_-$ the unstable projection for $x\rightarrow-\infty$. We emphasize that such exponential dichotomies exist as long as $\lambda$ lies to the right of the essential spectrum of the operator $\El$, cf.~\cite[Theorem 3.2]{SAN2}, because the Morse indices of the exponential dichotomies must be equal to $n$ for $\lambda$ to the right of the essential spectrum by the second-order structure of $\El$.

Thus, $\lambda \in \Omega$ lies in the point spectrum of $\El$ if and only if there exists a nontrivial solutions $\phi \in \smash{H^1_{\kappa_-,\kappa_+}(\R,\C^{2n})}$ to~\eqref{linsysan}, which is, by continuity and boundedness of the matrix $A$ and by the exponential dichotomies of~\eqref{linsysan2}, equivalent to finding nontrivial $C^1$-solutions $\phi$ to~\eqref{linsysan} with initial condition $\phi(0) \in \ker(P_-(0,\lambda)) \cap P_+(0,\lambda)[\C^{2n}]$. We choose bases
\begin{align*}\Phi_\pm(\pm x,\lambda) = \begin{pmatrix} X_\pm(\pm x,\lambda)\\ Y_\pm(\pm x,\lambda)\end{pmatrix} \in \C^{2n \times n}, \qquad x\geq 0,\lambda \in \Omega,\end{align*}
of the relevant subspaces $\ker(P_-(-x,\lambda))$ and $P_+(x,\lambda)[\C^{2n}]$, respectively, which depend analytically on $\lambda$. Consider the analytic maps $S_\pm \colon \R_\pm \times \Omega \to \C$ given by $S_\pm(\pm x,\lambda) = \det(X_\pm(\pm x,\lambda))$. For $x \geq 0$ and $\lambda \in \Omega$ with $S_\pm(\pm x,\lambda) \neq 0$ the relevant subspaces can be represented by
\begin{align*}T_\pm(\pm x,\lambda) := Y_\pm(\pm x,\lambda)X_\pm(\pm x,\lambda)^{-1} \in \C^{n \times n}.\end{align*}

In fact, for $x \geq 0$, the $n$-dimensional subspaces $\ker(P_-(-x,\lambda))$ and $P_+(x,\lambda)[\C^{2n}]$ in the \emph{Grassmannian manifold} $\mathrm{Gr}(n,\C^{2n})$ are mapped to $T_\pm(\pm x,\lambda)$ under the coordinate chart $\mathfrak{c}$, which maps any $n$-dimensional subspace $W$ of $\C^{2n}$ represented by a basis $\smash{\left(\begin{smallmatrix} X \\ Y \end{smallmatrix}\right) \in \C^{2n \times n}}$ with $\det(X) \neq 0$ to the matrix $YX^{-1} \in \C^{n \times n}$; see ~\cite{ledoux2010grassmannian} and references therein. We emphasize that this mapping is well-defined: if $W$ is also represented by the basis $\smash{\left(\begin{smallmatrix} \tilde{X} \\ \tilde{Y}\end{smallmatrix}\right)} \in \C^{2n \times n}$, there is an invertible matrix $\tau \in \C^{n \times n}$ such that $\smash{\left(\begin{smallmatrix} \tilde{X} \\ \tilde{Y} \end{smallmatrix}\right) = \left(\begin{smallmatrix} X \\ Y \end{smallmatrix}\right)\tau}$ yielding $\det(\tilde{X}) \neq 0$ and $\tilde{Y}\tilde{X}^{-1} = YX^{-1}$.

Writing the coefficient matrix $A(x,\lambda)$ of~\eqref{linsysan} as a block matrix
\begin{align*}
A(x,\lambda) := \begin{pmatrix} A_{11}(x,\lambda) & A_{12}(x,\lambda)\\ A_{21}(x,\lambda) & A_{22}(x,\lambda) \end{pmatrix},
\end{align*}
with $A_{ij}(x,\lambda) \in \C^{n \times n}$, we observe that $T_-(x,\lambda)$ and $T_+(x,\lambda)$ are solutions to the \emph{matrix Riccati equation} ~\cite{levin1959matrix,shayman1986phase} 
\begin{align}
T_x = A_{21}(x,\lambda) + A_{22}(x,\lambda) T - T A_{11}(x,\lambda) - T A_{12}(x,\lambda) T, \qquad T \in \C^{n \times n}, \label{ric}
\end{align}
for $x \in (-\infty,0]$ such that $S_-(x,\lambda) \neq 0$, and $x \in [0,\infty)$ such that $S_+(x,\lambda) \neq 0$, respectively. In other words, the evolution of the $n$-dimensional subspaces $\ker(P_-(-x,\lambda))$ and $P_+(x,\lambda)[\C^{2n}]$ is captured by the Riccati flow~\eqref{ric}, and nontrivial intersections can be located by the following determinantal function.

\begin{definition}
{\upshape
Let $\Es := \{\lambda \in \Omega : S_+(0,\lambda) = 0 \text{ or } S_-(0,\lambda) = 0\}$. We define the \emph{Riccati-Evans function} $\E \colon \Omega \setminus \Es \to \C$ by
\begin{align*} \E(\lambda) = \det(T_+(0,\lambda) - T_-(0,\lambda)).\end{align*}}
\end{definition}

The following result follows immediately by the above construction, see also~\cite{harley2019instability,HARLEY201536}.

\begin{proposition} \label{prop:ric}
The Riccati-Evans function $\E \colon \Omega \setminus \Es \to \C$ enjoys the following properties:
\begin{itemize}
\item[1.] The Riccati-Evans function can be related to the classical Evans function $E \colon \Omega \to \C$ given by $E(\lambda) = \det(\Phi_-(0,\lambda) \mid \Phi_+(0,\lambda))$ through the formula
\begin{align} \E(\lambda) = \frac{E(\lambda)}{S_-(0,\lambda)S_+(0,\lambda)}, \qquad \lambda \in \Omega \setminus \Es. \label{relEvans}\end{align}
\item[2.] $\E$ is meromorphic on $\Omega$.
\item[3.] $\E$ vanishes at some $\lambda_0 \in \Omega \setminus \Es$ if and only if $\lambda_0$ is an eigenvalue of $\El$. Moreover, the multiplicity of $\lambda_0$ as a root of $\E$ agrees with the algebraic multiplicity of $\lambda_0$ as an eigenvalue of $\El$.
\item[4.] The Riccati-Evans function is uniquely determined and does not depend on the choice of bases.
\end{itemize}
\end{proposition}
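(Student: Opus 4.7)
The plan is to establish property (1) by a direct block-determinant computation, from which properties (2) and (3) follow via standard Evans-function theory, while (4) is immediate from the basis-independence of the Grassmannian chart $\mathfrak{c}$ already noted just before the proposition statement.

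For property (1), I would write the classical Evans function as the $2n \times 2n$ block determinant and carry out the Schur-complement factorization
$$\begin{pmatrix} X_-(0,\lambda) & X_+(0,\lambda) \\ Y_-(0,\lambda) & Y_+(0,\lambda)\end{pmatrix} = \begin{pmatrix} I & 0 \\ T_-(0,\lambda) & I \end{pmatrix}\begin{pmatrix} X_-(0,\lambda) & X_+(0,\lambda) \\ 0 & (T_+(0,\lambda) - T_-(0,\lambda))X_+(0,\lambda)\end{pmatrix},$$
which is valid on $\Omega \setminus \Es$ because $X_\pm(0,\lambda)$ are invertible there, and is verified by direct matrix multiplication using $T_\pm X_\pm = Y_\pm$. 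Taking determinants of both sides (the first factor has determinant $1$, and the second is block upper triangular) produces $E(\lambda) = S_-(0,\lambda)S_+(0,\lambda)\E(\lambda)$, which is precisely~\eqref{relEvans}.

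Properties (2) and (3) then follow quickly from~\eqref{relEvans}. The classical Evans function $E$ is analytic on $\Omega$ by the analytic dependence on $\lambda$ of the exponential dichotomies assumed in the construction, and $S_\pm(0,\cdot)$ are analytic on $\Omega$ as well. Provided the product $S_-(0,\cdot)S_+(0,\cdot)$ is not identically zero on $\Omega$ --- which one confirms by exhibiting a single $\lambda \in \Omega$ at which both factors are nonzero and invoking the identity theorem ---~\eqref{relEvans} exhibits $\E$ as a ratio of an analytic function by a non-identically-vanishing analytic function, hence as meromorphic on $\Omega$, yielding (2). For (3), classical Evans-function theory (e.g.\ Kapitula-Promislow or Sandstede) gives that $\lambda_0$ is an eigenvalue of $\El$ of algebraic multiplicity $m$ precisely when $E$ has a zero of order $m$ at $\lambda_0$; for $\lambda_0 \in \Omega \setminus \Es$, the factor $S_-(0,\cdot)S_+(0,\cdot)$ is analytic and nonzero at $\lambda_0$ and therefore contributes nothing to the order of vanishing, yielding the claimed equivalence and multiplicity agreement.

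Finally, property (4) is immediate: any other analytic choice of bases $\widetilde{\Phi}_\pm$ of the subspaces $\ker(P_-(0,\lambda))$ and $P_+(0,\lambda)[\C^{2n}]$ is related to $\Phi_\pm$ by right multiplication $\Phi_\pm \mapsto \Phi_\pm \tau_\pm$ by invertible analytic $\tau_\pm \colon \Omega \to \C^{n \times n}$, under which $T_\pm = Y_\pm X_\pm^{-1}$ --- and hence $\E = \det(T_+ - T_-)$ --- is unchanged. The only mildly nontrivial ingredient of the entire argument is the block factorization in (1); the rest is bookkeeping once the classical Evans-function multiplicity theorem is invoked, and the main ``obstacle'' is simply verifying that $S_-(0,\cdot)S_+(0,\cdot) \not\equiv 0$ on $\Omega$ so that meromorphy is meaningful.
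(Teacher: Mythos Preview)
Your proposal is correct and fills in precisely the details the paper omits: the paper does not give a proof of this proposition at all, stating only that it ``follows immediately by the above construction'' and citing \cite{harley2019instability,HARLEY201536}. Your Schur-complement factorization is the natural way to verify~\eqref{relEvans}, and your derivations of (2)--(4) from it are standard and accurate.
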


The advantage of the Riccati-Evans function over the classical Evans function, is that one tracks the flow of subspaces rather than individual solutions to~\eqref{linsysan}. Thus, the Riccati-Evans function does not depend on the choice of bases, which simplifies parity arguments in the spectral analysis significantly. However, one has to bare in mind that the dynamics in the matrix Riccati equation~\eqref{ric} can be highly complicated and solutions might exhibit singularities, which explains the meromorphic character of $\E$. Consequently, winding number arguments with the Riccati-Evans function $\E$ are more involved than with the classical Evans function, since, in addition to the winding number of $\E$, one needs to calculate the winding number of $S_\pm(0,\cdot)$ to compute the number of zeros of $\E$. In our upcoming spectral analysis, we can control the dynamics in a matrix Riccati equation in $\C^{2 \times 2}$ by using superposition principles to perturb from an invariant subset of diagonal solutions.

\subsection{Derivative of the Riccati-Evans function} \label{secderric}

In our spectral analysis we find that the two most critical eigenvalues correspond to two simple real roots of the Riccati-Evans function. One of these roots must reside at the origin due to gauge invariance. The position of the other critical eigenvalue can be determined via a parity argument, which involves the sign of the derivative $\E'(0)$. Below we establish, in the general setting of the previous subsection, an expression for the derivative of the Riccati-Evans function at a simple root.

Let $\lambda_0 \in \Omega \setminus \Es$ be a root of the Riccati-Evans function such that $\ker(P_-(0,\lambda_0)) \cap P_+(0,\lambda_0)[\C^{2n}]$ is one-dimensional, i.e.~such that the geometric multiplicity of $\lambda_0$ as an eigenvalue of the operator $\El$ equals one. Take a nonzero vector $\phi_0 \in \ker(P_-(0,\lambda_0)) \cap P_+(0,\lambda_0)[\C^{2n}]$. Choose bases $\tilde{\Phi}_-$ of $\ker(P_-(0,\lambda_0))$ and $\tilde{\Phi}_+$ of $P_+(0,\lambda_0)[\C^{2n}]$ such that the first column of $\tilde{\Phi}_\pm$ equals $\phi_0$. There exist invertible matrices $\tau_\pm \in \C^{n \times n}$ such that $\tilde{\Phi}_\pm = \Phi_\pm(0,\lambda_0)\tau_\pm$. Thus, $\tilde{\Phi}_\pm(\lambda) := \Phi_\pm(0,\lambda)\tau_\pm$ are also analytic bases of $\ker(P_-(0,\lambda))$ and $P_+(0,\lambda)[\C^{2n}]$, respectively, for $\lambda \in \Omega$.

The subspace $\ker(P_-(0,\lambda_0))^\perp \cap P_+(0,\lambda_0)[\C^{2n}]^\perp$ is one-dimensional and thus spanned by some vector $\psi_0 \in \C^{2n}$. Let $\phi_0(x)$ be the solution to~\eqref{linsysan} at $\lambda = \lambda_0$ with initial condition $\phi_0$ and let $\psi_0(x)$ be the solution to the adjoint problem
\begin{align}
\psi_x = -A(x,\lambda_0)^* \psi, \qquad \psi \in \C^{2n}, \label{adjointan}
\end{align}
with initial condition $\psi_0$. Since the systems~\eqref{linsysan2} at $\lambda = \lambda_0$ admit exponential dichotomies on $[0,\infty)$ and $(-\infty,0]$, respectively, we find that the same holds true for their adjoints
\begin{align}
\psi_x = \left(-A(x,\lambda_0)^* + k_\pm\right) \psi, \qquad \psi \in \C^{2n}, \label{adjointan2}
\end{align}
with associated rank $n$ projections $I_n - P_\pm(x,\lambda_0)^*$ on $\C^{2n}$. Since it holds
\begin{align*}\ker(P_-(0,\lambda_0))^\perp \cap P_+(0,\lambda_0)[\C^{2n}]^\perp = \ker(I_n - P_-(0,\lambda_0)^*) \cap (I_n - P_+(0,\lambda)^*)[\C^{2n}],\end{align*}
$\psi_0(x)$ is a nontrivial solution to~\eqref{adjointan} in $\smash{H^1_{-k_-,-k_+}(\R,\C^{2n})}$, which is unique up to scalar multiples.

The derivative of the classical Evans function $\tilde{E} \colon \Omega \to \C$ given by $\tilde{E}(\lambda) = \det(\tilde{\Phi}_-(\lambda) \mid \tilde{\Phi}_+(\lambda))$ can now be determined via a Lyapunov-Schmidt reduction procedure, which exploits the exponential dichotomies of~\eqref{linsysan2} and~\eqref{adjointan2}. As in~\cite[\S4.2.1]{SAN2}, we find
\begin{align}
 \tilde{E}'(\lambda_0) = \frac{1}{\|\psi_0\|^2} \det\left(\tilde{\Psi}_- \mid \tilde{\Phi}_+\right) \int_\R \psi_0(x)^* \partial_\lambda A(x,\lambda_0) \phi_0(x) \mathrm{d} x, \label{clasderiv1}
\end{align}
where $\tilde{\Psi}_-$ is obtained from $\tilde{\Phi}_-$ by replacing the first column by $\psi_0$. One readily observes via H\"older's inequality that the Melnikov-type integral in the latter is convergent, where we use that $\phi_0 \in L^2_{k_-,k_+}(\R,\C^{2n})$, $\psi_0 \in L^2_{-k_-,-k_+}(\R,\C^{2n})$ and $\partial_\lambda A(\cdot,\lambda_0)$ is bounded on $\R$. Taking derivatives in~\eqref{relEvans} we arrive at the derivative
\begin{align}
\E'(\lambda_0) = \frac{\tilde{E}'(\lambda_0)}{\det(\tilde{X}_-)\det(\tilde{X}_+)}, \label{ricderiv1}
\end{align}
of the Riccati-Evans function, where $\tilde{X}_\pm$ denote the upper $(n \times n)$-blocks of $\tilde{\Phi}_\pm$. Note that we used here that the Riccati-Evans function is independent of the choice of bases.

\section{Reducing complexity through rescaling and reparameterization} \label{sec:rep}

Theorem~\ref{t:ex0} is proved in~\cite{GS14} by reducing complexity in the traveling-wave equation~\eqref{e:TW}. We have already seen that the gauge action in~\eqref{e:TW} can be factored out by introducing the new variables $\rho = |A|$ and $z = A_\xi/A$, yielding the first-order problem~\eqref{e:TW3}. Subsequently, the dispersion parameter $\alpha$ is eliminated in~\cite{GS14} through appropriate rescaling and reparameterization in~\eqref{e:TW3}. In order to simplify the upcoming spectral analysis of the pattern-forming front, we follow the process in~\cite{GS14} and apply a similar rescaling and reparameterization to the linearization $\El_\tf$ of~\eqref{e:TW2} about the front solution $A_\tf(\xi)$.

\subsection{Rescaling and reparameterization in the existence problem} \label{sec:resc}

Completing the square in the $z$-equation in~\eqref{e:TW3} yields
\begin{align*}
z_\xi = -\left(z + \frac{\left(1-\ri \alpha\right)c}{2\left(1+\alpha^2\right)}\right)^2 + \frac{m^2}{1+\alpha^2}\left(\frac{\left(1-\chi(\xi)\right)(1-\ri\alpha)}{m^2} - \epsilon_1^2 + \frac{\epsilon_1^4}{4} + \ri\hat{\omega} + \left(1+\ri\epsilon_2\right)l^{2}\rho^2\right)
\end{align*}
where we denote
\begin{align}
 \begin{split}
m^2 = 1 + \frac{\left(c\alpha\right)^2}{4\left(1+\alpha^2\right)} - \alpha \omega, \qquad l^2 &= \frac{1+\alpha \gamma}{m^2}, \qquad
\epsilon_1 = \sqrt{2 - \frac{c}{m\sqrt{1+\alpha^2}}},\\
\hat{\omega} = \frac{\omega - \omega_{\mathrm{abs}}}{m^2}, &\qquad \epsilon_2 = \frac{\gamma - \alpha}{1+\alpha \gamma}.
\end{split}\label{reparameterization}
\end{align}
Note that it holds $m \geq 1$ in the relevant regime $0 < \Delta c := c_\lin - c \ll 1$ of Theorem~\ref{t:ex0}, where we use the expansion $\omega = \omega_\abs + \ord((\Delta c)^{3/2})$. Hence, the parameter $\epsilon_1 \sim \sqrt{\Delta c}$ is well-defined, and the parameter regime $0 < \Delta c \ll 1$ and $0 \leq |\gamma - \alpha| \ll 1$ in Theorem~\ref{t:ex0} is, after the reparameterization~\eqref{reparameterization}, captured by taking $0 < \epsilon_1 \ll 1$ and $0 \leq |\epsilon_2| \ll 1$. For notational convenience we abbreviate $\epsilon := (\epsilon_1,\epsilon_2)$ throughout the manuscript.

Now we observe that the parameter $\alpha$ can be eliminated in~\eqref{e:TW3} for $\xi < 0$ by rescaling $z$, $\rho$ and $\xi$. Indeed, setting
\begin{align}
\begin{split}
R &= l^{2} \rho^2, \qquad \zeta = \frac{m}{\sqrt{1+\alpha^2}} \xi, \qquad
\hat{z} = \left(z + \frac{\left(1-\ri \alpha\right)c}{2\left(1+\alpha^2\right)}\right)\frac{\sqrt{1+\alpha^2}}{m},
\end{split} \label{rescaling}
\end{align}
transforms system~\eqref{e:TW3} into
\begin{align}
\begin{split}
\hat{z}_\zeta &= -\hat{z}^2 + \mu + (1 + \ri \epsilon_2) R + \frac{1 - \chi}{m^2}(1 - \ri \alpha),\\
R_\zeta &= R\left(\hat{z} + \overline{\hat{z}} + \epsilon_1^2 - 2\right).
\end{split} \label{exprob}
\end{align}
with
$$\mu := -\epsilon_1^2 + \ri \hat{\omega} + \frac{\epsilon_1^4}{4}.$$
Since we can express $m$ as a smooth function of $\hat{\omega}$ and $\epsilon_1$ by
$$m^2 := \frac{1 + \alpha^2}{1 + \alpha^2 - \epsilon_1^2 \alpha^2 + \tfrac{1}{4}\epsilon_1^4 \alpha^2 + \alpha\hat{\omega}},$$
we find that system~\eqref{exprob} is smooth in $\alpha, \hat{\omega}$ and the small parameter $\epsilon = (\epsilon_1,\epsilon_2)$.

\subsection{Rescaling and reparameterization in the spectral problem}

Our idea is to reduce complexity in the spectral problem by applying a similar rescaling and reparameterization as in~\S\ref{sec:resc} to the linearization $\El_\tf$ of~\eqref{e:TW2} about the front solution $A_\tf(\xi)$. To do so, we first need to formulate the linear operator $\El_\tf$ in terms of the front solution $(z_\tf(\xi),\rho_\tf(\xi))$ to~\eqref{e:TW3}, where we of course denote $\rho_\tf(\xi) = |A_\tf(\xi)|$ and $z_\tf(\xi) = A_\tf'(\xi)/A_\tf(\xi)$. Thus, we set $a = w \rho_\tf$ and $b = y \rho_\tf$ and arrive, using~\eqref{asymptrigger} and~\eqref{asymptrigger2}, at the operator $\tilde{\El}_\tf$, posed on $\smash{L^2_{-\kappa, \kappa - 2\Re(z_+)}(\R,\C^2)}$, whose domain is $\smash{H^2_{-\kappa,\kappa-2\Re(z_+)}(\R,\C^2)}$ (see Definition~\ref{defspaces}), given by
\begin{align*}
\tilde{\El}_\tf\begin{pmatrix} w \\ y\end{pmatrix} = \begin{pmatrix} \tilde{L}_\tf(w,y) \\ \overline{\tilde{L}_\tf\left(\overline{w},\overline{y}\right)}\end{pmatrix},
\end{align*}
with
\begin{align*}
\tilde{L}_\tf(w,y) = \frac{1}{\rho_\tf} L_\tf\left(w \rho_\tf, y \rho_\tf\right) = (1+\ri \alpha) w_{\xi\xi} + \left(c + 2(1+\ri \alpha) z_\tf \right)w_\xi - (1+\ri \gamma) \rho_\tf^2 (w + y).
\end{align*}
We note that the rescaling of the perturbation by the front amplitude $\rho_\tf$, which is bounded away from the origin at $\xi = -\infty$, leaves the spectrum of the corresponding asymptotic state unchanged. Since $\rho_\tf\rightarrow0$ exponentially fast as $\xi\rightarrow\infty$, this shifts the spatial eigenvalues of the associated state to the right, and is the reason for the inclusion of an additional weight $-2\Re(z_+)$ for $\xi\geq0$, cf.~\eqref{asymptrigger2}.

Thus, by construction, we find that the spectra of the operators $\tilde{\El}_\tf$, posed on $\smash{L^2_{-\kappa,\kappa - 2\Re(z_+)}(\R,\C^2)}$, and of $\El_\tf$, posed on $L^2_\kappa(\R,\C^2)$, coincide, including multiplicities of eigenvalues.

Now, we can apply the reparameterization~\eqref{reparameterization} and rescaling~\eqref{rescaling} to the linear operator $\tilde{\El}_\tf$, posed on the space $\smash{L^2_{-\kappa,\kappa-2\Re(z_+)}(\R,\C^2)}$, and arrive at the operator $\hat{\El}_\tf$, posed on $\smash{L^2_{\hat{\kappa}_-,\hat{\kappa}_+}(\R,\C^2)}$ with domain $\smash{H^2_{\hat\kappa_-,\hat\kappa_+}(\R,\C^2)}$, which is given by
\begin{align*}
\hat{\El}_\tf\begin{pmatrix} w \\ y\end{pmatrix} = \begin{pmatrix} \hat{L}_\tf(w,y) \\ \overline{\hat{L}_\tf\left(\overline{w},\overline{y}\right)}\end{pmatrix},
\end{align*}
with
\begin{align*}
\hat{L}_\tf(w,y) = \tilde{L}_\tf\left(w\left(\cdot \ \frac{\sqrt{1+\alpha^2}}{m}\right), y\left(\cdot\ \frac{\sqrt{1+\alpha^2}}{m}\right)\right) = \left(w_{\zeta\zeta} + 2\hat{z}_\tf w_\zeta - (1 + \ri \epsilon_2) \rho_\tf^2 (w + y)\right)\frac{m^2}{(1- \ri \alpha)},
\end{align*}
and
\begin{align}
\begin{split}
\hat{\kappa}_- := -\frac{\kappa\sqrt{1+\alpha^2}}{m}, \qquad \hat{\kappa}_+ &:= \frac{\left(\kappa - 2\Re(z_+)\right)\sqrt{1+\alpha^2}}{m}\\ &= \frac{\kappa \sqrt{1+\alpha^2}}{m} + \Re \sqrt{\mu + \frac{2(1-\ri\alpha)}{m^2}} - \frac{\epsilon_1^2}{2} + 1,
\end{split}\label{defkappa}
\end{align}
where we note that rescaling the spatial coordinate by a factor $\smash{\sqrt{1+\alpha^2}}/m$ in~\eqref{rescaling} lead to a rescaling of the weights $-\kappa$ and $\kappa - 2\Re(z_+)$ for $\zeta \leq 0$ and $\zeta \geq 0$, respectively, by the same factor.

Thus, by construction, the spectra of the operators $\hat{\El}_\tf$, posed on $\smash{L^2_{\hat{\kappa}_-,\hat{\kappa}_+}(\R,\C^2)}$, and of $\tilde{\El}_\tf$, posed on $\smash{L^2_{-\kappa,\kappa - 2\Re(z_+)}(\R,\C^2)}$, coincide, including multiplicities of eigenvalues. In addition, the dispersion parameter $\alpha$ is, up to a scaling factor, eliminated from the operator $\hat{L}_\tf$. Therefore we have, as in the existence problem, reduced the complexity in the relevant spectral problem through the reparameterization~\eqref{reparameterization} and rescaling~\eqref{rescaling}.

\section{Overview of existence results} \label{sec:exis}

In this section, we collect the results from the existence analysis of the pattern-forming front in~\cite{GS14}, which are needed for our spectral analysis. The existence analysis in~\cite{GS14} is, in order to reduce complexity, performed in the rescaled and reparameterized traveling-wave equation~\eqref{exprob}. As explained in~\S\ref{sec:rep}, we adopt a similar reduction in our spectral analysis.

We start by reformulating the existence result, Theorem~\ref{t:ex0}, in terms of the rescaled and reparameterized system~\eqref{exprob}.

\begin{theorem} \label{t:ex02}
Let $\alpha \in \R$ and let $\delta > 0$ be fixed such that $\delta$ is sufficiently small. Then, provided $0 < \epsilon_1 \ll 1$ and $0 \leq |\epsilon_2| \ll 1$, there exists a front solution $\Psi_\mathrm{tf}(\zeta;\epsilon) = (\hat{z}_{\tf},R_{\tf})(\zeta;\epsilon)$ to~\eqref{exprob} for $\hat{\omega} = \hat{\omega}_{\tf}(\epsilon)$ between the hyperbolic fixed points
\begin{align*} \lim_{\zeta \to \infty} \Psi_\tf(\zeta;\epsilon) = \left(z_+(\epsilon),0\right), \qquad \lim_{\zeta \to -\infty} \Psi_\tf(\zeta;\epsilon) = \left(1 - \frac{\epsilon_1^2}{2} + \ri\hat{k}_\tf(\epsilon),{1-\hat{k}_\tf(\epsilon)^2}\right), \end{align*}
where
\begin{align}
\hat{z}_+(\epsilon) := -\sqrt{\frac{2(1-\ri\alpha)}{m(\epsilon)^2}-\epsilon_1^2 + \ri \hat{\omega} + \frac{\epsilon_1^4}{4}}, \label{defhatz}
\end{align}
and $\hat{k}_\tf(\epsilon)$ and $\hat{\omega}_\tf(\epsilon)$ are smooth at $\epsilon = (0,0)$ satisfying
\begin{align}
\hat{k}_\tf(0,\epsilon_2) = \frac{\epsilon_2}{1+\sqrt{1 + \epsilon_2^2}}, \qquad \partial_{\epsilon_1}^j \hat{\omega}_\tf(0,\epsilon_2) = 0, \quad j = 0,1,2, \label{preciseomegabound}\end{align}
The solution $\Psi_{\mathrm{tf}}(\zeta;\epsilon)$ is continuous at the jump heterogeneity at $\zeta = 0$. The position
\begin{align*}
\zeta_{\tf}(\epsilon) := \inf\left\{\zeta \,:\, R_\mathrm{tf}(\tilde\zeta;\epsilon) <\delta \text{ for all } \tilde\zeta > \zeta\right\},
\end{align*}
of the front interface satisfies
 \begin{align}
 \lim_{\epsilon \to (0,0)} \epsilon_1 \zeta_\tf(\epsilon) = \pi, \label{triggerbound}
\end{align}
and $\epsilon \mapsto \epsilon_1 \zeta_\tf(\epsilon)$ is smooth at $\epsilon = (0,0)$.
\end{theorem}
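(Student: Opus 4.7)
This result is a translation of Theorem~\ref{t:ex0} into the rescaled and reparameterized variables of Section~\ref{sec:rep}, so the proof plan is bookkeeping rather than any new dynamical argument. I would first apply the polar decomposition $\rho=|A|$, $z=A_\xi/A$ to the front $A_\tf$ of Theorem~\ref{t:ex0}, obtaining a continuously differentiable solution $(z_\tf,\rho_\tf)$ of~\eqref{e:TW3} (with continuity at the jump inherited from the $C^1$-regularity of $A_\tf$ at $\xi=0$), and then apply the rescaling~\eqref{rescaling} and reparameterization~\eqref{reparameterization} to produce $\Psi_\tf(\zeta;\epsilon)=(\hat z_\tf,R_\tf)(\zeta;\epsilon)$ solving~\eqref{exprob}. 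The parameter map $(c,\omega)\mapsto(\epsilon_1,\hat\omega)$ is smooth and locally invertible near $(c_\lin,\omega_\lin)$, sending this critical point to $(0,0)$, and $\epsilon_2$ is a smooth invertible function of $\gamma-\alpha$ near zero; hence the regime of Theorem~\ref{t:ex0} corresponds exactly to $0<\epsilon_1\ll 1$, $|\epsilon_2|\ll 1$.

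For the asymptotic behavior, as $\zeta\to+\infty$ the limits $\rho_\tf\to 0$ and $z_\tf\to z_+$ from~\eqref{asymptrigger} and~\eqref{asymptrigger2} transform to $R_\tf\to 0$ and $\hat z_\tf\to\bigl(z_++(1-\ri\alpha)c/(2(1+\alpha^2))\bigr)\sqrt{1+\alpha^2}/m$. Substituting the explicit formula for $z_+$ and simplifying via~\eqref{reparameterization}, one verifies
\begin{equation*}
\hat z_+(\epsilon)^2=\frac{2(1-\ri\alpha)}{m^2}-\epsilon_1^2+\ri\hat\omega+\frac{\epsilon_1^4}{4},
\end{equation*}
which matches~\eqref{defhatz}. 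As $\zeta\to-\infty$, $A_\tf$ is asymptotic to $A_p(\xi)=\sqrt{1-k_\tf^2}\re^{-\ri k_\tf\xi}$, so $\rho\to\sqrt{1-k_\tf^2}$ and $z\to-\ri k_\tf$, giving $R_\tf\to l^2(1-k_\tf^2)$ and $\hat z_\tf\to\bigl(-\ri k_\tf+(1-\ri\alpha)c/(2(1+\alpha^2))\bigr)\sqrt{1+\alpha^2}/m$. The real part equals $c/(2m\sqrt{1+\alpha^2})=1-\epsilon_1^2/2$ directly from the definition of $\epsilon_1$, and defining $\hat k_\tf(\epsilon)$ as the imaginary part gives the asserted $-\infty$ equilibrium; the identity $l^2(1-k_\tf^2)=1-\hat k_\tf^2$ then follows from the nonlinear dispersion~\eqref{e:disp} evaluated at $(\omega_\tf,k_\tf)$. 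Hyperbolicity of both equilibria in~\eqref{exprob} is inherited from hyperbolicity of the corresponding equilibria in~\eqref{e:TW3}.

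The expansions~\eqref{preciseomegabound}--\eqref{triggerbound} come from transferring the expansions of $k_\tf,\omega_\tf,\xi_\tf$ from Theorem~\ref{t:ex0} through this smooth reparameterization. At $\epsilon_1=0$ we have $c=c_\lin$ and $m=1$; substituting $k_\lin=-(\gamma+\alpha)/(\sqrt{1+\alpha^2}+\sqrt{1+\gamma^2})$ into the imaginary-part formula for $\hat k_\tf$ and rewriting in terms of $\epsilon_2=(\gamma-\alpha)/(1+\alpha\gamma)$ should yield $\hat k_\tf(0,\epsilon_2)=\epsilon_2/(1+\sqrt{1+\epsilon_2^2})$. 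For $\hat\omega_\tf$: the bound $\omega_\tf-\omega_\abs=\ord((\Delta c)^{3/2})$ plus smoothness of $\omega_\tf$ in $\sqrt{\Delta c}$ give cubic-order vanishing in $\sqrt{\Delta c}$; since $\sqrt{\Delta c}$ is a smooth invertible function of $\epsilon_1$ near zero (using $m=1+\ord(\epsilon_1^2)$) and $\hat\omega=(\omega-\omega_\abs)/m^2$, this transfers to $\partial_{\epsilon_1}^j\hat\omega_\tf(0,\epsilon_2)=0$ for $j=0,1,2$. Finally, $\zeta_\tf=(m/\sqrt{1+\alpha^2})\xi_\tf$ up to an irrelevant rescaling of the threshold $\delta$ (since $R=l^2\rho^2$ with $l$ bounded away from $0$ and $\infty$), so combining with $\epsilon_1\sim\sqrt{\Delta c}$ and $\sqrt{\Delta c}\,\xi_\tf\to\pi\sqrt{1+\alpha^2}$ from Theorem~\ref{t:ex0} yields~\eqref{triggerbound}, with smoothness of $\epsilon\mapsto\epsilon_1\zeta_\tf$ at $(0,0)$ inherited from smoothness of $\sqrt{\Delta c}\,\xi_\tf$ in $\sqrt{\Delta c}$.

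The main obstacle is purely algebraic: verifying the identity for $\hat z_+(\epsilon)^2$ above and reducing $\hat k_\tf(0,\epsilon_2)$ to the stated compact form. Both reduce to routine manipulations but require careful tracking of the nonlinear relations between $m$, $l$, $\epsilon$, $\hat\omega$ and the original parameters $\alpha,\gamma,c,\omega$ in~\eqref{reparameterization}.
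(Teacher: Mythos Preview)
Your proposal is correct and matches the paper's treatment: the paper presents Theorem~\ref{t:ex02} simply as a reformulation of Theorem~\ref{t:ex0} in the rescaled and reparameterized variables of~\S\ref{sec:rep}, without supplying a separate proof, and your bookkeeping argument is exactly the translation one must carry out to justify that reformulation.
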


A consequence of Theorem~\ref{t:ex0} is that, by expressing $m$ and $\mu$ as
\begin{align}
m(\epsilon)^2 := \frac{1 + \alpha^2}{1 + \alpha^2 - \epsilon_1^2 \alpha^2 + \tfrac{1}{4}\epsilon_1^2\alpha^2 + \alpha\hat{\omega}_\tf(\epsilon)}, \qquad \mu(\epsilon) := -\epsilon_1^2 + \ri \hat{\omega}_\tf(\epsilon) + \frac{\epsilon_1^4}{4}, \label{defmm}
\end{align}
the number of parameters in~\eqref{exprob} has reduced to three: the fixed parameter $\alpha$ and the small parameter $\epsilon = (\epsilon_1,\epsilon_2)$. We find that $m(\epsilon)$ and $\mu(\epsilon)$ are smooth at $\epsilon = (0,0)$ and satisfy
\begin{align} m(0,0) = 1, \qquad \mu(0,0) = 0. \label{limmum}\end{align}

\subsection{Tracking the solution to the left of the front interface}

The front solution $\Psi_{\tf}(\zeta;\epsilon)$ in Theorem~\ref{t:ex02} is constructed in~\cite{GS14} by tracking the one-dimensional, unstable manifold $W^u_-$ of the fixed point
$$\left(1 - \tfrac{1}{2}\epsilon_1^2 + \ri\hat{k}_\tf(\epsilon),{1-\hat{k}_\tf(\epsilon)^2}\right),$$
in~\eqref{exprob}. Global control over this manifold can be obtained by perturbing from the `real limit' $\epsilon = (0,0)$. Indeed, setting $\epsilon_1, \epsilon_2 = 0$ and $\chi \equiv 1$ in~\eqref{exprob}, we obtain, using~\eqref{limmum}, the system
\begin{align}
\begin{split}
 \hat{z}_\zeta &= -\hat{z}^2 + R,\\
 R_\zeta &= 2R(\hat{z} - 1),
\end{split} \label{exproblim}
\end{align}
which is equivalent to the real Ginzburg-Landau equation
 \begin{align} r_{\zeta\zeta} = -2r_\zeta - r + r^3, \label{realGL}\end{align}
upon setting $R = r^2$ and $z = 1 + r_\zeta/r$. The unstable manifold $W^u_-$ in~\eqref{exproblim} is given by the solution
\begin{align*}
 \Psi_*(\zeta) = (\hat{z}_*(\zeta),R_*(\zeta)) = \left(1 + \frac{r_*'(\zeta)}{r_*(\zeta)}, r_*(\zeta)^2\right),
\end{align*}
where $r_*(\zeta)$ is a heteroclinic in~\eqref{realGL} connecting the hyperbolic saddle $(1,0)$ to the hyperbolic degenerate sink $(0,0)$. A simple phase plane analysis of~\eqref{realGL} shows that the solution $r_*(\zeta)$ is monotonically decreasing and does not lie in the strong stable manifold of $(0,0)$. Thus, $r_*(\zeta)$ decays exponentially to $0$, whereas $\frac{r_*'(\zeta)}{r_*(\zeta)}$ decays algebraically to $-1$  as $\zeta \to \infty$. All in all, one obtains the following result.

\begin{proposition}[\cite{GS14}] \label{prop:left}
Take $\delta > 0$ sufficiently small and let $\zeta_\delta \in \R$ be the value such that $R_*(\zeta_\delta) = \delta$. Then, there exists a $\delta$-independent constant $C>1$ and $\eta > 0$ such that
\begin{align}
0 < \hat{z}_*(\zeta_\delta) \leq C|\log(\delta)|^{-1}, \qquad & \frac{1}{\hat{z}_*(\zeta_\delta)} \leq C|\log(\delta)|, \label{deltaineq}
\end{align}
and
\begin{align}
\begin{split}
\left\|\Psi_*(\zeta) - (1,1)\right\| &\leq C\re^{\eta \zeta}, \qquad \zeta \leq 0,\\
\left\|R_*(\zeta)\right\| &\leq C\re^{-\eta \zeta}, \qquad \zeta \geq 0.
\end{split}  \label{Rsbound}
\end{align}
In addition, there exists a constant $C_\delta>1$, which only depends on $\delta$, such that, provided $0 < \epsilon_1 \ll 1$ and $0 \leq |\epsilon_2| \ll 1$, it holds
\begin{align}
\begin{split}
\left\|\Psi_\tf(\zeta_\tf(\epsilon) + \zeta;\epsilon) - \Psi_*(\zeta_\delta + \zeta)\right\| &\leq C_\delta \|\epsilon\|,\\
\left\|\Psi_\tf(\zeta_\tf(\epsilon) + \zeta;\epsilon) - \left(1 - \tfrac{1}{2}\epsilon_1^2 + \ri\hat{k}_\tf(\epsilon),1 -\hat{k}_\tf(\epsilon)^2\right)\right\| &\leq C_\delta \re^{\eta \zeta},
\end{split} \qquad \zeta \leq 0. \label{estex1}
\end{align}
\end{proposition}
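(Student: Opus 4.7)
The plan is to first establish the bounds~\eqref{deltaineq} and~\eqref{Rsbound} on the singular limit $\Psi_*$ directly from the phase portrait of the real Ginzburg--Landau equation~\eqref{realGL}, and then to derive the perturbation bounds~\eqref{estex1} via persistence of the hyperbolic saddle together with Gronwall control on a bounded transit region.

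For the bounds on $\Psi_*$, I would work with the scalar ODE~\eqref{realGL} governing $r_*$. The equilibrium $(1,0)$ in the $(r,r')$-plane is a hyperbolic saddle with eigenvalues $-1\pm\sqrt{3}$, and $r_*$ lies on its one-dimensional unstable manifold, so the stable-manifold theorem gives $|r_*(\zeta)-1|+|r_*'(\zeta)|\leq C\re^{\eta\zeta}$ for $\zeta\leq0$ with any $\eta\in(0,\sqrt{3}-1)$. The change of variables $(\hat{z},R)=(1+r'/r,\,r^2)$ then yields the first inequality of~\eqref{Rsbound}. At the other end, $(0,0)$ is a degenerate sink whose linearization $r''+2r'+r=0$ has a double eigenvalue $-1$; since $r_*$ avoids the one-dimensional strong-stable manifold, its leading asymptotic is $r_*(\zeta)=A\zeta\re^{-\zeta}(1+o(1))$ as $\zeta\to+\infty$ for some $A>0$, so that
$$R_*(\zeta)=A^2\zeta^2\re^{-2\zeta}(1+o(1)),\qquad \hat{z}_*(\zeta)=1+\frac{r_*'(\zeta)}{r_*(\zeta)}=\frac{1}{\zeta}+o\bigl(\zeta^{-1}\bigr).$$
This gives the second inequality of~\eqref{Rsbound} for any $\eta\in(0,2)$, and~\eqref{deltaineq} follows by inverting $R_*(\zeta_\delta)=\delta$: the relation $\zeta_\delta^2\re^{-2\zeta_\delta}\sim\delta$ yields $\zeta_\delta=-\tfrac{1}{2}\log\delta+\ord(\log|\log\delta|)$, so $\hat{z}_*(\zeta_\delta)\sim 1/\zeta_\delta$ is of order $|\log\delta|^{-1}$, which delivers both the upper and lower bounds claimed.

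For~\eqref{estex1}, the key structural observation is that $\Psi_\tf(\cdot;\epsilon)$ traces out the one-dimensional unstable manifold $W^u_-(\epsilon)$ of the perturbed saddle $p(\epsilon):=\bigl(1-\tfrac{1}{2}\epsilon_1^2+\ri\hat{k}_\tf(\epsilon),\,1-\hat{k}_\tf(\epsilon)^2\bigr)$ of~\eqref{exprob} with $\chi\equiv1$, while $\Psi_*$ traces out $W^u_-(0,0)$ based at $p(0)=(1,1)$. Smooth dependence of $p(\epsilon)$ on $\epsilon$ together with the stable-manifold theorem with parameters produces a $C^1$-family of unstable manifolds that is Lipschitz in $\epsilon$ on a fixed neighborhood of $p(0)$; in particular, trajectories on $W^u_-(\epsilon)$ approach $p(\epsilon)$ at an exponential rate $\eta>0$ uniform in $\epsilon$. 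Combined with $\|p(\epsilon)-p(0)\|=\ord(\|\epsilon\|)$ and a triangle inequality against the analogous decay estimate for $\Psi_*$, this yields the second estimate in~\eqref{estex1}.

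The first estimate in~\eqref{estex1} I would obtain by splitting $(-\infty,0]$ into a tail $(-\infty,-T]$ and a transit $[-T,0]$. On the tail, the second estimate in~\eqref{estex1} together with the analogous limit bound controls the difference by $C\bigl(\|\epsilon\|+\re^{-\eta T}\bigr)$. On the transit, both trajectories satisfy~\eqref{exprob} with $\chi\equiv1$ and, by the tail estimate, start at $\zeta=-T$ within $\ord\bigl(\|\epsilon\|+\re^{-\eta T}\bigr)$ of each other; Gronwall applied to the difference equation then produces a bound of size $C\re^{LT}\bigl(\|\epsilon\|+\re^{-\eta T}\bigr)$ for a Lipschitz constant $L>0$ of the vector field on the relevant region. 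Choosing $T$ comparable to $|\zeta_\delta|\sim|\log\delta|$ balances the two terms and absorbs the Gronwall factor into a $\delta$-dependent constant $C_\delta$. The main technical subtlety is precisely this balance of scales, which forces the perturbation constant to depend on $\delta$; a cleaner alternative route is to run an exponential-dichotomy-based contraction mapping directly on $(-\infty,0]$, using that $W^u_-(\epsilon)$ is the graph of a Lipschitz map over $W^u_-(0,0)$ in a uniform tubular neighborhood, which yields the estimate in one stroke.
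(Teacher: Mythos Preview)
The paper does not supply its own proof of this proposition; it is quoted directly from~\cite{GS14}, so there is no in-paper argument to compare against. Your proposal therefore has to be judged on its own merits, and it is sound.

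Your treatment of the limiting orbit $\Psi_*$ is correct and complete: the linearization of~\eqref{realGL} at $r=1$ has eigenvalues $-1\pm\sqrt3$, the linearization at $r=0$ has the double eigenvalue $-1$, and since $r_*$ lies in the weak-stable direction one obtains $r_*(\zeta)\sim A\zeta\,\re^{-\zeta}$, whence $\hat z_*(\zeta)=1+r_*'/r_*\sim 1/\zeta$ and $R_*(\zeta)\sim A^2\zeta^2\re^{-2\zeta}$. Inverting $R_*(\zeta_\delta)=\delta$ gives $\zeta_\delta=\tfrac12|\log\delta|+\ord(\log|\log\delta|)$, and both inequalities in~\eqref{deltaineq} follow. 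The two bounds in~\eqref{Rsbound} are exactly the saddle and degenerate-sink asymptotics you wrote down. This matches the informal justification the paper gives in the paragraph preceding the proposition (``a simple phase plane analysis of~\eqref{realGL}'').

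For~\eqref{estex1}, your second route---smooth parameter-dependence of the one-dimensional unstable manifold of the hyperbolic saddle $p(\epsilon)$---is the clean and standard argument, and it delivers both estimates in one stroke with constants depending only on the choice of section (hence on $\delta$). Your tail--transit decomposition is also correct in spirit, but the ``balance of scales'' sentence is a little off: you do not need to tune $T$ against $\|\epsilon\|$. Once the initial displacement at $\zeta=0$ is $\ord(\|\epsilon\|)$ (which follows from representing $W^u_-(\epsilon)$ as a graph $\hat z=h(R;\epsilon)$ and evaluating at $R=\delta$), Gr\"onwall on $[-T,0]$ already gives $C_\delta\,\re^{LT}\|\epsilon\|$, and on $(-\infty,-T]$ both trajectories sit in a fixed neighborhood of their saddles where the graph representation again yields $\ord(\|\epsilon\|)$ closeness uniformly in $\zeta$. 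So $T$ is simply chosen large enough (depending on $\delta$) to enter that neighborhood, and the resulting constant is $C_\delta$, as claimed.
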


\subsection{Tracking the solution to the right of the front interface}

To the right of the front interface, i.e.~for $\zeta \geq \zeta_\tf(\epsilon)$, the solution $\Psi_\tf(\zeta;\epsilon)$ lies in a neighborhood of the normally hyperbolic, attracting manifold $R = 0$ in~\eqref{exprob}. Thus, with the aid of geometric singular perturbation theory~\cite{FEN1,FEN2,FEN3}, it is established in~\cite{GS14} that there exists a solution on the manifold $R = 0$ that converges exponentially fast to $\Psi_\tf(\zeta;\epsilon)$ as $\zeta \to \infty$. Setting $R = 0$ in~\eqref{exprob} one obtains the scalar Riccati equation
\begin{align}
\hat{z}_0' &= -\hat{z}_0^2 + \mu(\epsilon) + \frac{1 - \chi}{m(\epsilon)^2}(1 - \ri \alpha). \label{dyninvman}
\end{align}
Parameters $\epsilon$ are chosen in such a way in~\cite{GS14} that the unstable manifold $W^u_-$ intersects with to the two-dimensional stable manifold $W^s_+$ of the sink $(\hat{z}_+(\epsilon),0)$ in~\eqref{exprob} (for $\chi \equiv -1$). Note that $\hat{z}_+(\epsilon)$, which is defined in~\eqref{defhatz}, is smooth at $\epsilon = (0,0)$ and satisfies
\begin{align} \hat{z}_+(0,0) = -\sqrt{2-2\ri \alpha}. \label{hatzapprox}\end{align}

Thus, the solution $\Psi_\tf(\zeta;\epsilon)$ in the unstable manifold $W^u_-$ converges exponentially fast to the solution $(\hat{z}_0(\zeta;\epsilon),0)$ on the manifold $R=0$ as $\zeta \to \infty$, where $\hat{z}_0(\zeta;\epsilon)$ is the solution to~\eqref{dyninvman} with initial condition $\hat{z}_0(0;\epsilon) = \hat{z}_+(\epsilon)$. Hence, as $\hat{z}_+(\epsilon)$ is a fixed point of~\eqref{dyninvman} for $\chi \equiv -1$, it holds $\hat{z}_0(\zeta;\epsilon) = \hat{z}_+(\epsilon)$ for \emph{all} $\zeta \geq 0$.

More precisely, the following estimates are obtained in~\cite{GS14}.

\begin{proposition} \label{prop:right}
There exist constants $C > 1$ and $\eta > 0$ such that, provided $0 < \epsilon_1 \ll 1$ and $0 \leq |\epsilon_2| \ll 1$, we have for $\zeta \geq \zeta_\tf(\epsilon)$,
 \begin{align}
 \left\|\Psi_\tf(\zeta;\epsilon) - \left(\hat{z}_0(\zeta;\epsilon),0\right)\right\| &\leq C\delta \re^{-\eta|\zeta - \zeta_\tf(\epsilon)|}, \label{estex2}\\
\left\|\hat{z}_0(\zeta;\epsilon)\right\| &\leq C. \label{estex3}
\end{align}
\end{proposition}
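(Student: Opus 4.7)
My plan is to exploit that the plane $R = 0$ is exactly invariant under the flow \eqref{exprob} and normally hyperbolic attracting in the region of interest, so that a Fenichel-type stable-foliation argument controls the approach of $\Psi_{\tf}(\zeta;\epsilon)$ to a unique base trajectory on the manifold. The normal linearization of the $R$-equation along $R=0$ reads $R_\zeta = (2\Re(\hat{z}) + \epsilon_1^2 - 2)R$, and by \eqref{hatzapprox} the endpoint $\hat{z}_+(0,0) = -\sqrt{2-2\ri\alpha}$ has strictly negative real part, so $2\Re(\hat{z}_+(\epsilon)) + \epsilon_1^2 - 2 \leq -\eta_0 < 0$ uniformly for small $\epsilon$. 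Provided I can first establish \eqref{estex3}, smoothness in $\epsilon$ together with the convergence $\hat{z}_0(\zeta;\epsilon) \to \hat{z}_+(\epsilon)$ as $\zeta \to \infty$ then yield a uniform negative normal rate $-\eta$ along the whole base trajectory $(\hat{z}_0(\cdot;\epsilon),0)$. A stable foliation of $R = 0$ in a uniform $\delta$-tube (or, equivalently, a Gronwall argument on the variational equation in the $R$-direction) picks out the unique trajectory on the manifold which is forward-asymptotic to $(\hat{z}_+(\epsilon),0)$, namely $(\hat{z}_0(\cdot;\epsilon),0)$; since $R_\tf(\zeta_\tf(\epsilon)) \leq \delta$ by definition of $\zeta_\tf(\epsilon)$, one obtains \eqref{estex2}.

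The nontrivial task is thus the bound \eqref{estex3} on $\hat{z}_0$. For $\zeta \geq 0$ we have $\chi \equiv -1$, and since $\hat{z}_+(\epsilon)$ is an equilibrium of \eqref{dyninvman} by the definition \eqref{defhatz}, the solution with $\hat{z}_0(0;\epsilon) = \hat{z}_+(\epsilon)$ is simply the constant $\hat{z}_+(\epsilon)$, uniformly bounded in $\epsilon$. For $\zeta \in [\zeta_\tf(\epsilon),0]$ we have $\chi \equiv 1$, and must track the complex Riccati equation $\hat{z}_0' = -\hat{z}_0^2 + \mu(\epsilon)$ backwards from $\hat{z}_0(0;\epsilon) = \hat{z}_+(\epsilon)$. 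I would linearize via the substitution $\hat{z}_0 = u'/u$, reducing the problem to $u'' = \mu(\epsilon)\, u$, whose solutions are explicit in terms of the exponentials $\re^{\pm \sqrt{\mu(\epsilon)}\zeta}$ with coefficients fixed by the terminal condition $\hat{z}_+(\epsilon)$. A short computation then locates the zeros of $u$, that is the poles of $\hat{z}_0$, on a lattice of the form $\zeta_n \approx -1/\hat{z}_+(\epsilon) + \pi \ri n / \sqrt{\mu(\epsilon)}$, $n \in \Z$. Since $\sqrt{\mu(\epsilon)} = \ri \epsilon_1 \bigl(1 + \ord(\epsilon_1)\bigr)$, consecutive real parts of these poles are spaced by $\pi/\epsilon_1$, while each pole sits at distance $|\Im(1/\hat{z}_+(\epsilon))| = \ord(1)$ from the real line, uniformly in small $\epsilon$. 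Hence $u$ is bounded away from zero on the real interval $[\zeta_\tf(\epsilon),0]$, and \eqref{estex3} follows.

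The main technical obstacle is precisely this last step: because the plateau length $|\zeta_\tf(\epsilon)|$ is of order $\pi/\epsilon_1$ by \eqref{triggerbound}, it matches the natural period of the ``almost conservative'' Riccati flow, and a priori an adjacent pole of $\hat{z}_0$ could land on the plateau and invalidate the bound. What makes the argument go through is the precise selection recorded in Theorem~\ref{t:ex02}: the vanishing of the first three $\epsilon_1$-derivatives of $\hat\omega_\tf$ in \eqref{preciseomegabound} forces $\mu(\epsilon) + \epsilon_1^2 = \ord(\epsilon_1^3)$ and thereby controls the imaginary shift of the pole lattice, while the sharp asymptotic $\epsilon_1 \zeta_\tf(\epsilon) \to \pi$ in \eqref{triggerbound} keeps $[\zeta_\tf(\epsilon),0]$ an $\ord(1)$ distance from the nearest lattice point. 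With \eqref{estex3} in hand, the stable-foliation argument of the first paragraph closes the proof of \eqref{estex2}.
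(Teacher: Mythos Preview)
The paper does not give its own proof of this proposition; it simply records it as established in~\cite{GS14} via geometric singular perturbation theory (Fenichel's stable foliation of the invariant plane $R=0$). Your outline for~\eqref{estex2} is precisely this mechanism and is in line with the paper's description.

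The gap is in your argument for~\eqref{estex3}. Your pole lattice $\zeta_n \approx -1/\hat z_+(\epsilon) + \pi\ri n/\sqrt{\mu(\epsilon)}$ is correct, but the claimed distance $|\Im(1/\hat z_+(\epsilon))|$ to the real axis is \emph{not} uniformly $\ord(1)$: since $\hat z_+(0,0)=-\sqrt{2(1-\ri\alpha)}$, one has $\Im(1/\hat z_+)=\ord(|\alpha|)$, and this vanishes at $\alpha=0$, a value explicitly allowed by the paper's standing hypotheses. In the degenerate case $\alpha=\epsilon_2=0$ both $\mu$ and $\hat z_+$ are real, the explicit solution is $\hat z_0(\zeta)=-\sqrt{|\mu|}\tan(\sqrt{|\mu|}\,\zeta+c)$ with real poles spaced by $\pi/\sqrt{|\mu|}\approx\pi/\epsilon_1$, and the relevant pole sits at $\zeta\approx -\pi/\epsilon_1+1/\sqrt{2}$. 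Comparing with the next-order expansion of $\zeta_\tf$ recorded in~\eqref{expex1}, namely $\zeta_\tf\approx -\pi/\epsilon_1 + 1/\sqrt{2}-1/\hat z_*(\zeta_\delta)$, this pole lands at distance $1/\hat z_*(\zeta_\delta)=\ord(|\log\delta|)>0$ \emph{to the right} of $\zeta_\tf$, hence inside $[\zeta_\tf(\epsilon),0]$. So neither your ``imaginary offset'' mechanism nor your ``$\ord(1)$ distance from the nearest lattice point'' mechanism keeps the pole away from the real interval in this case; the two contributions you isolate, $\Im(-1/\hat z_+)$ and the correction from $\sqrt{\mu}\ne \ri\epsilon_1$, both vanish there.

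In short, for fixed $\alpha\neq 0$ your argument goes through with $C=C(\alpha)$, but as written it does not cover $\alpha=0$ (or $\alpha$ small), and the sentence asserting $|\Im(1/\hat z_+(\epsilon))|=\ord(1)$ uniformly is where it breaks.
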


For the spectral analysis in this paper, we need to extend the estimates on the $R$-component in Proposition~\ref{prop:left} beyond the front interface, which follows by a simple application of Gr\"onwall's lemma in combination with the exponential decay obtained in Proposition~\ref{prop:right}.

\begin{proposition} \label{prop:rightleft}
Let $\delta > 0$ be sufficiently small and let $\zeta_\delta \in \R$ be the value such that $R_*(\zeta_\delta) = \delta$. Then, there exists a constant $C_\delta>1$, which only depends on $\delta$, such that, provided $0 < \epsilon_1 \ll 1$ and $0 \leq |\epsilon_2| \ll 1$, it holds
\begin{align*}
\left\|\Psi_\tf(\zeta_\tf(\epsilon) + \zeta;\epsilon) - \Psi_*(\zeta_\delta + \zeta)\right\| &\leq C_{\delta} \sqrt{\|\epsilon\|},
\end{align*}
for $\zeta \in [0,-\iota \log\|\epsilon\|]$, where $\iota > 0$ is a constant independent of $\epsilon$ and $\delta$.
\end{proposition}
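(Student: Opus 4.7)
The plan is a Gr\"onwall comparison argument anchored at $\zeta = 0$, extending the $O(\|\epsilon\|)$ bound~\eqref{estex1} for the difference
$$D(\zeta) := \Psi_\tf(\zeta_\tf(\epsilon) + \zeta;\epsilon) - \Psi_*(\zeta_\delta + \zeta)$$
forward to $\zeta \in [0, -\iota \log \|\epsilon\|]$ at the cost of a square-root loss. First I would verify that, for $\iota > 0$ sufficiently small and $\|\epsilon\|$ sufficiently small, $\zeta_\tf(\epsilon) + \zeta$ remains strictly negative throughout the window, so that $\chi(\zeta_\tf(\epsilon) + \zeta) = 1$: this holds because $|\zeta_\tf(\epsilon)| \sim \pi/\epsilon_1$ by~\eqref{triggerbound} grows much faster than $|\log\|\epsilon\||$ as $\epsilon \to (0,0)$. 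Consequently, $\Psi_\tf(\zeta_\tf(\epsilon) + \cdot;\epsilon)$ satisfies~\eqref{exprob} with $\chi \equiv 1$, while $\Psi_*(\zeta_\delta + \cdot)$ satisfies the $\epsilon = (0,0)$ reduction~\eqref{exproblim}, so the two equations differ only through the $\epsilon$-dependent terms in~\eqref{exprob}.

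Next, subtracting the two systems and applying the mean value theorem to the polynomial right-hand side of~\eqref{exproblim} recasts the difference equation as
$$D_\zeta = \mathcal{A}(\zeta;\epsilon)\, D + G(\zeta;\epsilon),$$
where $G(\zeta;\epsilon)$ collects the $\epsilon$-dependent terms of~\eqref{exprob} that are absent from~\eqref{exproblim}, namely $\mu(\epsilon)$, $\ri\epsilon_2 R$ and $\epsilon_1^2 R$, and the matrix $\mathcal{A}(\zeta;\epsilon)$ involves convex combinations of $\Psi_\tf(\zeta_\tf(\epsilon)+\zeta;\epsilon)$ and $\Psi_*(\zeta_\delta+\zeta)$. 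Using the smoothness $|\mu(\epsilon)| = O(\|\epsilon\|)$ of~\eqref{defmm} together with the uniform boundedness of $R_\tf, R_*, \hat z_\tf, \hat z_*$ on the window supplied by Propositions~\ref{prop:left} and~\ref{prop:right} (in particular~\eqref{Rsbound} and~\eqref{estex3}), one obtains $\|G(\zeta;\epsilon)\| \leq C\|\epsilon\|$ and a Lipschitz bound $\|\mathcal{A}(\zeta;\epsilon)\| \leq L$ on the window, with $L$ independent of $\epsilon$. Gr\"onwall's lemma together with the initial bound $\|D(0)\| \leq C_\delta \|\epsilon\|$ from~\eqref{estex1} then yields
$$\|D(\zeta)\| \leq \left(C_\delta \|\epsilon\| + \tfrac{C\|\epsilon\|}{L}\right)\re^{L\zeta} \leq \tilde C_\delta \|\epsilon\|\, \re^{L\zeta}, \qquad \zeta \in [0,-\iota \log \|\epsilon\|].$$
At the right endpoint this evaluates to $\tilde C_\delta \|\epsilon\|^{1-\iota L}$; choosing $\iota \in (0, 1/(2L))$ bounds this by $\tilde C_\delta \sqrt{\|\epsilon\|}$ uniformly on the window, giving the claimed estimate.

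The main obstacle is establishing the uniform-in-$\epsilon$ Lipschitz constant $L$ on the window, which requires both trajectories to stay in a common bounded region despite the window having logarithmically growing length; this is exactly what the estimates in Propositions~\ref{prop:left}--\ref{prop:right} (combined with the smoothness of $\mu$ and $m$ at $\epsilon = (0,0)$, see~\eqref{limmum}) provide. The square-root loss from $\|\epsilon\|$ to $\sqrt{\|\epsilon\|}$ is intrinsic to Gr\"onwall amplification of an $O(\|\epsilon\|)$ seed over a window of width $O(|\log \|\epsilon\||)$, and it is precisely this trade-off that determines the admissible range of~$\iota$.
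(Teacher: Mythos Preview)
Your proposal is correct and follows essentially the same Gr\"onwall comparison argument as the paper: both anchor at $\zeta=0$ with the $O(\|\epsilon\|)$ seed from~\eqref{estex1}, use uniform boundedness of $\Psi_\tf$ and $\Psi_*$ on the window (supplied by Propositions~\ref{prop:left} and~\ref{prop:right}) to obtain an $\epsilon$-independent Lipschitz constant, and then choose $\iota$ so that the exponential amplification $e^{L\zeta}$ over $\zeta \le -\iota\log\|\epsilon\|$ costs at most a factor $\|\epsilon\|^{-1/2}$. Your write-up is somewhat more explicit than the paper's (you separate the inhomogeneous term $G$ and spell out why $\chi\equiv 1$ on the window), but the argument is the same.
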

\begin{proof}
By~\eqref{defmm} and Theorem~\ref{t:ex02}, system~\eqref{exprob} is of the abstract form
\begin{align*} \Psi_\zeta = F(\Psi;\epsilon),\end{align*}
where $F \colon \R \times \C \times U \to \R \times \C$ is smooth and $U \subset \R^2$ is a neighborhood of $(0,0)$. Clearly, by~\eqref{limmum}, system~\eqref{exproblim} equals
\begin{align*} \Psi_\zeta = F(\Psi;0).\end{align*}
We wish to bound the difference
$$d(\zeta;\epsilon) := \Psi_\tf(\zeta_\tf(\epsilon) + \zeta;\epsilon) - \Psi_*(\zeta_\delta + \zeta),$$
for $\zeta \in [0,-\zeta_\tf(\epsilon)]$.

The solution $\Psi_\tf(\zeta;\epsilon)$ to~\eqref{exprob} is $\delta$-uniformly bounded on $[\zeta_\tf(\epsilon),0] \times U$ by Proposition~\ref{prop:right}. Similarly, since $r_*(\zeta)$ is a heteroclinic solution to~\eqref{realGL}, the solution $\Psi_*(\zeta)$ to~\eqref{exproblim} is bounded on $\R$. Thus, bounding the right hand side of
\begin{align*} d(\zeta;\epsilon) = d(0;\epsilon) + \int_0^\zeta \left(F(\Psi_\tf(\zeta_\tf(\epsilon) + y;\epsilon);\epsilon) - F(\Psi_*(y);0)\right) \de y
\end{align*}
yields, by Proposition~\ref{prop:left}, the integral inequality
\begin{align*} \|d(\zeta;\epsilon)\| &\leq C_\delta\|\epsilon\| + \int_0^\zeta C \|d(y;\epsilon)\| \mathrm dy,\end{align*}
for $\zeta \in [0,-\zeta_\tf(\epsilon)]$, where $C > 1$ is an $\epsilon$- and $\delta$-independent constant and $C_\delta > 1$ is a constant depending on $\delta$ only. Taking $2\iota C < 1$ and applying Gr\"onwall's lemma, we arrive at
\begin{align*} \|d(\zeta;\epsilon)\| \leq C_\delta \sqrt{\|\epsilon\|},\end{align*}
for $\zeta \in [0,-\iota \log\|\epsilon\|]$, which proves the result.
\end{proof}

\section{Set-up of spectral analysis} \label{secsetup}

In an effort to reduce complexity, we have followed the rescaling and reparameterization performed in the existence analysis of the pattern-forming front in~\cite{GS14}, and applied a similar reduction to the linearized operator $\El_\tf$ in~\S\ref{sec:rep}. By construction, the spectrum of the obtained operator $\hat{\El}_\tf$, posed on $\smash{L^2_{\hat{\kappa}_-,\hat{\kappa}_+}(\R,\C^2)}$, coincides with the spectrum $\El_\tf$, posed on $L^2_\kappa(\R,\C^2)$, including multiplicities of eigenvalues. Of course, the same holds for the spectra of $\hat{\El}_\tf$, posed on $\smash{L^2_{0,\hat\kappa_0}(\R,\C^2)}$ with domain $\smash{H^2_{0,\hat\kappa_0}(\R,\C^2)}$, where we denote
\begin{align} \hat\kappa_0 = - \Re(\hat{z}_+(\epsilon)) - \tfrac{1}{2}\epsilon_1^2 + 1. \label{defkapp0}\end{align}
and of $\El_\tf$, posed on $L^2(\R,\C^2)$. Hence, our main result, Theorem~\ref{theospecstab}, follows by proving the following equivalent statement for $\hat{\El}_\tf$.

\begin{theorem} \label{theospecstab2}
Let $\alpha\in\big(-\frac{1}{2}\sqrt{2}, \frac{1}{2}\sqrt{2}\big)$. Fix $\kappa \in \big(0,\frac{1}{4\sqrt{1+\alpha^2}}\big)$. Let $\hat{\kappa}_\pm$ be as in~\eqref{defkappa} and let $\hat{\kappa}_0$ be as in~\eqref{defkapp0}. Then, provided $0 < \epsilon_1 \ll 1$ and $0 \leq |\epsilon_2| \ll 1$, the following assertions hold true:
\begin{itemize}
 \item[i)] The spectrum of $\hat{\El}_\tf$ posed on $\smash{L_{0,\hat\kappa_0}^2(\R,\C^2)}$ does not intersect the closed right-half plane, except at the origin as a parabolic curve.
 \item[ii)] When posed on $\smash{L^2_{\hat{\kappa}_-,\hat{\kappa}_+}(\R,\C^2)}$, the operator $\hat{\El}_\tf$ has no spectrum in the closed right-half plane, except for an algebraically simple eigenvalue, which resides at the origin. Furthermore, eigenvalues near the origin lie $\mathcal{O}(\epsilon_1^2)$-close to $\Sigma_{0,\abs} \cup \smash{\overline{\Sigma_{0,\abs}}}$. 
     \end{itemize}
\end{theorem}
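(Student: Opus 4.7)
The plan is to split the analysis into essential and point spectrum, and to use the Riccati-Evans function $\E(\lambda)$ from Section~\ref{sec:ric} to locate point spectrum in the closed right-half plane. The essential spectrum of $\hat{\El}_\tf$ is determined purely by the asymptotic constant-coefficient operators, so the first step is to compute the Fredholm loci in the two weighted spaces. For $L^2_{\hat{\kappa}_-,\hat{\kappa}_+}(\R,\C^2)$ the positive weight on the left pushes the dispersion curve associated with the periodic end state $A_p$ strictly into the open left-half plane, while on $L^2_{0,\hat{\kappa}_0}(\R,\C^2)$ this curve only touches the origin in a quadratic tangency (diffusive stability of $A_p$, guaranteed by $|\alpha|<\tfrac{1}{2}\sqrt 2$ via~\cite{vH94}); the curve ahead of the front is in both cases strictly stable because of the weight $\hat{\kappa}_0$. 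This reduces both assertions of the theorem to excluding zeros of $\E(\lambda)$ in the closed right-half plane other than a simple root at $\lambda=0$.

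Writing the eigenvalue problem as the first-order system~\eqref{linsysan} and invoking exponential dichotomies to the right of the essential spectrum, $\E$ is meromorphic on its domain of definition. I would split the closed right-half plane into three pieces: a neighborhood of infinity, a bounded intermediate region kept away from $\Sigma_{0,\abs}\cup\overline{\Sigma_{0,\abs}}$, and a small neighborhood of the origin containing the branch points $\lambda_\mathrm{br},\overline{\lambda_\mathrm{br}}$. In the first region, a rescaling of $\hat{\El}_\tf$ that absorbs $\lambda$ into the leading-order coefficient produces a positive spectral gap uniformly at both infinities and a direct energy estimate excludes eigenvalues. In the intermediate region, the spatial ODE has an $\epsilon$-uniform hyperbolic splitting along the plateau, so the matrix Riccati equation~\eqref{ric} is a regular perturbation of its $\epsilon=(0,0)$ limit, which decouples into a coupled Sturm-Liouville problem whose right-half-plane spectrum is ruled out by an $L^2$ energy estimate; a perturbative comparison keeps $\E(\lambda)$ bounded away from zero.

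The main obstacle, and the technical heart of the proof, is the neighborhood of the origin, where the spatial linearization along the plateau loses its hyperbolic splitting so that standard exponential dichotomy and Evans function methods break down. I would desingularize by passing to the Riemann surface generated by the local uniformizer $\sqrt{\lambda-\lambda_\mathrm{br}}$ at each branch point. Exploiting the block structure of $\hat{\El}_\tf$ together with a superposition principle for the $2\times 2$ matrix Riccati equation, the subspaces entering and leaving the plateau can be expressed as explicit combinations of two scalar Riccati trajectories, producing an asymptotic formula for $\E$ on the unfolded surface. Winding number arguments on small contours around the unfolded origin then enclose all roots of $\E$ in a discrete family of disks $D_j$, $j\in\Z\setminus\{0\}$, lying $\mathcal{O}(\epsilon_1^2)$-close to $\Sigma_{0,\abs}\cup\overline{\Sigma_{0,\abs}}$ and, for $j\neq 0$, strictly in the open left-half plane, together with a critical disk $D_0$ of size $\mathcal{O}(\epsilon_1^3)$ centered at the origin containing at most two zeros of $\E$ counted with multiplicity. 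Gauge invariance of~\eqref{e:cgl0} exhibits an explicit simple eigenfunction at $\lambda=0$, and the derivative formula~\eqref{ricderiv1} applied with this eigenfunction and its exponentially localized adjoint pins the sign of $\E'(0)$; the resulting parity argument forces the other root in $D_0$ to be real and strictly negative, proving both assertions of Theorem~\ref{theospecstab2}. I expect the hardest steps to be the uniform justification of the superposition-based expansion of $\E$ across the plateau of width $\mathcal{O}(\epsilon_1^{-1})$ on the Riemann surface, and the sign computation of $\E'(0)$ sharp enough to complete the parity argument.
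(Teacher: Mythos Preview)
Your proposal is correct and follows essentially the same architecture as the paper: the three-region decomposition, the rescaling argument in the far region, the Sturm--Liouville limit with $L^2$-energy estimate in the intermediate region, and the Riemann-surface unfolding combined with the superposition principle and winding/parity arguments near the origin are exactly the ingredients the paper assembles in Theorems~\ref{concless},~\ref{concR3},~\ref{conclR2} and Corollary~\ref{conclR1}. Minor cosmetic differences are that the paper handles $R_3$ via an exponential dichotomy on all of $\R$ rather than an energy estimate, and labels the critical disk $D_1(\epsilon)=D_{-1}(\epsilon)$ rather than $D_0$; neither affects the substance of your outline.
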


In order to prove Theorem~\ref{theospecstab2}, we follow the approach as outlined in~\S\ref{sec:overview}. We cover, as in~\cite{CSR}, the critical spectrum of $\hat{\El}_\tf$ by the following three regions
\begin{align}
\begin{split}
R_1 &= R_1(\Theta_1) := \{\lambda \in \C : |\lambda| \leq \Theta_1\},\\
R_2 &= R_2(\theta_2,\Theta_1,\Theta_2) := \{\lambda \in \C : \Theta_1 \leq |\lambda| \leq \Theta_2 \wedge |\arg(\lambda)| \leq \tfrac{1}{2} \pi + \theta_2\},\\
R_3 &= R_3(\theta_3,\Theta_2) := \{\lambda \in \C : |\arg(\lambda)| \leq \tfrac{1}{2}\pi + \theta_3, |\lambda| \geq \Theta_2\},
\end{split}\label{regions}
\end{align} 
where $\theta_2,\theta_3 > 0$ and $\Theta_1,\Theta_2 > 1$ are constants which are independent of the small parameter $\epsilon$. We will study the spectrum of $\hat{\El}_\tf$ in the regions $R_1$, $R_2$ and $R_3$ separately. See Figure~\ref{fig:sp_ev} (right) for a schematic depiction of these regions.

We decompose the spectrum of $\hat{\El}_\tf$ into essential and point spectrum; we refer to~\cite{SAN2} for a general introduction. In~\S\ref{sec:ess} we show that the essential spectrum of $\hat{\El}_\tf$, posed on $L^2_{0,\hat\kappa_0}(\R,\C^2)$, is contained in the left-half plane and touches the imaginary axis only at the origin as a parabolic curve, whereas the essential spectrum of $\hat{\El}_\tf$, posed on $L^2_{\hat{\kappa}_-,\hat{\kappa}_+}(\R,\C^2)$, is confined to the open left-half plane and does not intersect the regions $R_1,R_2$ and $R_3$.

A point $\lambda \in \C$ lies in the point spectrum of $\hat{\El}_\tf$, posed on $L^2_{\hat{\kappa}_-,\hat{\kappa}_+}(\R,\C^2)$, if and only if it does not lie in its essential spectrum and the associated eigenvalue problem $\hat{\El}_\tf X = \lambda X$, which reads
\begin{align}
\begin{split}
w_{\zeta\zeta} + 2\hat{z}_\tf(\zeta;\epsilon) w_\zeta - (1 + \ri \epsilon_2) R_\tf(\zeta;\epsilon) (w + y) &= \frac{(1- \ri \alpha) \lambda}{ m(\epsilon)^2} w,\\
y_{\zeta\zeta} + 2\overline{\hat{z}_\tf(\zeta;\epsilon)} y_\zeta - (1 - \ri \epsilon_2) R_\tf(\zeta;\epsilon) (w + y) &= \frac{(1 + \ri \alpha) \lambda}{ m(\epsilon)^2} y,
\end{split} \label{evprob000}
\end{align}
admits a nontrivial solution in $\smash{H^2_{\hat{\kappa}_-,\hat{\kappa}_+}(\R,\C^2)}$.

Since $\smash{H^2_{0,\hat\kappa_0}(\R,\C^2)}$ is contained in $\smash{H^2_{\hat{\kappa}_-,\hat{\kappa}_+}(\R,\C^2)}$ as $\kappa > 0$, the point spectrum of $\hat{\El}_\tf$, posed on $\smash{L^2_{0,\hat\kappa_0}(\R,\C^2)}$, is contained in the spectrum of $\hat{\El}_\tf$, posed on $\smash{L^2_{\hat{\kappa}_-,\hat{\kappa}_+}(\R,\C^2)}$. We will show in~\S\ref{sec:point}-\S\ref{sec:pointR1} that the point spectrum of $\hat{\El}_\tf$, posed on $\smash{L^2_{\hat{\kappa}_-,\hat{\kappa}_+}(\R,\C^2)}$, is contained in the open left-half plane, except for an algebraically simple eigenvalue residing at the origin, which, in conjunction with the aforementioned results on the essential spectrum, proves Theorem~\ref{theospecstab2}, and thus also proves Theorem~\ref{theospecstab}.

\section{Analysis of the essential spectrum} \label{sec:ess}

In this section, we study the essential spectrum of the operator $\hat{\El}_\tf$, posed on the spaces $\smash{L^2_{0,\hat\kappa_0}(\R,\C^2)}$ and $\smash{L^2_{\hat{\kappa}_-,\hat{\kappa}_+}(\R,\C^2)}$. As outlined in Appendix~\ref{appess}, the essential spectrum of the asymptotically constant-coefficient operator $\hat{\El}_\tf$ is determined by the spatial eigenvalues of its limiting operators at $\pm \infty$.

In the limit $\zeta \to -\infty$, the eigenvalue problem~\eqref{evprob000} associated with $\hat{\El}_\tf$ is by Proposition~\ref{prop:left} of the form
\begin{align}
\begin{split}
\frac{\lambda}{m(\epsilon)^2} (1 - \ri \alpha) w &= w_{\zeta\zeta} + \left(2 - \epsilon_1^2 + 2\ri\hat{k}_\tf(\epsilon)\right) w_\zeta - (1 + \ri \dell) \left(1- \hat{k}_\tf(\epsilon)^2\right)(w + y),\\
\frac{\lambda}{m(\epsilon)^2} (1 + \ri \alpha) y &= y_{\zeta\zeta} + \left(2 - \epsilon_1^2 - 2\ri\hat{k}_\tf(\epsilon)\right) y_\zeta - (1- \ri \dell) \left(1- \hat{k}_\tf(\epsilon)^2\right) (w + y),
\end{split} \label{evprobmin}
\end{align}
whereas in the limit $\zeta \to \infty$ the eigenvalue problem reduces by Proposition~\ref{prop:right} to
\begin{align}
\begin{split}
\frac{\lambda}{m(\epsilon)^2} (1 - \ri \alpha) w &= w_{\zeta\zeta} + 2\hat{z}_+(\epsilon) w_\zeta,\\
\frac{\lambda}{m(\epsilon)^2} (1 + \ri \alpha) y &= y_{\zeta\zeta} + 2\overline{\hat{z}_+(\epsilon)} y_\zeta.
\end{split} \label{evprobplus}
\end{align}
The spatial eigenvalues can be identified as the values $\nu \in \C$ for which systems~\eqref{evprobmin} and~\eqref{evprobplus} admit a nontrivial solution of the form $\re^{\nu \zeta} X_0$ with $X_0 \in \R^2$. Taking determinants we find associated linear dispersions relations
\begin{align}
\det\begin{pmatrix} \displaystyle \vartheta(\nu,\epsilon) -\frac{\lambda (1 - \ri \alpha)}{m(\epsilon)^2}  & - (1 + \ri \dell) \left(1- \hat{k}_\tf(\epsilon)^2\right)\\
-(1 - \ri \dell) \left(1- \hat{k}_\tf(\epsilon)^2\right) & \displaystyle \overline{\vartheta(\overline{\nu},\epsilon)}-\frac{\lambda (1 + \ri \alpha)}{m(\epsilon)^2}\end{pmatrix} = 0, \label{spatev2}
\end{align}
and
\begin{align}
\det\begin{pmatrix}\nu^2 + 2\hat{z}_+(\epsilon) \nu - \frac{\lambda}{m(\epsilon)^2} (1 - \ri \alpha) & 0 \\ 0 & \nu^2 + 2\overline{\hat{z}_+(\epsilon)} \nu - \frac{\lambda}{m(\epsilon)^2} (1 + \ri \alpha)\end{pmatrix} = 0, \label{spatev1}
\end{align}
where we denote
\begin{align}\vartheta(\nu,\epsilon) := \nu^2 + \left(2 - \epsilon_1^2 + 2\ri\hat{k}_\tf(\epsilon)\right)\nu - (1 + \ri \dell) \left(1- \hat{k}_\tf(\epsilon)^2\right),\label{defvartheta}\end{align}
The spatial eigenvalues arise as the roots of~\eqref{spatev2} and~\eqref{spatev1}, and can be ordered by their real parts
\begin{align*} \Re\,\nu_{1,\pm}(\lambda,\epsilon) \leq \Re\,\nu_{2,\pm}(\lambda,\epsilon) \leq \Re\,\nu_{3,\pm}(\lambda,\epsilon) \leq \Re\,\nu_{4,\pm}(\lambda,\epsilon),\end{align*}
when counted with multiplicities.

In Appendix~\ref{appesscalc} we determine the spatial eigenvalues $\nu_{i,\pm}(\lambda,\epsilon), i = 1,\ldots,4$ up to leading-order. For each $\lambda$ in the regions $R_2$ and $R_3$, defined in~\eqref{regions}, we obtain the splittings
\begin{align} \Re\,\nu_{1,-}(\lambda) \leq \Re\,\nu_{2,-}(\lambda) < \hat{\kappa}_- < 0 < \Re\,\nu_{3,-}(\lambda) \leq \Re\,\nu_{4,-}(\lambda),\label{split1}\end{align}
and
\begin{align} \Re\,\nu_{1,+}(\lambda) \leq \Re\,\nu_{2,+}(\lambda) < \hat{\kappa}_0 < \hat{\kappa}_+ < \Re\,\nu_{3,+}(\lambda) \leq \Re\,\nu_{4,+}(\lambda), \label{split2}\end{align}
As outlined in Appendix~\ref{appess}, this implies that $R_2 \cup R_3$ contains no essential spectrum of the operator $\hat{\El}_\tf$, posed on $\smash{L^2_{\hat{\kappa}_-,\hat{\kappa}_+}(\R,\C^2)}$ or on $\smash{L^2_{0,\hat{\kappa}_0}(\R,\C^2)}$. Moreover, we show in Appendix~\ref{appesscalc} that the splitting~\eqref{split1} persists for $\lambda \in R_1$, whereas~\eqref{split2} no longer holds. However, for $\lambda \in R_1$ we still find
\begin{align} \Re\,\nu_{1,-}(\lambda) \leq \Re\,\nu_{2,-}(\lambda) < \hat{\kappa}_- < \Re\,\nu_{3,-}(\lambda) \leq \Re\,\nu_{4,-}(\lambda), \label{split3}\end{align}
and the curve $\{\lambda \in R_1 : \Re \, \nu_{3,-}(\lambda) = 0 \}$ is confined to the open left-half plane except for a parabolic touching with the imaginary axis at the origin. All in all, we have established the following result, which is schematically depicted in Figure~\ref{fig:sp_ev}. 

\begin{figure}[h!]
\centering
 \includegraphics[trim=0 0.0in 0 0, clip, width = 0.5\textwidth]{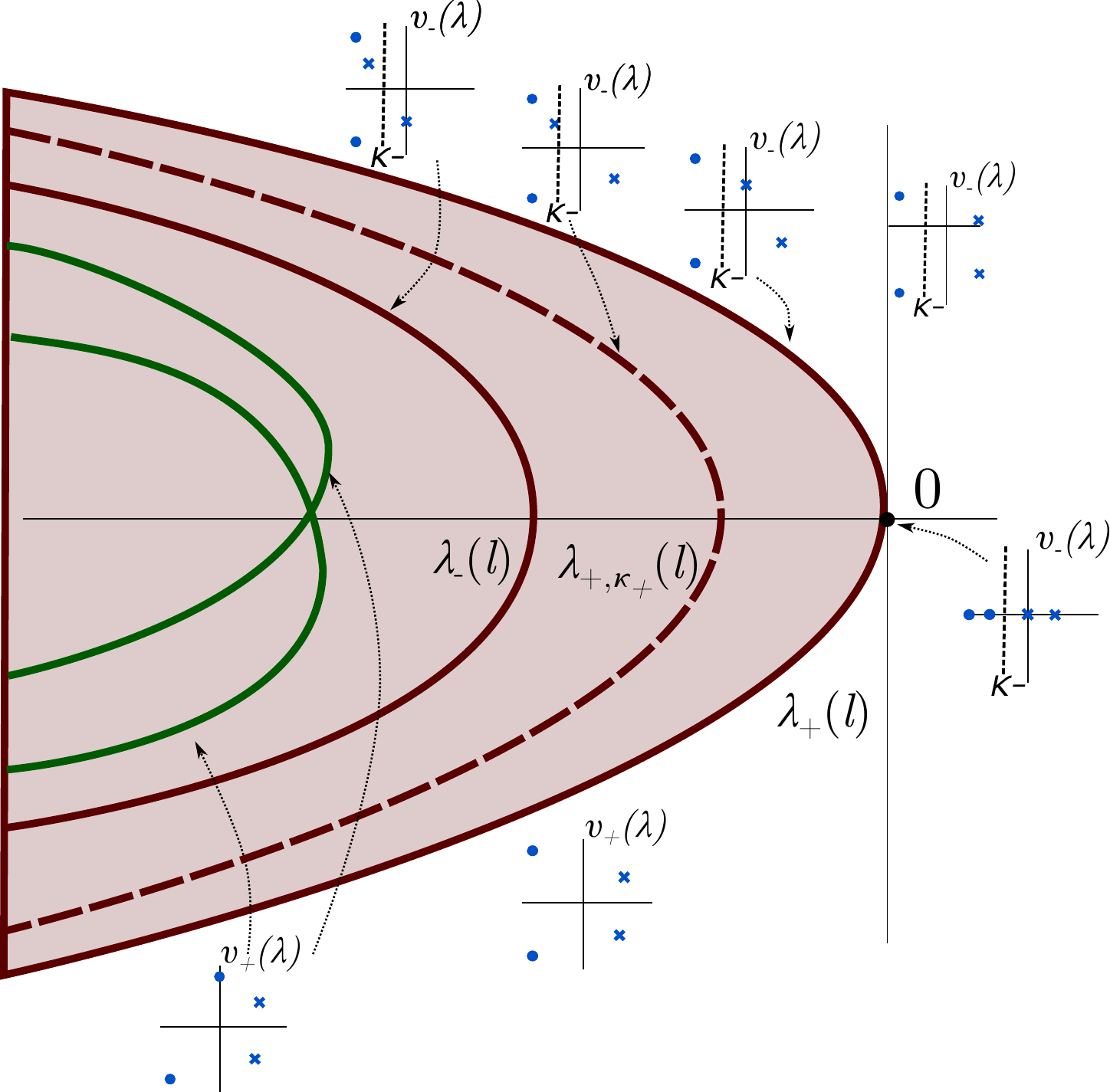}
 \hspace{-0.0in}
  \includegraphics[trim=1in 0.0in 0 0in, clip, width = 0.4\textwidth]{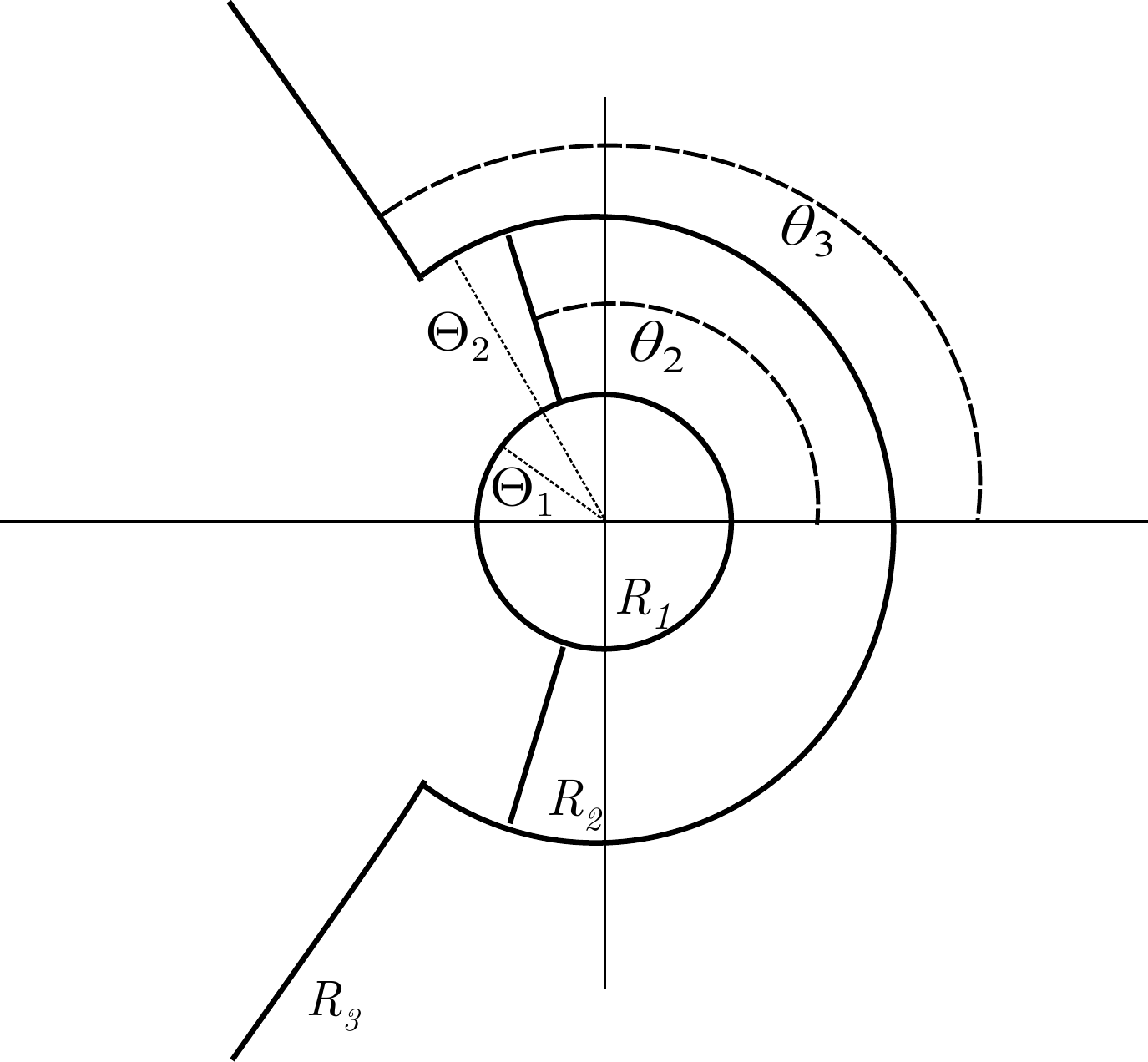}
\caption{(left): plot of the essential spectrum of the operator $\hat{\El}_\tf$, posed on $L^2_{0,\hat\kappa_0}(\R,\C^2)$ with insets of the most critical spatial eigenvalues $\nu_\pm(\lambda)$ associated with the asymptotic systems at $\xi = -\infty$ and $\xi = +\infty$, respectively. Solid red lines give Fredholm boundaries of spectrum coming from $\xi = -\infty$, dotted red line gives right-most Fredholm boundary when posed on $\smash{L^2_{\hat{\kappa}_-,\hat{\kappa}_+}(\R,\C^2)}$, green lines give Fredholm boundaries coming from $\xi = +\infty$; see Appendix~\ref{appess} for more detail. (right): schematic depiction of the regions $R_1,R_2$ and $R_3$ defined in~\eqref{regions}.}\label{fig:sp_ev}
\end{figure}

\begin{theorem} \label{concless}
Let $\alpha \in \big(-\frac{\sqrt{2}}{2}, \frac{\sqrt{2}}{2}\big)$ and fix $\kappa \in \smash{\big(0,\frac{1}{4\sqrt{1+\alpha^2}}\big)}$. For $\Theta_1,\theta_2,\theta_3 > 0$ sufficiently small and $\Theta_2 > 1$ sufficiently large, we have that, provided $0 < \epsilon_1 \ll 1$ and $0 \leq |\epsilon_2| \ll 1$, the following statements hold true:
\begin{itemize}
\item[i)] There is no essential spectrum of the operator $\hat{\El}_\tf$, posed on $\smash{L^2_{\hat{\kappa}_-,\hat{\kappa}_+}(\R,\C^2)}$, in the region $R_1(\Theta_1) \cup R_2(\theta_2,\Theta_1,\Theta_2) \cup R_3(\theta_3,\Theta_2)$.
\item[ii)] The essential spectrum of $\El_\tf$, posed on $\smash{L^2_{0,\kappa_0}(\R,\C^2)}$ does not touch the closed right-half plane, except at the origin as a parabolic curve.
\end{itemize}
\end{theorem}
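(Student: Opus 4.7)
The plan is to reduce the essential spectrum computation entirely to an analysis of the spatial eigenvalues $\nu_{i,\pm}(\lambda,\epsilon)$, $i = 1,\ldots,4$, defined as roots of the limiting dispersion relations~\eqref{spatev2}--\eqref{spatev1}. By the standard Palmer-type characterization recorded in Appendix~\ref{appess}, a point $\lambda$ lies outside the essential spectrum of the asymptotically constant-coefficient operator $\hat{\El}_\tf$ posed with weights $(\hat\kappa_-,\hat\kappa_+)$ provided neither limiting dispersion relation has a spatial root on the weight line $\{\Re\nu = \hat\kappa_\pm\}$ and, on each side, exactly two spatial roots lie to the left and two lie to the right of the corresponding weight line. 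Hence the task reduces to verifying the 2-versus-2 splittings~\eqref{split1}--\eqref{split3} on each of the three regions $R_1, R_2, R_3$ and for each of the two weight choices.

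These splittings follow from the explicit leading-order analysis of $\nu_{i,\pm}(\lambda,\epsilon)$ carried out in Appendix~\ref{appesscalc}, combined with a continuity-plus-dominant-balance argument. In the outer region $R_3(\theta_3,\Theta_2)$ a dominant-balance scaling on the quartic~\eqref{spatev2} and the block-diagonal quadratic~\eqref{spatev1} produces roots of order $\sqrt{|\lambda|}$ whose real parts open as a rightward-pointing wedge for $|\arg\lambda| < \tfrac{1}{2}\pi + \theta_3$; choosing $\Theta_2 \gg 1$ these dominate any fixed weight, yielding~\eqref{split1}--\eqref{split2}. Continuity in $\lambda$ then propagates the splittings across the compact annular region $R_2(\theta_2,\Theta_1,\Theta_2)$, provided $\theta_2$ is small enough that the sector stays away from the asymptotic essential spectrum curves. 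For the small disk $R_1(\Theta_1)$, the roots are unfolded perturbatively around $\lambda = 0$, $\epsilon = (0,0)$: using $\hat{k}_\tf(0,0) = 0$ and~\eqref{defvartheta}, the quartic~\eqref{spatev2} factors at $\lambda = 0$ through $\vartheta(\nu,0) = \pm 1$ with $\vartheta(\nu,0) = \nu^2 + 2\nu - 1$, producing the four roots $\{-1-\sqrt{3},-2,0,-1+\sqrt{3}\}$, and the weight $\hat\kappa_- = -\kappa\sqrt{1+\alpha^2}/m(\epsilon)$ is precisely small and strictly negative, giving a 2-and-2 split (the roots near $-2$ and $-1-\sqrt{3}$ fall below and the roots near $0$ and $-1+\sqrt{3}$ fall above), and hence~\eqref{split3}. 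An analogous analysis for~\eqref{spatev1} places the two roots $\{0,-2\Re\hat z_+(\epsilon)\}$ in each diagonal block on opposite sides of the weights $\hat\kappa_+$ and $\hat\kappa_0$; the hypothesis $\kappa < 1/(4\sqrt{1+\alpha^2})$ is calibrated exactly so that $\hat\kappa_+$ in~\eqref{defkappa} sits strictly between $0$ and $-2\Re\hat z_+(\epsilon)$ uniformly in $\epsilon$, completing part (i).

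For part (ii) the left weight is shifted to $0$, at which point the Fredholm boundary $\{\Re\nu_{3,-}(\lambda) = 0\}$ passes through the origin rather than remaining strictly in the open left-half plane. Near $\lambda = 0$ this curve is traced out by the branch $\lambda_-(ik)$ of~\eqref{spatev2} corresponding to the spatial root at $\nu = 0$ (coming from the gauge symmetry of the asymptotic wave train at $-\infty$); a standard expansion as in~\cite{vH94} gives $\lambda_-(\ri k) = \ri c_g k - d_2 k^2 + \mathcal{O}(k^3)$ with real group velocity $c_g$ and $\Re d_2 > 0$ precisely under the diffusive stability hypothesis $|\alpha| < \tfrac{1}{2}\sqrt{2}$. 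Hence $\{\lambda_-(\ri k) : k \in \R\}$ is a leftward-opening parabola tangent to the imaginary axis at $0$, and the essential spectrum of $\hat{\El}_\tf$ on $L^2_{0,\hat\kappa_0}(\R,\C^2)$ touches the closed right-half plane only at the origin as claimed. The main technical obstacle is neither the splittings nor the parabolic expansion individually but the uniform bookkeeping of both in the small parameter $\epsilon$ — in particular, the need to verify that the sharp bound $|\alpha|<\tfrac{1}{2}\sqrt{2}$ is precisely what keeps $\Re d_2 > 0$ (already established in~\cite{vH94}) and that the two-and-two splittings are stable under the simultaneous unfolding in $\lambda$ and $\epsilon$.
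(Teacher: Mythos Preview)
Your proposal is correct and follows essentially the same approach as the paper: the reduction to spatial eigenvalues via the Palmer/Fredholm characterization in Appendix~\ref{appess}, the dominant-balance scaling in $R_3$, the perturbative unfolding around $(\lambda,\epsilon)=(0,0)$ in $R_1$ producing the roots $\{-1-\sqrt{3},-2,0,-1+\sqrt{3}\}$, and the parabolic touching for part~(ii) all mirror the computations in Appendix~\ref{appesscalc}. The only cosmetic difference is that the paper computes the second derivative $\partial_{ll}\lambda_*(0,0,0)=-2+4\alpha^2$ directly via the implicit function theorem rather than invoking the dispersion expansion from~\cite{vH94}, but the content is identical.
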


\section{Preparations for the analysis of the point spectrum} \label{sec:point}

In the upcoming three sections we analyze the point spectrum of the operator $\hat{\El}_\tf$, posed on $\smash{L^2_{\hat{\kappa}_-,\hat{\kappa}_+}(\R,\C^2)}$, in the regions $R_1$, $R_2$ and $R_3$, defined in~\eqref{regions}, separately. In the region $R_3$, a standard scaling argument precludes the existence of point spectrum. In the regions $R_1$ and $R_2$, we employ the Riccati-Evans function, see~\S\ref{sec:ric}, as a tool to locate the point spectrum, which requires control over the evolution of the relevant subspaces as trajectories in the associated matrix Riccati equations. Such control is established in~\S\ref{sec:pointR2} and~\S\ref{sec:pointR1} for the regions $R_2$ and $R_1$, respectively.

In this section we make the necessary preparations for the analysis in the regions $R_1$ and $R_2$ in the upcoming two sections. After ruling out the presence of point spectrum in the region $R_3$, we apply a linear coordinate transform to the eigenvalue problem~\eqref{evprob000}, which leads to a simplification to the right of the front interface. Moreover, we formulate the Riccati-Evans function for our problem, and show that it remains invariant under the linear coordinate transform.

\subsection{Analysis in the region \texorpdfstring{$R_3$}{R3}}

We prove that the operator $\hat{\El}_\tf$ has no point spectrum in the region $R_3$. Our approach is to rewrite the eigenvalue problem~\eqref{evprob000} as an, appropriately scaled, first-order system and prove that this system admits an exponential dichotomy on $\R$. Therefore, it cannot have nontrivial solutions in $\smash{H^2_{\hat{\kappa}_-,\hat{\kappa}_+}(\R,\C^2)}$.

\begin{theorem} \label{concR3}
For $\theta_3 > 0$ sufficiently small and $\Theta_2 > 1$ sufficiently large, there is, provided $0 < \epsilon_1 \ll 1$ and $0 \leq |\epsilon_2| \ll 1$, no point spectrum of the operator $\hat{\El}_\tf$, posed on $\smash{L^2_{\hat{\kappa}_-,\hat{\kappa}_+}(\R,\C^2)}$, in the region $R_3(\theta_3,\Theta_2)$.
\end{theorem}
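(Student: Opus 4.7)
The plan is to recast the eigenvalue problem~\eqref{evprob000} as the first-order system
\[\phi_\zeta = A(\zeta,\lambda;\epsilon)\phi, \qquad \phi = (w,w_\zeta,y,y_\zeta)^\top \in \C^4,\]
and show that, after an appropriate $\lambda$-dependent conjugation, it admits an exponential dichotomy on all of $\R$ with rate of order $\sqrt{|\lambda|}$. Since this rate will exceed the fixed weights $\hat{\kappa}_\pm$ for $\Theta_2$ sufficiently large, any eigenfunction in $\smash{H^2_{\hat{\kappa}_-,\hat{\kappa}_+}(\R,\C^2)}$ must vanish identically.

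The first step is to normalize the $\lambda$-dependence. Let $\Lambda := \sqrt{\lambda}$ denote the principal square root; since $\theta_3 < \tfrac{1}{2}\pi$, we have $\Re \Lambda \geq c_1 \sqrt{|\lambda|}$ for some $c_1 > 0$. Conjugating by the diagonal matrix $S := \mathrm{diag}(1,\Lambda,1,\Lambda)$, i.e.~setting $\phi = S\tilde\phi$, yields the equivalent system
\[\tilde\phi_\zeta = \Lambda \bigl(A_\infty(\epsilon) + \Lambda^{-1} B(\zeta,\lambda;\epsilon)\bigr)\tilde\phi,\]
where $A_\infty(\epsilon)$ is the block-diagonal constant matrix with eigenvalues $\pm\sqrt{1-\ri\alpha}/m(\epsilon)$ and $\pm\sqrt{1+\ri\alpha}/m(\epsilon)$, and $B$ is uniformly bounded in $(\zeta,\lambda,\epsilon)$ thanks to the $\zeta$-uniform bounds on $R_\tf$ and $\hat{z}_\tf$ obtained from Propositions~\ref{prop:left}--\ref{prop:rightleft}.

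Next, I would verify uniform hyperbolicity of the frozen system at leading order. For $\alpha \in \bigl(-\tfrac{1}{2}\sqrt{2},\tfrac{1}{2}\sqrt{2}\bigr)$ and $\epsilon$ small, each eigenvalue $\mu_i(\epsilon)$ of $A_\infty(\epsilon)$ has argument within $\tfrac{1}{8}\pi + o(1)$ of either $0$ or $\pi$. For $\lambda \in R_3(\theta_3,\Theta_2)$ with $\theta_3$ small, $\Lambda$ satisfies $|\arg\Lambda| \leq \tfrac{1}{4}\pi + \tfrac{1}{2}\theta_3$, so each product $\Lambda\mu_i(\epsilon)$ has argument bounded away from $\pm\tfrac{1}{2}\pi$ by a fixed positive amount; consequently $|\Re(\Lambda\mu_i(\epsilon))| \geq c_2\sqrt{|\lambda|}$ for some $c_2>0$ independent of $\lambda,\epsilon$. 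Accordingly, the constant-coefficient system $\tilde\phi_\zeta = \Lambda A_\infty(\epsilon)\tilde\phi$ possesses an exponential dichotomy on $\R$ with rate $c_2\sqrt{|\lambda|}$ and Morse indices $(2,2)$.

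To transfer the dichotomy to the full system, I would invoke roughness of exponential dichotomies (Appendix~\ref{appexpdi}). The perturbation $\Lambda^{-1}B$ has operator norm bounded by $C/\sqrt{|\lambda|} \leq C/\sqrt{\Theta_2}$ uniformly in $(\zeta,\lambda,\epsilon)$; since this is small relative to the leading-order dichotomy rate $c_2\sqrt{|\lambda|}$ for $\Theta_2$ large, roughness yields an exponential dichotomy of the perturbed system on $\R$ with rate at least $\tfrac{1}{2}c_2\sqrt{|\lambda|}$. Reversing the conjugation returns an exponential dichotomy of $\phi_\zeta = A(\zeta,\lambda;\epsilon)\phi$ on $\R$ at the same rate. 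Enlarging $\Theta_2$ so that $\tfrac{1}{2}c_2\sqrt{\Theta_2} > \max\{|\hat{\kappa}_-|,|\hat{\kappa}_+|\}$, any solution in $\smash{H^2_{\hat{\kappa}_-,\hat{\kappa}_+}(\R,\C^2)}$ must lie in the stable subspace at $+\infty$ and simultaneously in the unstable subspace at $-\infty$; these intersect trivially by the dichotomy, so no eigenfunctions exist and $R_3$ contains no point spectrum of $\hat{\El}_\tf$. The main technical point is the uniform control of the perturbation $\Lambda^{-1}B$, which rests on the $\zeta$-uniform bounds of Propositions~\ref{prop:left}--\ref{prop:rightleft}; granted these, the argument reduces to routine bookkeeping.
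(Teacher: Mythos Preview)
Your argument is correct and follows essentially the same approach as the paper's proof. The paper rescales the spatial variable via $y=|\lambda|^{1/2}\zeta$ to obtain a system whose leading-order coefficient matrix has $O(1)$ eigenvalues $\pm\sqrt{\lambda(1\pm\ri\alpha)/|\lambda|}$ bounded away from $\ri\R$ for $\theta_3$ small, with an $O(|\lambda|^{-1/2})$ perturbation; your diagonal conjugation by $\mathrm{diag}(1,\Lambda,1,\Lambda)$ without spatial rescaling achieves the same effect, producing a leading matrix $\Lambda A_\infty$ with dichotomy rate $\sim\sqrt{|\lambda|}$ and an $O(1)$ perturbation $B$---the roughness criterion is identical in both formulations. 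One minor phrasing issue: when you write ``$\Lambda^{-1}B$ has operator norm $\leq C/\sqrt{|\lambda|}$ \ldots\ small relative to the rate $c_2\sqrt{|\lambda|}$,'' note that the actual perturbation to $\Lambda A_\infty$ is $B$ (of size $O(1)$), and roughness requires $\|B\|/(c_2\sqrt{|\lambda|})$ small; this ratio is indeed $O(|\lambda|^{-1/2})$, so the conclusion stands. Also, the restriction $|\alpha|<\tfrac{1}{2}\sqrt{2}$ is not needed here---for any fixed $\alpha\in\R$ one has $|\arg\sqrt{1\pm\ri\alpha}|<\tfrac{\pi}{4}$, so taking $\theta_3$ small enough (depending on $\alpha$) suffices---but it does no harm since it is the standing hypothesis of the main theorem.
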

\begin{proof}
We set $y = |\lambda|^{1/2} \zeta$ and $W = (X,X_y)$ in~\eqref{evprob000}, to obtain the equivalent first-order problem
\begin{align}
\partial_y W = \left(\tilde{A}(\lambda) + \left(\frac{1}{\sqrt{|\lambda|}} + \epsilon_1\right)\tilde{B}(y;\lambda,\epsilon)\right) W, \qquad \tilde{A}(\lambda) := \begin{pmatrix}0 & 0 & 1 & 0 \\ 0 & 0 & 0 & 1 \\ \frac{\lambda}{|\lambda|} (1-\ri \alpha) & 0 & 0 & 0 \\ 0 & \frac{\lambda}{|\lambda|} (1+\ri \alpha) & 0 & 0\end{pmatrix}, \label{ressys}
\end{align}
where the coefficient function $\tilde B(\cdot;\cdot,\epsilon)$ is by Propositions~\ref{prop:left} and~\ref{prop:right} and by~\eqref{limmum} bounded on $\R \times R_3(\theta_3,\Theta_2)$ by an $\epsilon$-independent constant. One readily calculates the eigenvalues of $\tilde{A}(\lambda)$ to be
\begin{align*} \nu_{\pm,\pm}(\lambda) = \pm\sqrt{\frac{\lambda (1\pm\ri\alpha)}{|\lambda|}}. \end{align*}
Taking $\theta_3 > 0$ sufficiently small, we find that the eigenvalues of $\tilde{A}(\lambda)$ are $\lambda$-uniformly bounded away from the imaginary axis for $\lambda \in R_3(\theta_3,\Theta_2)$. Consequently, using~\cite[Proposition 4.2]{COP}, system~\eqref{ressys} has, provided $\Theta_2 > 1$ is sufficiently large and $\epsilon_1 > 0$ is sufficiently small, an exponential dichotomy on $\R$ for all $\lambda \in R_3(\theta_3,\Theta_2)$ with $\lambda$- and $\epsilon$-independent exponent $\eta > 0$. Hence, reverting back to the original spatial variable $\zeta$, the solution space of~\eqref{evprob000} is contained in the direct sum of the spaces $\smash{H^2_{-M^{1/2}\eta,M^{1/2}\eta}(\R,\C^2)}$ and $\smash{H^2_{M^{1/2}\eta,-M^{1/2}\eta}(\R,\C^2)}$. Consequently, taking $\Theta_2 > 1$ sufficiently large but independent of $\epsilon$, the eigenvalue problem~\eqref{evprob000} cannot possess a nontrivial solution in $\smash{H^2_{\hat{\kappa}_-,\hat{\kappa}_+}(\R,\C^2)}$.
\end{proof}

\subsection{A linear coordinate transform} 

In this subsection we apply a linear coordinate transform to the eigenvalue problem~\eqref{evprob000}, which simplifies the problem to the right of the front interface, i.e.~for $\zeta \in [\zeta_\tf(\epsilon),\infty)$.

Proposition~\ref{prop:right} implies that, to the right of the front interface, $R_\tf(\zeta;\epsilon)$ is small and $\hat{z}_\tf(\zeta;\epsilon)$ is approximated by the solution $\hat{z}_0(\zeta;\epsilon)$ to the scalar Riccati equation~\eqref{dyninvman}. Thus, setting $R_\tf$ to $0$ and $\hat{z}_\tf$ to $\hat{z}_0$ in~\eqref{evprob000}, we obtain the reduced eigenvalue problem
\begin{align}
\begin{split}
w_{\zeta \zeta} + 2 \hat{z}_0(\zeta;\epsilon) w_\zeta &= \frac{(1-\ri \alpha)\lambda}{ m(\epsilon)^2} w,\\
y_{\zeta \zeta} + 2 \overline{\hat{z}_0(\zeta;\epsilon)} y_\zeta &= \frac{(1+\ri \alpha)\lambda}{ m(\epsilon)^2} y.
\end{split} \label{redeiv}
\end{align}
We observe that, by setting $R_\tf$ to $0$, the eigenvalue problem~\eqref{evprob000} has decoupled into two Sturm-Liouville problems. In order to make the Sturm-Liouville problems self-adjoint, we apply the standard procedure of removing the first derivatives $w_\zeta$ and $y_\zeta$ from~\eqref{redeiv} through the linear coordinate transform $u = b w$ and $v = \overline{b} y$ with
\begin{align*}
b(\zeta;\epsilon) := \re^{\int_{\zeta_\tf(\epsilon)}^\zeta \hat{z}_0(y;\epsilon) \de y}.
\end{align*}
In the new coordinates, system~\eqref{redeiv} reads
\begin{align}
\begin{split}
u_{\zeta \zeta} - \left(\partial_\zeta \hat{z}_0(\zeta;\epsilon) + \hat{z}_0(\zeta;\epsilon)^2\right) u &= \frac{(1-\ri \alpha)\lambda}{ m(\epsilon)^2} u,\\
v_{\zeta \zeta} - \left(\partial_\zeta \overline{\hat{z}_0(\zeta;\epsilon)} + \overline{\hat{z}_0(\zeta;\epsilon)^2}\right) v &= \frac{(1+\ri \alpha)\lambda}{ m(\epsilon)^2} v.
\end{split} \label{redeiv2}
\end{align}
We can then exploit the specific structure of the underlying Ginzburg-Landau equation. Since $\hat{z}_0$ satisfies the scalar Riccati equation~\eqref{dyninvman}, we find that~\eqref{redeiv2} has in fact constant coefficients, and can be rewritten as
\begin{align}
\begin{split}
u_{\zeta \zeta} - \left(\frac{\lambda + 1 - \chi}{m(\epsilon)^2} (1 - \ri \alpha) + \mu(\epsilon)\right) u &= 0,\\
v_{\zeta \zeta} - \left(\frac{\lambda + 1 - \chi}{m(\epsilon)^2} (1 + \ri \alpha) + \overline{\mu(\epsilon)}\right) v &= 0.
\end{split} \label{redeiv3}
\end{align}
Hence, not only are the Sturm-Liouville problems in~\eqref{redeiv} self-adjoint after applying the coordinate transform $u = b w$ and $v = \overline{b} y$, they also have constant coefficients (except for the jump in $\chi$ at $\zeta = 0$), and are therefore explicitly solvable.

Inspired by the above simplification in the reduced eigenvalue problem~\eqref{redeiv}, we hope to simplify the full eigenvalue problem~\eqref{evprob000} by applying a similar linear coordinate transform. In order to apply the theory of exponential dichotomies later, it is convenient to first rewrite~\eqref{evprob000} as the first-order system
\begin{align}
\phi_\zeta = A(\zeta;\lambda,\epsilon)\phi, \qquad \phi \in \C^4, \label{evprob}
\end{align}
with
\begin{align*}
A(\zeta;\lambda,\epsilon):=\begin{pmatrix} 0 & 0 & 1 & 0 \\ 0 & 0 & 0 & 1 \\ \frac{\lambda}{m^2} (1 - \ri \alpha) + (1 + \ri \dell) R_\tf & (1 + \ri \dell) R_\tf & -2\hat{z}_\tf & 0 \\ (1 - \ri \dell) R_\tf & \frac{\lambda}{m^2} (1 + \ri \alpha) + (1 - \ri \dell) R_\tf & 0 & -2\overline{\hat{z}_\tf} \end{pmatrix},
\end{align*}
where we suppressed the arguments of $R_\tf$, $\hat{z}_\tf$ and $m$ in the coefficient matrix $A(\zeta;\lambda,\epsilon)$. Thus, inspired by the above, we apply the linear coordinate transformation
\begin{align} \hat{\phi} = B(\zeta;\epsilon)\phi,
\qquad B(\zeta;\epsilon) := \begin{pmatrix} \beta & 0 & 0 & 0 \\ 0 & \overline{\beta} & 0 & 0 \\   \beta\hat{z}_\tf & 0 & \beta & 0 \\ 0 & \overline{\beta\hat{z}_\tf} & 0 & \overline{\beta} \end{pmatrix},\label{lincoordtransform}\end{align}
with
\begin{align*} \beta(\zeta;\epsilon) := \exp\left(\int_{\zeta_\tf(\epsilon)}^\zeta \left(\hat{z}_\tf(y;\epsilon) - \ri \hat{k}_\tf(\epsilon)\right)\mathrm d y\right),
\end{align*}
to the eigenvalue problem~\eqref{evprob} yielding the system
\begin{align}
\hat{\phi}_\zeta = A_*(\zeta;\lambda,\epsilon)\hat{\phi}, \qquad \hat{\phi} \in \C^4, \label{evprob2}
\end{align}
with
\begin{align*}
A_*(\zeta;\lambda,\epsilon) := \begin{pmatrix} -\ri \hat{k}_\tf & 0 & 1 & 0 \\ 0 & \ri \hat{k}_\tf & 0 & 1 \\ \frac{\lambda + 1 - \chi}{m^2} (1 - \ri \alpha) + \mu + 2(1 + \ri \dell) R_\tf & (1 + \ri \dell)\frac{\beta}{\overline{\beta}} R_\tf & - \ri \hat{k}_\tf & 0 \\ (1 - \ri \dell)\frac{\overline{\beta}}{\beta} R_\tf & \frac{\lambda + 1 - \chi}{m^2} (1 + \ri \alpha) + \overline{\mu} + 2(1 - \ri \dell) R_\tf & 0 & \ri \hat{k}_\tf \end{pmatrix},
\end{align*}
where we suppress the arguments of $R_\tf,k_\tf,\mu,m$ and $\chi$ in the coefficient matrix $A_*(\zeta;\lambda,\epsilon)$. We find by~Proposition~\ref{prop:right} that, to the right of the front interface, i.e.~for $\zeta \in [\zeta_\tf(\epsilon),\infty)$, system~\eqref{evprob2} is close to a constant coefficient system. More specifically, along the plateau state, i.e.~for $\zeta \in [\zeta_\tf(\epsilon),0]$, system~\eqref{evprob2} is approximated by
\begin{align}
\hat{\phi}_\zeta = A_{*,-}(\lambda,\epsilon)\hat{\phi}, \qquad \hat{\phi} \in \C^4, \label{evprobp}
\end{align}
whereas to the right of the inhomogeneity, i.e.~for $\zeta \in [0,\infty)$, system~\eqref{evprob2} is approximated by
\begin{align}
\hat{\phi}_\zeta = A_{*,+}(\lambda,\epsilon)\hat{\phi}, \qquad \hat{\phi} \in \C^4, \label{evprob2red2}
\end{align}
with
\begin{align}
A_{*,\pm}(\lambda,\epsilon) := \begin{pmatrix} \D(\epsilon) & I_2 \\ \A_\pm(\lambda,\epsilon) & \D(\epsilon)\end{pmatrix}, \label{defApm}
\end{align}
and
\begin{align} \D(\epsilon) := \begin{pmatrix} -\ri \hat{k}_\tf & 0  \\ 0 & \ri \hat{k}_\tf\end{pmatrix}, \qquad \A_\pm(\lambda,\epsilon) := \begin{pmatrix} \frac{\lambda + 1 \pm 1}{m^2} (1 - \ri \alpha) + \mu & 0 \\ 0 & \frac{\lambda + 1 \pm 1}{m^2} (1 + \ri \alpha) + \overline{\mu} \end{pmatrix}, \label{defApm2}
\end{align}
where we suppress the arguments of $k_\tf,\mu$ and $m$ in the coefficient matrices $\D(\epsilon)$ and $\A_\pm(\lambda,\epsilon)$. We emphasize that, as in~\eqref{redeiv3}, systems~\eqref{evprobp} and~\eqref{evprob2red2} decouple into two systems in $\C^2$ with constant coefficients, which will simplify the upcoming analysis significantly. We remark that our choice for adding the factor $\re^{-\ri \hat{k}_\tf(\epsilon) (\zeta - \zeta_\tf(\epsilon))}$ in $\beta(\zeta;\epsilon)$ is motivated by the fact that system~\eqref{evprob2} has asymptotically constant coefficients, which is convenient for the definition of the Riccati-Evans function later in~\S\ref{sec:ric2}, see also~\S\ref{sec:ric}. Indeed, by Theorem~\ref{t:ex02} we have
\begin{align*} \lim_{\zeta \to -\infty} \frac{{\beta(\zeta;\epsilon)}}{\overline{\beta(\zeta;\epsilon)}}  R_\tf(\zeta;\epsilon) = 1, \qquad \lim_{\zeta \to \infty} \frac{{\beta(\zeta;\epsilon)}}{\overline{\beta(\zeta;\epsilon)}}  R_\tf(\zeta;\epsilon) = 0.\end{align*}

We conclude this subsection with the following technical result providing control over $\beta(\zeta;\epsilon)$.

\begin{lemma}\label{lem:beta}
Let $\iota$ be as in Proposition~\ref{prop:rightleft}. Provided $0 < \epsilon_1 \ll \delta \ll 1$ and $0 \leq |\epsilon_2| \ll \delta \ll 1$, we have the estimate
\begin{align}
\left|\frac{\beta(\zeta;\epsilon)}{\overline{\beta(\zeta;\epsilon)}} - 1\right| \leq C_\delta \sqrt{\|\epsilon\|}\left|\log\|\epsilon\|\right|, \qquad \zeta \leq \zeta_\tf(\epsilon) - \iota \log\|\epsilon\|, \label{beta:approx}\end{align}
where $C_\delta > 0$ depends on $\delta$ only. Moreover, it holds
\begin{align} \beta(\zeta;\epsilon)\re^{\hat\kappa_- \zeta} \to 0 \text{ as } \zeta \to -\infty, \label{betalim-}\end{align}
and
\begin{align} \beta(\zeta;\epsilon)^{-1} \re^{-\hat\kappa_+ \zeta} \to 0 \text{ as } \zeta \to \infty,\label{betalim+}\end{align}
where $\hat\kappa_\pm$ are defined in~\eqref{defkappa}.
\end{lemma}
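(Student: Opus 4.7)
My approach is to exploit the explicit formula for $\beta$. Since $\hat k_\tf(\epsilon)$ is real, one has
\[
\frac{\beta(\zeta;\epsilon)}{\overline{\beta(\zeta;\epsilon)}} = \exp\!\left(2\ri\int_{\zeta_\tf(\epsilon)}^{\zeta} \bigl(\Im\,\hat z_\tf(y;\epsilon) - \hat k_\tf(\epsilon)\bigr)\de y\right), \qquad |\beta(\zeta;\epsilon)| = \exp\!\left(\int_{\zeta_\tf(\epsilon)}^{\zeta} \Re\,\hat z_\tf(y;\epsilon)\,\de y\right).
\]
By the elementary bound $|\re^{\ri\theta} - 1| \leq |\theta|$ for $\theta \in \R$, the estimate~\eqref{beta:approx} reduces to controlling the first integral by $C_\delta\sqrt{\|\epsilon\|}|\log\|\epsilon\||$, which I plan to do by splitting the integration range at $y = \zeta_\tf(\epsilon)$ and applying different estimates on the tail $(-\infty,\zeta_\tf(\epsilon)]$ and on the plateau window $[\zeta_\tf(\epsilon), \zeta_\tf(\epsilon) - \iota\log\|\epsilon\|]$. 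The two limits~\eqref{betalim-}--\eqref{betalim+} will be handled separately, by extracting the asymptotic linear growth rate of $\log|\beta|$.

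I anticipate that the main obstacle lies in the tail contribution over $(-\infty,\zeta_\tf(\epsilon)]$: the pointwise exponential bound in~\eqref{estex1} alone would yield only an $\mathcal{O}(1)$ estimate, since its prefactor $C_\delta$ does not vanish with $\epsilon$. The resolution is to combine it with the uniform-in-$y$ bound $|\Im\,\hat z_\tf(y;\epsilon) - \hat k_\tf(\epsilon)| \leq C_\delta\|\epsilon\|$, which follows from the first estimate in~\eqref{estex1} together with the reality of $\Psi_*$ and the bound $|\hat k_\tf(\epsilon)| = \mathcal{O}(\|\epsilon\|)$ (a consequence of $\hat k_\tf(0,0) = 0$ and smoothness at $\epsilon = (0,0)$). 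Taking the pointwise minimum of the two bounds and splitting the tail at the crossover $y_\star := \zeta_\tf(\epsilon) + \eta^{-1}\log\|\epsilon\|$ produces a tail contribution of order $\mathcal{O}(\|\epsilon\||\log\|\epsilon\||)$. On the plateau window, Proposition~\ref{prop:rightleft} compares $\hat z_\tf$ to the real solution $\hat z_*$ and delivers $|\Im\,\hat z_\tf(y;\epsilon) - \hat k_\tf(\epsilon)| \leq C_\delta\sqrt{\|\epsilon\|}$; integrating over an interval of length $-\iota\log\|\epsilon\|$ then yields the dominating contribution $\mathcal{O}(\sqrt{\|\epsilon\|}|\log\|\epsilon\||)$, establishing~\eqref{beta:approx}.

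For the decay statements, Theorem~\ref{t:ex02} together with the exponential convergence in~\eqref{estex1} yields $\Re\,\hat z_\tf(y;\epsilon) - (1 - \tfrac{1}{2}\epsilon_1^2) \to 0$ exponentially as $y \to -\infty$, so $\log|\beta(\zeta;\epsilon)\re^{\hat\kappa_-\zeta}| = [(1 - \tfrac{1}{2}\epsilon_1^2) + \hat\kappa_-]\zeta + \mathcal{O}(1)$ as $\zeta \to -\infty$. Since $|\hat\kappa_-| = \kappa\sqrt{1+\alpha^2}/m < 1/(4m)$ by the hypothesis $\kappa < 1/(4\sqrt{1+\alpha^2})$, and $m$ is close to $1$ in the regime $0 < \epsilon_1 \ll 1$, the coefficient of $\zeta$ is strictly positive, proving~\eqref{betalim-}. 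The argument for~\eqref{betalim+} is analogous: Proposition~\ref{prop:right} and the identity $\hat z_0(\cdot;\epsilon) \equiv \hat z_+(\epsilon)$ on $[0,\infty)$ give $\log|\beta(\zeta;\epsilon)^{-1}\re^{-\hat\kappa_+\zeta}| = -[\Re\,\hat z_+(\epsilon) + \hat\kappa_+]\zeta + \mathcal{O}(1)$ as $\zeta \to \infty$, and a direct inspection of~\eqref{defhatz} and~\eqref{defkappa} shows that the two square-root terms cancel exactly, leaving $\Re\,\hat z_+(\epsilon) + \hat\kappa_+ = \kappa\sqrt{1+\alpha^2}/m - \tfrac{1}{2}\epsilon_1^2 + 1 > 0$ for small $\epsilon_1$.
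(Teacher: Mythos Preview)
Your proposal is correct and follows essentially the same route as the paper's proof: the paper likewise reduces~\eqref{beta:approx} to bounding $\int_{\zeta_\tf(\epsilon)}^{\zeta}\bigl(\Im\,\hat z_\tf - \hat k_\tf\bigr)$, splits the integration range at the crossover point $\zeta_\tf(\epsilon)+\eta^{-1}\log\|\epsilon\|$, and applies the second line of~\eqref{estex1} on the far tail and the combination of the first line of~\eqref{estex1} with Proposition~\ref{prop:rightleft} on the remaining window of length $\mathcal O(|\log\|\epsilon\||)$; the arguments for~\eqref{betalim-} and~\eqref{betalim+} are identical to yours. The only cosmetic difference is that the paper merges your ``near tail'' and ``plateau'' pieces into a single integral bounded by the weaker $\mathcal O(\sqrt{\|\epsilon\|})$, whereas you keep them separate and record the sharper $\mathcal O(\|\epsilon\||\log\|\epsilon\||)$ tail contribution.
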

\begin{proof}
By~\eqref{preciseomegabound} in Theorem~\ref{t:ex02}, estimate~\eqref{estex1} from Proposition~\ref{prop:left} and Proposition~\ref{prop:rightleft} we have
\begin{align*}
\begin{split}
&\left|\int_{\zeta_\tf(\epsilon)}^{\zeta} \left(\Im\left[\hat{z}_\tf(\zeta;\epsilon)\right] - \hat{k}_\tf(\epsilon)\right) \de \zeta\right|\\ &\qquad \leq \int_{\frac{\log\|\epsilon\|}{\eta}}^{-\iota \log\|\epsilon\|} \left|\Im\left[\hat{z}_\tf(\zeta_\tf(\epsilon) + \zeta,\epsilon)\right] - \hat{k}_\tf(\epsilon)\right| \de \zeta + \int_{-\infty}^{\frac{\log\|\epsilon\|}{\eta}} \left|\Im\left[\hat{z}_\tf(\zeta_\tf(\epsilon) + \zeta),\epsilon)\right] - \hat{k}_\tf(\epsilon)\right| \de \zeta
\\ &\qquad \leq C_\delta \sqrt{\|\epsilon\|}\left|\log\|\epsilon\|\right|,\end{split}
\end{align*}
for $\zeta \leq \zeta_\tf(\epsilon) - \iota \log\|\epsilon\|$, provided $0 < \epsilon_1 \ll \delta \ll 1$ and $0 \leq |\epsilon_2| \ll \delta \ll 1$. Hence,~\eqref{beta:approx} follows.

Moreover, Theorem~\ref{t:ex02} yields that $\Re(\hat{z}_\tf(\zeta;\epsilon))$ converges exponentially to $1 - \tfrac{1}{2}\epsilon_1^2$ as $\zeta \to -\infty$. Since it holds $-\frac{1}{4} < -\kappa\sqrt{1+\alpha^2} \leq 0$, it follows $-\frac{1}{2} < \hat\kappa_- < 0$ by~\eqref{limmum}, provided $0 < \epsilon_1 \ll 1$ and $0 \leq |\epsilon_2| \ll 1$. Hence,~\eqref{betalim-} follows.

Finally,~$-\Re(\hat{z}_\tf(\zeta;\epsilon))$ converges by Proposition~\ref{prop:right} exponentially to
$$-\Re(\hat{z}_+(\epsilon)) = \hat \kappa_+ - \frac{\kappa \sqrt{1+\alpha^2}}{m(\epsilon)} + \frac{\epsilon_1^2}{2} - 1 < \hat \kappa_+$$
as $\zeta \to \infty$, which yields~\eqref{betalim+}, provided $0 < \epsilon_1 \ll 1$ and $0 \leq |\epsilon_2| \ll 1$.
\end{proof}

\subsection{The Riccati-Evans function} \label{sec:ric2}

Since the essential spectrum of the operator $\hat{\El}_\tf$, posed on $\smash{L^2_{\hat{\kappa}_-,\hat{\kappa}_+}(\R,\C^2)}$, is, by Theorem~\ref{concless}, not intersecting the regions $R_1$ and $R_2$, a point $\lambda \in R_1 \cup R_2$ lies in its point spectrum if and only if the associated eigenvalue problem~\eqref{evprob000} admits a nontrivial solution in $\smash{H^2_{\hat{\kappa}_-,\hat{\kappa}_+}(\R,\C^2)}$ or, equivalently, if and only if the first-order reformulation~\eqref{evprob} admits a nontrivial solution in $\smash{H^1_{\hat{\kappa}_-,\hat{\kappa}_+}(\R,\C^4)}$. In this subsection we will define the Riccati-Evans function, which locates the point spectrum in $R_1 \cup R_2$. Thus,~\eqref{evprob} admits a nontrivial solution in $\smash{H^1_{\hat{\kappa}_-,\hat{\kappa}_+}(\R,\C^4)}$ for $\lambda \in R_1 \cup R_2$ if and only if $\lambda$ is a root of the Riccati-Evans function.

\subsubsection{Construction of the Riccati-Evans function} \label{sec:ricdef}

We construct the Riccati-Evans function by applying the general procedure in~\S\ref{sec:ric} to our operator $\hat{\El}_\tf$, posed on $\smash{L^2_{\hat{\kappa}_-,\hat{\kappa}_+}(\R,\C^2)}$. By Theorem~\ref{concless}, there exists an open and bounded neighborhood $\Omega \subset \C$ of $R_1 \cup R_2$ that lies to the right of its essential spectrum. Thus, as in~\S\ref{sec:ric}, we evoke~\cite[Theorem 3.2]{SAN2} to conclude that the systems
\begin{align*}
 \phi_\zeta = \left(A(\zeta;\lambda,\epsilon) - \hat{\kappa}_\pm\right)\phi, \qquad \phi \in \C^4,
\end{align*}
have for each $\lambda \in \Omega$, exponential dichotomies on $(-\infty,0]$ and $[0,\infty)$, respectively, with associated rank $2$ projections $P_\pm(\zeta;\lambda,\epsilon)$, which depend analytically on $\lambda$.

Consider the coordinate chart $\mathfrak{c}$, which maps any $2$-dimensional subspace $W \in \mathrm{Gr}(2,\C^4)$ represented by a basis $\smash{\left(\begin{smallmatrix} X \\ Y \end{smallmatrix}\right) } \in \C^{4 \times 2}$ with $\det(X) \neq 0$ to the matrix $YX^{-1} \in \C^{2 \times 2}$. Choose bases
\begin{align*} \begin{pmatrix} X_\pm(\pm \zeta;\lambda,\epsilon)\\ Y_\pm(\pm \zeta;\lambda,\epsilon)\end{pmatrix} \in \C^{4 \times 2}, \qquad \zeta \geq 0, \lambda \in \Omega,\end{align*}
of the relevant subspaces $\ker(P_-(-\zeta;\lambda,\epsilon))$ and $P_+(\zeta;\lambda,\epsilon)[\C^{4}]$, respectively. Define $S_{\pm,\epsilon} \colon \R_\pm \times \Omega \to \C$ by $S_{\pm,\epsilon}(\pm \zeta;\lambda) := \det(X_\pm(\pm \zeta;\lambda,\epsilon))$. For $\zeta \geq 0$ and for $\lambda \in \Omega$ with $S_{\pm,\epsilon}(\pm \zeta;\lambda) \neq 0$, the subspaces $\ker(P_-(-\zeta;\lambda,\epsilon))$ and $P_+(\zeta;\lambda,\epsilon)[\C^{4}]$ in the Grassmannian $\mathrm{Gr}(2,\C^4)$ are, under the coordinate chart $\mathfrak c$, represented by
\begin{align*}T_\pm(\pm \zeta;\lambda,\epsilon) := Y_\pm(\pm \zeta;\lambda,\epsilon)X_\pm(\pm \zeta;\lambda,\epsilon)^{-1} \in \C^{2 \times 2}.\end{align*}

Define $\Es_{\epsilon} := \{\lambda \in \Omega : S_{+,\epsilon}(0;\lambda) = 0 \text{ or } S_{-,\epsilon}(0;\lambda) = 0\}$. The associated {Riccati-Evans function} $\E_{\epsilon} \colon \Omega \setminus \Es_{\epsilon} \to \C$ is then given by
\begin{align*} \E_{\epsilon}(\lambda) = \det\left(T_+(0;\lambda,\epsilon) - T_-(0;\lambda,\epsilon)\right). \end{align*}
It follows from Proposition~\ref{prop:ric} that $\E_\epsilon$ is meromorphic on $\Omega$, and the eigenvalue problem~\eqref{evprob} has a nontrivial solution in $\smash{H^1_{\hat{\kappa}_-,\hat{\kappa}_+}(\R,\C^4)}$ for some $\lambda \in \Omega \setminus \Es_{\epsilon}$ if and only if $\E_{\epsilon}(\lambda) = 0$. In addition, the multiplicity of a root $\lambda_0 \in \Omega \setminus \Es_{\epsilon}$ of $\E_{\epsilon}$ corresponds to the algebraic multiplicity of $\lambda_0$ as an eigenvalue of $\hat{\El}_\tf$.

\subsubsection{Invariance under the linear coordinate transform} \label{sec:invric}

We study the behavior of the Riccati-Evans function $\E_{\epsilon}$ under the linear coordinate transform~\eqref{lincoordtransform}, which transforms the eigenvalue problem~\eqref{evprob} into~\eqref{evprob2}. For all $\lambda \in \Omega$ we set
\begin{align}
\begin{split}
W_+(\zeta;\lambda,\epsilon) := B(\zeta;\epsilon)P_+(\zeta;\lambda,\epsilon)[\C^4], \qquad & \zeta \geq 0, \\
W_-(\zeta;\lambda,\epsilon) := B(\zeta;\epsilon)\ker(P_-(\zeta;\lambda,\epsilon)), \qquad & \zeta \leq \zeta_\tf(\epsilon),
\end{split} \label{subspaces}
\end{align}
Flowing these subspaces forward and backward in the linear system~\eqref{evprob2} leads then to subspaces $W_\pm(\zeta;\lambda,\epsilon) \in \mathrm{Gr}(2,\C^4)$ for each $\zeta \in \R$ and $\lambda \in \Omega$, which can be represented by $\hat{T}_\pm(\zeta;\lambda,\epsilon) = \hat{Y}_\pm(\zeta;\lambda,\epsilon)\hat{X}_\pm(\zeta;\lambda,\epsilon)^{-1} \in \C^{2 \times 2}$, where
\begin{align}\begin{pmatrix} \hat{X}_\pm \\ \hat{Y}_\pm\end{pmatrix}(\zeta;\lambda,\epsilon) \in \C^{4 \times 2},\label{bases}\end{align}
are bases of $W_\pm(\zeta;\lambda,\epsilon)$, as long as $\det(\hat{X}_\pm(\zeta;\lambda,\epsilon)) \neq 0$. We emphasize that $\det(X_\pm(\zeta;\lambda,\epsilon)) = 0 \Leftrightarrow \det(\hat{X}_\pm(\zeta;\lambda,\epsilon)) = 0$, since the structure of the coordinate transform~\eqref{lincoordtransform} yield $\det(\hat{X}_\pm(\zeta;\lambda,\epsilon)) = |\beta(\zeta;\epsilon)|^2 \det(X_\pm(\zeta;\lambda,\epsilon))$.

Thus, the relevant subspaces $\ker(P_-(0;\lambda,\epsilon))$ and $P_+(0;\lambda,\epsilon)[\C^4]$ in $\mathrm{Gr}(2,\C^4)$, represented by $T_{\pm}(0;\lambda,\epsilon) \in \C^{2 \times 2}$ under the coordinate chart $\mathfrak{c}$, are mapped by the linear coordinate transform~\eqref{lincoordtransform} onto subspaces $W_\pm(0;\lambda,\epsilon) \in \mathrm{Gr}(2,\C^4)$ represented by $\hat{T}_\pm(0;\lambda,\epsilon) \in \C^{2 \times 2}$ given by
\begin{align*} \hat{T}_\pm(0) &:= \left(\begin{pmatrix} \beta(0) & 0 \\ 0 & \overline{\beta}(0) \end{pmatrix}Y_\pm(0) + \begin{pmatrix} \beta(0) \hat{z}_\tf(0) & 0 \\ 0 & \overline{\beta(0) \hat{z}_\tf(0)} \end{pmatrix} X_\pm(0)\right) X_\pm^{-1}(0) \begin{pmatrix} \beta(0)^{-1} & 0 \\ 0 & \overline{\beta(0)^{-1}} \end{pmatrix}\\ &= \begin{pmatrix} \beta(0) & 0 \\ 0 & \overline{\beta(0)} \end{pmatrix}T_\pm(0)\begin{pmatrix} \beta(0)^{-1} & 0 \\ 0 & \overline{\beta(0)^{-1}} \end{pmatrix} + \begin{pmatrix} \hat{z}_\tf(0) & 0 \\ 0 & \overline{\hat{z}_\tf(0)} \end{pmatrix},\end{align*}
where we suppress the dependency on $\lambda$ and $\epsilon$. Therefore, the coordinate transform leaves the Riccati-Evans function $\E_\epsilon \colon \Omega \setminus \Es_\epsilon \to \C$ invariant, i.e.~it holds
\begin{align}
 \E_{\epsilon}(\lambda) = \det\left(\hat{T}_+(0;\lambda,\epsilon) - \hat{T}_-(0;\lambda,\epsilon)\right), \qquad \lambda \in \Omega \setminus \Es_\epsilon. \label{Eveq}
\end{align}

Of course, one can also choose to evaluate at the front interface $\zeta_\tf(\epsilon)$ and define the \emph{alternative Riccati-Evans function} $\tilde{\E}_\epsilon \colon \Omega \setminus \tilde{\Es}_\epsilon \to \C$, with $\tilde{\Es}_\epsilon := \{\lambda \in \Omega : \det(X_+(\zeta_\tf(\epsilon);\lambda,\epsilon)) = 0 \text{ or } \det(X_-(\zeta_\tf(\epsilon);\lambda,\epsilon)) = 0\}$, by
\begin{align*}
 \tilde{\E}_{\epsilon}(\lambda) = \det\left(T_+(\zeta_\tf(\epsilon);\lambda,\epsilon) - T_-(\zeta_\tf(\epsilon);\lambda,\epsilon)\right) = \det\left(\hat{T}_+(\zeta_\tf(\epsilon);\lambda,\epsilon) - \hat{T}_-(\zeta_\tf(\epsilon);\lambda,\epsilon)\right),
\end{align*}
Analogous to the Riccati-Evans function $\E_\epsilon$, the alternative Riccati-Evans function $\tilde \E_\epsilon$ is meromorphic on $\Omega$ and its roots coincide (including multiplicity) with the point spectrum of $\hat{\El}_\tf$ in $\Omega \setminus \tilde{\Es}_\epsilon$ (including algebraic multiplicity of the eigenvalues).

\section{Analysis in the region \texorpdfstring{$R_2$}{R2}} \label{sec:pointR2}

In this section we study the point spectrum or, equivalently, the roots of the Riccati-Evans function $\E_\epsilon$ in the region $R_2$. Recall from~\S\ref{sec:invric} that the Riccati-Evans function is invariant under the linear coordinate transform~\eqref{lincoordtransform}, and thus can be defined in terms of the subspaces $W_\pm(\zeta;\lambda,\epsilon)$ given by~\eqref{subspaces}. Control over these subspaces can be obtained through exponential dichotomies, which arise by perturbing from the limit $\|\epsilon\| \to 0$ in which the transformed eigenvalue problem~\eqref{evprob2} along the front reduces to two coupled Sturm-Liouville problems. In order to extend the exponential dichotomies across the front we prove, with the aid of an $L^2$-energy estimate, that the coupled Sturm-Liouville problem admits no eigenvalues in the region $R_2$. Thus, using the control provided by the exponential dichotomies, we can approximate the Riccati-Evans function and show that it possesses neither zeros nor poles in the region $R_2$, which, by Proposition~\ref{prop:ric}, precludes the existence of point spectrum in $R_2$.

\subsection{Exponential dichotomy to the left of the inhomogeneity}

We establish an exponential dichotomy for system~\eqref{evprob2} on $(-\infty,0]$.

\begin{proposition} \label{prop:expdileftR2}
Let $\Theta_2 > 1$ be given. There exists $\tau > 0$ such that, provided $0 < \epsilon_1 \ll \Theta_1 \ll \delta, \theta_2 \ll 1$ and $0 \leq |\epsilon_2| \ll \Theta_1 \ll \delta,\theta_2 \ll 1$, system~\eqref{evprob2} admits for each $\lambda \in R_1(\Theta_2)$ an exponential dichotomy on $(-\infty,0]$ with $\lambda$- and $\epsilon$-independent constants and projections $P_{*,l}(\zeta;\lambda,\epsilon)$ on $\C^4$ satisfying
\begin{align}
 \left\|P_{*,l}(0;\lambda,\epsilon) - Q_{l}(\lambda)\right\| \leq C_\delta\|\epsilon\|^\tau, \label{leftbound6}
\end{align}
where $Q_l(\lambda)$ is the spectral projection onto the stable eigenspace of the matrix
\begin{align*}
A_l^{\infty}(\lambda) = \begin{pmatrix} 0 & 0 & 1 & 0 \\ 0 & 0 & 0 & 1\\ \lambda (1 - \ri \alpha) & 0 & 0 & 0\\ 0 & \lambda(1 + \ri\alpha) & 0 & 0 \end{pmatrix},
\end{align*}
and $C_\delta > 1$ is a constant depending only on $\delta$.
\end{proposition}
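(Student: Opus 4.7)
The plan is to construct the dichotomy on $(-\infty,0]$ by splitting at the front interface $\zeta_\tf(\epsilon)$: first produce separate dichotomies on the plateau $[\zeta_\tf(\epsilon),0]$ and on the periodic tail $(-\infty,\zeta_\tf(\epsilon)]$, then paste them via transversality of the corresponding subspaces at $\zeta_\tf(\epsilon)$. The essential simplification afforded by the coordinate transform~\eqref{lincoordtransform} is that on the plateau the coefficient matrix $A_*(\zeta;\lambda,\epsilon)$ is close to the constant matrix $A_{*,-}(\lambda,\epsilon)$ from~\eqref{defApm}, which, owing to the diagonality of $\D(\epsilon)$ and $\A_-(\lambda,\epsilon)$, decouples into two independent $2\times 2$ blocks with eigenvalues $-\ri\hat{k}_\tf(\epsilon)\pm\sqrt{\lambda(1-\ri\alpha)/m(\epsilon)^2 + \mu(\epsilon)}$ together with their complex conjugates. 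In the limit $\epsilon\to(0,0)$ these collapse onto the eigenvalues $\pm\sqrt{\lambda(1\pm\ri\alpha)}$ of $A_l^{\infty}(\lambda)$.

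For $\lambda$ in the region under consideration, the restriction $|\alpha|<\tfrac{1}{2}\sqrt{2}$ together with the argument condition on $\lambda$ ensure that $\lambda(1\pm\ri\alpha)$ lies in a sector bounded away from the negative real axis, so $A_l^{\infty}(\lambda)$ exhibits a $2+2$ hyperbolic splitting whose spectral gap is of order $\sqrt{|\lambda|}\geq\sqrt{\Theta_1}$. For $\|\epsilon\|$ sufficiently small compared to $\Theta_1$, this splitting persists for $A_{*,-}(\lambda,\epsilon)$. Combined with the pointwise bound $\|A_*(\zeta;\lambda,\epsilon)-A_{*,-}(\lambda,\epsilon)\|\leq C\delta\,\re^{-\eta|\zeta-\zeta_\tf(\epsilon)|}$ on the plateau supplied by Proposition~\ref{prop:right}, the roughness of exponential dichotomies (Appendix~\ref{appexpdi}) then yields a dichotomy for~\eqref{evprob2} on $[\zeta_\tf(\epsilon),0]$ with $\lambda$- and $\epsilon$-uniform constants. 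On the tail, Proposition~\ref{prop:left} together with Lemma~\ref{lem:beta} gives exponential convergence of $A_*(\zeta;\lambda,\epsilon)$ to a constant matrix at $\zeta=-\infty$, and the essential spectrum analysis in~\S\ref{sec:ess}, specifically the splitting~\eqref{split1}, ensures that this limit matrix has a $2+2$ hyperbolic splitting; standard asymptotic dichotomy theory~\cite[Proposition 4.2]{COP} then supplies a dichotomy on $(-\infty,\zeta_\tf(\epsilon)]$. The pasting hypothesis, namely transversality at $\zeta_\tf(\epsilon)$ of the tail's unstable subspace with the plateau's stable subspace, follows by continuity in $\epsilon$ and direct verification in the explicit $\epsilon\to(0,0)$ limit.

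To establish the projection estimate~\eqref{leftbound6}, propagate the spectral projection of $A_{*,-}(\lambda,\epsilon)$ across the plateau using the dichotomy: the error incurred is controlled by $\re^{-c|\zeta_\tf(\epsilon)|}\sim\re^{-c/\epsilon_1}$ since the plateau has length $|\zeta_\tf(\epsilon)|\sim 1/\epsilon_1$ by Theorem~\ref{t:ex02}, and the perturbation away from the interface decays exponentially. The spectral projection of $A_{*,-}(\lambda,\epsilon)$ is itself $\ord(\|\epsilon\|)$-close to $Q_l(\lambda)$ by analyticity in $\epsilon$ and the uniform spectral gap of $A_l^{\infty}(\lambda)$. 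Combining these two estimates yields~\eqref{leftbound6} for some $\tau>0$. The main technical obstacle will be ensuring that the dichotomy constants for $A_{*,-}(\lambda,\epsilon)$ and the roughness hypothesis hold uniformly in $\lambda$ throughout the region: the spectral gap of $A_l^{\infty}(\lambda)$ shrinks as $|\lambda|\to 0$, and the parameter hierarchy $\epsilon_1\ll\Theta_1\ll\delta,\theta_2$ is calibrated precisely so that this $\sqrt{|\lambda|}$-gap still dominates the $\delta$- and $\|\epsilon\|$-sized perturbations encountered across the plateau and during the passage to the asymptotic periodic state.
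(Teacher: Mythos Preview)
Your split-and-paste architecture is different from the paper's, which does \emph{not} decompose at $\zeta_\tf(\epsilon)$. Instead, the paper approximates $A_*(\zeta_\tf(\epsilon)+\zeta;\lambda,\epsilon)$ on all of $(-\infty,-\zeta_\tf(\epsilon)]$ by a single $\epsilon$-independent system $A_l(\zeta_\delta+\zeta;\lambda)$ (using Proposition~\ref{prop:rightleft} to carry the approximation past the interface), and then establishes an exponential dichotomy for $A_l$ on the \emph{whole line} $\R$. The point is that $A_l(\cdot;\lambda)$ has hyperbolic limits $A_l^{\pm\infty}(\lambda)$ at both ends, so dichotomies exist on each half-line; the dichotomy on $\R$ then follows once one rules out nontrivial bounded solutions, i.e.\ shows that the coupled Sturm--Liouville problem
\[
\partial_\zeta^2\hat w = \lambda(1-\ri\alpha)\hat w + R_*(2\hat w+\hat y),\qquad
\partial_\zeta^2\hat y = \lambda(1+\ri\alpha)\hat y + R_*(\hat w+2\hat y),
\]
has no $L^2$-solution for $\lambda\in R_2$. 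This is done by an energy estimate: pair with $\hat w,\hat y$, add, and manipulate real and imaginary parts to force $\Re(\lambda)<0$.

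The gap in your proposal is precisely here. Your pasting hypothesis---transversality at $\zeta_\tf(\epsilon)$ of the tail's unstable subspace with the plateau's stable subspace---is \emph{equivalent} to this no-eigenvalue statement for the limiting problem, and it is the substantive step of the proof, not a ``direct verification''. The limit $\epsilon\to(0,0)$ sends $\zeta_\tf\to-\infty$, so the decomposition degenerates; even at a fixed reference point, verifying transversality amounts to showing that the subspace of solutions bounded as $\zeta\to-\infty$ does not meet the stable eigenspace of $A_l^\infty(\lambda)$, which is exactly the absence of an $L^2$-eigenfunction. Your claim about propagating the spectral projection of $A_{*,-}$ across the plateau to obtain~\eqref{leftbound6} has the same hidden dependence: forward propagation of a two-dimensional subspace under a hyperbolic flow converges to the unstable eigenspace only if the initial subspace is transverse to the stable one---again the same condition. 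Without the energy estimate (or an equivalent argument), neither the pasting nor the projection bound goes through.
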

\begin{proof} In this proof $C_\delta > 1$ denotes any constant, which depends on $\delta$ only.

Let $\lambda \in R_2(\theta_2,\Theta_1,\Theta_2)$. We wish to approximate the coefficient matrix $A_*(\zeta;\lambda,\epsilon)$ of~\eqref{evprob2} for $\zeta$ to the left of the inhomogeneity at $\zeta = 0$. On the one hand, by~\eqref{preciseomegabound} in Theorem~\ref{t:ex02}, estimate~\eqref{estex1} in Proposition~\ref{prop:left}, identity~\eqref{limmum}, Proposition~\ref{prop:rightleft} and Lemma~\ref{lem:beta}, we establish the estimate
\begin{align}
 \left\|A_*(\zeta_\tf(\epsilon) + \zeta;\lambda,\epsilon) - A_l(\zeta_\delta + \zeta;\lambda)\right\| \leq C_\delta \sqrt{\|\epsilon\|} \left|\log\|\epsilon\|\right|, \qquad \zeta \in (-\infty,-\iota\log\|\epsilon\|], \label{leftbound7}
\end{align}
for $0 < \epsilon_1 \ll \delta \ll 1$ and $0 \leq |\epsilon_2| \ll \delta \ll 1$, where we denote
\begin{align*}
A_l(\zeta;\lambda) := \begin{pmatrix} 0 & 0 & 1 & 0 \\ 0 & 0 & 0 & 1\\ \lambda (1 - \ri \alpha) + 2R_*(\zeta) & R_*(\zeta) & 0 & 0 \\ R_*(\zeta) & \lambda(1 + \ri\alpha) + 2R_*(\zeta) & 0 & 0 \end{pmatrix}.
\end{align*}
On the other hand, by estimate~\eqref{Rsbound} in Proposition~\ref{prop:left}, and estimate~\eqref{estex2} in Proposition~\ref{prop:right}, both $R_*(\zeta)$ and $R_\tf(\zeta;\epsilon)$ converge to $0$ at an $\epsilon$- and $\delta$-independent exponential rate $\eta > 0$. Hence, combining this with~\eqref{preciseomegabound} in Theorem~\ref{t:ex02} and identity~\eqref{limmum}, we find
\begin{align}
 \left\|A_*(\zeta_\tf(\epsilon) + \zeta;\lambda,\epsilon) - A_l(\zeta_\delta + \zeta;\lambda)\right\| \leq C_\delta \|\epsilon\|^{\min\{1,\iota \eta\}}, \qquad \zeta \in [-\iota\log\|\epsilon\|,0]. \label{leftbound5}
\end{align}

By estimate~\eqref{Rsbound} in Proposition~\ref{prop:left}, the coefficient matrix $A_l(\zeta;\lambda)$ converges exponentially to
\begin{align*}
A_l^{-\infty}(\lambda) = \begin{pmatrix} 0 & 0 & 1 & 0 \\ 0 & 0 & 0 & 1\\ \lambda (1 - \ri \alpha) + 2 & 1 & 0 & 0 \\ 1 & \lambda(1 + \ri\alpha) + 2 & 0 & 0 \end{pmatrix},
\end{align*}
as $\zeta \to -\infty$, which has the four eigenvalues
\begin{align*}-\sqrt{2 + \lambda \pm \sqrt{1 - \alpha^2 \lambda^2}}, \qquad \sqrt{2 + \lambda \pm \sqrt{1 - \alpha^2 \lambda^2}}.\end{align*}
Moreover, as $\zeta \to \infty$ the coefficient matrix $A_l(\zeta;\lambda)$ converges exponentially to
\begin{align*}
A_l^{\infty}(\lambda) = \begin{pmatrix} 0 & 0 & 1 & 0 \\ 0 & 0 & 0 & 1\\ \lambda (1 - \ri \alpha) & 0 & 0 & 0\\ 0 & \lambda(1 + \ri\alpha) & 0 & 0 \end{pmatrix},
\end{align*}
which has the four eigenvalues
\begin{align*}-\sqrt{\lambda (1 \pm \ri \alpha)}, \qquad \sqrt{\lambda (1 \pm \ri \alpha)}.\end{align*}
In the regime $0 < \Theta_1 \ll \delta, \theta_2 \ll 1$, the matrices $A_l^{\pm \infty}(\lambda)$ are hyperbolic for each $\lambda \in R_2(\theta_2,\Theta_1,\Theta_2)$ with a $\lambda$-uniform spectral gap (which might depend on $\delta$). Hence, by~\cite[Theorem 1]{SAN}, system
\begin{align}
\hat{\phi}_\zeta = A_l(\zeta;\lambda)\hat{\phi}, \qquad \hat{\phi} \in \C^4, \label{evprob2red4}
\end{align}
has for every $\lambda$ in the compact set $R_2(\theta_2,\Theta_1,\Theta_2)$ exponential dichotomies on $(-\infty,0]$ and $[0,\infty)$ with $\lambda$-independent constants (which might depend on $\delta$). An Evans function $\E_l \colon \Omega_2 \to \C$ associated with~\eqref{evprob2red4} is therefore well-defined and analytic on a small enough open and bounded neighborhood $\Omega_2$ of $R_2(\theta_2,\Theta_1,\Theta_2)$, cf.~\cite[Theorem 1]{SAN}.

The roots of the Evans function $\E_l$ correspond to those $\lambda \in \Omega_2$ at which~\eqref{evprob2red4} admits a nontrivial exponentially localized solution. Hence, by~\cite[Proposition 2.1]{PAL}, system~\eqref{evprob2red4} has an exponential dichotomy on $\R$ if and only if $\lambda \in R_2(\theta_2,\Theta_1,\Theta_2)$ is not a root of $\E_l$. We prove, using an $L^2$-energy estimate,  that~\eqref{evprob2red4} admits no nontrivial $L^2$-localized solution and therefore has an exponential dichotomy on $\R$ for each $\lambda \in R_2$. Let $\lambda \in R_2(\theta_2,\Theta_1,\Theta_2)$ and $\hat{\phi}(\zeta) = (\hat{w},\hat{y},\hat{v},\hat{u})(\zeta)$ be a nontrivial solution to~\eqref{evprob2red4} in $L^2(\R,\C^4)$. Then, the first and second component satisfy the coupled Sturm-Liouville problem
\begin{align*}
\begin{split}
\partial_\zeta^2 \hat{w} &= \lambda (1 - \ri \alpha)\hat{w} + R_* \left(2\hat{w} + \hat{y}\right),\\
\partial_\zeta^2 \hat{y} &= \lambda (1 + \ri \alpha)\hat{y} + R_* \left(\hat{w} + 2\hat{y}\right).
\end{split}
\end{align*}
Taking the $L^2$-inner product of the above equations with $\hat{w}$ and $\hat{y}$, respectively, and integrating by parts yields
\begin{align*}
\begin{split}
\lambda (1 - \ri \alpha)\|\hat{w}\|_2^2 = -\|\hat{v}\|_2^2  - \int_{-\infty}^\infty R_*(\zeta) \left(2|\hat{w}(\zeta)|^2 + \hat{y}(\zeta)\overline{\hat{w}(\zeta)}\right)\de \zeta,\\
\lambda (1 + \ri \alpha)\|\hat{y}\|_2^2 = -\|\hat{u}\|_2^2  - \int_{-\infty}^\infty R_*(\zeta) \left(2|\hat{y}(\zeta)|^2 + \hat{w}(\zeta)\overline{\hat{y}(\zeta)}\right)\de \zeta.\\
\end{split}
\end{align*}
We add both equations and take imaginary parts to obtain
\begin{align}
\Im(\lambda)\left(\|\hat{w}\|_2^2 + \|\hat{y}\|_2^2\right) = \Re(\lambda)\alpha \left(\|\hat{w}\|_2^2 - \|\hat{y}\|_2^2\right). \label{idim}
\end{align}
Hence, as $\lambda \neq 0$ and $\hat{\phi} \in L^2(\R,\C^4)$ is a nontrivial solution to~\eqref{evprob2red4}, it must hold $\Re(\lambda) \neq 0$. So, adding both equations again, taking real parts now, we establish using Young's inequality and~\eqref{idim}
\begin{align*}
\frac{|\lambda|^2}{\Re(\lambda)}\left(\|\hat{w}\|_2^2 + \|\hat{y}\|_2^2\right) &= \Re(\lambda)\left(\|\hat{w}\|_2^2 + \|\hat{y}\|_2^2\right) + \Im(\lambda)\alpha \left(\|\hat{w}\|_2^2 - \|\hat{y}\|_2^2\right)\\ &= -\|\hat{v}\|_2^2  - \|\hat{u}\|_2^2 - \int_{-\infty}^\infty R_*(\zeta) \left(2|\hat{w}(\zeta)|^2 + 2\Re\left(\hat{y}(\zeta)\overline{\hat{w}(\zeta)}\right) + 2|\hat{y}(\zeta)|^2\right)\de \zeta \leq 0.
\end{align*}
Since $\hat{\phi} \in L^2(\R,\C^4)$ is a nontrivial solution to~\eqref{evprob2red4}, it follows $\Re(\lambda) < 0$. Hence, upon taking $\theta_2 > 0$ sufficiently small, we derive a contradiction, since the set of $\lambda \in R_2(\theta_2,\Theta_1,\Theta_2)$ at which~\eqref{evprob2red4} admits an $L^2$-localized solution corresponds to the isolated roots of the analytic Evans function $\E_l$.

Using~\cite[Theorem 1]{SAN}, we conclude that~\eqref{evprob2red4} has an exponential dichotomy on $\R$ for each $\lambda$ in the compact set $R_2(\theta_2,\Theta_1,\Theta_2)$ with $\lambda$-independent constants (which do depend on $\delta$) and associated projections $P_l(\zeta;\lambda)$. By~\eqref{Rsbound} in Proposition~\ref{prop:left}, there exist constants $C,\eta > 0$ such that
\begin{align*} \|A_l(\zeta;\lambda) - A_l^\infty(\lambda)\| \leq C\re^{-\eta \zeta}, \qquad \zeta \geq 0.\end{align*}
Hence, by~\cite[Lemma 3.4]{PAL}, the dichotomy projections satisfy
\begin{align} \|P_l(\zeta;\lambda) - Q_l(\lambda)\| \leq C_\delta \re^{-\eta \zeta}, \qquad \zeta \geq 0, \label{projboundleft}\end{align}
where $Q_l(\lambda)$ is the spectral projection onto the stable eigenspace of $A_l^\infty(\lambda)$.

Take $0 < \tau < \min\{\iota \eta, \frac{1}{2}\}$. By roughness of exponential dichotomies, cf.~\cite[Proposition 5.1]{COP} and estimates~\eqref{leftbound7} and~\eqref{leftbound5}, system~\eqref{evprob2} admits for each $\lambda \in R_2(\theta_2,\Theta_1,\Theta_2)$ an exponential dichotomy on $(-\infty,0]$ with $\lambda$- and $\epsilon$-independent constants and projections $P_{*,l}(\zeta;\lambda,\epsilon)$ satisfying
\begin{align*}
 \left\|P_{*,l}(\zeta_\tf(\epsilon) + \zeta;\lambda,\epsilon) - P_l(\zeta_\delta + \zeta;\lambda)\right\| \leq C_\delta \|\epsilon\|^\tau, \qquad \zeta \in (-\infty,-\zeta_\tf(\epsilon)],
\end{align*}
provided $0 < \epsilon_1 \ll \Theta_1 \ll \delta, \theta_2 \ll 1$ and $0 \leq |\epsilon_2| \ll \Theta_1 \ll \delta,\theta_2 \ll 1$. Finally, combining the latter with~\eqref{projboundleft} yields~\eqref{leftbound6}.
\end{proof}

\subsection{Exponential dichotomy to the right of the inhomogeneity}

We show that~\eqref{evprob2} admits an exponential dichotomy on $[0,\infty)$.

\begin{proposition} \label{prop:expdirightR2}
There exists a constant $C > 1$ such that, provided $0 < \epsilon_1 \ll \Theta_1 \ll \delta,\theta_2 \ll 1$ and $0 \leq |\epsilon_2| \ll \Theta_1 \ll \delta,\theta_2 \ll 1$, system~\eqref{evprob2} admits for each $\lambda \in R_2(\theta_2,\Theta_1,\Theta_2)$ an exponential dichotomy on $[0,\infty)$ with $\lambda$- and $\epsilon$-independent constants and projections $P_{*,r}(\zeta;\lambda,\epsilon)$ on $\C^4$ satisfying
\begin{align}
 \left\|P_{*,r}(\zeta;\lambda,\epsilon) - Q_r(\lambda)\right\| &\leq C\|\epsilon\|, \qquad \zeta \geq 0, \label{projboundright2}
\end{align}
where $Q_{r}(\lambda)$ is the spectral projection onto the stable eigenspace of the matrix
\begin{align*}A_r^\infty(\lambda) = \begin{pmatrix} 0 & 0 & 1 & 0 \\ 0 & 0 & 0 & 1\\ \left(\lambda + 2\right) (1 - \ri \alpha) & 0 & 0 & 0 \\ 0 &  \left(\lambda + 2\right)(1+\ri\alpha) & 0 & 0 \end{pmatrix}.\end{align*}
\end{proposition}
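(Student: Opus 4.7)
The plan is to construct the exponential dichotomy on $[0,\infty)$ by successively comparing the coefficient matrix $A_*(\zeta;\lambda,\epsilon)$ with two simpler constant matrices: the asymptotic matrix $A_{*,+}(\lambda,\epsilon)$ at $\zeta = +\infty$, and then its $\epsilon$-limit $A_r^\infty(\lambda)$. First, I would verify that $A_r^\infty(\lambda)$ is hyperbolic on the compact set $R_2(\theta_2,\Theta_1,\Theta_2)$ with $\lambda$-uniform spectral gap. Its eigenvalues are $\pm\sqrt{(\lambda+2)(1\pm\ri\alpha)}$, so hyperbolicity fails only if $(\lambda+2)(1\pm\ri\alpha)$ lands on $(-\infty,0]$. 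Since $|\alpha|<\tfrac{1}{2}\sqrt{2}$, the factor $(1\pm\ri\alpha)$ has argument strictly inside $(-\pi/4,\pi/4)$, and for $\lambda \in R_2$ with $\theta_2$ small enough relative to $\Theta_2$ one has $\Re(\lambda+2) \geq 2 - \Theta_2 \sin\theta_2 > 0$, so $\arg(\lambda+2) \in (-\pi/2,\pi/2)$. Hence $\arg[(\lambda+2)(1\pm\ri\alpha)]$ stays bounded away from $\pi$ and the four eigenvalues remain uniformly bounded away from the imaginary axis. Denote by $Q_r(\lambda)$ the associated stable spectral projection.

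Second, using the expressions~\eqref{defmm} together with Theorem~\ref{t:ex02}'s bounds $\hat{k}_\tf(\epsilon) = O(\epsilon_2)$ and (via $\partial_{\epsilon_1}^j \hat\omega_\tf(0,\epsilon_2)=0$ for $j=0,1,2$) $\hat\omega_\tf(\epsilon) = O(\epsilon_1^3)$, I obtain $m(\epsilon)^2 - 1 = O(\|\epsilon\|^2)$ and $\mu(\epsilon) = O(\|\epsilon\|^2)$. Substituting into~\eqref{defApm}--\eqref{defApm2} yields $\|A_{*,+}(\lambda,\epsilon) - A_r^\infty(\lambda)\| \leq C\|\epsilon\|$ on $R_2$. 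By standard analytic perturbation of spectral projections, the uniform gap persists for small $\|\epsilon\|$, so $A_{*,+}(\lambda,\epsilon)$ is hyperbolic with stable projection $Q_{*,+}(\lambda,\epsilon)$ satisfying $\|Q_{*,+}(\lambda,\epsilon) - Q_r(\lambda)\| \leq C\|\epsilon\|$. Moreover, since $\chi \equiv -1$ on $[0,\infty)$, the only entries of $A_*(\zeta;\lambda,\epsilon) - A_{*,+}(\lambda,\epsilon)$ that do not vanish are proportional to $R_\tf(\zeta;\epsilon)$ (using $|\beta/\overline{\beta}|=1$), so Proposition~\ref{prop:right} gives
\begin{align*}
\|A_*(\zeta;\lambda,\epsilon) - A_{*,+}(\lambda,\epsilon)\| \leq C\delta\,\re^{-\eta(\zeta-\zeta_\tf(\epsilon))}, \qquad \zeta \geq 0.
\end{align*}

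Third, the constant-coefficient system $\hat\phi_\zeta = A_{*,+}(\lambda,\epsilon)\hat\phi$ admits an exponential dichotomy on $\R$ with projection $Q_{*,+}(\lambda,\epsilon)$, and roughness of exponential dichotomies (e.g., Coppel's Proposition 5.1 applied to the exponentially decaying perturbation above) produces an exponential dichotomy for~\eqref{evprob2} on $[0,\infty)$ with $\lambda$- and $\epsilon$-independent constants and projections $P_{*,r}(\zeta;\lambda,\epsilon)$ obeying
\begin{align*}
\|P_{*,r}(\zeta;\lambda,\epsilon) - Q_{*,+}(\lambda,\epsilon)\| \leq C\delta\,\re^{-\eta(\zeta-\zeta_\tf(\epsilon))}, \qquad \zeta \geq 0.
\end{align*}
Since $|\zeta_\tf(\epsilon)| \sim \pi/\epsilon_1 \to \infty$ by~\eqref{triggerbound}, the factor $\re^{\eta\zeta_\tf(\epsilon)}$ is smaller than any positive power of $\|\epsilon\|$, so applying the triangle inequality together with the second-step bound on $\|Q_{*,+}-Q_r\|$ yields the asserted estimate~\eqref{projboundright2}.

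The main obstacle is verifying the uniform spectral gap for $A_r^\infty(\lambda)$ across the entire compact set $R_2$, which requires tracking the argument of $(\lambda+2)(1\pm\ri\alpha)$ carefully along the boundary of $R_2$; every remaining step is a routine application of exponential dichotomy theory. The argument is markedly shorter than that of Proposition~\ref{prop:expdileftR2}, because the asymptotic end state at $+\infty$ is trivial, making the limiting system already constant-coefficient and obviating the $L^2$-energy estimate required to the left.
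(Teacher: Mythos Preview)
Your proposal is correct and follows essentially the same route as the paper. The only difference is that the paper compares $A_*(\zeta;\lambda,\epsilon)$ \emph{directly} to $A_r^\infty(\lambda)$ in one step (obtaining $\|A_*(\zeta;\lambda,\epsilon)-A_r^\infty(\lambda)\|\leq C\|\epsilon\|$ for $\zeta\geq 0$) and then applies roughness once, whereas you factor through the intermediate constant matrix $A_{*,+}(\lambda,\epsilon)$ and combine two estimates via the triangle inequality; the end result and the ingredients used (Proposition~\ref{prop:right}, the expansions~\eqref{preciseomegabound} and~\eqref{limmum}, and Coppel's roughness theorem) are the same.
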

\begin{proof}
Using~\eqref{preciseomegabound} in Theorem~\ref{t:ex02},~\eqref{estex2} in Proposition~\ref{prop:right} and identity~\eqref{limmum} we establish the estimate
\begin{align}
 \left\|A_*(\zeta;\lambda,\epsilon) - A_r^\infty(\lambda)\right\| \leq C\|\epsilon\|, \qquad \zeta \geq \zeta_\tf(\epsilon), \label{rightbound2}
\end{align}
The eigenvalues of $A_r^\infty(\lambda)$ are given by
\begin{align*}-\sqrt{\left(\lambda + 2\right) (1 \pm \ri \alpha)}, \qquad \sqrt{\left(\lambda + 2\right) (1 \pm \ri \alpha)}.\end{align*}
Thus, in the regime $0 < \Theta_1 \ll \delta, \theta_2 \ll 1$, the matrix $A_r^\infty(\lambda)$ is hyperbolic for each $\lambda \in R_2(\theta_2,\Theta_1,\Theta_2)$ with a $\lambda$- and $\delta$-uniform spectral gap. Hence, system
\begin{align*}
\hat{\phi}_\zeta = A_r^\infty(\lambda)\hat{\phi}, \qquad \hat{\phi} \in \C^4,
\end{align*}
has an exponential dichotomy on $\R$ for each $\lambda$ in the compact set $R_2(\theta_2,\Theta_1,\Theta_2)$ with $\lambda$- and $\delta$-independent constants. The associated projection $Q_r(\lambda)$ coincides with the spectral projection onto the stable eigenspace of $A_r^\infty(\lambda)$. By roughness of exponential dichotomies~\cite[Proposition 5.1]{COP} and estimate~\eqref{rightbound2}, the eigenvalue problem~\eqref{evprob2} admits for each $\lambda \in R_2(\theta_2,\Theta_1,\Theta_2)$ exponential dichotomies on $[0,\infty)$ with $\lambda$-, $\delta$- and $\epsilon$-independent constants and projections $P_{*,r}(\zeta;\lambda,\epsilon)$ satisfying~\eqref{projboundright2}, provided $0 < \epsilon_1 \ll \Theta_1 \ll \delta,\theta_2 \ll 1$ and $0 \leq |\epsilon_2| \ll \Theta_1 \ll \delta,\theta_2 \ll 1$.
\end{proof}

\subsection{Conclusion} \label{sec:R2concl}

In this subsection we complete our spectral study of the operator $\hat{\El}_\tf$, posed on $\smash{L^2_{\hat{\kappa}_-,\hat{\kappa}_+}(\R,\C^2)}$, in the region $R_2$. We prove that the associated Riccati-Evans function $\E_\epsilon$ is analytic on $R_2$ and does not vanish. Recall from~\S\ref{sec:invric} that $\E_\epsilon$ can be defined in terms of the subspaces $W_\pm(\zeta;\lambda,\epsilon)$ given by~\eqref{subspaces}. The following lemma shows that these subspaces must coincide with the relevant subspaces of the exponential dichotomies for the transformed eigenvalue problem~\eqref{evprob2}, which were established in Propositions~\ref{prop:expdileftR2} and~\ref{prop:expdirightR2}.

\begin{lemma} \label{lem:subspace}
Let $\Theta_2 > 0$ be given. Provided $0 < \epsilon_1 \ll \Theta_1 \ll \delta, \theta_2 \ll 1$ and $0 \leq |\epsilon_2| \ll \Theta_1 \ll \delta,\theta_2 \ll 1$, it holds
\begin{align} W_-(\zeta;\lambda,\epsilon) &= \ker(P_{*,l}(\zeta;\lambda,\epsilon)), \qquad \zeta \in (-\infty,0], \, \lambda \in R_2(\theta_2,\Theta_1,\Theta_2), \label{expdieq1}\end{align}
where $P_{*,l}(\zeta;\lambda,\epsilon)$ is the projection associated with exponential dichotomy of~\eqref{evprob2} on $(-\infty,0]$ established in Proposition~\ref{prop:expdileftR2}. In addition, we have
\begin{align} W_+(\zeta;\lambda,\epsilon) &= P_{*,r}(\zeta;\lambda,\epsilon)[\C^4], \qquad \zeta \in [0,\infty), \, \lambda \in R_2(\theta_2,\Theta_1,\Theta_2), \label{expdieq2}\end{align}
where $P_{*,r}(\zeta;\lambda,\epsilon)$ is the projection associated with exponential dichotomy of~\eqref{evprob2} on $[0,\infty)$ established in Proposition~\ref{prop:expdirightR2}.
\end{lemma}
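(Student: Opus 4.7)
The plan is to verify both identities by a dimension-counting argument combined with a one-sided inclusion established pointwise at $\zeta = 0$. All four subspaces are two-dimensional and invariant under the flow of~\eqref{evprob2}: for $W_\pm(\zeta;\lambda,\epsilon)$ this follows from the definition~\eqref{subspaces} together with the fact that $B(\zeta;\epsilon)$ maps solutions of~\eqref{evprob} to solutions of~\eqref{evprob2} and that $P_\pm(\zeta;\lambda,\epsilon)$ have rank $2$; for $\ker(P_{*,l}(\zeta))$ and $P_{*,r}(\zeta)[\C^4]$ it follows from Propositions~\ref{prop:expdileftR2} and~\ref{prop:expdirightR2}, together with the fact that $A_l^\infty(\lambda)$ and $A_r^\infty(\lambda)$ each possess two stable and two unstable eigenvalues for $\lambda \in R_2(\theta_2,\Theta_1,\Theta_2)$. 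Hence it suffices to prove $W_-(0) \subseteq \ker(P_{*,l}(0))$ and $W_+(0) \subseteq P_{*,r}(0)[\C^4]$; the full identities~\eqref{expdieq1} and~\eqref{expdieq2} then follow by flow-invariance.

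For the first inclusion, take $\hat\phi_0 \in W_-(0;\lambda,\epsilon)$, so by construction $\hat\phi_0 = B(0;\epsilon)\phi(0)$ for some solution $\phi$ of~\eqref{evprob} with $\phi(0) \in \ker(P_-(0;\lambda,\epsilon))$. The weighted exponential dichotomy yields $\|\phi(\zeta)\| \leq C \re^{(\hat\kappa_- + \eta)\zeta}$ for some $\eta > 0$ and all $\zeta \leq 0$. The proof of Lemma~\ref{lem:beta} extracts from the exponential convergence of $\Re \hat{z}_\tf$ at $-\infty$ (Theorem~\ref{t:ex02}) the sharper bound $|\beta(\zeta;\epsilon)| \leq C \re^{\rho\zeta}$ for $\zeta \leq 0$, with $\rho$ slightly less than $1 - \tfrac{1}{2}\epsilon_1^2$; the hypothesis $\kappa < 1/(4\sqrt{1+\alpha^2})$ guarantees $\rho > -\hat\kappa_-$. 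Combined, these give $\|\hat\phi(\zeta)\| \leq C'\re^{(\rho + \hat\kappa_- + \eta)\zeta}$ with $\rho + \hat\kappa_- + \eta > 0$, so $\hat\phi(\zeta) = B(\zeta;\epsilon)\phi(\zeta)$ decays exponentially as $\zeta \to -\infty$. Estimate~\eqref{leftbound6} and the explicit eigenvalues $\pm\sqrt{\lambda(1\pm\ri\alpha)}$ of $A_l^\infty(\lambda)$ identify $\ker(P_{*,l}(0;\lambda,\epsilon))$ as the subspace of $\C^4$ whose forward-time flow under~\eqref{evprob2} decays backward, so $\hat\phi_0 \in \ker(P_{*,l}(0))$.

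The argument for $W_+(0) \subseteq P_{*,r}(0)[\C^4]$ is symmetric. Starting from $\phi(0) \in P_+(0;\lambda,\epsilon)[\C^4]$, one has $\|\phi(\zeta)\| \leq C\re^{(\hat\kappa_+ - \eta)\zeta}$ as $\zeta \to +\infty$, and the bound $|\beta(\zeta;\epsilon)| \leq C\re^{\Re(\hat z_+(\epsilon))\zeta}$ on $[0,\infty)$ combined with the inequality $\Re(\hat z_+(\epsilon)) + \hat\kappa_+ > 0$ (also from the proof of Lemma~\ref{lem:beta}) shows that $\hat\phi(\zeta) = B(\zeta;\epsilon)\phi(\zeta)$ decays exponentially as $\zeta \to +\infty$. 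Estimate~\eqref{projboundright2} and the eigenvalues of $A_r^\infty(\lambda)$ then identify $P_{*,r}(0;\lambda,\epsilon)[\C^4]$ as the subspace of initial conditions whose flow under~\eqref{evprob2} decays forward, placing $\hat\phi_0$ there.

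The principal technical point throughout is converting the little-$o$ asymptotics of $\beta$ stated in Lemma~\ref{lem:beta} into genuine exponential rates and verifying that the weight parameter $\kappa$ is small enough that all rate comparisons go in the right direction; this is precisely where the calibration $\kappa \in (0, 1/(4\sqrt{1+\alpha^2}))$ enters. No new analytic input beyond the already cited results is needed.
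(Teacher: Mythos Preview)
Your argument for~\eqref{expdieq1} is essentially the same as the paper's: take $\phi$ with $\phi(0)\in\ker P_-(0)$, use the weighted dichotomy to bound $\|\phi(\zeta)\|$, multiply by the explicit exponential rate of $\beta$ on $(-\infty,\zeta_\tf(\epsilon)]$, and conclude that $B\phi$ decays backward and hence lies in $\ker P_{*,l}$. The dimension count finishes it. This is fine.

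The argument for~\eqref{expdieq2}, however, has a genuine gap. From $\phi(0)\in P_+(0)[\C^4]$ you only obtain $\|\phi(\zeta)\|\le C\re^{(\hat\kappa_+-\eta)\zeta}$ for \emph{some} $\eta>0$ coming from the abstract dichotomy in~\S\ref{sec:ricdef}; that $\eta$ carries no quantitative information. Multiplying by $|\beta(\zeta)|\le C\re^{\Re(\hat z_+(\epsilon))\zeta}$ gives $\|\hat\phi(\zeta)\|\le C\re^{(\Re\hat z_+ +\hat\kappa_+-\eta)\zeta}$, and the inequality you invoke, $\Re\hat z_+ +\hat\kappa_+>0$, makes the exponent \emph{larger} than $-\eta$, not smaller; it does not yield decay. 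To push your direction through you would need the sharper fact that solutions in $P_+[\C^4]$ actually decay at rate $\Re\,\nu_{2,+}(\lambda,\epsilon)$, and then check $\Re\hat z_+ + \Re\,\nu_{2,+}<0$ uniformly on $R_2$, which is extra work not contained in the cited results.

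The paper sidesteps this entirely by proving the \emph{reverse} inclusion $P_{*,r}(0)[\C^4]\subseteq W_+(0)$: start with $\hat\phi$ a solution of~\eqref{evprob2} decaying as $\zeta\to\infty$ (this is exactly what membership in $P_{*,r}[\C^4]$ means, since Proposition~\ref{prop:expdirightR2} establishes an \emph{unweighted} dichotomy), set $\phi=B^{-1}\hat\phi$, and use~\eqref{betalim+} together with boundedness of $\hat z_\tf$ to conclude $\re^{-\hat\kappa_+\zeta}\phi(\zeta)\to 0$, hence $\phi(0)\in P_+(0)[\C^4]$. This requires nothing beyond the qualitative limit already recorded in Lemma~\ref{lem:beta} and avoids any comparison of exponential rates.
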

\begin{proof}
Let $\phi(\zeta)$ be a solution to~\eqref{evprob} with $\phi(\zeta) \in \ker(P_-(\zeta;\lambda,\epsilon))$ for $\zeta \leq 0$. Then, it holds $\phi(\zeta)\re^{-\hat \kappa_- \zeta} \to 0$ as $\zeta \to -\infty$. Hence, using~\eqref{betalim-} in Lemma~\ref{lem:beta} and using the fact that $\hat{z}_\tf(\zeta;\epsilon)$ is bounded on $(-\infty,\zeta_\tf(\epsilon)]$ by Proposition~\ref{prop:left}, the solution $B(\zeta;\epsilon){\phi}(\zeta)$ to~\eqref{evprob2} converges to $0$ as $\zeta \to -\infty$. Hence, we conclude that solutions to~\eqref{evprob2} in the subspace $W_-(\zeta;\lambda,\epsilon) = B(\zeta;\epsilon)\ker(P_-(\zeta;\lambda,\epsilon))$ converge to $0$ as $\zeta \to -\infty$, which yields~\eqref{expdieq1} by a simple dimension counting argument, using Proposition~\ref{prop:expdileftR2}.

Now let $\hat{\phi}(\zeta)$ be a solution to~\eqref{evprob2} that converges to $0$ as $\zeta \to \infty$. Then, using~\eqref{betalim+} in Lemma~\ref{lem:beta} and using the fact that $\hat{z}_\tf(\zeta;\epsilon)$ is bounded on $[0,\infty)$ by Proposition~\ref{prop:right}, it follows $B(\zeta,\epsilon)^{-1} \hat{\phi}(\zeta) \re^{-\hat \kappa_+ \zeta} \to 0$ as $\zeta \to \infty$. Hence, we conclude $B(\zeta,\epsilon)^{-1} \hat{\phi}(\zeta)$ is a solution to~\eqref{evprob}, which lies in $P_+(\zeta;\lambda,\epsilon)[\C^{4}]$ for $\zeta \geq 0$. So,~\eqref{expdieq2}
follows again by counting dimensions, using Proposition~\ref{prop:expdirightR2}.
\end{proof}

We are now in the position to establish that there is no point spectrum of the operator $\hat{\El}_\tf$ in $R_2$.

\begin{theorem} \label{conclR2}
Let $\Theta_2 > 0$ be given. Provided $0 < \epsilon_1 \ll \Theta_1 \ll \delta, \theta_2 \ll 1$ and $0 \leq |\epsilon_2| \ll \Theta_1 \ll \delta,\theta_2 \ll 1$, the operator $\hat{\El}_\tf$, posed on $\smash{L^2_{\hat{\kappa}_-,\hat{\kappa}_+}(\R,\C^2)}$, has no point spectrum in $R_2(\theta_2,\Theta_1,\Theta_2)$.
\end{theorem}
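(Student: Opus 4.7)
By Theorem~\ref{concless} the region $R_2$ contains no essential spectrum of $\hat{\El}_\tf$, so it suffices to show that the Riccati-Evans function $\E_\epsilon$ of~\S\ref{sec:ricdef} is analytic and does not vanish on $R_2$. The plan is to combine the exponential dichotomies provided by Propositions~\ref{prop:expdileftR2} and~\ref{prop:expdirightR2} with the subspace identification of Lemma~\ref{lem:subspace}: the two propositions determine the subspaces $W_\pm(0;\lambda,\epsilon)$ entering $\E_\epsilon$ up to an $\mathcal O(\|\epsilon\|^\tau)$ error, measured against the stable/unstable subspaces of the limiting matrices $A_l^\infty(\lambda)$ and $A_r^\infty(\lambda)$. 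Computing the ``frozen" Riccati-Evans function $\E_0$ built from these limiting subspaces explicitly and showing that it is bounded away from zero uniformly on $R_2$ will then, by continuity of the chart $\mathfrak c$ on $\mathrm{Gr}(2,\C^4)$, force $\E_\epsilon\neq 0$ on $R_2$ for $\|\epsilon\|$ small.

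The limit can be written down in closed form. After the permutation $(e_1,e_2,e_3,e_4)\mapsto(e_1,e_3,e_2,e_4)$, the matrix $A_l^\infty(\lambda)$ in Proposition~\ref{prop:expdileftR2} decouples into two $2\times 2$ Hamiltonian blocks with eigenvalues $\pm\sqrt{\lambda(1\mp\ri\alpha)}$, so its eigenvectors take the form $(e_j,\nu_j e_j)^\top$ and the unstable eigenspace $\ker Q_l(\lambda)$ sits inside the domain of the chart $\mathfrak c$ with image
\begin{align*}
\hat T_-^{(0)}(\lambda)=\mathrm{diag}\Bigl(\sqrt{\lambda(1-\ri\alpha)},\,\sqrt{\lambda(1+\ri\alpha)}\Bigr).
\end{align*}
An identical computation, with $\lambda$ replaced by $\lambda+2$ and the sign of the square root reversed to select the stable roots, gives $\hat T_+^{(0)}(\lambda)=-\mathrm{diag}\bigl(\sqrt{(\lambda+2)(1-\ri\alpha)},\sqrt{(\lambda+2)(1+\ri\alpha)}\bigr)$, and hence
\begin{align*}
\E_0(\lambda):=\det\bigl(\hat T_+^{(0)}(\lambda)-\hat T_-^{(0)}(\lambda)\bigr)=\prod_{\pm}\Bigl(\sqrt{(\lambda+2)(1\pm\ri\alpha)}+\sqrt{\lambda(1\pm\ri\alpha)}\Bigr).
\end{align*}
Each factor is nonzero on $R_2$: squaring $\sqrt{(\lambda+2)(1\pm\ri\alpha)}=-\sqrt{\lambda(1\pm\ri\alpha)}$ would force $2=0$, and the angular constraint $|\arg\lambda|\leq\tfrac12\pi+\theta_2$ together with $|\lambda|\geq\Theta_1$ keeps $\lambda$ away from $0$ and from $-2$ so that the principal square roots are continuous. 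Compactness of $R_2$ then yields a uniform lower bound $|\E_0(\lambda)|\geq c_0(\delta,\theta_2,\Theta_1,\Theta_2)>0$.

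To transfer this bound to $\E_\epsilon$, apply Lemma~\ref{lem:subspace} to write
\begin{align*}
\E_\epsilon(\lambda)=\det\bigl(\hat T_+(0;\lambda,\epsilon)-\hat T_-(0;\lambda,\epsilon)\bigr),
\end{align*}
where $\hat T_\pm(0;\lambda,\epsilon)$ represent $\ker P_{*,l}(0;\lambda,\epsilon)$ and $P_{*,r}(0;\lambda,\epsilon)[\C^4]$ under $\mathfrak c$. Estimates~\eqref{leftbound6} and~\eqref{projboundright2} show that these subspaces are $\mathcal O(\|\epsilon\|^\tau)$-close, uniformly in $\lambda\in R_2$, to $\ker Q_l(\lambda)$ and $Q_r(\lambda)[\C^4]$ in the Grassmannian. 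Because the limiting subspaces lie inside the domain of $\mathfrak c$ with uniformly bounded representatives, the same holds for the perturbed ones once $\|\epsilon\|$ is small enough, which in particular yields $R_2\cap\Es_\epsilon=\emptyset$ and the Lipschitz bound $|\E_\epsilon(\lambda)-\E_0(\lambda)|\leq C_\delta\|\epsilon\|^\tau$ on $R_2$. Combining this with $|\E_0|\geq c_0$ gives $\E_\epsilon\neq 0$ throughout $R_2$, and Proposition~\ref{prop:ric} delivers the claimed absence of point spectrum.

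The main technical point is ensuring that all constants in the two propositions are genuinely uniform in $\lambda$ across the whole of $R_2$; this is what forces $R_2$ to be a fixed compact set independent of $\epsilon$ and requires $\theta_2$ to be small enough that $A_l^\infty(\lambda)$ stays hyperbolic (the same smallness that underlies the $L^2$-energy estimate in the proof of Proposition~\ref{prop:expdileftR2}). Beyond this, the argument is a fairly standard constant-coefficient perturbation, in marked contrast to the delicate Riccati-flow analysis that will be required in the resonant region $R_1$.
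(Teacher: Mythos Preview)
Your proposal is correct and follows essentially the same approach as the paper's proof: both use Lemma~\ref{lem:subspace} to identify $W_\pm(0;\lambda,\epsilon)$ with the dichotomy subspaces of Propositions~\ref{prop:expdileftR2} and~\ref{prop:expdirightR2}, approximate these via~\eqref{leftbound6} and~\eqref{projboundright2} by the explicit eigenspaces of $A_l^\infty(\lambda)$ and $A_r^\infty(\lambda)$, compute the resulting diagonal $\hat T_\pm^{(0)}$ and the limiting Riccati-Evans function $\E_0(\lambda)=\prod_\pm\bigl(\sqrt{(\lambda+2)(1\pm\ri\alpha)}+\sqrt{\lambda(1\pm\ri\alpha)}\bigr)$, and conclude by continuity that $\E_\epsilon$ has neither poles nor zeros on $R_2$. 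The paper makes the basis construction and the ``no poles'' step slightly more explicit, but the argument is the same.
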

\begin{proof} In this proof we denote by $C_\delta > 1$ any constant which depends on $\delta$ only. We employ the notation and concepts introduced in~\S\ref{sec:ric2}

We start by proving that the Riccati-Evans function admits no poles in the region $R_2$. Consider the spectral projections $Q_l(\lambda)$ and $Q_r(\lambda)$ from Propositions~\ref{prop:expdileftR2} and~\ref{prop:expdirightR2}, respectively. First, observe that the kernel of $Q_l(\lambda)$ and the range of $Q_r(\lambda)$ have the bases
\begin{align*}
 \Phi_l(\lambda) := \begin{pmatrix}  1 & 0\\  0 & 1\\  \sqrt{\lambda(1-\ri \alpha)} & 0 \\ 0 & \sqrt{\lambda(1+\ri \alpha)} \end{pmatrix}, \qquad  \Phi_r(\lambda) := \begin{pmatrix}  1 & 0\\  0 & 1\\  -\sqrt{(\lambda+2)(1-\ri \alpha)} & 0 \\ 0 & -\sqrt{(\lambda+2)(1+\ri \alpha)} \end{pmatrix},
\end{align*}
respectively. Employing Propositions~\ref{prop:expdileftR2} and~\ref{prop:expdirightR2} and Lemma~\ref{lem:subspace}, we find that $\Psi_l(\lambda,\epsilon) := (I_4 - P_{*,l}(0;\lambda,\epsilon))\Phi_l(\lambda)$ and $\Psi_r(\lambda,\epsilon) := P_{*,r}(0;\lambda,\epsilon)\Phi_r(\lambda)$ are bases of $W_-(0;\lambda,\epsilon)$ and $W_+(0;\lambda,\epsilon)$, respectively, satisfying
\begin{align} \|\Psi_l(\lambda,\epsilon) - \Phi_l(\lambda)\|, \|\Psi_r(\lambda,\epsilon) - \Phi_r(\lambda)\| \leq C_\delta\|\epsilon\|^\tau, \qquad \lambda \in R_2(\theta_2,\Theta_1,\Theta_2).\label{basesest}\end{align}
We denote by $\hat{X}_\pm(\lambda,\epsilon)$ the upper $2\times2$-block of $\Psi_l(\lambda,\epsilon)$ and $\Psi_r(\lambda,\epsilon)$, respectively. By~\eqref{basesest} it follows
\begin{align*} \left|\det(\hat{X}_\pm(\lambda,\epsilon)) - 1\right| \leq C_\delta \|\epsilon\|^\tau, \qquad \lambda \in R_2(\theta_2,\Theta_1,\Theta_2),\end{align*}
yielding $\det(\hat{X}_\pm(\lambda,\epsilon)) \neq 0$. Hence, by~\eqref{Eveq}, $\E_\epsilon$ admits no poles in the region $R_2$.

Employing~\eqref{basesest}, we approximate
\begin{align*}
\begin{split}
\left\|\hat{T}_-(0;\lambda,\epsilon) - \begin{pmatrix} \sqrt{\lambda(1-\ri \alpha)} & 0 \\ 0 & \sqrt{\lambda(1+\ri \alpha)} \end{pmatrix}\right\| &\leq C_\delta \|\epsilon\|^\tau,\\ \left\|\hat{T}_+(0;\lambda,\epsilon) + \begin{pmatrix} \sqrt{(\lambda+2)(1-\ri \alpha)} & 0 \\ 0 & \sqrt{(\lambda+2)(1+\ri \alpha)} \end{pmatrix}\right\| &\leq C_\delta\|\epsilon\|^\tau,
\end{split}
\end{align*}
and, by~\eqref{Eveq}, we establish
\begin{align*}
\left\|\E_{\epsilon}(\lambda) - \left(\sqrt{\lambda(1-\ri \alpha)} + \sqrt{(\lambda+2)(1-\ri \alpha)}\right)\left(\sqrt{\lambda(1+\ri \alpha)} + \sqrt{(\lambda+2)(1+\ri \alpha)}\right)\right\| \leq C_\delta\|\epsilon\|^\tau,
\end{align*}
for $\lambda \in R_2(\theta_2,\Theta_1,\Theta_2)$. Therefore, the Riccati-Evans function $\E_{\epsilon}$, associated with the operator $\hat{\El}_\tf$, does not vanish on $R_2$. The result now follows directly from Proposition~\ref{prop:ric}.
\end{proof}

\section{Analysis in the region \texorpdfstring{$R_1$}{R1}} \label{sec:pointR1} 

\subsection{Approach} \label{sec:approachpoint}

In this section we locate the point spectrum in the region $R_1$ using the Riccati-Evans function $\E_\epsilon$. Recall from~\S\ref{sec:invric} that $\E_\epsilon$ is invariant under the linear coordinate transform~\eqref{lincoordtransform}, and thus can be defined in terms of the subspaces $W_\pm(\zeta;\lambda,\epsilon)$ of solutions to the transformed eigenvalue problem~\eqref{evprob2}, which were given by~\eqref{subspaces}.

In the region $R_1$, system~\eqref{evprob2} has exponential dichotomies to the left of the front interface at $\zeta = \zeta_\tf(\epsilon)$ and to the right of the inhomogeneity at $\zeta = 0$. However, along the intermediate plateau state of the front, i.e.~for $\zeta \in [\zeta_\tf(\epsilon),0]$, the eigenvalue problem loses hyperbolicity and the control over the relevant subspaces $W_\pm(\zeta;\lambda,\epsilon)$ through exponential dichotomies is lost. Indeed, all points $\lambda \in R_1$ lie close to the absolute spectrum of the plateau state and, thus, spatial eigenvalues cannot be separated uniformly. We regain control by observing that the eigenvalue problem is asymptotically close to a diagonal constant-coefficient system along the plateau state. Consequently, the leading-order dynamics in the matrix Riccati equation admits an invariant subset of diagonal solutions on which the flow is given by two scalar Riccati equations, which can be explicitly solved using Riemann-surface unfolding and a M\"obius transformation. We find that the relevant solutions to the scalar Riccati equations have, when evaluated at $\zeta = 0$, a discrete family of poles in $\lambda$ that accumulate on the absolute spectrum of the plateau state as $\epsilon_1 \searrow 0$. We prove that all these poles are confined to the open left-half plane except two of them that reside $\ord(\epsilon_1^3)$-close to the origin. With the aid of the superposition principle, we perturb from the invariant subset of diagonal solutions and, for $\lambda \in R_1$ lying $\ord(\epsilon_1^3)$-away from the poles, we establish sufficient control over the relevant trajectories in the matrix Riccati equation along the plateau state. As a result, for $\lambda \in R_1$ lying $\ord(\epsilon_1^3)$-away from the poles, we can approximate the Riccati-Evans function and prove that it admits neither zeros nor poles.

All in all, we will obtain that the closed right-half plane, except for a disk $D_1(\epsilon)$ centered at the origin with a radius of order $\ord(\epsilon_1^3)$, contains no point spectrum. To locate the point spectrum in $D_1(\epsilon)$ we proceed as follows. First, we establish that $\lambda = 0$ is a simple root of the Riccati-Evans function $\E_\epsilon$ with $\E_\epsilon'(0) > 0$, where we exploit that explicit solutions to the eigenvalue problem at $\lambda = 0$ arise through gauge and (almost) translational invariance. Second, we compute that the winding number of the meromorphic Riccati-Evans function on a contour enclosing the disk $D_1(\epsilon)$ equals $0$, which proves that its number of poles equals its number of zeros (including multiplicity). Then, we write the Riccati-Evans function, as in~\eqref{relEvans}, as a quotient of two analytic functions. We find, again using a winding number computation, that the number of zeros of its denominator equals $2$. Hence, because we cannot exclude zero-pole cancellation, we find that $\E_\epsilon$ has either one or two roots in $D_1(\epsilon)$. In case it has two roots, we use a parity argument, using $\E_\epsilon'(0) > 0$ and $\E_\epsilon(\lambda)$ is real for $\lambda \in \R$, to show that the other root must be real and negative. So, using Proposition~\ref{prop:ric}, we conclude that the point spectrum of $\hat{\El}_\tf$, posed on $\smash{L^2_{\hat{\kappa}_-,\hat{\kappa}_+}(\R,\C^2)}$, is contained in the open left-half plane, except for an algebraically simple eigenvalue residing at the origin.

The set-up of this section is as follows. In~\S\ref{sec:lambda0} we study the eigenvalue problem~\eqref{evprob} at $\lambda = 0$ and obtain the relevant solutions that arise due to gauge and (almost) translational invariance. In~\S\ref{sec:expdi1} and~\S\ref{sec:expdi2} we obtain exponential dichotomies for~\eqref{evprob2} to the left of the front interface and to the right of the inhomogeneity. In~\S\ref{sec:track} we then track the relevant subspaces along the plateau state, and vice versa, within the associated matrix Riccati equations. In~\S\ref{sec:disk} we deduce that the critical point spectrum in the region $R_1$ must be contained in the disk $D_1(\epsilon)$. Then, we make preparations for the final parity argument in~\S\ref{sec:parity}: we approximate the derivative $\E_\epsilon'(0)$ in~\S\ref{sec:ricderiv2}, we perform the necessary winding number computations in~\S\ref{sec:wind} and prove that the Riccati-Evans function $\E_\epsilon$ is real for real $\lambda$ in~\S\ref{sec:restr}.

\subsection{The eigenvalue problem at \texorpdfstring{$\lambda = 0$}{lambda = 0}} \label{sec:lambda0}

Due to gauge symmetry of the cGL equation~\eqref{e:cgl0}, $\lambda = 0$ is an eigenvalue of the operator $\hat{\El}_\tf$, posed on $\smash{L^2_{\hat{\kappa}_-,\hat{\kappa}_+}(\R,\C^2)}$. In this subsection we compute the associated eigenfunction and find two additional solutions to the eigenvalue problem at $\lambda = 0$, which arise through translational invariance of the homogeneous cGL equation~\eqref{e:cgl0} with $\chi$ constant.

It follows directly from gauge invariance of the cGL equation~\eqref{e:cgl0} that the time derivative of its pattern-forming front solution $\re^{\ri \omega_\tf t} A_\tf(x - ct)$ satisfies the associated variational equation. Switching to a co-moving frame and polar coordinates~\eqref{polar}, this yields the element $(r_\tf,-r_\tf)$ of the kernel of $\El_\tf$, when posed on the space $L^2_\kappa(\R,\C^2)$. We emphasize that $(r_\tf,-r_\tf)$ is not localized and, therefore, not an element of the kernel of $\El_\tf$, when posed on the space $L^2(\R,\C^2)$.

We note that, due to the spatially inhomogeneous term $\chi$, solutions to the cGL equation~\eqref{e:cgl0} are not translational invariant. However, upon switching to the co-moving frame, we find that $\chi(\xi)$ is constant except for a jump at $\xi = 0$. Hence, the spatial derivative $\partial_\xi A_\tf(\xi)$ of the front solution $A_\tf(\zeta)$ to~\eqref{e:TW2} is a solution to the associated variational equation, which is non-smooth at $\xi = 0$ only, where its derivative makes a jump. Switching to polar coordinates~\eqref{polar} again, yields the formal element $(r_\tf' + \ri \phi_\tf' r_\tf, r_\tf' - \ri \phi_\tf' r_\tf)$ of the kernel of $\El_\tf$.

Subsequently, we apply the rescaling and reparameterization from~\S\ref{sec:rep} to the obtained (formal) elements of the kernel of $\El_\tf$. We find the element $(1,-1)$ and the formal element $$\left(\hat{z}_\tf(\zeta;\epsilon) - 1 + \tfrac{1}{2}\epsilon_1^2, \overline{\hat{z}_\tf(\zeta;\epsilon)} - 1 + \tfrac{1}{2}\epsilon_1^2\right),$$
in the kernel of $\hat{\El}_\tf$, when posed on the space $\smash{L^2_{\hat{\kappa}_-,\hat{\kappa}_+}(\R,\C^2)}$. Thus, the eigenvalue problem~\eqref{evprob} at $\lambda = 0$, which reads
\begin{align}
\phi_\zeta = A(\zeta;0,\epsilon) \phi, \label{variationaleq}
\end{align}
admits the nontrivial solution $\phi_0 \in L^2_{\hat{\kappa}_-,\hat{\kappa}_+}(\R,\C^4)$ given by
\begin{align} \phi_0(\zeta) = \left(1, -1, 0, 0\right)^\top. \label{eigenfunction0} \end{align}
In addition, we find $C^1$-solutions $\phi_\pm \colon \R \to \C^4$ to~\eqref{variationaleq} satisfying
\begin{align} \phi_\pm(\pm \zeta;\epsilon) = \begin{pmatrix} \hat{z}_\tf(\pm \zeta;\epsilon) - 1 + \frac{\epsilon_1^2}{2} \\  \overline{\hat{z}_\tf(\pm \zeta;\epsilon)} - 1 + \frac{\epsilon_1^2}{2}\\ \partial_\zeta \hat{z}_\tf(\pm \zeta;\epsilon)\\ \overline{\partial_\zeta \hat{z}_\tf(\pm \zeta;\epsilon)}\end{pmatrix},\qquad \zeta > 0. \label{eigenfunction1}\end{align}

Finally, after applying the linear coordinate transform $B(\zeta;\epsilon)$, given by~\eqref{lincoordtransform}, the two solutions $\phi_0(\zeta)$ and $\phi_-(\zeta;\epsilon)$ to~\eqref{variationaleq} yield two linearly independent solutions $\hat{\phi}_{0}(\zeta;\epsilon) := B(\zeta;\epsilon)\phi_0(\zeta)$ and $\hat{\phi}_-(\zeta;\epsilon) := B(\zeta;\epsilon)\phi_{-}(\zeta;\epsilon)$ satisfying
\begin{align}
\hat{\phi}_{0}(\zeta;\epsilon) = \begin{pmatrix} \beta  \\ -\overline{\beta}  \\ \beta \hat{z}_\tf \\ -\overline{\beta\hat{z}_\tf}\end{pmatrix}, \qquad \hat{\phi}_-(\zeta;\epsilon) =
\begin{pmatrix} \beta\left(\hat{z}_\tf - 1 + \frac{\epsilon_1^2}{2}\right) \\  \overline{\beta}\left(\overline{\hat{z}_\tf} - 1 + \frac{\epsilon_1^2}{2}\right)\\ \beta\left(\left(\frac{\epsilon_1^2}{2} - 1\right)\hat{z}_\tf + \mu + \left(1+\ri \dell\right) R_\tf\right) \\ \overline{\beta}\left(\left(\frac{\epsilon_1^2}{2} - 1\right)\overline{\hat{z}_\tf} + \overline{\mu} + \left(1-\ri \dell\right) R_\tf\right)\end{pmatrix}, \qquad \zeta < 0, \label{twosolstrans}
\end{align}
to the transformed eigenvalue problem~\eqref{evprob2} at $\lambda = 0$, where we suppressed the arguments on the right hand sides and we used that $\Psi_\tf(\zeta;\epsilon) = (\hat{z}_\tf,R_\tf)(\zeta;\epsilon)$ satisfies equation~\eqref{exprob}.

The next result now follows immediately from Propositions~\ref{prop:left},~\ref{prop:right} and Lemma~\ref{lem:beta}.

\begin{lemma} \label{lem:approxphipm}
There exists a constant $C > 1$ such that, provided $0 < \epsilon_1 \ll 1$ and $0 \leq |\epsilon_2| \ll 1$, the solutions $\phi_\pm(\zeta;\epsilon)$ to~\eqref{variationaleq} are bounded as $\zeta \to \pm \infty$ and it holds
\begin{align*}
\left\|\phi_\pm(0;\epsilon) + \begin{pmatrix} \sqrt{2(1-\ri\alpha)} + 1 \\ \sqrt{2(1+\ri\alpha)} + 1 \\ (1 \mp 1)(1-\ri\alpha) \\ (1 \mp 1)(1+\ri\alpha)\end{pmatrix} \right\| \leq C\|\epsilon\|. \end{align*}
Moreover, the solutions $\hat{\phi}_0(\zeta;\epsilon)$ and $\hat{\phi}_-(\zeta;\epsilon)$ to~\eqref{evprob2} at $\lambda = 0$ converge to $0$ as $\zeta \to -\infty$.
\end{lemma}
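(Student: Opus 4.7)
The plan is to verify the three conclusions of the lemma in turn, leveraging the explicit formulas~\eqref{eigenfunction1} and~\eqref{twosolstrans} together with the control on the front profile $\Psi_\tf$ provided by Theorem~\ref{t:ex02}, Propositions~\ref{prop:left} and~\ref{prop:right}, and Lemma~\ref{lem:beta}.

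For the boundedness of $\phi_\pm$, formula~\eqref{eigenfunction1} expresses the components of $\phi_+(\zeta;\epsilon)$ on $(0,\infty)$ in terms of $\hat{z}_\tf(\zeta;\epsilon) - 1 + \epsilon_1^2/2$ and $\partial_\zeta \hat{z}_\tf(\zeta;\epsilon)$. Proposition~\ref{prop:right} yields the exponential convergence of $\hat{z}_\tf(\zeta;\epsilon)$ to the fixed point $\hat{z}_+(\epsilon)$ of~\eqref{dyninvman}, and consequently the exponential decay of $\partial_\zeta \hat{z}_\tf$, giving boundedness as $\zeta \to \infty$. The analogous statement for $\phi_-(\zeta;\epsilon)$ as $\zeta \to -\infty$ follows from the $-\infty$ asymptotics in Theorem~\ref{t:ex02} and Proposition~\ref{prop:left}.

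The key computation is the estimate for $\phi_\pm(0;\epsilon)$. Continuity of $\Psi_\tf$ at the jump heterogeneity and Proposition~\ref{prop:right} give $\|\Psi_\tf(0;\epsilon) - (\hat{z}_+(\epsilon),0)\| \leq C\delta\,\re^{-\eta |\zeta_\tf(\epsilon)|}$, and since $|\zeta_\tf(\epsilon)| \sim \pi/\epsilon_1 \to \infty$ by~\eqref{triggerbound}, this quantity is exponentially small in $1/\epsilon_1$ and in particular $O(\|\epsilon\|)$. Combined with the smoothness of $\hat{z}_+(\epsilon), m(\epsilon), \mu(\epsilon)$ at $\epsilon = (0,0)$ and~\eqref{hatzapprox}, this yields $\hat{z}_\tf(0;\epsilon) = -\sqrt{2 - 2\ri\alpha} + O(\|\epsilon\|)$ and $R_\tf(0;\epsilon) = O(\|\epsilon\|)$, which directly settles the first two components of $\phi_\pm(0;\epsilon)$. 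For the derivative components I would plug into the $\hat z$-equation in~\eqref{exprob}. At $\zeta = 0^+$, where $\chi = -1$, the defining identity~\eqref{defhatz} for $\hat{z}_+$ reads $\hat{z}_+(\epsilon)^2 = \mu(\epsilon) + 2(1-\ri\alpha)/m(\epsilon)^2$, which forces a leading-order cancellation and yields $\partial_\zeta \hat{z}_\tf(0^+;\epsilon) = O(\|\epsilon\|)$. At $\zeta = 0^-$, where $\chi = +1$, the $2(1-\ri\alpha)/m^2$ contribution is absent, and the same identity now produces $\partial_\zeta \hat{z}_\tf(0^-;\epsilon) = -2(1-\ri\alpha) + O(\|\epsilon\|)$. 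Matching these with the claimed target vectors (and taking a complex conjugate for the fourth component) gives the bound.

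Finally, the convergence $\hat{\phi}_0(\zeta;\epsilon), \hat{\phi}_-(\zeta;\epsilon) \to 0$ as $\zeta \to -\infty$ follows from~\eqref{twosolstrans} together with Lemma~\ref{lem:beta}: since $\hat{\kappa}_- < 0$ and $\beta(\zeta;\epsilon)\re^{\hat{\kappa}_-\zeta} \to 0$ by~\eqref{betalim-}, one has $\beta(\zeta;\epsilon) \to 0$, while Theorem~\ref{t:ex02} and Proposition~\ref{prop:left} guarantee that the matrix factors multiplying $\beta$ and $\overline{\beta}$ in~\eqref{twosolstrans} remain bounded on $(-\infty,0]$. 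The main obstacle in the whole argument is the bookkeeping of the one-sided derivative $\partial_\zeta \hat{z}_\tf(0^\pm;\epsilon)$ across the discontinuity of $\chi$, and confirming that the $\chi = -1$ cancellation occurs to leading order via the algebraic identity for $\hat{z}_+(\epsilon)$; everything else is a direct application of the estimates collected in~\S\ref{sec:exis}.
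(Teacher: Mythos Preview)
Your proposal is correct and follows precisely the route the paper indicates: the paper's proof is the one-line remark that the result ``follows immediately from Propositions~\ref{prop:left},~\ref{prop:right} and Lemma~\ref{lem:beta}'', and your argument is a faithful unpacking of that sentence. The only part that requires any thought beyond direct citation---namely the one-sided computation of $\partial_\zeta\hat z_\tf(0^\pm;\epsilon)$ via~\eqref{exprob} and the algebraic identity $\hat z_+(\epsilon)^2=\mu(\epsilon)+2(1-\ri\alpha)/m(\epsilon)^2$---you carry out correctly, and this is exactly the ingredient the paper suppresses.
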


\subsection{Exponential dichotomies to the left of the front interface} \label{sec:expdi1}

We show that~\eqref{evprob2} admits an exponential dichotomy on $(-\infty,\zeta_\tf(\epsilon)]$.

\begin{proposition} \label{prop:expdileftR1}
Provided $0 < \epsilon_1 \ll \Theta_1 \ll \delta \ll 1$ and $0 \leq |\epsilon_2| \ll \Theta_1 \ll \delta \ll 1$, system~\eqref{evprob2} admits for each $\lambda \in R_1(\Theta_1)$ an exponential dichotomy on $(-\infty,\zeta_\tf(\epsilon)]$ with $\lambda$- and $\epsilon$-independent constants and projections $P_{*,l}(\zeta;\lambda,\epsilon)$ on $\C^4$ satisfying
\begin{align} \ker(P_{*,l}(\zeta;0,\epsilon)) = \mathrm{Span}\left\{\hat{\phi}_0(\zeta;\epsilon), \hat{\phi}_-(\zeta;\epsilon)\right\}, \qquad \zeta \in (-\infty,\zeta_\tf(\epsilon)],\label{leftapprox2}\end{align}
where $\hat{\phi}_0(\zeta;\epsilon)$ and $\hat{\phi}_-(\zeta;\epsilon)$ are defined in~\eqref{twosolstrans}. In addition, we have the estimate
\begin{align}
 \left\|P_{*,l}(\zeta_\tf(\epsilon);\lambda,\epsilon) - Q_{*,l}(0)\right\| \leq C_\delta\left(\sqrt{\|\epsilon\|} \left|\log\|\epsilon\|\right| + |\lambda|\right), \label{leftbound4}
\end{align}
where $Q_{*,l}(0)$ is a projection on $\C^4$ with
\begin{align} \ker(Q_{*,l}(0)) = \mathrm{Span}\left\{\begin{pmatrix} 1 \\ -1 \\ \hat{z}_*(\zeta_\delta) \\ -\hat{z}_*(\zeta_\delta)\end{pmatrix},\begin{pmatrix} \hat{z}_*(\zeta_\delta) - 1 \\ \hat{z}_*(\zeta_\delta) - 1 \\ \delta - \hat{z}_*(\zeta_\delta) \\ \delta - \hat{z}_*(\zeta_\delta)\end{pmatrix}\right\}, \label{leftapprox}\end{align}
and $C_\delta > 1$ is a constant depending only on $\delta$.
\end{proposition}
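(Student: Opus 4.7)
The plan is to follow the strategy used in the proof of Proposition~\ref{prop:expdileftR2}, but applied only on the left half-line $(-\infty,\zeta_\tf(\epsilon)]$ rather than all of $\R$. First, I translate by $\zeta_\tf(\epsilon)$ and compare $A_*(\zeta_\tf(\epsilon)+\zeta;\lambda,\epsilon)$ with the limiting coefficient matrix $A_l(\zeta_\delta+\zeta;\lambda)$ already introduced in the $R_2$ analysis. Using estimate~\eqref{estex1} of Proposition~\ref{prop:left} to control $\Psi_\tf-\Psi_*$, together with Lemma~\ref{lem:beta} to control $\beta/\overline{\beta}-1$, I will obtain
\[
\left\|A_*(\zeta_\tf(\epsilon)+\zeta;\lambda,\epsilon) - A_l(\zeta_\delta + \zeta;\lambda)\right\| \leq C_\delta \sqrt{\|\epsilon\|}\,|\log\|\epsilon\||, \qquad \zeta \leq 0,
\]
uniformly for $\lambda \in R_1(\Theta_1)$, where I note that the hypothesis of Lemma~\ref{lem:beta} is comfortably satisfied on all of $(-\infty,\zeta_\tf(\epsilon)]$ since $-\iota\log\|\epsilon\|>0$.

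Next, I establish an exponential dichotomy for the limit system $\hat\phi_\zeta=A_l(\zeta;\lambda)\hat\phi$ on $(-\infty,\zeta_\delta]$ with $\lambda$-uniform constants depending on $\delta$. The essential observation is that, by~\eqref{Rsbound}, $A_l(\zeta;\lambda)$ converges exponentially to $A_l^{-\infty}(\lambda)$ as $\zeta\to-\infty$, whose eigenvalues $\pm\sqrt{2+\lambda\pm\sqrt{1-\alpha^2\lambda^2}}$ remain hyperbolic with two stable and two unstable eigenvalues for all $\lambda\in R_1(\Theta_1)$ provided $\Theta_1$ is small. Hence~\cite[Proposition 4.2]{COP} yields an exponential dichotomy on $(-\infty,0]$, and the extension to the finite interval $[0,\zeta_\delta]$ is automatic. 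Applying roughness of exponential dichotomies~\cite[Proposition 5.1]{COP} together with the approximation displayed above, I transfer this dichotomy to~\eqref{evprob2} on $(-\infty,\zeta_\tf(\epsilon)]$ with associated projection $P_{*,l}(\zeta;\lambda,\epsilon)$, depending analytically on $\lambda$.

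To verify~\eqref{leftapprox2} at $\lambda=0$, I will invoke Lemma~\ref{lem:approxphipm}: both $\hat\phi_0(\zeta;\epsilon)$ and $\hat\phi_-(\zeta;\epsilon)$ decay to zero as $\zeta\to-\infty$, so they lie in $\ker(P_{*,l}(\zeta;0,\epsilon))$; their linear independence follows from the explicit structure~\eqref{twosolstrans}, so they span this $2$-dimensional subspace. To prove~\eqref{leftbound4}, I decompose
\[
P_{*,l}(\zeta_\tf(\epsilon);\lambda,\epsilon) - Q_{*,l}(0) = \bigl(P_{*,l}(\zeta_\tf(\epsilon);\lambda,\epsilon) - P_{*,l}(\zeta_\tf(\epsilon);0,\epsilon)\bigr) + \bigl(P_{*,l}(\zeta_\tf(\epsilon);0,\epsilon) - Q_{*,l}(0)\bigr),
\]
where the first term is $\ord(|\lambda|)$ by analyticity in $\lambda$ (uniformly in $\epsilon$ by the roughness estimate), and the second is $\ord(\sqrt{\|\epsilon\|}|\log\|\epsilon\||)$ directly from roughness. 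The identification of $Q_{*,l}(0)$ with the projection whose kernel is given in~\eqref{leftapprox} then follows by tracking the $\epsilon\to 0$ limits of $\hat\phi_0(\zeta_\tf(\epsilon);\epsilon)$ and $\hat\phi_-(\zeta_\tf(\epsilon);\epsilon)$ via~\eqref{twosolstrans}, using that $\beta(\zeta_\tf(\epsilon);\epsilon)=1$, $\hat z_\tf(\zeta_\tf(\epsilon);\epsilon)\to \hat z_*(\zeta_\delta)\in\R$, $R_\tf(\zeta_\tf(\epsilon);\epsilon)\to\delta$, and $\mu(\epsilon)\to 0$.

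The main conceptual obstacle is the fact that $\lambda=0$ lies near the absolute spectrum of the plateau state, so hyperbolicity is genuinely lost along $[\zeta_\tf(\epsilon),0]$. The reason the present proposition nevertheless goes through is that the relevant asymptotic matrix here is $A_l^{-\infty}(\lambda)$, governed by the \emph{periodic end state} at $-\infty$ rather than the plateau, and it remains hyperbolic under the standing hypothesis $|\alpha|<\tfrac{1}{2}\sqrt{2}$. Restricting the domain to $(-\infty,\zeta_\tf(\epsilon)]$ is precisely what avoids the loss of hyperbolicity; continuation of the subspaces across the plateau is postponed to~\S\ref{sec:track}, where the matrix Riccati formulation is used instead.
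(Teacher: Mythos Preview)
Your proposal is correct and follows essentially the same strategy as the paper, with one small technical difference worth noting. The paper approximates $A_*(\zeta_\tf(\epsilon)+\zeta;\lambda,\epsilon)$ directly by the \emph{$\lambda$-independent} matrix $A_{*,l}(\zeta_\delta+\zeta):=A_l(\zeta_\delta+\zeta;0)$, so the coefficient-matrix estimate reads
\[
\left\|A_*(\zeta_\tf(\epsilon)+\zeta;\lambda,\epsilon)-A_{*,l}(\zeta_\delta+\zeta)\right\|\le C_\delta\sqrt{\|\epsilon\|}\,|\log\|\epsilon\||+C|\lambda|,
\]
and then a single application of roughness gives~\eqref{leftbound4} immediately, with the kernel~\eqref{leftapprox} identified from the explicit $\epsilon=0$ limits $\hat\phi_{*,0},\hat\phi_{*,-}$ of the decaying solutions. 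You instead keep the $\lambda$-dependent comparison system $A_l(\cdot;\lambda)$, obtain the roughness bound without a $|\lambda|$-term, and then recover the $|\lambda|$-contribution in~\eqref{leftbound4} a posteriori via the $\lambda$-regularity of the projections. Both routes are valid; the paper's choice avoids the extra splitting step (and the mild care needed to make the $O(|\lambda|)$ bound on $P_{*,l}(\zeta_\tf;\lambda,\epsilon)-P_{*,l}(\zeta_\tf;0,\epsilon)$ uniform in $\epsilon$, which you would most cleanly get by writing $P_{*,l}(\lambda,\epsilon)-Q_{*,l}(0)=[P_{*,l}(\lambda,\epsilon)-Q_l(\lambda)]+[Q_l(\lambda)-Q_l(0)]$ rather than the decomposition you wrote).
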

\begin{proof}
Throughout this proof $C_\delta > 1$ is a constant depending on $\delta > 0$ only.

Let $\lambda \in R_1(\Theta_1)$. We approximate the coefficient matrix $A_*(\zeta;\lambda,\epsilon)$ of~\eqref{evprob2} for $\zeta$ to the left of the front interface $\zeta_\tf(\epsilon)$. By~\eqref{preciseomegabound} in Theorem~\ref{t:ex02}, estimate~\eqref{estex1} in Proposition~\ref{prop:left}, identity~\eqref{limmum} and Lemma~\ref{lem:beta}, we establish the estimate
\begin{align}
 \left\|A_*(\zeta_\tf(\epsilon) + \zeta;\lambda,\epsilon) - A_{*,l}(\zeta_\delta + \zeta)\right\| \leq C_\delta \sqrt{\|\epsilon\|} \left|\log\|\epsilon\|\right| + C|\lambda|, \qquad \zeta \leq 0,\, \lambda \in R_1(\Theta_1), \label{leftbound3}
\end{align}
for $0 < \epsilon_1 \ll \Theta_1 \ll \delta \ll 1$ and $0 \leq |\epsilon_2| \ll \Theta_1 \ll \delta \ll 1$, where we denote
\begin{align*}
A_{*,l}(\zeta) := \begin{pmatrix} 0 & 0 & 1 & 0 \\ 0 & 0 & 0 & 1\\ 2R_*(\zeta) & R_*(\zeta) & 0 & 0 \\ R_*(\zeta) & 2R_*(\zeta) & 0 & 0 \end{pmatrix},
\end{align*}
and $C > 1$ is a constant independent of $\delta$, $\epsilon$ and $\lambda$. By estimate~\eqref{Rsbound} in Proposition~\ref{prop:left}, the coefficient matrix $A_{*,l}(\zeta)$ converges exponentially to
\begin{align*}
A_{*,l}^{-\infty} = \begin{pmatrix} 0 & 0 & 1 & 0 \\ 0 & 0 & 0 & 1\\ 2 & 1 & 0 & 0 \\ 1 & 2 & 0 & 0 \end{pmatrix},
\end{align*}
as $\zeta \to -\infty$, which has the four eigenvalues $\pm \sqrt{3}, \pm 1$. So, the matrix $A_{*,l}^{-\infty}$ is hyperbolic and, by~\cite[Theorem 1]{SAN}, system
\begin{align}
\hat{\phi}_\zeta = A_{*,l}(\zeta)\hat{\phi}, \qquad \hat{\phi} \in \C^4, \label{evprob2red}
\end{align}
has an exponential dichotomy on $(-\infty,\zeta_\delta]$ with associated rank 2 projections $Q_{*,l}(\zeta)$. Note that the dichotomy constants might depend on $\delta$.

The two solutions $\hat{\phi}_0(\zeta;\epsilon)$ and $\hat{\phi}_-(\zeta;\epsilon)$ to~\eqref{evprob2} give rise, upon taking the limit $\|\epsilon\| \to 0$, to two linearly independent solutions
\begin{align*}
\hat{\phi}_{*,0}(\zeta) = \re^{\int_{\zeta_\delta}^\zeta \hat{z}_*(y) \mathrm dy} \begin{pmatrix} 1 \\ -1 \\ \hat{z}_*(\zeta) \\ -\hat{z}_*(\zeta\end{pmatrix}, \qquad \hat{\phi}_{*,-}(\zeta) = \re^{\int_{\zeta_\delta}^\zeta \hat{z}_*(y) \mathrm dy} \begin{pmatrix} \hat{z}_*(\zeta) - 1 \\ \hat{z}_*(\zeta) - 1 \\ R_*(\zeta) - \hat{z}_*(\zeta) \\ R_*(\zeta) - \hat{z}_*(\zeta)\end{pmatrix},
\end{align*}
to~\eqref{evprob2red}. Since $R_*(\zeta)$ and $\hat{z}_*(\zeta)$ converge exponentially to $1$ as $\zeta \to -\infty$ by~\eqref{Rsbound} in Proposition~\ref{prop:left}, it follows $\hat{\phi}_{*,0}(\zeta)$ and $\hat{\phi}_{*,-}(\zeta)$ decay exponentially to $0$ as $\zeta \to -\infty$. Thus, we obtain
\begin{align*} \ker(Q_{*,l}(\zeta)) = \mathrm{Span}\left\{\hat{\phi}_{*,0}(\zeta),\hat{\phi}_{*,-}(\zeta)\right\}, \end{align*}
for $\zeta \in (-\infty,\zeta_\delta]$, which yields~\eqref{leftapprox}.

By roughness of exponential dichotomies, cf.~\cite[Proposition 5.1]{COP}, and estimate~\eqref{leftbound3}, system~\eqref{evprob2} admits, provided $0 < \epsilon_1 \ll \Theta_1 \ll \delta \ll 1$ and $0 \leq |\epsilon_2| \ll \Theta_1 \ll \delta \ll 1$, for each $\lambda \in R_1(\Theta_1)$ an exponential dichotomy on $(-\infty,\zeta_\tf(\epsilon)]$ with $\lambda$- and $\epsilon$-independent constants and projections $P_{*,l}(\zeta;\lambda,\epsilon)$ satisfying~\eqref{leftbound4}.

Finally,~\eqref{leftapprox2} follows, because the linearly independent solutions $\hat{\phi}_0(\zeta;\epsilon)$ and $\hat{\phi}_-(\zeta;\epsilon)$ to~\eqref{evprob2} at $\lambda = 0$ converge to $0$ as $\zeta \to -\infty$ by Lemma~\ref{lem:approxphipm}.
\end{proof}

\subsection{Exponential dichotomies to the right of the inhomogeneity} \label{sec:expdi2}

We show that~\eqref{evprob2} admits an exponential dichotomy on $[0,\infty)$.

\begin{proposition} \label{prop:expdirightR1}
There exist constants $C > 1$ and $\eta > 0$ such that, provided $0 < \epsilon_1 \ll \Theta_1 \ll \delta \ll 1$ and $0 \leq |\epsilon_2| \ll \Theta_1 \ll \delta \ll 1$, system~\eqref{evprob2} admits for each $\lambda \in R_1(\Theta_1)$ an exponential dichotomy on $[0,\infty)$ with $\lambda$- and $\epsilon$-independent constants and projections $P_{*,r}(\zeta;\lambda,\epsilon)$ on $\C^4$ satisfying
\begin{align}
 \left\|P_{*,r}(\zeta;\lambda,\epsilon) - Q_{*,r}(\lambda,\epsilon)\right\| &\leq C\re^{-\eta \epsilon_1}, \qquad \zeta \geq 0, \label{projboundright}
\end{align}
where $Q_{*,r}(\lambda,\epsilon)$ is the spectral projection onto the stable eigenspace of the matrix $A_{*,+}(\lambda,\epsilon)$ defined in~\eqref{defApm}, which is smooth at $\lambda = 0$ and $\epsilon = (0,0)$ and satisfies
\begin{align} Q_{*,r}(0,0)\left[\C^4\right] = \mathrm{Span}\left\{\begin{pmatrix}  1 \\  0\\  -\sqrt{2(1-\ri \alpha)} \\ 0 \end{pmatrix}, \begin{pmatrix}  0\\  1\\  0 \\  -\sqrt{2(1+\ri \alpha)} \end{pmatrix}\right\}. \label{rightapprox}\end{align}
\end{proposition}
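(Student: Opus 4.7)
The plan is to derive Proposition~\ref{prop:expdirightR1} from Coppel's robustness theorem for exponential dichotomies, by realizing $A_*(\zeta;\lambda,\epsilon)$ on $[0,\infty)$ as a small exponentially decaying perturbation of the constant matrix $A_{*,+}(\lambda,\epsilon)$ defined in~\eqref{defApm}. First I will verify hyperbolicity of $A_{*,+}(\lambda,\epsilon)$ with a uniform spectral gap. Because of the structure in~\eqref{defApm}--\eqref{defApm2}, a permutation of coordinates to $(\hat\phi_1,\hat\phi_3,\hat\phi_2,\hat\phi_4)$ makes $A_{*,+}$ block diagonal with two $2\times 2$ blocks whose characteristic eigenvalues are
\begin{align*}
 \nu_{\pm,1}(\lambda,\epsilon) = -\ri\hat k_\tf(\epsilon) \pm \sqrt{\tfrac{\lambda+2}{m(\epsilon)^2}(1-\ri\alpha) + \mu(\epsilon)}, \qquad \nu_{\pm,2}(\lambda,\epsilon) = \ri\hat k_\tf(\epsilon) \pm \sqrt{\tfrac{\lambda+2}{m(\epsilon)^2}(1+\ri\alpha) + \overline{\mu(\epsilon)}}.
\end{align*}
By~\eqref{preciseomegabound} and~\eqref{limmum}, at $(\lambda,\epsilon)=(0,(0,0))$ these collapse to $\pm\sqrt{2(1\mp\ri\alpha)}$, each pair containing one eigenvalue with strictly positive real part and one with strictly negative real part. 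Continuous dependence of the spectrum on matrix entries then ensures that, provided $\Theta_1$ and $\|\epsilon\|$ are small enough, $A_{*,+}(\lambda,\epsilon)$ remains hyperbolic with a $(\lambda,\epsilon)$-uniform spectral gap, so that its stable spectral projection $Q_{*,r}(\lambda,\epsilon)$ is well-defined, analytic in $\lambda$, and smooth at $(\lambda,\epsilon)=(0,(0,0))$. A direct computation of the null vector of each $2\times 2$ block at the limit $(0,(0,0))$ then yields the basis of $Q_{*,r}(0,0)[\C^4]$ stated in~\eqref{rightapprox}.

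Next, I will estimate the perturbation $A_*(\zeta;\lambda,\epsilon) - A_{*,+}(\lambda,\epsilon)$ for $\zeta\geq 0$. Since $\chi(\zeta) = -1$ on this half-line, the only $\zeta$-dependent entries of $A_*(\zeta;\lambda,\epsilon)$ are the $R_\tf(\zeta;\epsilon)$-multiples in the lower-left $2\times 2$ block, where the phase factor satisfies $|\beta/\overline\beta| = 1$ because $\beta/\overline\beta = \exp(2\ri\int_{\zeta_\tf(\epsilon)}^\zeta (\Im\hat z_\tf - \hat k_\tf)\,\de y)$. Hence $\|A_*(\zeta;\lambda,\epsilon) - A_{*,+}(\lambda,\epsilon)\| \leq C\, R_\tf(\zeta;\epsilon)$, and combining estimate~\eqref{estex2} from Proposition~\ref{prop:right} with the scaling~\eqref{triggerbound} of the front interface yields $R_\tf(\zeta;\epsilon) \leq C\delta \re^{-\eta(\zeta - \zeta_\tf(\epsilon))}$ for $\zeta\geq 0$, with $\eta,C>0$ independent of $\lambda,\epsilon,\delta$; in particular the factor $\re^{\eta\zeta_\tf(\epsilon)}$ is small since $|\zeta_\tf(\epsilon)|\sim \pi/\epsilon_1$. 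Since the autonomous system $\hat\phi_\zeta = A_{*,+}(\lambda,\epsilon)\hat\phi$ possesses a trivial exponential dichotomy on $\R$ with projection $Q_{*,r}(\lambda,\epsilon)$ and $(\lambda,\epsilon)$-uniform constants, roughness of exponential dichotomies~\cite[Proposition~5.1]{COP} will then produce the claimed dichotomy of~\eqref{evprob2} on $[0,\infty)$ with $(\lambda,\epsilon)$-independent constants and analytic rank-$2$ projections $P_{*,r}(\zeta;\lambda,\epsilon)$. The asymptotic comparison~\eqref{projboundright} comes from the tail estimate~\cite[Lemma~3.4]{PAL}
\begin{align*}
 \|P_{*,r}(\zeta;\lambda,\epsilon) - Q_{*,r}(\lambda,\epsilon)\| \leq C \int_\zeta^\infty \|A_*(y;\lambda,\epsilon) - A_{*,+}(\lambda,\epsilon)\|\,\de y,
\end{align*}
evaluated using the $R_\tf$-bound above.

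The argument is fundamentally a textbook application of perturbation theory for exponential dichotomies, and I expect no serious obstacle. The only bookkeeping required is to ensure that the unperturbed spectral gap of $A_{*,+}$ at $(0,(0,0))$ is large enough to absorb the shifts induced by $\lambda \in R_1(\Theta_1)$ and $\|\epsilon\|\ll 1$, which is automatic once $\Theta_1$ and $\|\epsilon\|$ are taken sufficiently small; analyticity of $P_{*,r}(\zeta;\lambda,\cdot)$ in $\lambda$ is preserved by the roughness construction since the perturbation depends analytically on $\lambda$.
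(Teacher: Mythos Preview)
Your proposal is correct and follows essentially the same approach as the paper: verify that $A_{*,+}(\lambda,\epsilon)$ is hyperbolic with a uniform spectral gap near $(\lambda,\epsilon)=(0,(0,0))$, bound $\|A_*(\zeta;\lambda,\epsilon)-A_{*,+}(\lambda,\epsilon)\|$ for $\zeta\ge 0$ by the exponentially small quantity coming from~\eqref{estex2} and the front-interface scaling~\eqref{triggerbound}, and then invoke roughness of exponential dichotomies~\cite[Proposition~5.1]{COP}. The only cosmetic difference is that you spell out the block-diagonal eigenvalue computation and invoke~\cite[Lemma~3.4]{PAL} separately for the projection estimate, whereas the paper absorbs~\eqref{projboundright} directly into the roughness statement.
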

\begin{proof}
With the aid of identity~\eqref{triggerbound} in Theorem~\ref{t:ex02} and~\eqref{estex2} in Proposition~\ref{prop:right}, we establish the estimate
\begin{align}
\left\|A_*(\zeta;\lambda,\epsilon) - A_{*,+}(\lambda,\epsilon)\right\| \leq C\re^{-\eta \epsilon_1}, \qquad \zeta \geq 0. \label{rightbound}
\end{align}
for $\lambda$- and $\epsilon$-independent constants $C > 1$ and $\eta > 0$. The eigenvalues $\nu_{\pm,\pm}(\lambda,\epsilon)$ of $A_{*,+}(\lambda,\epsilon)$ are smooth at $\lambda = 0$ and $\epsilon = (0,0)$ and satisfy
\begin{align*} \nu_{-,\pm}(0,0) = -\sqrt{2 (1 \pm \ri \alpha)}, \qquad \nu_{+,\pm}(0,0) = \sqrt{2(1 \pm \ri \alpha)}. \end{align*}
Hence, provided $0 < \epsilon_1 \ll \Theta_1 \ll \delta \ll 1$ and $0 \leq |\epsilon_2| \ll \Theta_1 \ll \delta \ll 1$, the matrix $A_{*,+}(\lambda,\epsilon)$ is hyperbolic with $\lambda$- and $\epsilon$-uniform spectral gap, and the constant-coefficient system~\eqref{evprob2red2} has an exponential dichotomy on $\R$ for each $\lambda \in R_1(\Theta_1)$ with $\epsilon$- and $\lambda$-independent constants and rank 2 projection $Q_{*,r}(\lambda,\epsilon)$, which coincide with the spectral projection onto the stable eigenspace of $A_{*,+}(\lambda)$. The spectral projection $Q_{*,r}(\lambda,\epsilon)$ is smooth at $\lambda = 0$ and $\epsilon = (0,0)$ and satisfies~\eqref{rightapprox} by~\eqref{preciseomegabound} and~\eqref{limmum}. By roughness of exponential dichotomies, cf.~\cite[Proposition 5.1]{COP}, and estimate~\eqref{rightbound}, the eigenvalue problem~\eqref{evprob2} admits, provided $0 < \epsilon_1 \ll \Theta_1 \ll \delta \ll 1$ and $0 \leq |\epsilon_2| \ll \Theta_1 \ll \delta \ll 1$, for each $\lambda \in R_1(\Theta_1)$ an exponential dichotomy on  $[0,\infty)$ with $\lambda$- and $\epsilon$-independent constants and projections $P_{*,r}(\zeta;\lambda,\epsilon)$ satisfying~\eqref{projboundright}.
\end{proof}

\subsection{Tracking subspaces along the absolutely unstable plateau} \label{sec:track}

We approximate system~\eqref{evprob2} along the plateau between the front interface at $\zeta = \zeta_\tf(\epsilon)$ and the inhomogeneity at $\zeta = 0$ by system~\eqref{evprobp}. By observing that the coefficients of the matrix $A_{*,-}(\lambda,\epsilon)$ are smooth at $\lambda = 0$ and $\epsilon = (0,0)$ and it holds
$$A_{*,-}(0,0) = \begin{pmatrix} 0_2 & I_2 \\ 0_2 & 0_2\end{pmatrix},$$
by~\eqref{preciseomegabound} and~\eqref{limmum}, one directly obtains the a priori estimate.

\begin{lemma}\label{lem:evolbound}
There exists a constant $C>1$ such that, provided $0 < \epsilon_1 \ll \Theta_1 \ll 1$ and $0 \leq |\epsilon_2| \ll \Theta_1 \ll 1$, the evolution $\mathcal{T}_{*,p}(\zeta,y;\lambda,\epsilon)$  of~\eqref{evprobp} satisfies
\begin{align*} \left\|\mathcal{T}_{*,p}(\zeta,y;\lambda,\epsilon)\right\| \leq C\left(1 + |\zeta - y|\right) \re^{\sqrt{\|\epsilon\| + |\lambda|} \, |\zeta - y|}, \qquad \zeta,y \in \R, \lambda \in R_1(\Theta_1). \end{align*}
\end{lemma}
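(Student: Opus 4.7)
The plan is to exploit two structural features of the matrix $A_{*,-}(\lambda,\epsilon)$: it decouples into $2\times 2$ blocks, and it is a small perturbation of the nilpotent matrix $A_{*,-}(0,0)$, whose square vanishes. The bound then follows from an explicit closed form for the exponential of a $2\times 2$ Jordan perturbation.

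First, I would observe that, since $\D(\epsilon)$ and $\A_-(\lambda,\epsilon)$ are diagonal, the permutation of $\C^4$ that groups the coordinates as $(\hat{\phi}_1,\hat{\phi}_3)$ and $(\hat{\phi}_2,\hat{\phi}_4)$ puts $A_{*,-}(\lambda,\epsilon)$ into block-diagonal form with the two $2\times 2$ blocks
\begin{align*}
M_{\pm} = \begin{pmatrix} \mp \ri\hat{k}_\tf(\epsilon) & 1 \\ a_\pm(\lambda,\epsilon) & \mp \ri\hat{k}_\tf(\epsilon) \end{pmatrix},
\end{align*}
with $a_+ := \mu(\epsilon) + \lambda(1-\ri\alpha)/m(\epsilon)^2$ and $a_- := \overline{\mu(\epsilon)} + \lambda(1+\ri\alpha)/m(\epsilon)^2$. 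Hence $\mathcal{T}_{*,p}(\zeta,y;\lambda,\epsilon) = \re^{(\zeta-y)A_{*,-}(\lambda,\epsilon)}$ decomposes block-wise, and it suffices to estimate $\|\re^{(\zeta-y)M_\pm}\|$. By~\eqref{preciseomegabound},~\eqref{limmum}, and smoothness of $\hat{k}_\tf$, $\mu$ and $m$ at $\epsilon=(0,0)$, the real-valued function $\hat{k}_\tf$ satisfies $|\hat{k}_\tf(\epsilon)| \leq C\|\epsilon\|$, so the diagonal entries of $M_\pm$ lie on $\ri\R$, and $|a_\pm(\lambda,\epsilon)| \leq C(\|\epsilon\|+|\lambda|)$.

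Next, I would split $M_\pm = d_\pm I + N_\pm$ with $d_\pm := \mp \ri\hat{k}_\tf \in \ri\R$ and $N_\pm := \begin{pmatrix} 0 & 1 \\ a_\pm & 0\end{pmatrix}$, and exploit $N_\pm^2 = a_\pm I$ to obtain the closed form
\begin{align*}
\re^{(\zeta-y)N_\pm} = \cosh\!\bigl((\zeta-y)\sqrt{a_\pm}\bigr)\, I + \frac{\sinh\!\bigl((\zeta-y)\sqrt{a_\pm}\bigr)}{\sqrt{a_\pm}}\, N_\pm,
\end{align*}
understood as $I + (\zeta-y)N_\pm$ when $a_\pm = 0$. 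Term-wise estimates on the power series yield $|\cosh w| \leq \re^{|w|}$ and $|\sinh(w)/w| \leq \re^{|w|}$ for all $w \in \C$, which, combined with $\|N_\pm\| \leq 1 + |a_\pm|$ and $|\re^{(\zeta-y)d_\pm}| = 1$ for real $\zeta,y$, produces
\begin{align*}
\|\re^{(\zeta-y)M_\pm}\| \leq \bigl(1 + |\zeta-y|(1+|a_\pm|)\bigr)\, \re^{|\zeta-y|\sqrt{|a_\pm|}}.
\end{align*}
Inserting the bound $|a_\pm| \leq C(\|\epsilon\| + |\lambda|)$, using that $|a_\pm|$ is uniformly bounded so that $1+|a_\pm|$ is absorbed into a constant, and reassembling the two blocks yields the asserted estimate on $\mathcal{T}_{*,p}$.

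I do not anticipate a substantive obstacle: the argument is purely an explicit matrix exponential computation made possible by the near-nilpotent structure of $A_{*,-}(0,0)$ and the decoupling of the system. The only mildly delicate point is the exact normalization of the exponent $\sqrt{\|\epsilon\|+|\lambda|}$; the multiplicative constant arising from the bound $|a_\pm| \leq C(\|\epsilon\|+|\lambda|)$ is absorbed by slightly shrinking the admissible ranges of $\Theta_1$ and $\|\epsilon\|$ and by adjusting the universal constant in front.
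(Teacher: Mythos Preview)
Your argument is correct and matches the paper's approach exactly: the paper simply notes that $A_{*,-}(0,0) = \left(\begin{smallmatrix}0_2 & I_2\\0_2 & 0_2\end{smallmatrix}\right)$ is nilpotent and declares the bound follows ``directly,'' giving no further detail; your explicit $\cosh/\sinh$ block formula in fact reappears later in the paper as the expression~\eqref{explevol} for $\mathcal{T}_{*,p}$.

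One small correction to your closing remark: the factor $\sqrt{C}$ that arises in the exponent from $\sqrt{|a_\pm|}\le\sqrt{C}\sqrt{\|\epsilon\|+|\lambda|}$ cannot be moved into the universal constant out front, because $|\zeta-y|$ ranges over all of $\R$ and $\re^{(\sqrt{C}-1)\sqrt{\|\epsilon\|+|\lambda|}\,|\zeta-y|}$ is unbounded. Strictly speaking the exponent should read $C\sqrt{\|\epsilon\|+|\lambda|}\,|\zeta-y|$, but this is entirely harmless for every use of the lemma in the paper, where the bound is always paired with a stronger competing exponential decay (e.g.~\eqref{plateaubound}) and $\Theta_1$ is taken small enough to compensate.
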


The flow induced by~\eqref{evprobp} in the coordinate chart $\mathfrak{c}$, which maps any subspace $W \in \mathrm{Gr}(2,\C^4)$ represented by a basis $\smash{\left(\begin{smallmatrix} X \\ Y \end{smallmatrix}\right)} \in \C^{4 \times 2}$ with $\det(X) \neq 0$ to the matrix $T = YX^{-1} \in \C^{2 \times 2}$, is given by the matrix Riccati equation
\begin{align}
T_\zeta = -T^2 + \D T - T\D + \A_-, \qquad T \in \C^{2 \times 2},\label{matrixric}
\end{align}
where we suppress dependency on $\lambda$ and $\epsilon$.

In this subsection, we track the relevant subspaces $W_\pm(\zeta;\lambda,\epsilon)$, defined in~\eqref{subspaces}, along the absolutely unstable plateau. As in~\S\ref{sec:R2concl}, we relate these subspaces of solutions of the transformed eigenvalue problem~\eqref{evprob2} to its exponential dichotomies established in Propositions~\ref{prop:expdileftR1} and~\ref{prop:expdirightR1}.

\begin{lemma} \label{lem:subspace2}
Provided $0 < \epsilon_1 \ll \Theta_1 \ll \delta \ll 1$ and $0 \leq |\epsilon_2| \ll \Theta_1 \ll \delta \ll 1$, it holds
\begin{align*} W_-(\zeta;\lambda,\epsilon) &= \ker(P_{*,l}(\zeta;\lambda,\epsilon)), \qquad \zeta \in (-\infty,\zeta_\tf(\epsilon)], \, \lambda \in R_1(\Theta_1), \end{align*}
where $P_{*,l}(\zeta;\lambda,\epsilon)$ is the projection associated with exponential dichotomy of~\eqref{evprob2} on $(-\infty,\zeta_\tf(\epsilon)]$ established in Proposition~\ref{prop:expdileftR1}. In addition, we have
\begin{align*} W_+(\zeta;\lambda,\epsilon) &= P_{*,r}(\zeta;\lambda,\epsilon)[\C^4], \qquad \zeta \in [0,\infty), \, \lambda \in R_1(\Theta_1),\end{align*}
where $P_{*,r}(\zeta;\lambda,\epsilon)$ is the projection associated with exponential dichotomy of~\eqref{evprob2} on $[0,\infty)$ established in Proposition~\ref{prop:expdirightR1}.
\end{lemma}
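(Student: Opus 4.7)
The plan is to mirror the proof of Lemma~\ref{lem:subspace} essentially verbatim, now invoking the $R_1$-versions of the exponential dichotomies (Propositions~\ref{prop:expdileftR1} and~\ref{prop:expdirightR1}) in place of the $R_2$-versions. The argument rests on two ingredients: the standard exponential decay coming from the dichotomy of the weighted system on $(-\infty,0]$ and $[0,\infty)$ used to define $W_\pm$ in~\eqref{subspaces}, and the asymptotic information on the linear coordinate transform $B(\zeta;\epsilon)$ recorded in Lemma~\ref{lem:beta}. In both cases dimension counting, together with boundedness of $\hat{z}_\tf(\zeta;\epsilon)$ from Propositions~\ref{prop:left} and~\ref{prop:right}, will close the argument.

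For the minus assertion, let $\lambda \in R_1(\Theta_1)$ and let $\phi(\zeta)$ be a solution of~\eqref{evprob} with $\phi(\zeta) \in \ker(P_-(\zeta;\lambda,\epsilon))$ for $\zeta \leq 0$. By construction of the weighted dichotomy, $\re^{-\hat{\kappa}_-\zeta}\phi(\zeta) \to 0$ as $\zeta \to -\infty$. Combining~\eqref{betalim-} from Lemma~\ref{lem:beta} with the uniform boundedness of $\hat{z}_\tf(\zeta;\epsilon)$ on $(-\infty,\zeta_\tf(\epsilon)]$ provided by Proposition~\ref{prop:left}, the transformed solution $B(\zeta;\epsilon)\phi(\zeta)$ decays to $0$ at $-\infty$, and therefore lies in the stable subspace $\ker(P_{*,l}(\zeta;\lambda,\epsilon))$ of the exponential dichotomy of~\eqref{evprob2} on $(-\infty,\zeta_\tf(\epsilon)]$ supplied by Proposition~\ref{prop:expdileftR1}. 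Hence $W_-(\zeta;\lambda,\epsilon)\subseteq \ker(P_{*,l}(\zeta;\lambda,\epsilon))$, and since both subspaces are two-dimensional they coincide.

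For the plus assertion, let $\hat{\phi}(\zeta)$ be a solution of~\eqref{evprob2} with $\hat{\phi}(\zeta)\to 0$ as $\zeta\to\infty$, as furnished by $P_{*,r}(\zeta;\lambda,\epsilon)[\C^4]$. Using~\eqref{betalim+} from Lemma~\ref{lem:beta} together with the boundedness of $\hat{z}_\tf(\zeta;\epsilon)$ on $[0,\infty)$ from Proposition~\ref{prop:right}, one checks that $\re^{-\hat{\kappa}_+\zeta}B(\zeta;\epsilon)^{-1}\hat{\phi}(\zeta)\to 0$ as $\zeta\to\infty$, so that $B(\cdot;\epsilon)^{-1}\hat{\phi}$ is a solution of~\eqref{evprob} contained in $P_+(\zeta;\lambda,\epsilon)[\C^4]$. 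This gives $P_{*,r}(\zeta;\lambda,\epsilon)[\C^4]\subseteq W_+(\zeta;\lambda,\epsilon)$ and, again by matching the dimensions via Proposition~\ref{prop:expdirightR1}, equality follows.

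Since the corresponding steps in Lemma~\ref{lem:subspace} have already been carried out in full, no genuinely new obstacle arises; the only point of care is that the half-lines on which the dichotomies of the transformed system are posed now terminate at $\zeta_\tf(\epsilon)$ (left) rather than at $0$, so that one must verify that Lemma~\ref{lem:beta} and the boundedness statements in Proposition~\ref{prop:left} apply on the slightly shorter range $(-\infty,\zeta_\tf(\epsilon)]$. Both are immediate from the statements as they stand, so the proof reduces to a routine re-packaging of the $R_2$ argument.
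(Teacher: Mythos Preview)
Your proposal is correct and follows exactly the approach the paper takes: the paper's own proof consists of the single sentence ``The proof is completely analogous to the proof of Lemma~\ref{lem:subspace},'' and your write-up is precisely that analogous argument, with Propositions~\ref{prop:expdileftR1} and~\ref{prop:expdirightR1} substituted for their $R_2$-counterparts. Your remark about the left dichotomy now terminating at $\zeta_\tf(\epsilon)$ rather than $0$ is a fair point of care, but as you note it causes no difficulty.
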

\begin{proof}
The proof is completely analogous to the proof of Lemma~\ref{lem:subspace}.
\end{proof}

The estimate~\eqref{deltaineq} in Proposition~\ref{prop:left} in combination with Lemma~\ref{lem:subspace2} and the bounds in Propositions~\ref{prop:expdileftR1} and~\ref{prop:expdirightR1} now readily lead to the following approximation result.

\begin{lemma} \label{lem:approx}
There exist constants $C > 1$ and $\eta > 0$ such that, provided $0 < \epsilon_1 \ll \Theta_1 \ll \delta \ll 1$ and $0 \leq |\epsilon_2| \ll \Theta_1 \ll \delta \ll 1$, it holds
\begin{align} \left\|\hat{T}_-(\zeta_\tf(\epsilon);\lambda,\epsilon) - \begin{pmatrix} \hat{z}_*(\zeta_\delta) & 0 \\ 0 & \hat{z}_*(\zeta_\delta)\end{pmatrix}\right\| \leq C|\log(\delta)|^{-2}, \qquad \lambda \in R_1(\Theta_1), \label{approxricl}\end{align}
and 
\begin{align} \left\|\hat{T}_+(0;\lambda,\epsilon) - \begin{pmatrix} \hat{t}_+(\lambda,\epsilon) & 0 \\ 0 & \hat{s}_+(\lambda,\epsilon)\end{pmatrix}\right\| \leq C\re^{-\eta \epsilon_1}, \qquad \lambda \in R_1(\Theta_1), \label{approxricr}
\end{align}
where $\hat{t}_+(\lambda,\epsilon)$ and $\hat{s}_+(\lambda,\epsilon)$ are the principal square roots of the diagonal entries of the matrix $\A_+(\lambda,\epsilon)$, defined in~\eqref{defApm2}, which are smooth near $\lambda = 0$ and $\epsilon = 0$ with
\begin{align} \hat{t}_+(0,0) = -\sqrt{2 (1 - \ri \alpha)}, \qquad \hat{s}_+(0,0) = -\sqrt{2(1 + \ri \alpha)}. \label{expex0}
\end{align}
\end{lemma}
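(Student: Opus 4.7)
The strategy is to invoke Lemma~\ref{lem:subspace2} to identify the subspaces $W_+(0;\lambda,\epsilon)$ and $W_-(\zeta_\tf(\epsilon);\lambda,\epsilon)$ with the range of $P_{*,r}(0;\lambda,\epsilon)$ and the kernel of $P_{*,l}(\zeta_\tf(\epsilon);\lambda,\epsilon)$, respectively, and then to compute the Grassmannian-chart representations of the asymptotic approximations to these projections supplied by Propositions~\ref{prop:expdileftR1} and~\ref{prop:expdirightR1}. The chart $\mathfrak{c}$ is analytic on the open set where the upper $(2\times 2)$-block is invertible, so any operator-norm perturbation of a projection translates into a perturbation of the same order in the chart coordinate of the associated subspace, as long as the limiting representation stays in the chart.

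For~\eqref{approxricr}, Proposition~\ref{prop:expdirightR1} gives $\|P_{*,r}(0;\lambda,\epsilon) - Q_{*,r}(\lambda,\epsilon)\| \leq C\re^{-\eta\epsilon_1}$, where $Q_{*,r}(\lambda,\epsilon)$ is the spectral projection onto the stable eigenspace of the constant-coefficient matrix $A_{*,+}(\lambda,\epsilon)$. By~\eqref{defApm}--\eqref{defApm2}, $A_{*,+}$ decouples, upon the permutation grouping indices $\{1,3\}$ and $\{2,4\}$, into two $2\times 2$ blocks $\bigl(\begin{smallmatrix}\mp\ri\hat k_\tf & 1 \\ a_j & \mp\ri\hat k_\tf\end{smallmatrix}\bigr)$, whose spatial eigenvalues satisfy $(\nu \pm \ri\hat k_\tf)^2 = a_j$ for $j=1,2$, with $a_j$ the diagonal entries of $\A_+$. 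Selecting in each block the eigenvector corresponding to the negative real-part (stable) eigenvalue yields the stable subspace $\mathrm{Span}\{(1,0,\hat t_+,0)^\top,(0,1,0,\hat s_+)^\top\}$, whose representation in the chart $\mathfrak c$ is the diagonal matrix $\mathrm{diag}(\hat t_+(\lambda,\epsilon),\hat s_+(\lambda,\epsilon))$. Smoothness of $\hat t_+,\hat s_+$ near $(\lambda,\epsilon)=(0,0)$ and the values~\eqref{expex0} are immediate from this eigenvalue computation, and~\eqref{approxricr} then follows by the Lipschitz property of $\mathfrak c$.

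For~\eqref{approxricl}, Proposition~\ref{prop:expdileftR1} gives $\|P_{*,l}(\zeta_\tf(\epsilon);\lambda,\epsilon) - Q_{*,l}(0)\| \leq C_\delta(\sqrt{\|\epsilon\|}|\log\|\epsilon\|| + |\lambda|)$, which under the hypothesised hierarchy $\epsilon_1 \ll \Theta_1 \ll \delta$ can be made arbitrarily small relative to $|\log\delta|^{-2}$. Writing $z := \hat z_*(\zeta_\delta)$ and assembling the basis of $\ker Q_{*,l}(0)$ from~\eqref{leftapprox} into $\Phi = \bigl(\begin{smallmatrix} X \\ Y\end{smallmatrix}\bigr)$, one computes $\det X = 2(z-1)$, which is bounded away from zero for small $z$; a direct matrix inversion yields
\begin{align*}
Y X^{-1} = \frac{1}{2(z-1)}\begin{pmatrix} z^2 - 2z + \delta & \delta - z^2 \\ \delta - z^2 & z^2 - 2z + \delta \end{pmatrix} = \mathrm{diag}(z,z) + \mathcal{O}(z^2 + \delta),
\end{align*}
by Taylor expansion about $(z,\delta)=(0,0)$. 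Invoking $0 < z \leq C|\log\delta|^{-1}$ from~\eqref{deltaineq} together with $\delta \ll |\log\delta|^{-2}$ for small $\delta$, we conclude that $Y X^{-1}$ differs from $\mathrm{diag}(\hat z_*(\zeta_\delta),\hat z_*(\zeta_\delta))$ by $\mathcal O(|\log\delta|^{-2})$ with a $\delta$-independent constant, which establishes~\eqref{approxricl}. The principal technical point, and main obstacle, is bookkeeping of the $\delta$-dependence: the asserted error rate $|\log\delta|^{-2}$ is produced entirely by the chart representation of the $(\epsilon,\lambda)=(0,0)$-limiting subspace, while the ``dynamical'' perturbation from Proposition~\ref{prop:expdileftR1}, whose constant $C_\delta$ depends on $\delta$, must be absorbed into a strictly smaller error by shrinking $\Theta_1$ and $\|\epsilon\|$ after $\delta$ has been fixed, which is afforded by the parameter hierarchy in the hypothesis.
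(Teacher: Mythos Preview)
Your proposal is correct and follows precisely the route the paper intends: it cites exactly the ingredients the paper invokes (estimate~\eqref{deltaineq}, Lemma~\ref{lem:subspace2}, and Propositions~\ref{prop:expdileftR1} and~\ref{prop:expdirightR1}) and supplies the chart computations that the paper omits. In particular, your explicit evaluation of $YX^{-1}$ for the basis~\eqref{leftapprox} and the observation that the $C_\delta(\sqrt{\|\epsilon\|}|\log\|\epsilon\||+|\lambda|)$ error is absorbed by the parameter hierarchy are exactly what is needed to realise the asserted $|\log\delta|^{-2}$ bound with a $\delta$-independent constant.
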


For later convenience, we now fix the bases~\eqref{bases} of $W_\pm(\zeta;\lambda,\epsilon)$ by setting
\begin{align} \hat{X}_-(\zeta_\tf(\epsilon);\lambda,\epsilon) = I_4 = \hat{X}_+(0;\lambda,\epsilon). \label{gaugechoice}\end{align}
so that $\hat{T}(\zeta_\tf(\epsilon);\lambda,\epsilon) = \hat{Y}_-(\zeta_\tf(\epsilon);\lambda,\epsilon)$ and $\hat{T}_+(0;\lambda,\epsilon) = \hat{Y}_+(0;\lambda,\epsilon)$. Note that is possible by Lemma~\ref{lem:approx}.

We expect that the evolution of the subspaces $W_\pm(\zeta;\lambda,\epsilon)$ in~\eqref{evprob2} along the plateau is to leading order governed by the dynamics of~\eqref{evprobp}. Thus, we expect the evolution of the representations $\hat{T}_\pm(\zeta;\lambda,\epsilon) \in \C^{2 \times 2}$ of $W_\pm(\zeta;\lambda,\epsilon)$ under the coordinate chart $\mathfrak{c}$ to be to leading order governed by the matrix Riccati equation~\eqref{matrixric}. Since $\A_-(\lambda,\epsilon)$ and $\D(\epsilon)$ are diagonal matrices, one readily observes that the flow of~\eqref{matrixric} leaves the subspace of diagonal matrices in $\C^{2 \times 2}$ invariant. The matrices $\hat{T}_+(0;\lambda,\epsilon)$ and $\hat{T}_-(\zeta_\tf(\epsilon);\lambda,\epsilon)$ are to leading-order diagonal by the estimates in Lemma~\ref{lem:approx}. Thus, by tracking the leading-order diagonal approximation of $\hat{T}_-(\zeta_\tf(\epsilon);\lambda,\epsilon)$ in~\eqref{matrixric} forward from $\zeta_\tf(\epsilon)$ to $0$, we expect to estimate $\hat{T}_-(0;\lambda,\epsilon)$. Similarly, by tracking the leading-order diagonal approximation of $\hat{T}_+(0;\lambda,\epsilon)$ in~\eqref{matrixric} backward from $0$ to $\zeta_\tf(\epsilon)$, we expect to estimate $\hat{T}_+(\zeta_\tf(\epsilon);\lambda,\epsilon)$. We emphasize that, depending on the precise location of $\lambda$ in the region $R_1$, it is advantageous to either approximate $\hat{T}_+(\zeta_\tf(\epsilon);\lambda,\epsilon)$ or $\hat{T}_-(0;\lambda,\epsilon)$, cf.~Remark~\ref{rem:forwardbackward}.

The dynamics of~\eqref{matrixric} on this subspace of diagonal matrices is given by the two scalar Riccati equations
\begin{align}
t_\zeta = -t^2 + \frac{\lambda}{m(\epsilon)^2} (1 - \ri \alpha) + \mu(\epsilon), \label{Riceq1}\\
s_\zeta = -s^2 + \frac{\lambda}{m(\epsilon)^2} (1 + \ri \alpha) + \overline{\mu(\epsilon)}. \label{Riceq2}
\end{align}
We emphasize that although these scalar Riccati equations are explicitly solvable, the dependence of their solutions on the parameters $\lambda$ and $\epsilon$ is rather complicated. In fact, given any solution $\upsilon(\zeta;\varpi)$ to the scalar Riccati equation
$$\upsilon_\zeta = -\upsilon^2 + \varpi,$$
with parameter $\varpi \in \C$ and fixed initial condition $\upsilon(0;\varpi) = \upsilon_0 \in \C$, one finds that $\upsilon(\zeta;\cdot)$ is, for each fixed $\zeta > 0$, a meromorphic function whose poles and zeros accumulate on the negative real axis as $\zeta \to \infty$, see Figure~\ref{fig:poles}.

\begin{figure}[h!]
\centering
  \hspace{-0.5in}\includegraphics[trim=1.5in 0.0in 1in 0.0in, clip,width = 0.5\textwidth]{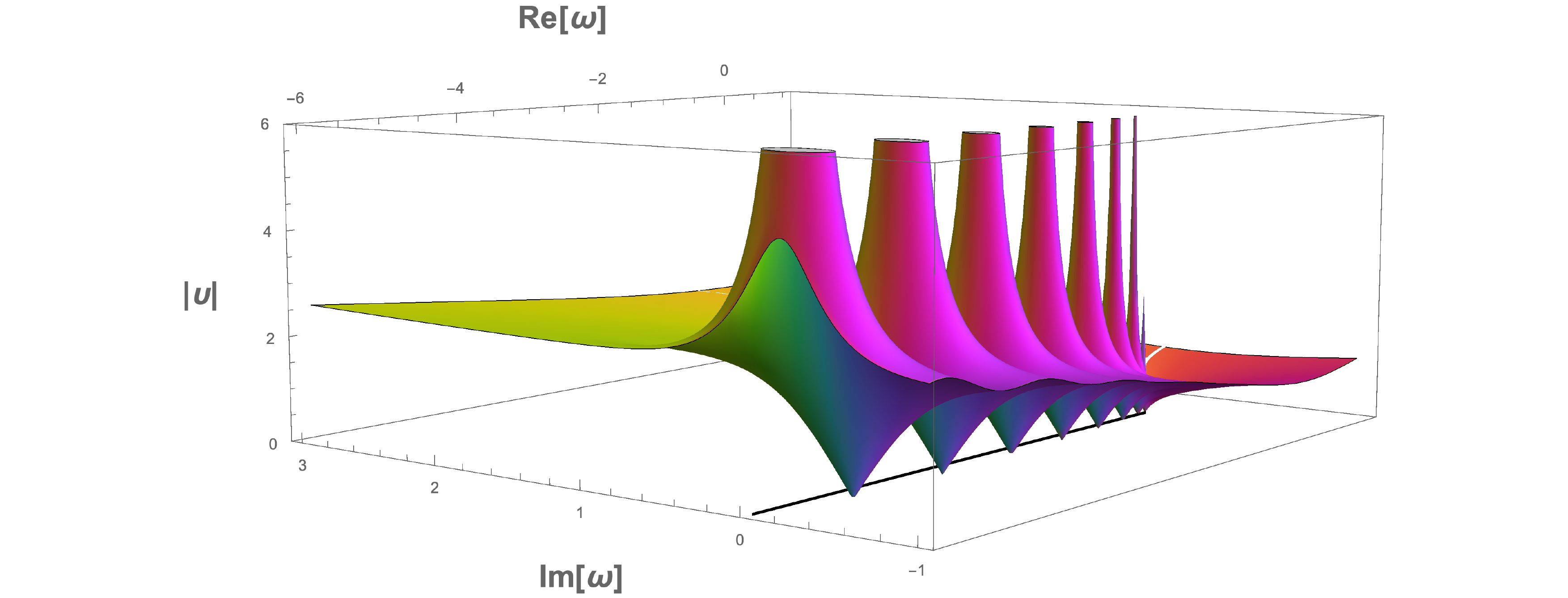}
  \hspace{-0.3in}\includegraphics[trim=1.5in 0.0in 2in 0.0in, clip,width = 0.5\textwidth]{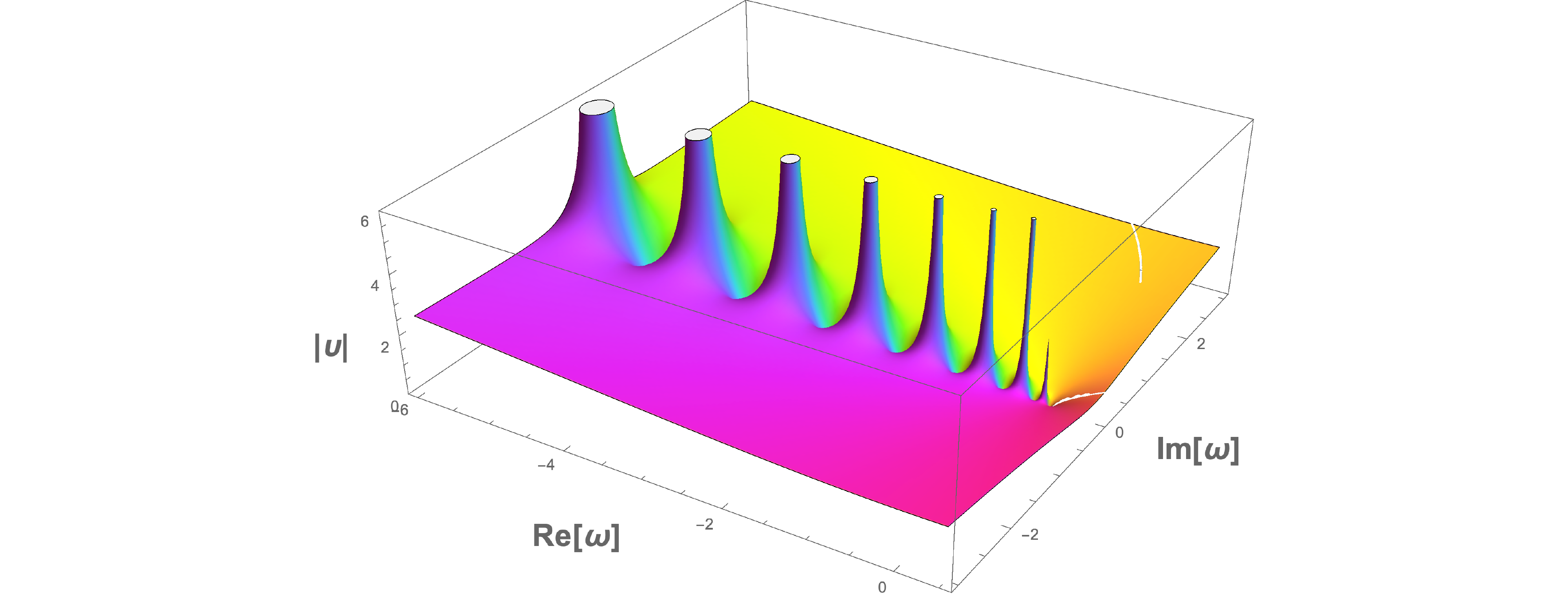}
   \caption{Two views of the plot of the modulus (vertical axis) and argument (color)  of the solution $\upsilon(\zeta;\varpi)$ to $\upsilon_\zeta = -\upsilon^2 + \varpi$ as a function of the parameter $\varpi$ with $\zeta = 10.0$ and $\Re(\upsilon(0;\varpi)) = 1.0 = \Im(\upsilon(0;\varpi))$. Left plot also has black line on the negative real axis with $|\upsilon| = 0$.}
   \label{fig:poles}
\end{figure}

To uncover the dependence of the dynamics in~\eqref{Riceq1} and~\eqref{Riceq2} on $\lambda \in R_1$, we proceed as in the existence analysis of the front in~\cite{GS14}. There, stable and unstable manifolds in~\eqref{exprob} are matched by projecting them onto the singular sphere $R = 0$, where the dynamics is governed by the scalar Riccati equation~\eqref{dyninvman}. Setting \begin{align}M(\epsilon) = \sqrt{\left|\mu(\epsilon)\right|},\label{defM}\end{align}
one observes, using~\eqref{preciseomegabound} and~\eqref{defmm}, that $M(\epsilon)$ is smooth at $\epsilon = (0,0)$ and satisfies
\begin{align} M(0,0) = 0, \qquad \partial_{\epsilon_1} M(0,0) = (1,0). \label{expex4}\end{align}
The matching procedure in~\cite[\S3.4]{GS14} yields, via the implicit function theorem, that
\begin{align}\mu(\epsilon) = M(\epsilon)^2 \re^{\ri \varsigma(M(\epsilon))}, \qquad m(\epsilon) = \tilde{m}(M(\epsilon)), \qquad M(\epsilon)\zeta_\tf(\epsilon) = \hat{\zeta}(M(\epsilon)),\label{expex5}\end{align}
where $\tilde{m}(M), \hat{\zeta}(M)$ and $\varsigma(M)$ are smooth at $M = 0$ satisfying
\begin{align}
\tilde{m}(0) = 1, \qquad \varsigma(0) = \pi, \qquad \hat{\zeta}(0) = -\pi, \qquad \hat{\zeta}'(0) = \Re\left(\frac{1}{\sqrt{2(1-\ri \alpha)}}\right) - \frac{1}{\hat{z}_*(\zeta_\delta)}. \label{expex1}
\end{align}

We will perform a similar matching procedure as in~\cite{GS14} to track the relevant diagonal solutions to~\eqref{matrixric}. We establish control for all $\lambda \in R_1(\Theta_1)$ outside a discrete collection of disks, whose centers lie on $\Sigma_{0,\mathrm{abs}}\cup \overline{\Sigma_{0,\mathrm{abs}}}$ and whose interior contains the poles of the diagonal solutions to~\eqref{matrixric} lie. See Figure~\ref{fig:disks} for a schematic depiction. 

\begin{figure}[h!]
\centering
  \hspace{-0.5in}\includegraphics[trim=0in 0in 0 0in, clip,width = 0.6\textwidth]{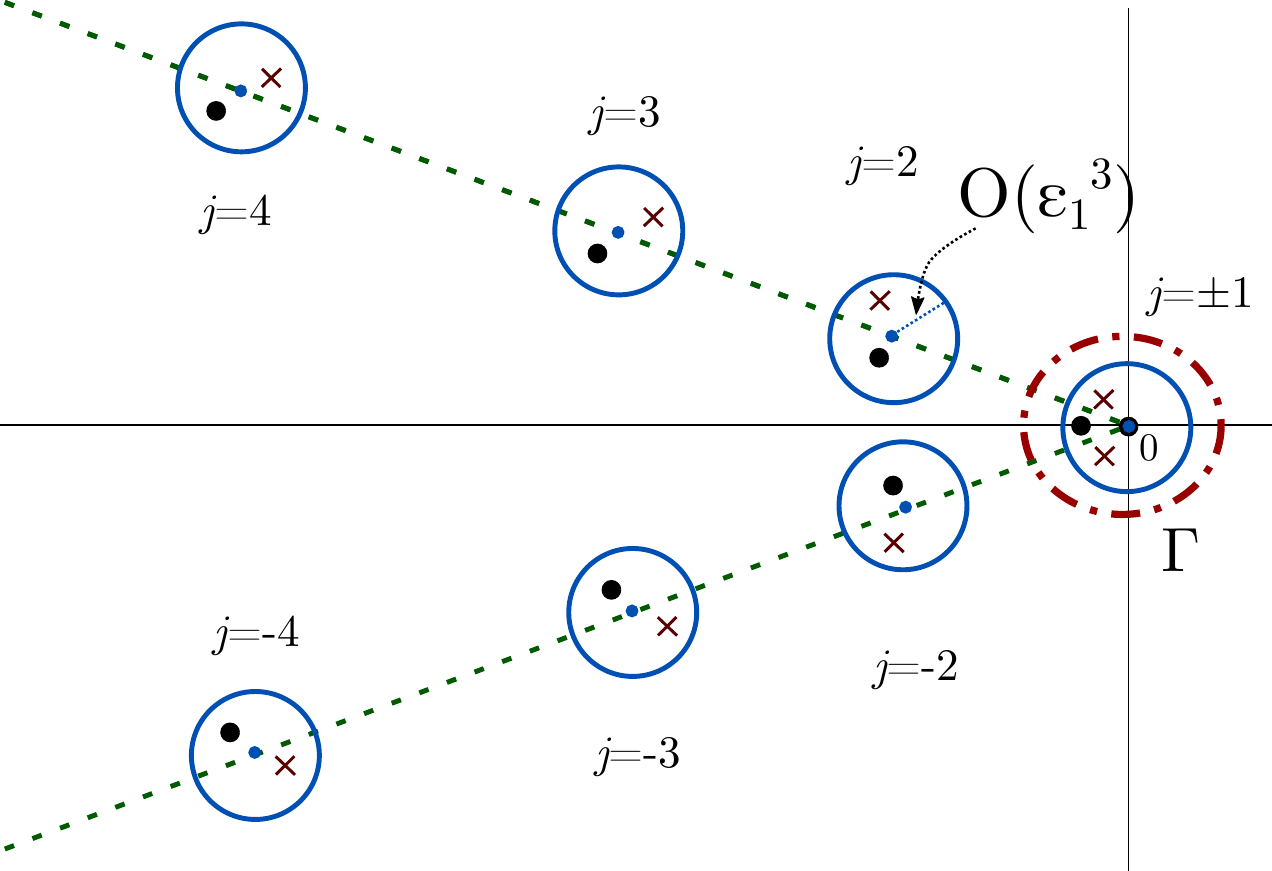}
   \caption{Schematic depiction of the disks $D_j(\epsilon), j \in \Z \setminus \{0\}$ (blue circles), with radii of order $\mathcal{O}(\epsilon_1^{3})$ and centers $\lambda_j$ (blue dot) as defined in~\eqref{e:dj}, lying on the lines $\{s(1\pm\mathrm{i}\alpha)\,:\, s\leq0\}$ (dotted green), which form the limit of the absolute spectrum~\eqref{e:abs0}, and its complex conjugate, as $\epsilon_1 \searrow 0$. The disks enclose the zeros (black dots) and poles (red x) of the Riccati-Evans function. The contour $\Gamma$ (red dotted-dashed circle) encloses the first disk $D_1(\epsilon)$.}
   \label{fig:disks}
\end{figure}

\begin{proposition} \label{prop:plateau1}
There exists a constant $C > 1$ such that, provided $0 < \epsilon_1 \ll \Theta_1 \ll \delta \ll 1$ and $0 \leq |\epsilon_2| \ll \Theta_1 \ll \delta \ll 1$, the following holds. There exists a discrete collection of disks $D_j(\epsilon) \subset \C, j \in \Z \setminus \{0\}$ with center
\begin{align} \label{e:dj}
\lambda_j := -\frac{\left(1-j^2\right) \epsilon_1^2 (1 + \mathrm{sign}(j)\, \ri \alpha)}{1+\alpha^2},
\end{align}
and radius $r_j$ satisfying
\begin{align} |r_j| \leq \min\left\{C_\delta j^2 M(\epsilon)^3, C j^2 \delta^{-1/4} M(\epsilon)^3 + C_{\delta,j} M(\epsilon)^4\right\} \leq C_{\delta} j^2 \|\epsilon\|^3, \label{radiusb}\end{align}
where $C_\delta > 1$ is a constant depending only on $\delta$, and $C_{\delta,j} > 1$ is a constant depending on $\delta$ and $j$ only, such that the diagonal solutions
\begin{align*}T_{d,\pm}(\zeta;\lambda,\epsilon) = \begin{pmatrix} t_\pm(\zeta;\lambda,\epsilon) & 0 \\ 0 & s_\pm(\zeta;\lambda,\epsilon))\end{pmatrix},\end{align*}
to~\eqref{matrixric} with initial data $t_+(0;\lambda,\epsilon) = \hat{t}_+(\lambda,\epsilon)$, $s_+(0;\lambda,\epsilon) = \hat{s}_+(\lambda,\epsilon)$ and $t_-(\zeta_\tf(\epsilon);\lambda,\epsilon) = \hat{z}_*(\zeta_\delta) = s_-(\zeta_\tf(\epsilon);\lambda,\epsilon)$ satisfy
\begin{align} \left\|T_{d,+}(\zeta_\tf(\lambda,\epsilon);\lambda,\epsilon)\right\| \leq \delta^{1/4}, \qquad \text{for } \lambda \in R_1(\Theta_1) \setminus \bigcup_{j \in \Z \setminus \{0\}} D_j(\epsilon), \label{dboundr}
\end{align}
and
\begin{align} \left\|T_{d,-}(0;\lambda,\epsilon)\right\| \leq \delta^{1/4}, \qquad \text{for } \lambda \in R_1(\Theta_1) \setminus \bigcup_{j \in \Z \setminus \{0\}} D_j(\epsilon).\label{dboundl}\end{align}
In addition, each of the meromorphic functions $t_-(0;\cdot,\epsilon)$ and $t_+(\zeta_\tf(\epsilon);\cdot,\epsilon)$ has precisely one pole in each disk $D_j(\epsilon), j \in \Z_{>0}$, which is simple. Similarly, each of the meromorphic functions $s_-(0;\cdot,\epsilon)$ and $s_+(\zeta_\tf(\epsilon);\cdot,\epsilon)$ possesses precisely one pole in each disk $D_j(\epsilon), j \in \Z_{<0}$, which is simple.
\end{proposition}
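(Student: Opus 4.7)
The plan is to exploit the explicit solvability of the scalar Riccati equations~\eqref{Riceq1}--\eqref{Riceq2}, reducing the entire tracking problem to the analysis of a single transcendental equation in one complex variable. Via the standard substitution $t=\psi'/\psi$, equation~\eqref{Riceq1} becomes $\psi''=\nu_t^2\psi$ with $\nu_t(\lambda,\epsilon)^2:=\lambda(1-\ri\alpha)/m(\epsilon)^2+\mu(\epsilon)$, whose general solution $t(\zeta)=\nu_t\tanh(\nu_t\zeta+c)$, combined with the initial datum $t_-(\zeta_\tf(\epsilon);\lambda,\epsilon)=\hat{z}_*(\zeta_\delta)$ and the hyperbolic addition formula, yields
\begin{align*}
t_-(0;\lambda,\epsilon) \;=\; \nu_t\,\frac{\hat{z}_*(\zeta_\delta)-\nu_t\tanh(w_t)}{\nu_t-\hat{z}_*(\zeta_\delta)\tanh(w_t)}, \qquad w_t:=\nu_t\zeta_\tf(\epsilon),
\end{align*}
with analogous formulas for $s_-(0;\lambda,\epsilon)$ and for $t_+(\zeta_\tf(\epsilon);\lambda,\epsilon),s_+(\zeta_\tf(\epsilon);\lambda,\epsilon)$ using the data $\hat{t}_+,\hat{s}_+$ supplied by Lemma~\ref{lem:approx}. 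Although $\nu_t$ has a branch point at $\lambda=-\mu(\epsilon) m(\epsilon)^2/(1-\ri\alpha)$, clearing fractions by $\cosh(w_t)$ makes both numerator and denominator entire in $\nu_t^2$, and hence entire in $\lambda$; this is the Riemann-surface unfolding referred to in~\S\ref{sec:approachpoint}, and it makes $t_-(0;\cdot,\epsilon)$ single-valued and meromorphic on $R_1(\Theta_1)$.

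Next, I locate the poles by analyzing the transcendental equation $L_\delta(\epsilon)\tanh(w_t)=w_t$ with $L_\delta(\epsilon):=\zeta_\tf(\epsilon)\hat{z}_*(\zeta_\delta)$, whose modulus is of order $|\log\delta|^{-1}/\epsilon_1$ by~\eqref{deltaineq} and~\eqref{triggerbound}. The root at $w_t=0$ is removable against the numerator, while the real roots near $w_t=\pm L_\delta(\epsilon)$ correspond to $|\lambda|\sim|\log\delta|^{-2}\gg\Theta_1$ and thus lie outside $R_1(\Theta_1)$. The remaining roots cluster near $w_t=\ri j\pi$ for $j\in\Z\setminus\{0\}$; using $\tanh'(\ri j\pi)=1$, a local implicit function theorem argument produces a unique simple root $w_j(\epsilon)$ with $|w_j(\epsilon)-\ri j\pi|=\mathcal{O}(|j|/|L_\delta(\epsilon)|)$. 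Translating back via $\lambda=(w_j^2/\zeta_\tf(\epsilon)^2-\mu(\epsilon))m(\epsilon)^2/(1-\ri\alpha)$ and inserting the smooth expansions~\eqref{expex5}--\eqref{expex1} of $\zeta_\tf,m,\mu$ in $M(\epsilon)$ places the resulting pole $\lambda_j^*(\epsilon)$ within $\mathcal{O}(j^2M(\epsilon)^3)$ of the nominal value $\lambda_j$ in~\eqref{e:dj}; the refined bound in~\eqref{radiusb} follows from one further Taylor step at fixed $j$. The $\mathrm{sign}(j)$ distinction in~\eqref{e:dj} simply book-keeps which equation contributes each pole: the $t$-equation (with factor $1-\ri\alpha$) is labeled $j>0$, and the $s$-equation (with factor $1+\ri\alpha$) is labeled $j<0$.

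Third, for $\lambda\in R_1(\Theta_1)\setminus\bigcup_{j\in\Z\setminus\{0\}}D_j(\epsilon)$, the preceding root-location analysis yields a uniform lower bound of order one (in the $w$-variable) on the distance from $w_t$ to the nearest denominator root $w_j(\epsilon)$, which translates into a uniform lower bound on $|\nu_t-\hat{z}_*(\zeta_\delta)\tanh(w_t)|$. Combined with the a priori smallness $|\nu_t|,|\hat{z}_*(\zeta_\delta)|\ll 1$ implied by $0<\epsilon_1\ll\Theta_1\ll\delta\ll 1$ and~\eqref{deltaineq}, the M\"obius expression for $t_-(0;\lambda,\epsilon)$ is bounded in absolute value by $\delta^{1/4}$ once $\delta$ is taken small enough, and the same argument controls $s_-(0;\lambda,\epsilon)$, $t_+(\zeta_\tf(\epsilon);\lambda,\epsilon)$, and $s_+(\zeta_\tf(\epsilon);\lambda,\epsilon)$, establishing~\eqref{dboundr}--\eqref{dboundl}.

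The main technical obstacle is to establish the two-scale radius estimate~\eqref{radiusb} with the correct $j$-dependence: the implicit function theorem near $w=\ri j\pi$ must be quantified on a ball whose radius decays only linearly in $|j|$, so that after squaring $w\mapsto w^2/\zeta_\tf(\epsilon)^2$ the $\lambda$-side error $|\lambda_j^*-\lambda_j|$ grows by no more than a factor of $|j|^2$. This ultimately rests on the fact that $\tanh$ has $j$-independent Lipschitz constants on uniform neighborhoods of its imaginary zeros, together with careful tracking of the remainders in the smooth expansions of $\mu(\epsilon),m(\epsilon),\zeta_\tf(\epsilon)$ in powers of $M(\epsilon)$ provided by~\eqref{expex5}--\eqref{expex1}.
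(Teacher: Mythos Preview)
Your explicit $\tanh$ formula is entirely equivalent to the paper's M\"obius/logarithm parameterization; both are just different coordinates on the same explicit solution of the scalar Riccati equation. The pole-location argument via the implicit function theorem near $w=\ri j\pi$ is also essentially the same as the paper's computation, and your handling of the removable singularity at $w_t=0$ and of the large real roots near $\pm L_\delta(\epsilon)$ is fine.

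The gap is in your third paragraph. The claim that for $\lambda\notin\bigcup_j D_j(\epsilon)$ the image $w_t(\lambda)$ stays at distance of order one from the nearest root $w_j$ is false: near $\lambda_j$ the map $\lambda\mapsto w_t$ has derivative of order $|\zeta_\tf|/|\nu_t|\sim 1/(|j|\epsilon_1^2)$, so a $\lambda$-disk of radius $r_j\sim j^2M(\epsilon)^3$ corresponds to a $w$-disk of radius only $\sim |j|\epsilon_1$, not $O(1)$. With that small separation, your estimate $|t_-|\lesssim|\nu_t|\cdot|\hat z_*|/|\text{Den}|$ gives a quantity of order one, not $\delta^{1/4}$. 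Relatedly, your pole-location analysis alone never generates the factor $\delta^{-1/4}$ appearing in the second term of~\eqref{radiusb}; that factor is precisely what is needed to make $|t_-|\le\delta^{1/4}$ on the disk boundaries.

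The paper avoids this difficulty by reversing the logic: rather than locating only the poles (i.e.\ solving $1/t_+=0$) and then trying to bound $|t_+|$ outside, it solves the \emph{parameterized} matching equation $1/t_+(\zeta_\tf;\lambda,\epsilon)=q_-$ for every $q_-$ in the disk $\B_\delta=\{|q_-|\le\delta^{-1/4}\}$. The same implicit function theorem argument then shows that \emph{all} such solutions lie in the disks $D_j(\epsilon)$ with the stated radii (the $\delta^{-1/4}$ in~\eqref{radiusb} comes directly from the $q_-$-dependence). The bound~\eqref{dboundr}--\eqref{dboundl} is then an immediate consequence: if $\lambda\notin\bigcup_j D_j(\epsilon)$, the value $1/t_+(\zeta_\tf;\lambda,\epsilon)$ cannot lie in $\B_\delta$, so $|t_+|<\delta^{1/4}$. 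Your scheme can be repaired by inserting this parameter $q_-$ into the root equation $L_\delta\tanh(w)=w$ from the start; without it the third step does not close.
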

\begin{proof}
We start by tracking the solution $t_+(\zeta;\lambda,\epsilon)$ to~\eqref{Riceq1} backward from $0$ to $\zeta_\tf(\epsilon)$. As in~\cite{GS14}, the polar coordinate representation
\begin{align} \frac{\lambda}{m(\epsilon)^2} (1 - \ri \alpha) + \mu\left(\epsilon\right) =: N(\lambda,\epsilon)^2 \re^{\ri \varphi(\lambda,\epsilon)}, \label{unfolding}\end{align}
with $\varphi(\lambda,\epsilon) \in [0,2\pi)$ gives a Riemann-surface unfolding about the branch point in~\eqref{Riceq1} yielding
\begin{align} t_\zeta = -t^2 + N^2e^{\ri \varphi}\label{riceq11},\end{align}
where we suppress dependency on $\lambda$ and $\epsilon$ in~\eqref{riceq11}. Consequently, we have
\begin{align*} \hat{t}_+(\lambda,\epsilon) = \tilde{t}_+(M(\epsilon),N(\lambda,\epsilon),\varphi(\lambda,\epsilon)),\end{align*}
with
\begin{align*} \tilde{t}_+(M,N,\varphi) := \sqrt{\frac{2}{\tilde{m}(M)^2}(1-\ri\alpha) + N^2e^{\ri\varphi}}.\end{align*}
As in~\cite{GS14}, we scale $\xi = N\zeta$ in~\eqref{riceq11} and shift equilibria in~\eqref{riceq11} using the M\"obius transformation
\begin{align*}
\rho = \frac{t + N\eta}{t - N\eta}, \qquad \eta := \re^{\ri \frac{\varphi}{2}}.
\end{align*}
We set $r = \log(\rho)$ and, once again suppressing dependence on $\lambda$ and $\epsilon$,  find
\begin{align*}
r_{\xi} = 2\eta.
\end{align*}

Now we set $\xi_\tf(\lambda,\epsilon) := N(\lambda,\epsilon)\zeta_\tf(\epsilon)$, and define the closed disk
\begin{align*} \B_\delta := \left\{\lambda \in \C \colon |\lambda| \leq \delta^{-1/4}\right\} \end{align*}
at the origin in $\C$ (the choice of radius $\delta^{-1/4}$ becomes apparent later). Our goal is to solve the equation
\begin{align}
q_- = \frac{1}{t_+(\xi_\tf(\lambda,\epsilon);\lambda,\epsilon)},
\label{matching1}\end{align}
with respect to $\lambda \in R_1(\Theta_1)$ for $0 < \epsilon_1 \ll \Theta_1 \ll \delta \ll 1$, $0 \leq |\epsilon_2| \ll \Theta_1 \ll \delta \ll 1$ for $q_-$ in the compact set $\B_\delta$.

In the new coordinates, the endpoints of the resulting boundary value problem take the form
\begin{align*} r_+ = 2\pi \ri j_+ + \log\left(\frac{1 + \frac{N\eta}{\tilde{t}_+(M,N,\varphi)}}{1-\frac{N\eta}{\tilde{t}_+(M,N,\varphi)}}\right), \qquad r_- = 2\pi \ri j_- + \log\left(\frac{1 + N\eta q_-}{1 - N\eta q_-}\right),\end{align*}
for certain $j_\pm \in \Z$. Thus, after integration we obtain the equation
\begin{align}
-\xi_\tf = \frac{r_+ - r_-}{2\eta} = \re^{-\ri \frac{\varphi}{2}} \pi \ri j + \frac{e^{-\ri \frac{\varphi}{2}}}{2} \log\left(\frac{1 + \frac{Ne^{\ri \frac{\varphi}{2}}}{\tilde{t}_+(M,N,\varphi)}}{1-\frac{Ne^{\ri \frac{\varphi}{2}}}{\tilde{t}_+(M,N,\varphi)}}\right) - \frac{e^{-\ri \frac{\varphi}{2}}}{2} \log\left(\frac{1 + \re^{\ri \frac{\varphi}{2}} N q_-}{1-e^{\ri \frac{\varphi}{2}} N q_-}\right), \label{IFT1}
\end{align}
where we suppress dependence on $\lambda$ and $\epsilon$ and we denote $j = j_+ - j_- \in \Z$. We wish to solve~\eqref{IFT1} with respect to $N$ and $\varphi$ in terms of $M$ for $(N,M)$ in a neighborhood of $(0,0)$ in $\R^2$. Setting $N, M = 0$ in~\eqref{IFT1} and taking imaginary parts implies $\varphi = \pi$. Thus, by the implicit function theorem, we can only solve~\eqref{IFT1} for $(N,M)$ in a neighborhood of $(0,0)$ if $\varphi$ is close to $\pi$.

In case $j = 0$, we observe from~\eqref{IFT1} that $N(\lambda,\epsilon)^{-1} \xi_\tf(\lambda,\epsilon) = \zeta_\tf(\epsilon)$ is bounded as $\|\epsilon\| \to 0$, which contradicts~\eqref{triggerbound} in Theorem~\ref{t:ex02}. In addition, in case $j < 0$, we observe from~\eqref{IFT1} that $\xi_\tf > 0$ for $(N,M,\varphi)$ close to $(0,0,\pi)$, which contradicts~\eqref{triggerbound} again. Hence, it must hold $j \in \Z_{> 0}$.

Dividing~\eqref{IFT1} by $j \neq 0$, introducing $\jmath = 1/j$ and taking imaginary parts leads to the equation
\begin{align*}
\begin{split}
0 = \cos\left(\frac{\varphi}{2}\right) \pi + \frac{\jmath}{2} \Im\left[e^{-\ri \frac{\varphi}{2}} \log\left(\frac{1 + \frac{Ne^{\ri \frac{\varphi}{2}}}{\tilde{t}_+(M,N,\varphi)}}{1-\frac{Ne^{\ri \frac{\varphi}{2}}}{\tilde{t}_+(M,N,\varphi)}}\right) - \re^{-\ri \frac{\varphi}{2}}\log\left(\frac{1 + \re^{\ri \frac{\varphi}{2}}N q_-}{1-e^{\ri \frac{\varphi}{2}} N q_-}\right)\right],\end{split}
\end{align*}
which can, by the implicit function theorem and~\eqref{expex1}, in a neighborhood of $(N,M,\varphi) = (0,0,\pi)$ and for $(q_-,\jmath)$ in the compact set $\B_\delta \times B_1$, be solved for $\varphi$ in terms of $M, N, q_-$ and $\jmath$. This yields a unique smooth solution $\varphi(M,N;q_-,\jmath)$ defined in $U_2 \times \B_\delta \times B_1$, where $U_2$ is a neighborhood of $(0,0)$ in $\R^2$ and $B_1$ is the closed unit disk at the origin in $\C$. Subsequently expanding $\varphi$, we obtain
\begin{align} \left|\varphi(M,N;q_-,\jmath) - \pi - \frac{\jmath}{\pi} \Im\left(\frac{1}{\sqrt{2(1-\ri\alpha)}} - q_-\right) N\right| \leq C_{\delta}
|\jmath| N^2, \label{expex2}\end{align}
for $(M,N,q_-,\jmath) \in U_2 \times \B_\delta \times B_1$, where $C_\delta > 0$ is a $M$-, $N$-, $q_-$ and $\jmath$-independent constant that might be dependent on $\delta$.

Since the matching time $\hat\zeta$ must satisfy $M\xi_\tf - N \hat{\zeta}(M) = 0$, we subsequently fix $j \in \Z_{>0}$ and substitute the real part of equation~\eqref{IFT1} in for $\xi_\tf$ we find
\begin{align}
-N\hat{\zeta}(M) = M\sin\left(\frac{\varphi}{2}\right) \pi j + \frac{M}{2} \Re\left[e^{-\ri \frac{\varphi}{2}} \log\left(\frac{1 + \frac{Ne^{\ri \frac{\varphi}{2}}}{\tilde{t}_+(M,N,\varphi)}}{1-\frac{Ne^{\ri \frac{\varphi}{2}}}{\tilde{t}_+(M,N,\varphi)}}\right) - \re^{-\ri \frac{\varphi}{2}}\log\left(\frac{1 + \re^{\ri \frac{\varphi}{2}}N q_-}{1-e^{\ri \frac{\varphi}{2}} N q_-}\right)\right], \label{IFT2}
\end{align}
where we suppress the argument of $\varphi(M,N;q_-,\frac{1}{j})$. Equation~\eqref{IFT2} can, by the implicit function theorem,~\eqref{expex1} and~\eqref{expex2}, in a neighborhood of $(N,M) = (0,0)$ and for $q_-$ in the compact set $\B_\delta$, be solved for $N$ in terms of $M$ and $q_-$, yielding a unique smooth solution $N_j(M;q_-)$ defined in $U_1 \times \B_\delta$, where $U_1$ is a neighborhood of $0$ in $\R$. Subsequently expanding $N_j$ for fixed $j \in \Z_{>0}$, we obtain
\begin{align*} \left|N_j(M;q_-) - Mj - \frac{jM^2}{\pi} \left(\frac{1}{\hat{z}_*(\zeta_\delta)} - \Re(q_-)\right)\right| \leq C_{\delta,j} M^3,\end{align*}
yielding by~\eqref{deltaineq}:
\begin{align} \left|N_j(M;q_-) - Mj\right| \leq C \delta^{-1/4} |j| M^2 + C_{\delta,j} M^3, \label{expex6}\end{align}
for $(M,q_-) \in U_1 \times \B_\delta$, where $C_{\delta,j} > 0$ is a $M$- and $q_-$-independent constant that might be dependent on $\delta$ and $j$ and $C > 0$ is a $M$-, $j$-, $\delta$- and $q_-$-independent constant.

To establish a more $j$-uniform bound on $N_j$, see Remark~\ref{junifrem}, we use~\eqref{expex1},~\eqref{expex2},~\eqref{IFT2} and the fact that $N_j(M;q_-)$ is a priori small by~\eqref{unfolding} for $\lambda \in R_1(\Theta_1)$, $0 < \epsilon_1 \ll \Theta_1 \ll \delta \ll 1$ and $0 \leq |\epsilon_2| \ll \Theta_1 \ll \delta \ll 1$. We approximate $N_j(M;q_-)$ as follows
\begin{align*}
|N_j(M;q_-) - Mj| &\leq \frac{1}{\pi}\left(\left|\pi + \hat{\zeta}(M)\right|N_j(M;q_-) + \left|M\pi j + \hat{\zeta}(M)N_j(M;q_-)\right|\right)\\
&\leq C_\delta N_j(M;q_-) M \leq C_\delta\left(M |N_j(M;q_-) - Mj| + |j|M^2\right),
\end{align*}
yielding
\begin{align}
 |N_j(M;q_-) - Mj| \leq C_\delta |j| M^2, \label{expex3}
\end{align}
for $(M,q_-) \in U_1 \times \B_\delta$ and $j \in \Z_{>0}$, where $C_\delta > 0$ is a $M$-, $q_-$- and $j$-independent constant that might be dependent on $\delta$.

Substituting the obtained expressions for $N_j(M;q_-)$ and $\varphi(M,N_j(M;q_-);q_-,\frac{1}{j})$ into~\eqref{unfolding} yields by~\eqref{expex4},~\eqref{expex5},~\eqref{expex1},~\eqref{expex2},~\eqref{expex6} and~\eqref{expex3} that, given $(M,q_-) \in U_1 \times \B_\delta$, all solutions $\lambda \in R_1(\Theta_1)$ to the equation~\eqref{matching1} lie in one of the disks $D_j(\epsilon)$, $j \in \Z_{>0}$, with center
\begin{align*} \lambda_j := -\frac{\left(1-j^2\right) \epsilon_1^2 (1 + \ri \alpha)}{1+\alpha^2},\end{align*}
and radius $r_j$ satisfying~\eqref{radiusb}, provided $0 < \epsilon_1 \ll \Theta_1 \ll \delta \ll 1$ and $0 \leq |\epsilon_2| \ll \Theta_1 \ll \delta \ll 1$. In addition, given $(M,q_-) \in U_1 \times \B_\delta$, each disk $D_j(\epsilon) \cap R_1, j \in \Z_{>0}$ contains precisely one solution $\lambda \in \C$ to~\eqref{matching1}. Hence, the meromorphic function $\lambda \mapsto t_+(\zeta_\tf(\epsilon);\lambda,\epsilon)^{-1}$ admits, for each $j \in \Z_{>0}$, a unique zero $\lambda_{j,\epsilon}$ in $D_j(\epsilon)$ and is injective and analytic in a neighborhood of $\lambda_{j,\epsilon}$. Thus, since injective holomorphic functions have non-vanishing derivatives, $\lambda_{j,\epsilon}$ must be a simple zero of $t_+(\zeta_\tf(\epsilon);\,\cdot\,,\epsilon)^{-1}$ or, equivalently, a simple pole of $t_+(\zeta_\tf(\epsilon);\cdot,\epsilon)$. So, for each $j \in \Z_{>0}$, the number of poles of $t_+(\zeta_\tf(\epsilon);\,\cdot\,,\epsilon)$ in $D_j(\epsilon)$ is one (including multiplicity).

Similarly, by tracking the solution $s_+(\zeta;\lambda,\epsilon)$ with initial condition $\hat{s}_+(\lambda,\epsilon)$ backward in~\eqref{Riceq2} from $0$ to $\zeta_\tf(\epsilon)$, we obtain that all solutions $\lambda \in R_1(\Theta_1)$ to
\begin{align}q_- = \frac{1}{s_+(\xi_\tf(\lambda,\epsilon);\lambda,\epsilon)},\label{matching}\end{align}
for $q_- \in \B_\delta$, lie in one of the disks $D_j(\epsilon)$, $j \in \Z_{< 0}$ with center
\begin{align*} \lambda_j := -\frac{\left(1-j^2\right) \epsilon_1^2 (1 - \ri \alpha)}{1+\alpha^2},\end{align*}
and radius $r_j$ satisfying~\eqref{radiusb}. Given $(M,q_-) \in U_1 \times \B_\delta$, each disk $D_j(\epsilon) \cap R_1, j \in \Z_{<0}$ contains precisely one solution $\lambda \in \C$ to~\eqref{matching} and, thus, contains precisely one pole of $s_+(\zeta_\tf(\epsilon);\cdot,\epsilon)$, which is simple.

All in all, we obtain~\eqref{dboundr}, provided $0 < \epsilon_1 \ll \Theta_1 \ll \delta \ll 1$ and $0 \leq |\epsilon_2| \ll \Theta_1 \ll \delta \ll 1$.

Using an analogous procedure, one tracks the solution $T_{d,-}(\zeta;\lambda,\epsilon)$ forward in~\eqref{matrixric} from $\zeta_\tf(\epsilon)$ to $0$, where we use~\eqref{deltaineq} from Proposition~\ref{prop:left}. We obtain that all solutions $\lambda \in R_1(\Theta_1)$ to
\begin{align} q_+ = \frac{1}{t_-(0;\lambda,\epsilon)},\label{matching3}\end{align}
for some $q_+ \in \B_\delta$, lie in one of the disks $D_j(\epsilon)$, $j \in \Z_{>0}$. Given $(M,q_+) \in U_1 \times \B_\delta$, each disk $D_j(\epsilon) \cap R_1, j \in \Z_{>0}$ contains precisely one solution $\lambda \in \C$ to~\eqref{matching3} and, thus, contains precisely one pole of $t_-(0;\,\cdot\,,\epsilon)$, which is simple. In addition, all solutions $\lambda \in R_1(\Theta_1)$ to
\begin{align} q_+ = \frac{1}{s_-(0;\lambda,\epsilon)},\label{matching4}\end{align}
for some $q_+ \in \B_\delta$, lie in one of the disks $D_j(\epsilon)$, $j \in \Z_{<0}$. Given $(M,q_+) \in U_1 \times \B_\delta$, each disk $D_j(\epsilon) \cap R_1, j \in \Z_{<0}$ contains precisely one solution $\lambda \in \C$ to~\eqref{matching4} and, thus, contains precisely one pole of $s_-(0;\,\cdot\,,\epsilon)$, which is simple. Hence, we obtain~\eqref{dboundl}, provided $0 < \epsilon_1 \ll \Theta_1 \ll \delta \ll 1$ and $0 \leq |\epsilon_2| \ll \Theta_1 \ll \delta \ll 1$.
\end{proof}

\begin{remark} \label{junifrem}
{\rm 
It follows from the Riemann surface unfolding and M\"obius transformation in the proof of Proposition~\ref{prop:plateau1} that the parameter $j$ measures the winding number of solutions to the scalar Riccati equation~\eqref{Riceq1} around its fixed points. The estimate $|r_j| \leq C_\delta j^2 M(\epsilon)^3$ on the radius in Proposition~\ref{prop:plateau1} is necessary to bound the disks $D_j(\epsilon)$ uniformly in $j \in \Z \setminus \{0,\pm1\}$ away from the closed right-half plane. Indeed, provided $0 < \epsilon_1 \ll \delta \ll 1$ and $0 \leq |\epsilon_2| \ll \delta \ll 1$, it holds $|\Re \, \lambda_j| > C_\delta j^2 M(\epsilon)^3$, no matter the size of $j \in \Z \setminus \{0,\pm1\}$. We emphasize that such a bound was not necessary in the existence analysis in~\cite{GS14}.}\end{remark}

Using Proposition~\ref{prop:plateau1} we aim to approximate the solutions $\tilde{T}_\pm(\zeta;\lambda,\epsilon)$ to the matrix Riccati equation~\eqref{matrixric} with initial conditions $\tilde{T}_+(0;\lambda,\epsilon) = \hat{T}_+(0;\lambda,\epsilon)$ and $\tilde{T}_-(\zeta_\tf(\epsilon);\lambda,\epsilon) = \hat{T}_+(\zeta_\tf(\epsilon);\lambda,\epsilon)$ by the diagonal solutions $T_{d,\pm}(\zeta;\lambda,\epsilon)$. We do so by applying a superposition principle to~\eqref{matrixric}. Given two known solutions to the matrix Riccati equation~\eqref{matrixric}, the superposition principle reduces finding a third to solving a linear system in $\C^{2 \times 2}$. We refer to~\cite{HARN} for more theoretical background. It is readily seen that~\eqref{matrixric} has two fixed points, which correspond to square roots of the diagonal matrix $\A_-(\lambda,\epsilon)$. We use these two fixed point solutions as input for the superposition principle, which leads to the following result.

\begin{proposition} \label{prop:plateau2}
There exist constants $C > 1$ and $\eta > 0$ such that, provided $0 < \epsilon_1 \ll \Theta_1 \ll \delta \ll 1$ and $0 \leq |\epsilon_2| \ll \Theta_1 \ll \delta \ll 1$, the solutions $\tilde{T}_\pm(\zeta;\lambda,\epsilon)$ to the matrix Riccati equation~\eqref{matrixric} with initial conditions $\tilde{T}_+(0;\lambda,\epsilon) = \hat{T}_+(0;\lambda,\epsilon)$ and $\tilde{T}_-(\zeta_\tf(\epsilon);\lambda,\epsilon) = \hat{T}_+(\zeta_\tf(\epsilon);\lambda,\epsilon)$ enjoy the estimates
\begin{align}
\begin{split}
\left\|\tilde{T}_+(\zeta_\tf(\epsilon);\lambda,\epsilon) - T_{d,+}(\zeta_\tf(\epsilon);\lambda,\epsilon)\right\| &\leq C\re^{-\eta \epsilon_1},\\
\left\|\tilde{T}_-(0;\lambda,\epsilon) - T_{d,-}(0;\lambda,\epsilon)\right\| &\leq C\delta^{1/4},\\
\left\|\tilde{T}_+(\zeta_\tf(\epsilon);\lambda,\epsilon)\right\|,\left\|\tilde{T}_-(0;\lambda,\epsilon)\right\|  &\leq C\delta^{1/4},
\end{split}
\qquad \text{for } \lambda \in R_1(\Theta_1) \setminus \bigcup_{j \in \Z \setminus \{0\}} D_j(\epsilon), \label{tboundr}\end{align}
where $T_{d,\pm}(\zeta;\lambda,\epsilon)$ and $D_j(\epsilon)$ are as in Proposition~\ref{prop:plateau1}.
\end{proposition}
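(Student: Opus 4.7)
The plan is to exploit the block structure of the matrix Riccati equation~\eqref{matrixric} induced by the fact that $\D(\epsilon)$ and $\A_-(\lambda,\epsilon)$ are diagonal. Because of this diagonality, the subspace of diagonal matrices in $\C^{2\times 2}$ is invariant under the flow of~\eqref{matrixric}, and the solutions $T_{d,\pm}$ controlled in Proposition~\ref{prop:plateau1} live in this invariant subspace. Moreover, the two diagonal fixed points $\pm F(\lambda,\epsilon)$ of~\eqref{matrixric}, whose diagonal entries are principal square roots of those of $\A_-(\lambda,\epsilon)$, provide the pair of ``known'' solutions which, via a matrix Riccati superposition principle \`a la Harnad, linearize the perturbation equation about $T_{d,\pm}$.

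Concretely, I would decompose any solution $T$ to~\eqref{matrixric} as $T = T^d + T^o$ with $T^d$ diagonal and $T^o$ off-diagonal. For $2\times 2$ matrices one has the key algebraic identities: $(T^o)^2$ is diagonal, $T^d T^o + T^o T^d$ is off-diagonal, and $[\D,T^d]=0$. Substituting into~\eqref{matrixric} therefore block-decouples the equation into
\begin{align*}
(T^d)_\zeta &= -(T^d)^2 - (T^o)^2 + \A_-(\lambda,\epsilon),\\
(T^o)_\zeta &= -T^d T^o - T^o T^d + \D T^o - T^o \D.
\end{align*}
Writing $T^d = \mathrm{diag}(t,s)$ and $T^o = \bigl(\begin{smallmatrix} 0 & a \\ b & 0\end{smallmatrix}\bigr)$, the off-diagonal block collapses to the decoupled scalar linear ODEs $a_\zeta = -(t+s+2\ri\hat{k}_\tf)a$ and $b_\zeta = -(t+s-2\ri\hat{k}_\tf)b$, while the diagonal block is the pair of scalar Riccati equations~\eqref{Riceq1}--\eqref{Riceq2} with a quadratic feedback $-(T^o)^2 = -ab\, I_2$.

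I would then set $\tilde T_\pm = T_{d,\pm} + P_\pm^d + P_\pm^o$ and extract coupled equations for the diagonal and off-diagonal perturbations from the identity above. For $\tilde T_-$, Lemma~\ref{lem:approx} (estimate~\eqref{approxricl}) supplies $\|P_-(\zeta_\tf(\epsilon))\| \leq C|\log \delta|^{-2}$; for $\tilde T_+$, estimate~\eqref{approxricr} supplies the exponentially small initial datum $\|P_+(0)\| \leq C\re^{-\eta\epsilon_1}$. On the plateau $[\zeta_\tf(\epsilon),0]$ and for $\lambda \in R_1(\Theta_1)\setminus\bigcup_j D_j(\epsilon)$, Proposition~\ref{prop:plateau1} gives uniform bounds $\|T_{d,\pm}(\zeta;\lambda,\epsilon)\|\leq \delta^{1/4}$. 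The scalar linear flow then yields $|a(\zeta)| = |a(y)|\exp\!\bigl(-\!\int_y^\zeta\!\Re(t+s)(y')\,\de y'\bigr)$, with an analogous identity for $b$; a Gronwall iteration in the coupled system then closes the claimed estimates~\eqref{tboundr}, the factor $\delta^{1/4}$ on the right absorbing both the accumulated off-diagonal growth and the quadratic backreaction into the diagonal block. The symmetric backward argument handles $\tilde T_+$, with the exponentially small initial perturbation $\re^{-\eta\epsilon_1}$ surviving the Gronwall propagation because the effective time variable is then measured from $\zeta=0$ to $\zeta_\tf(\epsilon)$, a reversal that trades logarithmic-in-$\delta$ weights for the exponential-in-$\epsilon_1$ decay.

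The main obstacle is showing that the signed integral $\int_{\zeta_\tf(\epsilon)}^0 \Re\bigl(t(y;\lambda,\epsilon)+s(y;\lambda,\epsilon)\bigr)\de y$ is uniformly $\mathcal{O}(1)$, rather than $\mathcal{O}(\delta^{1/4}/\epsilon_1)$ as a na\"ive pointwise bound would suggest. This is the point at which the exclusion of the disks $D_j(\epsilon)$ enters crucially: the Riemann-surface unfolding~\eqref{unfolding} and the M\"obius transformation $\rho = (t+N\eta)/(t-N\eta)$ already employed in the proof of Proposition~\ref{prop:plateau1} linearize the scalar Riccati flow to a pure translation $r_\xi = 2\eta$, which exhibits the oscillatory cancellation in $\Re(t+s)$ over the plateau explicitly; the resonant failure of this cancellation occurs precisely at the poles enumerated by $j\in\Z\setminus\{0\}$, which are, by construction, the centers of the disks $D_j(\epsilon)$ that have been excised. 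Once this integrated oscillatory bound is secured, the remainder of the argument is a standard closed Gronwall loop between the diagonal and off-diagonal blocks, and both estimates in~\eqref{tboundr} follow.
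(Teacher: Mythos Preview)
Your block decomposition is correct and the observation that the off-diagonal entries satisfy scalar linear ODEs is useful, but the argument as written has a real gap. You claim that Proposition~\ref{prop:plateau1} gives uniform bounds $\|T_{d,\pm}(\zeta;\lambda,\epsilon)\|\leq \delta^{1/4}$ along the whole plateau $[\zeta_\tf(\epsilon),0]$; it does not. That proposition only bounds $T_{d,+}$ at the single point $\zeta_\tf(\epsilon)$ and $T_{d,-}$ at the single point $0$. In fact, for fixed $\lambda$ outside the disks, the scalar Riccati solutions $t_\pm,s_\pm$ can still leave the chart (blow up) at intermediate $\zeta$-values: writing $t=u'/u$ with $u''=\varpi u$, intermediate zeros of $u$ are not precluded by the endpoint condition. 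Your Gr\"onwall loop for the diagonal perturbation $P^d$ linearizes around $T_{d,\pm}$ and therefore needs exactly the pointwise control you do not have. The integrated bound $\int_{\zeta_\tf}^0 \Re(t+s)\,d\zeta = \log\bigl|u_t(0)u_s(0)\bigr|$ is indeed $\mathcal{O}(1)$ outside the disks (this is essentially $\log|\det\hat X_0(0;\lambda,\epsilon)|$, compare Lemma~\ref{lemma:windingnumber2}), but you only gesture at a proof, and even granting it, the coupled closure between diagonal and off-diagonal blocks still requires pointwise control of $t_{d,\pm},s_{d,\pm}$ that is unavailable.

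The paper's proof takes a genuinely different route that sidesteps all of this. It uses the two constant fixed points $\pm T_0(\lambda,\epsilon)$ (principal square roots of $\A_-$) together with $T_{d,\pm}$ and $\tilde T_\pm$ to form the Harnad cross-ratio
\[
\W_\pm = (\tilde T_\pm \mp T_0)^{-1}(T_{d,\pm}-\tilde T_\pm)(T_{d,\pm}\mp T_0)^{-1},
\]
which satisfies the \emph{linear constant-coefficient} equation $\W_\zeta = (\D\pm T_0)\W - \W(\D\mp T_0)$. Because $\Re(T_0)\geq 0$ and $\D$ is purely imaginary, the explicit solution $\W_+(\zeta)=e^{(T_0+\D)\zeta}\W_+(0)e^{(T_0-\D)\zeta}$ is bounded by $\|\W_+(0)\|$ for $\zeta\leq 0$, and analogously for $\W_-$ going forward. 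Converting back to $\tilde T_\pm - T_{d,\pm}$ then uses \emph{only} the endpoint bounds~\eqref{dboundr},~\eqref{dboundl} and $\|T_0\|\leq C\sqrt{\|\epsilon\|+|\lambda|}$. No pointwise control along the plateau, no integrated oscillatory estimate, and no coupled Gr\"onwall are needed. You mention the superposition principle in your opening paragraph but then abandon it for the direct decomposition; the paper actually carries it through, and that is precisely what makes the proof close.
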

\begin{proof}
In this proof we denote by $C > 1$ any constant which is independent of $\epsilon$, $\lambda$, $\delta$ and $\Theta_1$.

Let $T_0(\lambda,\epsilon)$ be the matrix obtained by taking entry-wise principal square roots in the diagonal matrix $\A_-(\lambda,\epsilon)$. Then it holds $T_0(\lambda,\epsilon)^2 = \A_-(\lambda,\epsilon)$ and the entries of the diagonal matrix $T_0(\lambda,\epsilon)$ have nonnegative real parts. Consequently, $\pm T_0(\lambda,\epsilon)$ are fixed point solutions to~\eqref{matrixric}, which, by~\eqref{limmum}, satisfy
\begin{align} \|T_0(\lambda,\epsilon)\| \leq C\sqrt{\|\epsilon\| + |\lambda|}, \qquad \lambda \in R_1(\Theta_1). \label{T0bound}\end{align}

Following the superposition principle in~\cite[\S{III.B}]{HARN}, we consider the ratios
\begin{align*} \W_\pm(\zeta;\lambda,\epsilon) = \left(\tilde{T}_\pm(\zeta) \mp T_0\right)^{-1} \left(T_{d,\pm}(\zeta) - \tilde{T}_+(\zeta)\right) \left(T_{d,\pm}(\zeta) \mp T_0\right)^{-1} \in \C^{2 \times 2},\end{align*}
where we suppress $\lambda$- and $\epsilon$-dependency on the right hand side.

Using Lemma~\ref{lem:approx}, one finds $\det(\tilde{T}_+(0;\lambda,\epsilon) - T_0(\lambda,\epsilon)) \neq 0$, $\det(T_{d,+}(0;\lambda,\epsilon) - T_0(\lambda,\epsilon)) \neq 0$ and, more specifically,
\begin{align} \left\|\W_+(0;\lambda,\epsilon)\right\| \leq C\re^{-\eta \epsilon_1}, \qquad \lambda \in R_1(\Theta_1),\label{Wrbound}\end{align}
provided $0 < \epsilon_1 \ll \Theta_1 \ll \delta \ll 1$ and $0 \leq |\epsilon_2| \ll \Theta_1 \ll \delta \ll 1$. Similarly, with the aid of Lemma~\ref{lem:approx} and estimate~\eqref{deltaineq} in Proposition~\ref{prop:left}, one establishes $0 \neq \det(\tilde{T}_-(\zeta_\tf(\epsilon);\lambda,\epsilon) - T_0(\lambda,\epsilon))$, $\det(T_{d,-}(\zeta_\tf(\epsilon)\lambda,\epsilon) - T_0(\lambda,\epsilon)) \neq 0$ and, more specifically,
\begin{align} \left\|\W_-(\zeta_\tf(\epsilon);\lambda,\epsilon)\right\| \leq C, \qquad \lambda \in R_1(\Theta_1), \label{Wlbound}\end{align}
provided $0 < \epsilon_1 \ll \Theta_1 \ll \delta \ll 1$ and $0 \leq |\epsilon_2| \ll \Theta_1 \ll \delta \ll 1$. Thus, the subspaces corresponding to $\tilde{T}_\pm(\zeta;\lambda,\epsilon)$, $T_{d,\pm}(\zeta;\lambda,\epsilon)$ and $\pm T_0(\lambda,\epsilon)$ in the linear system~\eqref{evprobp} are complementary at $\zeta = 0$ and $\zeta = \zeta_\tf(\epsilon)$, respectively, and therefore at \emph{all} $\zeta \in \R$. It follows $\det(\tilde{T}_\pm(\zeta) - T_0), \det(T_{d,\pm}(\zeta) - T_0) \neq 0$ for all $\zeta \in \R$ and the ratio $\W_\pm(\zeta;\lambda,\epsilon)$ is well-defined for each $\zeta \in \R$ and $\lambda \in R_1(\Theta_1)$, where the $\lambda$-meromorphic functions $\tilde{T}_\pm(\zeta;\lambda,\epsilon)$ and $T_{d,\pm}(\zeta;\lambda,\epsilon)$ have no poles.

Through direct verification, see also~\cite{HARN}, one observes that $\W_\pm(\zeta;\lambda,\epsilon)$ satisfies the linear matrix equation
\begin{align*}
 \W_\zeta = (\D \pm T_0) \W - \W (\D \mp T_0), \qquad W \in \C^{2 \times 2},
\end{align*}
where we suppress $\lambda$- and $\epsilon$-dependency. Consequently, it holds
\begin{align*} \W_+(\zeta;\lambda,\epsilon) &= \re^{\left(T_0(\lambda,\epsilon) + \D(\epsilon)\right)\zeta} \W_+(0;\lambda,\epsilon) \re^{\left(T_0(\lambda,\epsilon) - \D(\epsilon)\right)\zeta},\\
 \W_-(\zeta;\lambda,\epsilon) &= \re^{-\left(T_0(\lambda,\epsilon) - \D(\epsilon)\right)\left(\zeta-\zeta_\tf(\epsilon)\right)} \W_-(\zeta_\tf(\epsilon);\lambda,\epsilon) \re^{-\left(T_0(\lambda,\epsilon) + \D(\epsilon)\right)\left(\zeta-\zeta_\tf(\epsilon)\right)},
\end{align*}
for $\zeta \in [\zeta_\tf(\epsilon),0]$, yielding
\begin{align}
\begin{split}
\|\W_+(\zeta_\tf(\epsilon);\lambda,\epsilon)\| &\leq C\re^{-\eta \epsilon_1},\\
\|\W_-(0;\lambda,\epsilon)\| &\leq C,
\end{split}
\label{Wrbound2}\end{align}
by~\eqref{Wrbound},~\eqref{Wlbound} and the fact that all entries of the diagonal matrices $T_0(\lambda,\epsilon) \pm \D(\epsilon)$ have nonnegative real part.

By combining
\begin{align*}
\left\|\tilde{T}_\pm(\zeta;\lambda,\epsilon) - T_{d,\pm}(\zeta;\lambda,\epsilon)\right\| &= \left\|\left(\tilde{T}_\pm \mp T_0\right) \W_\pm \left(T_{d,\pm} \mp T_0\right)\right\|\\
&\leq \left(\left\|\tilde{T}_\pm - T_{d,\pm}\right\| + \left\|T_{d,\pm}\right\| + \left\|T_0\right\|\right) \left\|\W_\pm\right\| \left(\left\|T_{d,\pm}\right\| + \left\|T_0\right\|\right),
\end{align*}
where we suppressed the arguments on the right hand side, with the estimates~\eqref{dboundr} and~\eqref{dboundl} established in Proposition~\ref{prop:plateau1} and the approximations~\eqref{T0bound} and~\eqref{Wrbound2}, we obtain~\eqref{tboundr}, provided $0 < \epsilon_1 \ll \Theta_1 \ll \delta \ll 1$ and $0 \leq |\epsilon_2| \ll \Theta_1 \ll \delta \ll 1$.
\end{proof}

Recall that, under the coordinate chart $\mathfrak{c}$, the matrix $\hat{T}_+(\zeta;\lambda,\epsilon) \in \C^{2 \times 2}$ represents the relevant subspace $W_+(\zeta;\lambda,\epsilon) \in \mathrm{Gr}(2,\C^4)$ of solutions to~\eqref{evprob2}, which was defined by~\eqref{subspaces}. Having established the bound~\eqref{tboundr} in Proposition~\ref{prop:plateau2}, we are now in the position to track $\hat{T}_+(\zeta;\lambda,\epsilon)$ along the plateau state back from $\zeta = 0$ to $\zeta = \zeta_\tf(\epsilon)$. We employ the variation of constant formula and exploit that system~\eqref{evprob2} is, by Proposition~\ref{prop:right}, converging at an $\epsilon$- and $\lambda$-independent exponential rate to~\eqref{evprobp} when propagating forward along the plateau, whereas the evolution of~\eqref{evprobp} can only grow with an exponential rate of order $\ord(\sqrt{\|\epsilon\| + |\lambda|})$ by Lemma~\ref{lem:evolbound}.

\begin{proposition} \label{prop:plateau3}
There exists a constant $C > 1$ such that, provided $0 < \epsilon_1 \ll \Theta_1 \ll \delta \ll 1$ and $0 \leq |\epsilon_2| \ll \Theta_1 \ll \delta \ll 1$, it holds
\begin{align}
\begin{split}
\left\|\hat{T}_+(\zeta_\tf(\epsilon);\lambda,\epsilon) - \tilde{T}_+(\zeta_\tf(\epsilon);\lambda,\epsilon)\right\| &\leq C\delta, \\
\left\|\hat{T}_+(\zeta_\tf(\epsilon);\lambda,\epsilon)\right\| &\leq C\delta^{1/4},
\end{split} \qquad \lambda \in R_1(\Theta_1) \setminus \bigcup_{j \in \Z \setminus \{0\}} D_j(\epsilon),
\label{tboundr2}\end{align}
where $\tilde{T}_+(\zeta;\lambda,\epsilon)$ is as in Proposition~\ref{prop:plateau2} and $D_j(\epsilon)$ is as in Proposition~\ref{prop:plateau1}.
\end{proposition}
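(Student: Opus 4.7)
The plan is to apply the variation of constants formula to compare the linear systems~\eqref{evprob2} and~\eqref{evprobp} on $[\zeta_\tf(\epsilon),0]$, and then transfer the resulting basis-level estimate to the Riccati-chart representatives. The critical observation is that, on the plateau interval where $\chi \equiv 1$, the coefficient difference $A_*(\zeta;\lambda,\epsilon) - A_{*,-}(\lambda,\epsilon)$ sits entirely in the lower-left $2\times 2$ block and takes the form $R_\tf(\zeta;\epsilon)\,\widehat{N}(\zeta;\epsilon)$, with $\widehat N$ of $\epsilon$- and $\zeta$-uniformly bounded norm by Lemma~\ref{lem:beta}. In particular, the perturbation is independent of the Riccati variable $T$. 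Letting $\hat\Phi_+, \tilde\Phi_+ \colon \R \to \C^{4\times 2}$ denote the fundamental solution matrices of~\eqref{evprob2} and~\eqref{evprobp} with common initial data $\hat\Phi_+(0;\lambda,\epsilon) = \tilde\Phi_+(0;\lambda,\epsilon) = \bigl(\begin{smallmatrix} I_2 \\ \hat T_+(0;\lambda,\epsilon) \end{smallmatrix}\bigr)$ in accordance with~\eqref{gaugechoice}, Duhamel's formula reads
\begin{align*}
\hat\Phi_+(\zeta_\tf(\epsilon)) - \tilde\Phi_+(\zeta_\tf(\epsilon)) = \int_0^{\zeta_\tf(\epsilon)} \mathcal{T}_{*,p}(\zeta_\tf(\epsilon), y;\lambda,\epsilon) \, R_\tf(y;\epsilon)\, \widehat N(y;\epsilon) \, \hat\Phi_+(y;\lambda,\epsilon) \, dy.
\end{align*}

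The next step is to bound this integral by combining Lemma~\ref{lem:evolbound} for $\mathcal{T}_{*,p}$ with the exponential decay $R_\tf(y;\epsilon) \leq C\delta \exp(-\eta|y - \zeta_\tf(\epsilon)|)$ furnished by Proposition~\ref{prop:right}. The decisive point is the scale separation $\sqrt{\|\epsilon\|+|\lambda|} \ll \eta$, which holds uniformly on $R_1(\Theta_1)$ once $\Theta_1, \epsilon_1$ and $|\epsilon_2|$ are chosen small enough in comparison to the $\epsilon$- and $\lambda$-independent rate $\eta$. An a priori bound on $\|\hat\Phi_+(y;\lambda,\epsilon)\|$ is obtained by iterating Duhamel once more and absorbing the $\mathcal{O}(\delta)$-small integral $\int R_\tf$ via Gr\"onwall, after which the integrand above becomes dominated by $C\delta \exp((\sqrt{\|\epsilon\|+|\lambda|} - \eta)|y - \zeta_\tf(\epsilon)|)$, which is absolutely integrable with $L^1$-norm bounded by $C\delta$. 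This produces $\|\hat\Phi_+(\zeta_\tf(\epsilon)) - \tilde\Phi_+(\zeta_\tf(\epsilon))\| \leq C\delta$.

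Third, the basis-level estimate is transferred to the Riccati-chart representatives. Since $\|\tilde T_+(\zeta_\tf(\epsilon))\| \leq C\delta^{1/4}$ by Proposition~\ref{prop:plateau2}, the upper $2\times 2$ block of $\tilde\Phi_+(\zeta_\tf(\epsilon))$ is uniformly well-conditioned, so that the basis estimate passes directly through the formula $T = YX^{-1}$ on the chart $\mathfrak{c}$ to yield the first inequality in~\eqref{tboundr2}. The remaining inequality $\|\hat T_+(\zeta_\tf(\epsilon))\| \leq C\delta^{1/4}$ is then a triangle-inequality consequence, upon choosing $\delta$ small enough that $C\delta \leq \delta^{1/4}$.

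The principal technical challenge is controlling the plateau evolution $\mathcal{T}_{*,p}(\zeta_\tf(\epsilon), y;\lambda,\epsilon)$ uniformly over the long plateau $|\zeta_\tf(\epsilon)| \sim 1/\epsilon_1$ for $\lambda$ avoiding the disks $D_j(\epsilon)$: a naive application of Gr\"onwall produces an amplification factor $\exp(\sqrt{\|\epsilon\|+|\lambda|}\,|\zeta_\tf(\epsilon)|)$ that can be unbounded as $\epsilon \to 0$. This is salvaged by the concentration of $R_\tf$ within an $\mathcal{O}(\log(1/\|\epsilon\|))$-neighborhood of the interface $\zeta_\tf(\epsilon)$ (its tail being polynomially small in $\|\epsilon\|$), over which the plateau amplification factor $\exp(\sqrt{\|\epsilon\|+|\lambda|}\,\log(1/\|\epsilon\|))$ remains $\mathcal{O}(1)$, while the tail contribution is absorbed into the $C\delta$ bound thanks to the inequality $\sqrt{\|\epsilon\|+|\lambda|} < \eta$.
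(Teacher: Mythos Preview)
Your overall strategy---variation of constants against the plateau evolution, then transfer to the Riccati chart---matches the paper's, but there is a genuine gap in the control of $\hat\Phi_+(y)$ that your salvage does not repair.

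The Duhamel integral $\int_0^{\zeta_\tf} \mathcal{T}_{*,p}(\zeta_\tf, y)\, R_\tf(y)\, \widehat N(y)\, \hat\Phi_+(y)\,dy$ contains \emph{two} growing factors: $\mathcal{T}_{*,p}(\zeta_\tf, y)$, which grows as $y$ moves away from $\zeta_\tf$, and $\hat\Phi_+(y)$, which (being propagated from $\zeta = 0$) grows as $y$ moves away from $0$. Your Gr\"onwall step can at best yield $\|\hat\Phi_+(y)\| \lesssim (1+|y|)\,e^{\sqrt{\|\epsilon\|+|\lambda|}\,|y|}$, and since $|y| + |y-\zeta_\tf| = |\zeta_\tf|$ on the plateau, the two exponentials combine to the fixed factor $e^{\sqrt{\|\epsilon\|+|\lambda|}\,|\zeta_\tf|}$. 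For $|\lambda|$ of order $\Theta_1$ (which is $\epsilon$-independent) and $|\zeta_\tf|\sim 1/\epsilon_1$, this factor is unbounded---precisely the obstruction you flag. Your salvage only addresses the growth of $\mathcal{T}_{*,p}(\zeta_\tf,y)$ via the localization of $R_\tf$ near $\zeta_\tf$; it does not address the growth of $\hat\Phi_+(y)$, which is anchored at the \emph{opposite} end of the plateau. In particular, the claim that $\exp\bigl(\sqrt{\|\epsilon\|+|\lambda|}\,\log(1/\|\epsilon\|)\bigr)$ remains $\mathcal{O}(1)$ is false for $|\lambda|\sim\Theta_1$, since the exponent $\sqrt{\Theta_1}\,\log(1/\|\epsilon\|)\to\infty$.

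The paper repairs this by normalizing the reference basis at $\zeta_\tf$ rather than at $0$: it writes $\Y(\zeta) = \mathcal{T}_{*,p}(\zeta,\zeta_\tf)\bigl(\begin{smallmatrix}I_2\\ \tilde T_+(\zeta_\tf)\end{smallmatrix}\bigr) + \int_0^\zeta\mathcal{T}_{*,p}(\zeta,y)\,B_{*,p}(y)\,\Y(y)\,dy$ (which still spans $W_+$ because $\tilde T_+(0)=\hat T_+(0)$), and then passes to the conjugate $\hat{\Y}(\zeta) := \mathcal{T}_{*,p}(\zeta_\tf,\zeta)\,\Y(\zeta)$. The resulting integral operator has kernel $\mathcal{T}_{*,p}(\zeta_\tf,y)\,B_{*,p}(y)\,\mathcal{T}_{*,p}(y,\zeta_\tf)$, in which \emph{both} evolution factors are centered at $\zeta_\tf$; their combined growth $e^{2\sqrt{\|\epsilon\|+|\lambda|}\,|y-\zeta_\tf|}$ is beaten by the $e^{-\eta|y-\zeta_\tf|}$ decay of $R_\tf$, giving operator norm $\leq C\delta$ uniformly on $R_1(\Theta_1)$. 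This centering also makes the upper block of $\Y(\zeta_\tf)=\hat{\Y}(\zeta_\tf)$ close to $I_2$, which is what legitimizes the passage to the Riccati chart---your claim that the upper block of $\tilde\Phi_+(\zeta_\tf)$ is well-conditioned is separately unjustified, since $\|\tilde T_+(\zeta_\tf)\|\leq C\delta^{1/4}$ says nothing about $\|\tilde X(\zeta_\tf)^{-1}\|$.
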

\begin{proof} In this proof we denote by $C > 1$ any constant which is independent of $\epsilon$, $\lambda$, $\delta$ and $\Theta_1$. Fix $\lambda \in R_1(\Theta_1) \setminus \bigcup_{j \in \Z \setminus \{0\}} D_j(\epsilon)$.

Denote by
\begin{align*} B_{*,p}(\zeta;\epsilon) := A_*(\zeta;\lambda,\epsilon) - A_{*,-}(\lambda,\epsilon), \end{align*}
the difference of the coefficient matrices of~\eqref{evprob2} and~\eqref{evprobp}. By~\eqref{estex2} in Proposition~\ref{prop:right} it holds
\begin{align}
\left\|B_{*,p}(\zeta;\epsilon)\right\| \leq C\delta \re^{-\eta|\zeta - \zeta_\tf(\epsilon)|}, \qquad \zeta \in [\zeta_\tf(\epsilon),0]. \label{plateaubound}
\end{align}

Recall that $\mathcal{T}_{*,p}(\zeta,y;\lambda,\epsilon)$ denotes the evolution of the reduced system~\eqref{evprobp}. The subspace spanned by the matrix solution
\begin{align*} \mathcal{T}_{*,p}(\zeta,\zeta_\tf(\epsilon);\lambda,\epsilon)\begin{pmatrix} I_2 \\ \tilde{T}_+(\zeta_\tf(\epsilon);\lambda,\epsilon)\end{pmatrix} \in \C^{4 \times 2},\end{align*}
to~\eqref{evprobp} is, under the chart $\mathfrak{c}$, represented by the solution $\tilde{T}_+(\zeta;\lambda,\epsilon)$ to the corresponding matrix Riccati equation~\eqref{matrixric}. Thus, the variation of constants formula
\begin{align}
\begin{split}
\Y(\zeta;\lambda,\epsilon) &= \mathcal{T}_{*,p}(\zeta,\zeta_\tf(\epsilon);\lambda,\epsilon) \begin{pmatrix} I_2 \\ \tilde{T}_+(\zeta_\tf(\epsilon);\lambda,\epsilon)\end{pmatrix}\\
&\qquad + \int_0^\zeta
\mathcal{T}_{*,p}(\zeta,y;\lambda,\epsilon) B_{*,p}(y;\epsilon) \Y(y;\lambda,\epsilon) \mathrm dy,
\end{split} \label{varcon}
\end{align}
yields a matrix solution $\Y(\zeta;\lambda,\epsilon) \in \C^{4 \times 2}$ to~\eqref{evprob2}, which spans the subspace $W_+(\zeta;\lambda,\epsilon)$ for each $\zeta \in \R$. Indeed, $\Y(0;\lambda,\epsilon)$ spans the subspace which is represented by $\tilde{T}_+(0;\lambda,\epsilon) = \hat{T}_+(0;\lambda,\epsilon) \in \C^{2 \times 2}$ under the chart $\mathfrak{c}$.

By~\eqref{varcon}, we find that $\hat{\Y}(\zeta;\lambda,\epsilon) := \mathcal{T}_{*,p}(\zeta_\tf(\epsilon),\zeta;\lambda,\epsilon)\Y(\zeta;\lambda,\epsilon)$ satisfies
\begin{align}
\begin{split}
\hat{\Y}(\zeta;\lambda,\epsilon) &= \begin{pmatrix} I_2 \\ \tilde{T}_+(\zeta_\tf(\epsilon);\lambda,\epsilon)\end{pmatrix}\\
&\qquad + \int_0^\zeta
\mathcal{T}_{*,p}(\zeta_\tf(\epsilon),y;\lambda,\epsilon) B_{*,p}(y;\epsilon) \mathcal{T}_{*,p}(y,\zeta_\tf(\epsilon);\lambda,\epsilon) \hat{\Y}(y;\lambda,\epsilon) \mathrm dy.
\end{split}
\label{varcon2}
\end{align}
By~\eqref{plateaubound} and Lemma~\ref{lem:evolbound} the linear operator $\F_{\lambda,\epsilon}$ given by
\begin{align*} \left(\F_{\lambda,\epsilon} \hat{\Y}\right)(\zeta) = \int_0^\zeta \mathcal{T}_{*,p}(\zeta_\tf(\epsilon),y;\lambda,\epsilon) B_{*,p}(y;\epsilon) \mathcal{T}_{*,p}(y,\zeta_\tf(\epsilon);\lambda,\epsilon) \hat{\Y}(y) \de y,\end{align*}
on $C([\zeta_\tf(\epsilon),0],\C^{4 \times 2})$ has norm $\|\F_{\lambda,\epsilon}\| \leq C \delta$, provided $0 < \epsilon_1 \ll \Theta_1 \ll \delta \ll 1$ and $0 \leq |\epsilon_2| \ll \Theta_1 \ll \delta \ll 1$. Hence, $I - \F_{\lambda,\epsilon}$ is invertible on $C([\zeta_\tf(\epsilon),0],\C^{4 \times 2})$ and
\begin{align*}
 \hat{\Y}(\zeta;\lambda,\epsilon) = \left[(I - \F_{\lambda,\epsilon})^{-1} \begin{pmatrix} I_2 \\ \tilde{T}_+(\zeta_\tf(\epsilon);\lambda,\epsilon)\end{pmatrix} \right](\zeta),
\end{align*}
is the solution to~\eqref{varcon2}. Using Proposition~\ref{prop:plateau2}, we approximate
\begin{align*}
\left\|\hat{\Y}(\zeta;\lambda,\epsilon) - \begin{pmatrix} I_2 \\ \tilde{T}_+(\zeta_\tf(\epsilon);\lambda,\epsilon)\end{pmatrix}\right\| &\leq \|\F_{\lambda,\epsilon}\|\left\|(I - \F_{\lambda,\epsilon})^{-1}\right\| \left\|\begin{pmatrix} I_2 \\ \tilde{T}_+(\zeta_\tf(\epsilon);\lambda,\epsilon)\end{pmatrix}\right\| \leq C\delta.
\end{align*}
Hence, as $\hat{\Y}(\zeta_\tf(\epsilon);\lambda,\epsilon) = \Y(\zeta_\tf(\epsilon);\lambda,\epsilon)$, we establish
\begin{align} \left\|\Y(\zeta_\tf(\epsilon);\lambda,\epsilon) - \begin{pmatrix} I_2 \\ \tilde{T}_+(\zeta_\tf(\epsilon);\lambda,\epsilon)\end{pmatrix}\right\| \leq C\delta. \label{yboundr}\end{align}
Upon denoting $\Y(\zeta_\tf(\epsilon);\lambda,\epsilon) = (\Y_1(\lambda,\epsilon),\Y_2(\lambda,\epsilon))$, we approximate using Proposition~\ref{prop:plateau3} and estimate~\eqref{yboundr}:
\begin{align*} \left\|\hat{T}_+(\zeta_\tf(\epsilon);\lambda,\epsilon) - \tilde{T}_+(\zeta_\tf(\epsilon);\lambda,\epsilon)\right\|
 &= \left\|\Y_2 \Y_1^{-1} - \tilde{T}_+\right\| \leq \left\|\hat{T}_+\right\|\|I_2 - \Y_1\| + \|\Y_2-\tilde{T}_+\|\\
 &\leq \left(\left\|\hat{T}_+ - \tilde{T}_+\right\| + \|\tilde{T}_+\|\right) \|I_2 - \Y_1\| + \|\Y_2 - \tilde{T}_+\|\\
 &\leq C\left(\left\|\hat{T}_+ - \tilde{T}_+\right\| + \delta^{1/4}\right) \delta + C\delta,
\end{align*}
where we suppress the arguments on the right hand side. Hence, using Proposition~\ref{prop:plateau2}, we conclude~\eqref{tboundr2} holds, provided $0 < \epsilon_1 \ll \Theta_1 \ll \delta \ll 1$ and $0 \leq |\epsilon_2| \ll \Theta_1 \ll \delta \ll 1$.
\end{proof}

\begin{remark} \label{rem:forwardbackward}
{\rm 
We emphasize that a similar argument as in Proposition~\ref{prop:plateau3} to track the subspace $W_-(\zeta;\lambda,\epsilon)$ forward from $\zeta = \zeta_\tf(\epsilon)$ to $\zeta = 0$ fails, since the exponential growth of the evolution of~\eqref{evprobp} cannot be compensated as there is no exponential decay of system~\eqref{evprob2} towards system~\eqref{evprobp} when propagating backward along the plateau state. However, for the winding number argument in~\S\ref{sec:wind} on a contour $\Gamma$ enclosing the disk $D_1(\epsilon)$, we are able (and it is advantageous) to track $W_-(\zeta;\lambda,\epsilon)$ forward from $\zeta = \zeta_\tf(\epsilon)$ to $\zeta = 0$ by using that better bounds on the evolution of~\eqref{evprobp} hold for $\lambda \in \Gamma$.}
\end{remark}

\subsection{Reduction to disk \texorpdfstring{$D_1(\epsilon)$}{D1(epsilon)}} \label{sec:disk}

We prove that the alternative Riccati-Evans function $\tilde{\E}_\epsilon$, defined in~\S\ref{sec:invric}, has no roots in $R_1(\Theta_1) \setminus \bigcup_{j \in \Z \setminus \{0\}} D_j(\epsilon)$, where $D_j(\epsilon)$ are the disks defined in Proposition~\ref{prop:plateau1}.

\begin{theorem} \label{conclR1outD1}
Provided $0 < \epsilon_1 \ll \Theta_1 \ll \delta \ll 1$ and $0 \leq |\epsilon_2| \ll \Theta_1 \ll \delta \ll 1$, the operator $\hat{\El}_\tf$, posed on $\smash{L^2_{\hat{\kappa}_-,\hat{\kappa}_+}(\R,\C^2)}$, has no point spectrum in
$$R_1(\Theta_1) \setminus \bigcup_{j \in \Z \setminus \{0\}} D_j(\epsilon),$$
where the disks $D_j(\epsilon)$ are as in Proposition~\ref{prop:plateau1}.
\end{theorem}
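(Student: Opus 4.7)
The plan is to evaluate the alternative Riccati-Evans function $\tilde{\E}_\epsilon$ at the front interface $\zeta = \zeta_\tf(\epsilon)$ rather than at $\zeta = 0$, which is essential because, as pointed out in Remark~\ref{rem:forwardbackward}, the subspace $W_-(\zeta;\lambda,\epsilon)$ cannot be tracked forward across the plateau (the exponential growth of~\eqref{evprobp} has no exponential decay to compensate it), whereas the dual tracking of $W_+(\zeta;\lambda,\epsilon)$ backward is exactly what is achieved in Proposition~\ref{prop:plateau3}. By Proposition~\ref{prop:ric} and~\S\ref{sec:invric}, any point spectrum of $\hat{\El}_\tf$ in $R_1(\Theta_1) \setminus \tilde{\Es}_\epsilon$ corresponds to zeros of
\begin{align*}
\tilde{\E}_\epsilon(\lambda) = \det\bigl(\hat{T}_+(\zeta_\tf(\epsilon);\lambda,\epsilon) - \hat{T}_-(\zeta_\tf(\epsilon);\lambda,\epsilon)\bigr),
\end{align*}
so it suffices to show that $\tilde{\E}_\epsilon$ has no zeros (and moreover no poles) in the complement of the disks.

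First, I would quote Lemma~\ref{lem:approx} directly to conclude that
\begin{align*}
\hat{T}_-(\zeta_\tf(\epsilon);\lambda,\epsilon) = \hat{z}_*(\zeta_\delta) I_2 + \mathcal{O}\bigl(|\log\delta|^{-2}\bigr), \qquad \lambda \in R_1(\Theta_1),
\end{align*}
with $\hat z_*(\zeta_\delta) > 0$ satisfying the two-sided estimate $c|\log\delta|^{-1} \le \hat z_*(\zeta_\delta) \le C|\log\delta|^{-1}$ by~\eqref{deltaineq}. Second, I would combine the diagonal-tracking bound from Proposition~\ref{prop:plateau1}, the superposition bound from Proposition~\ref{prop:plateau2}, and the variation-of-constants bound from Proposition~\ref{prop:plateau3} to obtain
\begin{align*}
\bigl\|\hat{T}_+(\zeta_\tf(\epsilon);\lambda,\epsilon)\bigr\| \le C\delta^{1/4}, \qquad \lambda \in R_1(\Theta_1) \setminus \bigcup_{j \in \Z \setminus \{0\}} D_j(\epsilon).
\end{align*}
In particular, both matrices are well-defined in the chart $\mathfrak c$ on this region, i.e.\ the region lies outside $\tilde{\Es}_\epsilon$ so that $\tilde{\E}_\epsilon$ is holomorphic there.

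Subtracting, we then have
\begin{align*}
\hat{T}_-(\zeta_\tf(\epsilon);\lambda,\epsilon) - \hat{T}_+(\zeta_\tf(\epsilon);\lambda,\epsilon) = \hat{z}_*(\zeta_\delta) I_2 + R(\lambda,\epsilon), \qquad \|R(\lambda,\epsilon)\| \le C\bigl(|\log\delta|^{-2} + \delta^{1/4}\bigr),
\end{align*}
and taking determinants yields
\begin{align*}
|\tilde{\E}_\epsilon(\lambda)| \ge \hat{z}_*(\zeta_\delta)^2 - C\hat z_*(\zeta_\delta)\bigl(|\log\delta|^{-2} + \delta^{1/4}\bigr) - C\bigl(|\log\delta|^{-4} + \delta^{1/2}\bigr).
\end{align*}
The leading term is of order $|\log\delta|^{-2}$ while every correction is $o(|\log\delta|^{-2})$ as $\delta \searrow 0$, so $|\tilde{\E}_\epsilon(\lambda)| \ge c|\log\delta|^{-2} > 0$ uniformly in $\lambda \in R_1(\Theta_1) \setminus \bigcup_{j} D_j(\epsilon)$.

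The main technical content has already been absorbed by Propositions~\ref{prop:plateau1}--\ref{prop:plateau3}; the only subtle point at this stage is the comparison between the two scales appearing at the front interface, namely the $\delta$-dependent size $\hat{z}_*(\zeta_\delta) \sim |\log\delta|^{-1}$ of the left subspace (which is $\epsilon$-independent and remains bounded away from zero once $\delta$ is fixed small) versus the $\mathcal{O}(\delta^{1/4})$ size of the right subspace tracked back across the plateau. The key observation that makes the argument close is that $\delta^{1/4} \ll |\log\delta|^{-1}$ for small $\delta$, so the left contribution strictly dominates; this is precisely why choosing $0 < \epsilon_1 \ll \Theta_1 \ll \delta \ll 1$ (and not the opposite order) is necessary. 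Applying Proposition~\ref{prop:ric} then converts the nonvanishing of $\tilde{\E}_\epsilon$ into the absence of point spectrum and concludes the theorem.
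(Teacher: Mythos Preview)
Your proposal is correct and follows essentially the same route as the paper: evaluate the alternative Riccati-Evans function $\tilde{\E}_\epsilon$ at the front interface $\zeta_\tf(\epsilon)$, approximate $\hat{T}_-(\zeta_\tf(\epsilon);\lambda,\epsilon)$ via Lemma~\ref{lem:approx} and $\hat{T}_+(\zeta_\tf(\epsilon);\lambda,\epsilon)$ via Proposition~\ref{prop:plateau3}, and conclude that $\tilde{\E}_\epsilon(\lambda)$ is close to $\hat{z}_*(\zeta_\delta)^2$ with an error that is $o(|\log\delta|^{-2})$. The paper compresses this into the single estimate $|\tilde{\E}_\epsilon(\lambda) - \hat{z}_*(\zeta_\delta)^2| \le C|\log\delta|^{-3}$ (absorbing the $\delta^{1/4}$ contribution, which is much smaller), but the logic and the ingredients are identical to yours.
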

\begin{proof}
Let $\tilde{\E}_\epsilon$ be as in~\S\ref{sec:invric}. Provided $0 < \epsilon_1 \ll \Theta_1 \ll \delta \ll 1$ and $0 \leq |\epsilon_2| \ll \Theta_1 \ll \delta \ll 1$, we employ~\eqref{deltaineq} in Proposition~\ref{prop:left},~\eqref{approxricl} in Lemma~\ref{lem:approx} and~\eqref{tboundr2} in Proposition~\ref{prop:plateau3} to yield
\begin{align}
\left|\tilde{\E}_{\epsilon}(\lambda) - \hat{z}_*(\zeta_\delta)^2\right| \leq C|\log(\delta)|^{-3}, \qquad \text{for } \lambda \in R_1(\Theta_1) \setminus \bigcup_{j \in \Z \setminus \{0\}} D_j(\epsilon), \label{rouche}\end{align}
where $C > 1$ is an $\epsilon$-, $\lambda$- and $\delta$-independent constant. Using~\eqref{deltaineq}, we conclude from~\eqref{rouche} that the meromorphic function $\tilde{\E}_\epsilon$ possesses neither poles nor roots on $R_1(\Theta_1) \setminus \bigcup_{j \in \Z \setminus \{0\}} D_j(\epsilon)$. Hence, by the considerations in~\S\ref{sec:invric}, the operator $\hat{\El}_\tf$, posed on $\smash{L^2_{\hat{\kappa}_-,\hat{\kappa}_+}(\R,\C^2)}$, has no point spectrum $R_1(\Theta_1) \setminus \bigcup_{j \in \Z \setminus \{0\}} D_j(\epsilon)$
\end{proof}

By estimate~\eqref{radiusb} in Proposition~\ref{prop:plateau1}, $D_1(\epsilon) = D_{-1}(\epsilon)$ is the only disk $D_j(\epsilon), j \in \Z \setminus \{0\}$ which intersects the closed right half plane. Thus, all that remains is to control the point spectrum in $D_1(\epsilon)$. In the following we proceed as outlined in~\S\ref{sec:approachpoint}.

\subsection{The derivative of the Riccati-Evans function at \texorpdfstring{$\lambda = 0$}{lambda = 0}} \label{sec:ricderiv2}

In this subsection, we show that $\lambda = 0$ is a simple root of the Riccati-Evans function $\E_\epsilon$ and we approximate the derivative $\E_\epsilon'(0)$.

We showed in~\S\ref{sec:lambda0} that gauge invariance yields the nontrivial solution $\phi_0 \in \smash{H^1_{\hat{\kappa}_-,\hat{\kappa}_+}(\R,\C^4)}$, given by~\eqref{eigenfunction0}, to the eigenvalue problem~\eqref{evprob} at $\lambda = 0$. Hence, $\lambda = 0$ is an eigenvalue of the operator $\hat{\El}_\tf$, posed on $\smash{L^2_{\hat{\kappa}_-,\hat{\kappa}_+}(\R,\C^4)}$. It readily follows by Proposition~\ref{prop:ric} that $\lambda = 0$ is a root of the associated Riccati-Evans function.

To approximate the derivative of the Riccati-Evans function we use the expressions~\eqref{clasderiv1} and~\eqref{ricderiv1} in~\S\ref{secderric}. Thus, on the one hand, we need bases of $\ker(P_-(0;0,\epsilon))$ and $P_+(0;0,\epsilon)[\C^4]$, where we recall that $P_-( \zeta;\lambda,\epsilon)$ and $P_+( \zeta;\lambda,\epsilon)$ are the projections associated with the exponential dichotomies of~\eqref{evprob} on $(-\infty,0]$ and $[0,\infty)$, respectively, which were established in~\S\ref{sec:ricdef}. On the other hand, we need to control a nontrivial solution $\psi_0(\zeta;\epsilon)$ to the adjoint system
\begin{align}
\psi_\zeta = -A(\zeta;0,\epsilon)^* \psi, \qquad \psi \in \C^4, \label{adjointpr}
\end{align}
with initial condition $\psi_0(0;\epsilon) \in \ker(P_-(0;0,\epsilon))^\perp \cap P_+(0;0,\epsilon)[\C^{4}]^\perp$.

To find the desired bases, we use the explicit solutions~\eqref{eigenfunction1} to~\eqref{variationaleq} obtained in~\S\ref{sec:lambda0}.

\begin{lemma} \label{lemma:bases}
Let $\phi_0(\zeta)$ and $\phi_\pm(\zeta;\epsilon)$ be as in~\eqref{eigenfunction0} and~\eqref{eigenfunction1}, respectively. Provided $0 < \epsilon_1 \ll 1$ and $0 \leq |\epsilon_2| \ll 1$, $\left(\phi_0(0) \mid \phi_\pm(0,\epsilon)\right)$ are bases of~$\ker(P_-(0;0,\epsilon))$ and $P_+(0;0,\epsilon)[\C^{4}]$, respectively. In addition, it holds $\ker(P_-(0;0,\epsilon)) \cap P_+(0;0,\epsilon)[\C^{4}] = \mathrm{Span}\{\phi_0(0)\}$.
\end{lemma}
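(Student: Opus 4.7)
The plan is to verify the three conclusions by combining the exponentially weighted decay of the explicit solutions $\phi_0, \phi_\pm$ of~\eqref{variationaleq} with the leading-order expansions of $\phi_\pm(0;\epsilon)$ given by Lemma~\ref{lem:approxphipm}. Recall from~\S\ref{sec:ricdef} that $\phi(0)\in\ker(P_-(0;0,\epsilon))$, respectively $\phi(0)\in P_+(0;0,\epsilon)[\C^4]$, precisely when the associated solution $\phi(\zeta)$ of~\eqref{variationaleq} satisfies $\phi(\zeta)\re^{-\hat\kappa_-\zeta}\to 0$ as $\zeta\to-\infty$, respectively $\phi(\zeta)\re^{-\hat\kappa_+\zeta}\to 0$ as $\zeta\to\infty$. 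Since $\hat\kappa_-<0$ by the proof of Lemma~\ref{lem:beta} and $\hat\kappa_+>0$ by~\eqref{defkappa} and~\eqref{limmum} for $\|\epsilon\|$ small, while $\phi_0\equiv(1,-1,0,0)^\top$ is constant and $\phi_\pm(\zeta;\epsilon)$ is bounded as $\zeta\to\pm\infty$ by Lemma~\ref{lem:approxphipm}, all four of the required weighted limits vanish. Thus $\phi_0(0),\phi_-(0;\epsilon)\in\ker(P_-(0;0,\epsilon))$ and $\phi_0(0),\phi_+(0;\epsilon)\in P_+(0;0,\epsilon)[\C^4]$.

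I would next verify that each of these two pairs is linearly independent at leading order in $\epsilon$ and then extend by continuity. For the left pair, the last two entries of $\phi_-(0;\epsilon)$ tend to $-2(1\mp\ri\alpha)\neq 0$ as $\|\epsilon\|\to 0$ by Lemma~\ref{lem:approxphipm}, while they vanish for $\phi_0(0)$, so linear independence is immediate. For the right pair, both $\phi_0(0)$ and the leading part of $\phi_+(0;\epsilon)$ have vanishing last two components, so independence reduces to a $2\times 2$ determinantal check on the first two entries, which boils down to $\sqrt{2(1-\ri\alpha)}+\sqrt{2(1+\ri\alpha)}+2\neq 0$; this is clear since both principal square roots have positive real part. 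Because the projections $P_\pm(0;0,\epsilon)$ have rank two, cf.~\S\ref{sec:ricdef}, the displayed pairs are bases of their respective two-dimensional subspaces for $\|\epsilon\|$ sufficiently small.

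For the intersection, set $N(\epsilon):=\bigl(\phi_0(0)\mid\phi_-(0;\epsilon)\mid\phi_+(0;\epsilon)\bigr)\in\C^{4\times 3}$. By the previous two steps, $\ker(P_-(0;0,\epsilon))+P_+(0;0,\epsilon)[\C^4]$ equals the column span of $N(\epsilon)$, and dimension counting yields
\begin{align*}
\dim\bigl(\ker(P_-(0;0,\epsilon))\cap P_+(0;0,\epsilon)[\C^4]\bigr)=4-\mathrm{rank}(N(\epsilon)).
\end{align*}
An inspection of the leading-order form of $N(\epsilon)$ via Lemma~\ref{lem:approxphipm} shows that the rank equals three: the second column has nonzero third entry while the first and third columns do not, and the restrictions of the first and third columns to the first two coordinates are linearly independent by the argument above. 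Lower semi-continuity of rank, combined with $\mathrm{rank}(N(\epsilon))\leq 3$, then forces $\mathrm{rank}(N(\epsilon))=3$ for $\|\epsilon\|$ small, so the intersection is one-dimensional and hence equals $\mathrm{Span}\{\phi_0(0)\}$, since $\phi_0(0)$ already lies in it.

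The only subtle point is the pairing of $\phi_0(0)$ with $\phi_+(0;\epsilon)$: to leading order both vectors live in the common two-dimensional subspace $\{(\ast,\ast,0,0)^\top\}$, so linear independence cannot be read off disjoint nonzero components and must instead be deduced from the positivity of the real parts of the principal square roots $\sqrt{2(1\mp\ri\alpha)}$, which prevents the corresponding $2\times 2$ determinant from degenerating.
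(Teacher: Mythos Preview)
Your proof is correct and follows essentially the same approach as the paper: both establish membership via the boundedness of $\phi_0$ and $\phi_\pm(\cdot;\epsilon)$ at the relevant infinities (combined with the signs of $\hat\kappa_\pm$), and both use the leading-order expansions from Lemma~\ref{lem:approxphipm} to deduce the required linear independence. The paper argues the intersection more tersely by noting that $\phi_+(0;\epsilon)$ and $\phi_-(0;\epsilon)$ are linearly independent, whereas you make the underlying rank-$3$ computation on $N(\epsilon)$ explicit; your formulation is cleaner and fills in the step the paper leaves implicit.
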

\begin{proof}
By Lemma~\ref{lem:approxphipm} the solutions $\phi_\pm(\zeta;\epsilon)$ are bounded as $\zeta \to \pm \infty$, whereas the linearly independent solution $\phi_0(\zeta)$ is bounded on $\R$. Hence, $\left(\phi_0(0) \mid \phi_\pm(0,\epsilon)\right)$ are bases of~$\ker(P_-(0;0,\epsilon))$ and $P_+(0;0,\epsilon)[\C^{4}]$, respectively. In addition, the estimate Lemma~\ref{lem:approxphipm} shows that $\phi_+(0;\epsilon)$ and $\phi_-(0;\epsilon)$ are linearly independent, which proves $\ker(P_-(0;0,\epsilon)) \cap P_+(0;0,\epsilon)[\C^{4}] = \mathrm{Span}\{\phi_0(0)\}$.
\end{proof}

Lemma~\ref{lemma:bases} shows that the eigenvalue $\lambda = 0$ of the operator $\hat{\El}_\tf$, posed $\smash{L^2_{\hat{\kappa}_-,\hat{\kappa}_+}(\R,\C^4)}$, has geometric multiplicity $1$, and thus the formulas~\eqref{clasderiv1} and~\eqref{ricderiv1} in~\S\ref{secderric} can indeed be employed to compute the derivative of the Riccati-Evans function at $\lambda = 0$.

The next result provides control over the relevant nontrivial solution $\psi_0(\zeta;\epsilon)$ to the adjoint system~\eqref{adjointpr}.

\begin{lemma} \label{lemma:adjoint}
There exists a constant $C > 1$ such that, provided $0 < \epsilon_1 \ll \delta \ll 1$ and $0 \leq |\epsilon_2| \ll \delta \ll 1$, there exists a solution $\psi_0(\zeta;\epsilon)$ to the adjoint equation~\eqref{adjointpr} with initial condition $\psi_0(0;\epsilon) \in \ker(P_-(0;0,\epsilon))^\perp \cap P_+(0;0,\epsilon)[\C^{4}]^\perp$, which enjoys the estimates
\begin{align}
\begin{split}
\|\psi_0(\zeta;\epsilon)\| &\leq C|\log(\delta)|^2 \re^{\hat{\kappa}_-\left(\zeta_\tf(\epsilon) - \zeta\right)}, \qquad \zeta \leq \zeta_\tf(\epsilon), \\
\|\psi_0(\zeta;\epsilon)\| &\leq C \re^{-\hat{\kappa}_+\zeta}, \qquad \zeta \geq 0,
\end{split} \label{expdecayadjoint}
\end{align}
and
\begin{align}
\left\|\psi_0(\zeta;\epsilon) - \tilde{\psi}_0(\zeta;\epsilon)\right\| \leq C|\log(\delta)|^4\delta \re^{-\frac{\eta}{2}\left(\zeta - \zeta_\tf(\epsilon)\right)}, \qquad \zeta \in [\zeta_\tf(\epsilon),0], \label{adjointplateau}
\end{align}
where we denote
\begin{align*}
\tilde{\psi}_0(\zeta;\epsilon) := \left(0, 0,-\frac{1}{\overline{\partial_\zeta \hat{z}_0(\zeta;\epsilon)}}, \displaystyle\frac{1}{\partial_\zeta \hat{z}_0(\zeta;\epsilon)}\right)^\top.
\end{align*}
\end{lemma}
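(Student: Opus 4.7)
The plan is to construct $\psi_0$ via adjoint exponential dichotomies, deduce the decay bounds at $\pm\infty$, and then compare $\psi_0$ with $\tilde\psi_0$ on the plateau via variation of constants, exploiting the fact that $\tilde\psi_0$ is an \emph{exact} solution of an appropriately reduced adjoint system.

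Taking adjoints in the weighted dichotomies from Propositions~\ref{prop:expdileftR1} and~\ref{prop:expdirightR1} (at $\lambda=0$) and undoing the coordinate transform~\eqref{lincoordtransform}, the adjoint system~\eqref{adjointpr} inherits exponential dichotomies on $(-\infty,\zeta_\tf(\epsilon)]$ and $[0,\infty)$ with projections $I_4-P_\pm(\zeta;0,\epsilon)^*$ and decay rates $-\hat\kappa_\pm$. By Lemma~\ref{lemma:bases} and dimension counting, $\ker(P_-(0;0,\epsilon))^\perp \cap P_+(0;0,\epsilon)[\C^4]^\perp$ is one-dimensional, and I fix $\psi_0(0;\epsilon)$ as the orthogonal projection of $\tilde\psi_0(0;\epsilon)$ onto this line. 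A direct computation using $\partial_\zeta\hat z_0(0^-;\epsilon)=-2(1-\ri\alpha)/m(\epsilon)^2$ together with Lemma~\ref{lem:approxphipm} shows that $\tilde\psi_0(0;\epsilon)$ is already perpendicular to each of $\phi_0(0)$, $\phi_+(0;\epsilon)$, and $\phi_-(0;\epsilon)$ up to $\mathcal O(\|\epsilon\|)$; since these three vectors span $\ker(P_-(0;0,\epsilon))+P_+(0;0,\epsilon)[\C^4]$, the projection differs from $\tilde\psi_0(0;\epsilon)$ by $\mathcal O(\|\epsilon\|+\delta)$.

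For $\zeta\in(\zeta_\tf(\epsilon),0)$ one has $\chi\equiv 1$, so $\partial_\zeta\hat z_0=-\hat z_0^2+\mu(\epsilon)$ and hence $\partial_\zeta^2\hat z_0=-2\hat z_0\,\partial_\zeta\hat z_0$; this yields $\partial_\zeta(1/\partial_\zeta\hat z_0)=2\hat z_0/\partial_\zeta\hat z_0$, so that $\tilde\psi_0$ solves the reduced adjoint system $\psi_\zeta=-A_0(\zeta;\epsilon)^*\psi$ exactly, where $A_0(\zeta;\epsilon)$ is obtained from $A(\zeta;0,\epsilon)$ by setting $R_\tf\equiv 0$ and replacing $\hat z_\tf$ by $\hat z_0$. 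The residual $u:=\psi_0-\tilde\psi_0$ therefore obeys
\begin{align*}
u_\zeta=-A(\zeta;0,\epsilon)^* u + \bigl(A_0(\zeta;\epsilon)-A(\zeta;0,\epsilon)\bigr)^*\tilde\psi_0(\zeta;\epsilon),
\end{align*}
and Proposition~\ref{prop:right} bounds the inhomogeneity by $C\delta\,\re^{-\eta(\zeta-\zeta_\tf(\epsilon))}\|\tilde\psi_0(\zeta;\epsilon)\|$. Variation of constants from $\zeta=0$ backward, combined with the initial error $|u(0;\epsilon)|=\mathcal O(\|\epsilon\|+\delta)$ and control of the adjoint propagator along the plateau obtained by decomposing $u$ along the stable and unstable subspaces extended from the dichotomies of Propositions~\ref{prop:expdileftR1} and~\ref{prop:expdirightR1}, produces both decay bounds in~\eqref{expdecayadjoint} and the approximation~\eqref{adjointplateau}; the rate $\eta/2$ in the latter is obtained by absorbing half of the exponential decay supplied by Proposition~\ref{prop:right} into the propagator bound.

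The main technical obstacle is the precise tracking of the $|\log\delta|$ factors. These stem from $\|\tilde\psi_0(\zeta)\|\sim 1/|\partial_\zeta\hat z_0(\zeta)|$ along the plateau: near $\zeta_\tf(\epsilon)$ the matching condition $\hat z_0(\zeta_\tf(\epsilon))\approx \hat z_*(\zeta_\delta)$ combined with $\partial_\zeta\hat z_*=-\hat z_*^2+R_*$ gives $|\partial_\zeta\hat z_0(\zeta_\tf(\epsilon))|=\mathcal O(\hat z_*(\zeta_\delta)^2+\delta)=\mathcal O(|\log\delta|^{-2})$ by~\eqref{deltaineq}, so $\|\tilde\psi_0(\zeta_\tf(\epsilon))\|=\mathcal O(|\log\delta|^2)$, explaining the $|\log\delta|^2$ factor in the left-hand bound of~\eqref{expdecayadjoint}. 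The same mechanism produces a propagator growth of $\mathcal O(|\log\delta|^2)$ for the adjoint flow from $\zeta=0$ back to $\zeta_\tf(\epsilon)$, since the nontrivial growing modes are precisely those driven by $\hat z_0/\partial_\zeta\hat z_0$; multiplying these factors with the $\delta$ from the forcing yields the $|\log\delta|^4\delta$ prefactor in~\eqref{adjointplateau}.
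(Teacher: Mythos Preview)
Your overall plan is the right one and matches the paper's strategy: construct $\psi_0(0;\epsilon)$ in the one-dimensional intersection, use that $\tilde\psi_0$ solves the reduced adjoint system exactly (since $\partial_\zeta(1/\hat z_0')=2\hat z_0/\hat z_0'$ on the plateau), and perturb via variation of constants. Your identification of the $|\log\delta|$ source, namely $|\partial_\zeta\hat z_0(\zeta_\tf(\epsilon))|\sim\hat z_*(\zeta_\delta)^2=\mathcal O(|\log\delta|^{-2})$, is also correct.

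There is, however, a genuine gap in how you control the adjoint flow on the plateau. You propose to bound the propagator of $-A(\zeta;0,\epsilon)^*$ for $\zeta\in[\zeta_\tf(\epsilon),0]$ by ``decomposing $u$ along the stable and unstable subspaces extended from the dichotomies of Propositions~\ref{prop:expdileftR1} and~\ref{prop:expdirightR1}''. But those dichotomies (for the transformed system~\eqref{evprob2}, not~\eqref{evprob}) live on $(-\infty,\zeta_\tf(\epsilon)]$ and $[0,\infty)$ respectively, and the entire point of the plateau analysis in this paper is that no dichotomy extends across $[\zeta_\tf(\epsilon),0]$: the coefficient matrix there has spatial eigenvalues $0,0,-2\hat z_0,-2\overline{\hat z_0}$, and $\Re\hat z_0$ changes sign along the plateau, so there is no consistent hyperbolic splitting. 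A crude propagator bound from Lemma~\ref{lem:evolbound} is also too weak (it costs a factor $1/\epsilon_1$ over the plateau length).

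What the paper does instead is compute the reduced adjoint evolution \emph{explicitly}: it writes $\mathcal T_{\ad}(\zeta,y)=B(\zeta)^{-1}\tilde{\mathcal T}_{\ad}(\zeta,y)$ with $B(\zeta)=\mathrm{diag}(\overline{\hat z_0'},\hat z_0',\overline{\hat z_0'},\hat z_0')$ and $\tilde{\mathcal T}_{\ad}$ uniformly bounded by~\eqref{estex3}. All the growth sits in $B(\zeta)^{-1}$, i.e.\ in $1/|\hat z_0'(\zeta)|$, and the paper then proves the nontrivial pointwise estimate $1/|\partial_\zeta\hat z_0(\zeta)|\le C|\log\delta|^2 e^{\eta(\zeta-\zeta_\tf)/4}$ by splitting the interval at the last point where $|\hat z_0|\ge\eta/8$ and using the identity $\hat z_0'(y)/\hat z_0'(\zeta)=\exp\bigl(2\int_y^\zeta\hat z_0\bigr)$. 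The paper then solves the integral equation for the \emph{conjugated} variable $\psi_{\ad}=B\psi_0$, for which the kernel is small, and only at the end undoes the conjugation to recover $\psi_0-\tilde\psi_0$; that final division by $\hat z_0'$ produces one $|\log\delta|^2$ factor, and the kernel bound $\|F\|\le C\delta|\log\delta|^2 e^{-3\eta(\zeta-\zeta_\tf)/4}$ the other, giving $|\log\delta|^4\delta$ with net exponential rate $\eta/2$. Your proposal needs this explicit conjugation-and-pointwise-bound mechanism (or an equivalent) in place of the dichotomy extension, which is not available here.

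A smaller point: Propositions~\ref{prop:expdileftR1} and~\ref{prop:expdirightR1} do not directly give you adjoint dichotomies for~\eqref{adjointpr} with the \emph{specific} rates $\hat\kappa_\pm$ appearing in~\eqref{expdecayadjoint}; the paper establishes fresh dichotomies for the shifted systems $\phi_\zeta=(A(\zeta;0,\epsilon)-\hat\kappa_\pm)\phi$ via separate roughness arguments (using the asymptotic matrices $A_L^{-\infty}$ and $A_R$) precisely to secure those rates.
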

\begin{proof} In this proof $C > 1$ denotes any constant, which is independent of $\epsilon$ and $\delta$.

Using the bases of $\ker(P_-(0;0,\epsilon))$ and $P_+(0;0,\epsilon)[\C^{4}]$ derived in Lemma~\ref{lemma:bases}, we can, by Proposition~\ref{prop:right}, take a vector
\begin{align}
\begin{split}
\psi_0 &\in \ker(P_-(0;0,\epsilon))^\perp \cap P_+(0;0,\epsilon)[\C^{4}]^\perp\\ &= \ker(I_4 - P_-(0;0,\epsilon)^*) \cap (I_4 - P_+(0;0,\epsilon)^*)[\C^4].\end{split} \label{kernrang}
\end{align}
satisfying
\begin{align}
\left\|\psi_0 - \left(0, 0,-\frac{1}{\overline{\partial_\zeta \hat{z}_0(0;\epsilon)}}, \displaystyle\frac{1}{\partial_\zeta \hat{z}_0(0;\epsilon)}\right)^\top\right\| \leq Ce^{\eta \zeta_\tf(\epsilon)}, \label{adjointest0}
\end{align}
where we observe from~\eqref{dyninvman} and~\eqref{hatzapprox} that it holds
\begin{align}\left|\partial_\zeta \hat{z}_0(0;\epsilon) + 2-2\ri\alpha\right| \leq C\|\epsilon\|, \label{adjointest1}
\end{align}
noting $\chi(0) = -1$. Now, let $\psi_0(\zeta;\epsilon)$ be the solution to~\eqref{adjointpr} with initial condition $\psi_0(0;\epsilon) = \psi_0$. Since the inner product between solutions to~\eqref{variationaleq} and~\eqref{adjointpr} is preserved, we find that~\eqref{kernrang} implies that
\begin{align}
\psi_0(\zeta;\epsilon) &\in \ker(P_-(\zeta;0,\epsilon))^\perp = \ker(I_4 - P_-(\zeta;0,\epsilon)^*). \label{kernrang2}
\end{align}
holds for all $\zeta \leq 0$. Next, we prove the estimates~\eqref{expdecayadjoint} and~\eqref{adjointplateau} hold for $\psi_0(\zeta;\epsilon)$.

\paragraph{Exponential dichotomies.} To prove the estimates~\eqref{expdecayadjoint}, we establish exponential dichotomies for the weighted systems
\begin{align}
\psi_\zeta &= \left(-A(\zeta;0,\epsilon)^* + \hat{\kappa}_\pm\right) \psi, \qquad \psi \in \C^4, \label{shiftedad}
\end{align}
on $(-\infty,\zeta_\tf(\epsilon)]$ and $[0,\infty)$, respectively, with $\epsilon$-independent constants. These exponential dichotomies arise by transferring exponential dichotomies of
\begin{align}
 \phi_\zeta &= \left(A(\zeta;0,\epsilon) - \hat{\kappa}_-\right)\phi, \qquad \phi \in \C^4, \label{shiftedm}\\
 \phi_\zeta &= \left(A(\zeta;0,\epsilon) - \hat{\kappa}_+\right)\phi, \qquad \phi \in \C^4,\label{shifted+}
\end{align}
to their adjoint problems.

We start by establishing an exponential dichotomy for~\eqref{shiftedm} on $(-\infty,\zeta_\tf(\epsilon)]$. By Proposition~\ref{prop:left} and~\eqref{limmum} it holds
\begin{align}
 \left\|A(\zeta_\tf(\epsilon) + \zeta;0,\epsilon) - A_L(\zeta_\delta + \zeta)\right\| \leq C_\delta \|\epsilon\|, \qquad \zeta \leq 0,\label{leftbound8}
\end{align}
provided $0 < \epsilon_1 \ll \delta \ll 1$ and $0 \leq |\epsilon_2| \ll \delta \ll 1$, where $C_\delta > 0$ is an $\epsilon$-independent constant and we denote
\begin{align*}
A_L(\zeta) := \begin{pmatrix} 0 & 0 & 1 & 0 \\ 0 & 0 & 0 & 1\\ R_*(\zeta) & R_*(\zeta) & - 2\hat{z}_*(\zeta) & 0 \\ R_*(\zeta) & R_*(\zeta) & 0 & - 2\hat{z}_*(\zeta) \end{pmatrix}.
\end{align*}
By estimate~\eqref{Rsbound} in Proposition~\ref{prop:left}, the coefficient matrix $A_L(\zeta)$ converges exponentially to the asymptotic matrix
\begin{align*}
A_L^{-\infty} = \begin{pmatrix} 0 & 0 & 1 & 0 \\ 0 & 0 & 0 & 1\\ 1 & 1 & -2 & 0 \\ 1 & 1 & 0 & -2 \end{pmatrix},
\end{align*}
as $\zeta \to -\infty$, whose spectrum consists of the eigenvalues $0,-2,-1\pm\sqrt{3}$. The matrix $A_L^{- \infty}$ admits a spectral gap at $\Re(\nu) = \hat{\kappa}_- = -\kappa/\sqrt{1+\alpha^2} \in (-\tfrac{1}{4},0)$. Hence, by~\cite[Lemma 3.4]{PAL}, system
\begin{align*}
\phi_\zeta = \left(A_L(\zeta) - \hat{\kappa}_-\right)\phi, \qquad \phi \in \C^4,
\end{align*}
has an exponential dichotomy on $(-\infty,0]$ with associated rank 2 projections $P_L(\zeta)$. By roughness of exponential dichotomies, cf.~\cite[Proposition 5.1]{COP}, and estimate~\eqref{leftbound8}, system~\eqref{shiftedm} admits, provided $0 < \epsilon_1 \ll \delta \ll 1$ and $0 \leq |\epsilon_2| \ll \delta \ll 1$, an exponential dichotomy on $(-\infty,\zeta_\tf(\epsilon)]$ with $\epsilon$-independent constants and projections $P_L(\zeta;\epsilon)$. Thus, cf.~\cite[p. 17]{COP}, it must hold
\begin{align}
\ker(P_L(\zeta;\epsilon)) = \ker(P_-(\zeta;0,\epsilon)), \qquad \zeta \in (-\infty,\zeta_\tf(\epsilon)]. \label{kernels}
\end{align}

Next, we establish an exponential dichotomy for~\eqref{shifted+} on $(-\infty,\zeta_\tf(\epsilon)]$. By~\eqref{limmum},~\eqref{hatzapprox} and Proposition~\ref{prop:right} we have
\begin{align}
 \left\|A(\zeta;0,\epsilon) - A_R\right\| \leq C\|\epsilon\|, \qquad \zeta \geq 0, \label{rightbound8}
\end{align}
provided $0 < \epsilon_1 \ll \delta \ll 1$ and $0 \leq |\epsilon_2| \ll \delta \ll 1$, where we denote
\begin{align*}
A_R = \begin{pmatrix} 0 & 0 & 1 & 0 \\ 0 & 0 & 0 & 1\\ 0 & 0 & 2\sqrt{2-2\ri \alpha} & 0 \\ 0 & 0 & 0 & 2\sqrt{2+2\ri \alpha} \end{pmatrix}.
\end{align*}
Since $A_R$ admits the spatial eigenvalues $0,0$, $2\sqrt{2 \pm 2\ri \alpha}$ and it holds $$0 < \hat{\kappa}_+ = 1 + \Re\sqrt{2+2\ri \alpha} + \kappa\sqrt{1+\alpha^2} \leq \frac{5}{4} + \Re\sqrt{2+2\ri \alpha} < 2\Re\sqrt{2+2\ri \alpha},$$ the constant coefficient system
\begin{align*}
\phi_\zeta = \left(A_R - \hat{\kappa}_+\right)\phi, \qquad \phi \in \C^4,
\end{align*}
has an exponential dichotomy on $\R$. By roughness of exponential dichotomies, cf.~\cite[Proposition 5.1]{COP}, and estimate~\eqref{rightbound8}, system~\eqref{shifted+} admits, provided $0 < \epsilon_1 \ll \delta \ll 1$ and $0 \leq |\epsilon_2| \ll \delta \ll 1$, an exponential dichotomy on $[0,\infty)$ with $\epsilon$-independent constants and projections $P_R(\zeta;\epsilon)$. Thus, cf.~\cite[p. 17]{COP}, it must hold
\begin{align}
P_R(\zeta;\epsilon)[\C^4] = P_+(\zeta;0,\epsilon)[\C^4], \qquad \zeta \in [0,\infty). \label{ranges}
\end{align}

All in all, we conclude the adjoint systems~\eqref{shiftedad} have exponential dichotomies on $(-\infty,\zeta_\tf(\epsilon)]$ and $[0,\infty)$, respectively, with $\epsilon$-independent constants and corresponding projections $I_4-P_L(\zeta;\epsilon)^*$ and $I_4-P_R(\zeta;\epsilon)^*$. Thus, by identities~\eqref{kernrang},~\eqref{kernrang2},~\eqref{kernels} and~\eqref{ranges}, we obtain the exponential decay estimates
\begin{align}
\begin{split}
\|\psi_0(\zeta;\epsilon)\| &\leq C \re^{\hat{\kappa}_-\left(\zeta_\tf(\epsilon) - \zeta\right)}\|\psi_0(\zeta_\tf(\epsilon);\epsilon)\|, \qquad \zeta \leq \zeta_\tf(\epsilon),\\
\|\psi_0(\zeta;\epsilon)\| &\leq C \re^{-\hat{\kappa}_+\zeta}\|\psi_0(0;\epsilon)\| \leq C \re^{-\hat{\kappa}_+\zeta}, \qquad \zeta \geq 0,
\end{split} \label{expdecayadjoint2}
\end{align}
where the last inequality follows from~\eqref{adjointest0} and~\eqref{adjointest1}. Thus, proving~\eqref{expdecayadjoint} now reduces to finding an appropriate bound on $\|\psi_0(\zeta_\tf(\epsilon);\epsilon)\|$, which we will establish later.

\paragraph*{Tracking the adjoint solution along the absolutely unstable plateau.}
Along the absolutely unstable plateau, we approximate the coefficient matrix of the adjoint equation~\eqref{adjointpr} using~\eqref{estex2} in Proposition~\ref{prop:right} as
\begin{align}
\left\|A(\zeta;0,\epsilon)^* + A_{\ad}(\zeta;\epsilon)\right\| \leq C\delta \re^{-\eta(\zeta - \zeta_\tf(\epsilon))}, \qquad \qquad \zeta \in [\zeta_\tf(\epsilon),0], \label{estadjoint1}
\end{align}
where we denote
\begin{align*} A_{\ad}(\zeta;\epsilon):=\begin{pmatrix} 0 & 0 & 0 & 0 \\ 0 & 0 & 0 & 0 \\ -1 & 0 & 2\overline{\hat{z}_0(\zeta;\epsilon)} & 0 \\ 0 & -1 & 0 & 2\hat{z}_0(\zeta;\epsilon) \end{pmatrix}.\end{align*}
One observes that system
\begin{align*}
\psi_\zeta = A_\ad(\zeta;\epsilon)\psi,
\end{align*}
is explicitly solvable and its evolution matrix is given by
\begin{align*}
\mathcal{T}_{\ad}(\zeta,y;\epsilon) &:= B(\zeta;\epsilon)^{-1} \tilde{\mathcal{T}}_{\ad}(\zeta,y;\epsilon), \qquad B(\zeta;\epsilon) := \begin{pmatrix} \overline{\hat{z}_0'(\zeta)} & 0 & 0 & 0 \\ 0 & \hat{z}_0'(\zeta) & 0 & 0 \\ 0 & 0 & \overline{\hat{z}_0'(\zeta)} & 0 \\ 0 & 0 & 0 & \hat{z}_0'(\zeta)\end{pmatrix}, \\
 \tilde{\mathcal{T}}_{\ad}(\zeta,y;\epsilon) &:= \begin{pmatrix}  \overline{\hat{z}_0'(\zeta)} & 0 & 0 & 0 \\ 0 &  \hat{z}_0'(\zeta) & 0 & 0 \\ \overline{\hat{z}_0(y) - \hat{z}_0(\zeta)} & 0 & \overline{\hat{z}_0'(y)} & 0 \\ 0 & \hat{z}_0(y) - \hat{z}_0(\zeta) & 0 & \hat{z}_0'(y)\end{pmatrix},
\end{align*}
where we suppress the $\epsilon$-dependency in the coefficient matrices. We emphasize $\hat{z}_0'(\zeta) \neq 0$ for all $\zeta \in [\zeta_\tf(\epsilon),0]$, since the solution $\hat{z}_0(\zeta)$ does not lie on a stationary point of the planar system~\eqref{dyninvman}.

Using the variation of constants formula we find that $\psi_\ad(\zeta;\epsilon) := B(\zeta;\epsilon)\psi_0(\zeta;\epsilon)$ solves
\begin{align}
 \psi_\ad(\zeta;\epsilon) = \tilde{\mathcal{T}}_{\ad}(\zeta,0;\epsilon)\psi_0(0;\epsilon) + \int_0^\zeta \tilde{\mathcal{T}}_{\ad}(\zeta,y;\epsilon) F(y;\epsilon) \psi_\ad(y;\epsilon) \de y, \qquad \zeta \in [\zeta_\tf(\epsilon),0], \label{varconstadj}
\end{align}
with $F(y;\epsilon) := -\left(A(y;0,\epsilon)^* + A_{\ad}(y;\epsilon)\right)B(y;\epsilon)^{-1}$. Our plan is to approximate $\psi_\ad(\zeta;\epsilon)$ by estimating the terms in~\eqref{varconstadj}. First, using~\eqref{estex3} in Proposition~\ref{prop:right}, it holds
\begin{align}
\left\|\tilde{\mathcal{T}}_{\ad}(\zeta,y;\epsilon)\right\| \leq C, \qquad \zeta,y \in [\zeta_\tf(\epsilon),0]. \label{adjointest2}
\end{align}
Second, using that $\hat{z}_0(\zeta;\epsilon)$ solves~\eqref{dyninvman}, we obtain the identity
\begin{align}
 \frac{\hat{z}_0'(y)}{\hat{z}_0'(\zeta)} = \re^{2\int_y^\zeta \hat{z}_0(x) \de x}, \qquad \zeta,y \in [\zeta_\tf(\epsilon),0], \label{adjointid}
\end{align}
where we suppress $\epsilon$-dependency. On the other hand, by~\eqref{deltaineq} and~\eqref{estex1} in Proposition~\ref{prop:left} and by~\eqref{estex2} in Proposition~\ref{prop:right}, it holds
\begin{align}
C\,\Re\left(\hat{z}_0\left(\zeta_\tf(\epsilon);\epsilon\right)\right) \geq |\log(\delta)|^{-1}. \label{lowerboundz0}
\end{align}
Now let $\zeta \in [\zeta_\tf(\epsilon),0]$ and consider the $\epsilon$- and $\delta$-independent constant $\eta > 0$ in~\eqref{estadjoint1}. Take $\zeta_1$ to be the smallest number in $[\zeta_\tf(\epsilon),\zeta]$ such that $|\hat{z}_0(y;\epsilon)| \leq \frac{\eta}{8}$ for $y \in [\zeta_1,\zeta]$, if such a number exists, otherwise take $\zeta_1 = \zeta$. Thus, it either holds $|\hat{z}_0(\zeta_1,\epsilon)| \geq \frac{\eta}{8}$ or $\zeta_1 = \zeta_\tf(\epsilon)$. Using~\eqref{limmum},~\eqref{adjointid} and~\eqref{lowerboundz0}, we approximate
\begin{align}
\frac{1}{\left|\partial_\zeta \hat{z}_0(\zeta;\epsilon)\right|} \leq \frac{e^{2\Re\int_{\zeta_1}^\zeta \hat{z}_0(y;\epsilon) \de y}}{\left|\partial_\zeta \hat{z}_0(\zeta_1,\epsilon)\right|} \leq
 \frac{e^{2\int_{\zeta_1}^\zeta |\hat{z}_0(y;\epsilon)| \de y}}{\left|\hat{z}_0(\zeta_1,\epsilon)\right|^2 - |\mu(\epsilon)|} \leq C |\log(\delta)|^2 \re^{\frac{\eta}{4}\left(\zeta - \zeta_\tf(\epsilon)\right)}, \label{adjointest4}
\end{align}
for $\zeta \in [\zeta_\tf(\epsilon),0]$, provided $0 < \epsilon_1 \ll \delta \ll 1$ and $0 \leq |\epsilon_2| \ll \delta \ll 1$. Combining the latter with~\eqref{estadjoint1} yields
\begin{align}
\|F(\zeta;\epsilon)\| \leq C\delta|\log(\delta)|^2 \re^{-\frac{3\eta}{4}\left(\zeta - \zeta_\tf(\epsilon)\right)}, \qquad \zeta \in [\zeta_\tf(\epsilon),0]. \label{adjointest3}
\end{align}
Fix $\zeta \in [\zeta_\tf(\epsilon),0]$. By~\eqref{adjointest2} and~\eqref{adjointest3} the linear operator $\F_{\zeta,\epsilon}$ given by
\begin{align*} \left(\F_{\zeta,\epsilon}\psi\right)(x) = \int_0^x \tilde{\mathcal{T}}_{\ad}(x,y;\epsilon) F(y;\epsilon) \psi(y) \de y,\end{align*}
on $C([\zeta,0],\C^4)$ has norm $\|\F_{\zeta,\epsilon}\| \leq C|\log(\delta)|^2\delta \re^{-\frac{3\eta}{4}\left(\zeta - \zeta_\tf(\epsilon)\right)}$. Upon taking $\delta > 0$ sufficiently small (but independent of $\zeta$ and $\epsilon$), $I - \F_{\zeta,\epsilon}$ is invertible on $C([\zeta,0],\C^4)$ and
\begin{align*}
 \psi_{\ad}(\zeta;\epsilon) = \left[(I - \F_{\zeta,\epsilon})^{-1} \psi_{0,\ad}(\cdot,\epsilon)\right](\zeta),
\end{align*}
is the solution to~\eqref{varconstadj}, where we denote $\psi_{0,\ad}(y;\epsilon) := \tilde{\mathcal{T}}_{\ad}(y,0;\epsilon)\psi_0(0;\epsilon)$ for $y \in [\zeta_\tf(\epsilon),0]$. Using~\eqref{adjointest0} and~\eqref{adjointest2} we approximate
\begin{align*}
\left\|\psi_{0,\ad}(\zeta;\epsilon) - \left(0, 0, -1, 1\right)^\top\right\| &\leq Ce^{\eta \zeta_\tf(\epsilon)},
\end{align*}
and
\begin{align*}
\|\psi_{\ad}(\zeta;\epsilon) - \psi_{0,\ad}(\zeta;\epsilon)\| &\leq \|\F_{\zeta,\epsilon}\|\|(I - \F_{\zeta,\epsilon})^{-1}\| \sup_{y \in [\zeta,0]} \|\psi_{0,\ad}(y;\epsilon)\|
\leq C\frac{|\log(\delta)|^2\delta}{e^{\frac{3\eta}{4}\left(\zeta - \zeta_\tf(\epsilon)\right)}}.
\end{align*}

Combining the latter with~\eqref{adjointest4}, proves~\eqref{adjointplateau}. Finally, evaluating~\eqref{adjointplateau} at $\zeta = \zeta_\tf(\epsilon)$ yields~\eqref{expdecayadjoint} by combining~\eqref{expdecayadjoint2} with~\eqref{lowerboundz0}, where we use that $\hat{z}_0(\zeta;\epsilon)$ solves~\eqref{dyninvman}.
\end{proof}

We are now in a position to compute the derivative of the Riccati-Evans function at $\lambda = 0$ using the expressions~\eqref{clasderiv1} and~\eqref{ricderiv1} in~\S\ref{secderric}.

\begin{theorem} \label{theo:deriv}
There exists a constant $C > 1$ such that, provided $0 < \epsilon_1 \ll \delta \ll 1$ and $0 \leq |\epsilon_2| \ll \delta \ll 1$, it holds
\begin{align}
\left|\epsilon_1^3 \E_{\epsilon}'(0) - \frac{2\pi\left(1+\alpha^2\right)}{1 + \Re\sqrt{2+2\ri\alpha}}\right| \leq C\|\epsilon\|. \label{derivest}
\end{align}
\end{theorem}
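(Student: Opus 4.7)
The plan is to apply the general derivative formulas~\eqref{clasderiv1}--\eqref{ricderiv1} from~\S\ref{secderric} to the eigenvalue problem~\eqref{evprob} at $\lambda_0=0$. Take the bases $\tilde\Phi_\pm(\epsilon) = (\phi_0(0)\mid\phi_\pm(0;\epsilon))$ of $\ker(P_-(0;0,\epsilon))$ and $P_+(0;0,\epsilon)[\C^4]$ provided by Lemma~\ref{lemma:bases}, and let $\psi_0$ be the adjoint solution from Lemma~\ref{lemma:adjoint}. Four ingredients then need to be evaluated: the two $2\times 2$ determinants $\det(\tilde X_\pm)$, the $4\times 4$ determinant $\det(\tilde\Psi_-\mid\tilde\Phi_+)$, and the two integrals $\mathcal{M}(\epsilon) := \int_\R \psi_0^*\,\partial_\lambda A(\cdot;0,\epsilon)\,\phi_0\,\de\zeta$ and $\|\psi_0\|_{L^2}^2$. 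The first three are $O(1)$ algebraic quantities while the two integrals carry the divergent $\epsilon_1$-scaling that produces the claimed rate.

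For the algebraic factors: by Lemma~\ref{lem:approxphipm} the upper $2\times 2$ blocks of $\tilde\Phi_\pm$ satisfy $\tilde X_\pm = \left(\begin{smallmatrix} 1 & -(\sqrt{2(1-\ri\alpha)}+1) \\ -1 & -(\sqrt{2(1+\ri\alpha)}+1)\end{smallmatrix}\right) + O(\|\epsilon\|)$ independently of the sign, so $\det(\tilde X_\pm) = -2(1+\Re\sqrt{2+2\ri\alpha})+O(\|\epsilon\|)$. For $\det(\tilde\Psi_-\mid\tilde\Phi_+)$, swapping the first and third column blocks (an even permutation) reduces to a block-triangular matrix with diagonal blocks $\tilde X_+$ and the lower $2\times 2$ block $Z$ of $\tilde\Psi_-$; substituting the initial value $\psi_0(0;\epsilon) \approx (0,0,1/(2(1+\ri\alpha)),-1/(2(1-\ri\alpha)))^\top$ obtained from~\eqref{adjointest0}--\eqref{adjointest1} gives $\det(Z) = -2 + O(\|\epsilon\|)$, and hence $\det(\tilde\Psi_-\mid\tilde\Phi_+) = 4(1+\Re\sqrt{2+2\ri\alpha})+O(\|\epsilon\|)$.

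The two integrals are dominated by the plateau $[\zeta_\tf(\epsilon),0]$. Since $\partial_\lambda A(\zeta;0,\epsilon)\phi_0 = m(\epsilon)^{-2}(0,0,1-\ri\alpha,-(1+\ri\alpha))^\top$ involves only the last two components of $\psi_0$, and the contributions from $\R\setminus[\zeta_\tf(\epsilon),0]$ are bounded independently of $\epsilon$ by the exponential decay~\eqref{expdecayadjoint}, one may replace $\psi_0$ on the plateau with $\tilde\psi_0$ using~\eqref{adjointplateau}. Exploiting the complex-conjugate symmetry $\overline{1-\ri\alpha}=1+\ri\alpha$, the plateau Melnikov integrand collapses to $-\tfrac{2}{m^2}\Re\bigl[(1-\ri\alpha)/(\mu(\epsilon)-\hat z_0^2)\bigr]$, and the integrand for $\|\psi_0\|_{L^2}^2$ to $2/|\mu-\hat z_0^2|^2$. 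Substituting the explicit solution $\hat z_0(\zeta;\epsilon) = \sqrt{\mu}\tanh\bigl(\sqrt{\mu}(\zeta-\zeta_0(\epsilon))\bigr)$ of the scalar Riccati equation on the plateau gives $\mu-\hat z_0^2 = \mu\,\mathrm{sech}^2(\sqrt\mu(\zeta-\zeta_0))$; with $\sqrt\mu\to\ri M(\epsilon)$ (since $\varsigma\to\pi$ by~\eqref{expex1}) the change of variables $\theta = M(\zeta-\zeta_0)$ converts both integrals to elementary trigonometric integrals whose $\epsilon_1$-scaling and $\alpha$-dependence can be tracked explicitly. Inserting everything into~\eqref{clasderiv1}--\eqref{ricderiv1} and using the expansions~\eqref{defmm}--\eqref{expex5} and Theorem~\ref{t:ex02} for $m,\mu,\zeta_\tf,\hat z_+$ is expected to collapse to the stated leading constant with remainder $O(\|\epsilon\|)$.

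The main technical obstacle is the precise matching at the endpoints of the plateau. The endpoint values of $\theta = M(\zeta-\zeta_0)$ approach $\pm\pi/2$ but with $O(M)$ corrections, determined on the left by $M\zeta_\tf = -\pi + O(M)$ from Theorem~\ref{t:ex02} and on the right by the asymptotic inversion of $\tanh(\sqrt\mu\zeta_0)=-\hat z_+/\sqrt\mu$ near its pole at $-\ri\pi/2$. Near both endpoints $\mathrm{sech}^2$ diverges, so the $O(M)$ endpoint shifts enter the leading order; they must be accounted for using the Riemann-surface unfolding~\eqref{unfolding} and implicit-function expansions~\eqref{expex2}--\eqref{expex3} developed in the proof of Proposition~\ref{prop:plateau1}. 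Carrying out this accounting consistently in both integrals, together with Lemma~\ref{lem:beta} for the factors coming from $\beta(\zeta;\epsilon)$, is what isolates the precise constant $\tfrac{2\pi(1+\alpha^2)}{1+\Re\sqrt{2+2\ri\alpha}}$ and produces the sharp $O(\|\epsilon\|)$ error bound claimed in~\eqref{derivest}.
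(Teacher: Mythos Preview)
There is a genuine error in your reading of formula~\eqref{clasderiv1}. The quantity $\|\psi_0\|^2$ appearing there is \emph{not} an $L^2$-norm of the adjoint eigenfunction: in the set-up of~\S\ref{secderric}, $\psi_0$ is the vector in $\C^{2n}$ spanning $\ker(P_-(0,\lambda_0))^\perp\cap P_+(0,\lambda_0)[\C^{2n}]^\perp$, and $\|\psi_0\|^2$ is its Euclidean norm squared. In the paper's explicit formula~\eqref{ricderiv} this appears as $\|\psi_0(0;\epsilon)\|^2$, which by~\eqref{adjointest0}--\eqref{adjointest1} is the $O(1)$ number $\tfrac{1}{2(1+\alpha^2)}+O(\|\epsilon\|)$. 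Treating it instead as the plateau integral $\int 2/|\mu-\hat z_0^2|^2\,d\zeta$ would introduce an extra factor of order $\epsilon_1^{-5}$ and destroy the scaling: the combination $\mathcal M/\|\psi_0\|_{L^2}^2$ would be $O(\epsilon_1^2)$, not $O(\epsilon_1^{-3})$ as required. Once you replace this by the correct vector norm, your algebraic factors $\det(\tilde X_\pm)=-2(1+\Re\sqrt{2+2\ri\alpha})$ and $\det(\tilde\Psi_-\mid\tilde\Phi_+)=4(1+\Re\sqrt{2+2\ri\alpha})$ combine with $\mathcal M\approx\pi/\epsilon_1^3$ to give exactly the stated constant.

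On the Melnikov integral itself, your direct parametrisation via $\hat z_0=\sqrt\mu\tanh(\sqrt\mu(\zeta-\zeta_0))$ is in principle equivalent to the paper's approach, but the paper handles the endpoint issue you flag much more cleanly. Rather than tracking the $O(M)$ endpoint corrections through a $\sec^2$-singularity, the paper substitutes $\nu=\hat z_0/\sqrt{-\mu}$ to turn $\int(\partial_\zeta\hat z_0)^{-1}d\zeta$ into a contour integral $\int_{\nu_\epsilon}(z^2+1)^{-2}dz$, closes the contour with a large arc on which the integrand is $O(\epsilon_1^{3/2})$, and reads off the value $-\pi/2$ from the residue at $z=\ri$. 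The winding number of $\check\nu_\epsilon$ about $\pm\ri$ is pinned down by combining the existence relation $\sqrt{-\mu}\,\zeta_\tf\to-\pi$ with the companion residue calculation for $\int(z^2+1)^{-1}dz$. This bypasses the implicit-function expansions you invoke from Proposition~\ref{prop:plateau1} and makes the error control transparent. The reference to Lemma~\ref{lem:beta} is also unnecessary here: the derivative computation is carried out entirely in the untransformed system~\eqref{evprob}, where no $\beta$-factors appear.
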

\begin{proof}  In this proof $C > 1$ denotes any $\epsilon$- and $\delta$-independent constant.

By Lemma~\ref{lemma:bases} the subspace $\ker(P_-(0;0,\epsilon)) \cap P_+(0;0,\epsilon)[\C^{4}] = \mathrm{Span}\{\phi_0(0)\}$ is one-dimensional. Thus, we can substitute the relevant expressions in~\eqref{clasderiv1} and~\eqref{ricderiv1} and obtain
\begin{align}
 \E_{\epsilon}'(0) = \frac{\det\left(\psi_0(0;\epsilon) \mid \phi_-(0;\epsilon) \mid \phi_0(0) \mid \phi_+(0;\epsilon)\right) \displaystyle \int_\R \psi_0(\zeta;\epsilon)^* \partial_\lambda A(\zeta;0,\epsilon) \phi_0(\zeta) \mathrm{d} \zeta}{\|\psi_0(0;\epsilon)\|^2 \det\left(\tilde{\phi}_0(0) \mid \tilde{\phi}_-(0;\epsilon)\right) \det\left(\tilde{\phi}_0(0) \mid \tilde{\phi}_+(0;\epsilon)\right)}, \label{ricderiv}
\end{align}
where we recall $\left(\phi_0(0) \mid \phi_\pm(0;\epsilon)\right) \in \C^{4 \times 2}$ are the bases of $\ker(P_-(0;0,\epsilon))$ and $P_+(0;0,\epsilon)[\C^{4}]$, respectively, obtained in Lemma~\ref{lemma:bases}, $\psi_0(\zeta;\epsilon)$ is the nontrivial solution with $$\psi_0(0;\epsilon) \in \ker(P_-(0;0,\epsilon))^\perp \cap P_+(0;0,\epsilon)[\C^{4}]^\perp,$$ to the adjoint problem~\eqref{adjointpr}, obtained in Lemma~\ref{lemma:adjoint}, and $\tilde{\phi}_0(0), \tilde{\phi}_\pm(0;\epsilon) \in \C^2$ denote the upper two entries of the vectors $\phi_0(0)$ and $\phi_\pm(0;\epsilon)$, respectively.

We approximate the Melnikov integral in~\eqref{ricderiv}, with the aid of Lemma~\ref{lemma:bases}, as
\begin{align}
 \left|\int_\R \psi_0(\zeta;\epsilon)^* \partial_\lambda A(\zeta;0,\epsilon) \phi_0(\zeta) \mathrm{d} \zeta - \int_{\zeta_\tf(\epsilon)}^0 \tilde{\psi}_0(\zeta;\epsilon)^* \partial_\lambda A(\zeta;0,\epsilon) \phi_0(\zeta) \mathrm{d} \zeta\right| \leq C|\log(\delta)|^2. \label{melnikov2}
\end{align}
We will use Cauchy's integral theorem to approximate the remaining integral
\begin{align}
 \int_{\zeta_\tf(\epsilon)}^0 \tilde{\psi}_0(\zeta;\epsilon)^*\, \partial_\lambda A(\zeta;0,\epsilon)\, \phi_0(\zeta) \mathrm{d} \zeta =
 \frac{-2}{m(\epsilon)^2} \Re\left(\int_{\zeta_\tf(\epsilon)}^0 \frac{1 - \ri \alpha}{\partial_\zeta \hat{z}_0(\zeta;\epsilon)} \mathrm{d} \zeta\right). \label{melnikov}
\end{align}
Using~\eqref{preciseomegabound} in Theorem~\ref{t:ex02}, we deduce from~\eqref{defmm} that it holds
\begin{align} \left|\mu\left(\epsilon\right) + \epsilon_1^2\right| \leq C\epsilon_1^3, \qquad \left|m(\epsilon) - 1\right| \leq C\epsilon_1^2. \label{approxmu}\end{align}
The $C^1$-curve $\nu_{\epsilon} \colon [\zeta_\tf(\epsilon),0] \to \C$ given by $\nu_{\epsilon}(\zeta) = \frac{\hat{z}_0(\zeta;\epsilon)}{\sqrt{-\mu(\epsilon)}},$
satisfies the differential equation
\begin{align}
\nu_\zeta = -\sqrt{-\mu(\epsilon)} \left(\nu^2 + 1\right). \label{dyninv}
\end{align}
Thus, using estimates~\eqref{deltaineq} and~\eqref{estex1} in Proposition~\ref{prop:left}, estimate~\eqref{estex2} in Proposition~\ref{prop:right} and~\eqref{hatzapprox}, the begin and end point of the curve $\nu_\epsilon$ are approximated by
\begin{align}
 \left|\nu_{\epsilon}(0) + \frac{\sqrt{2-2\ri \alpha}}{\epsilon_1}\right| &\leq C, \qquad C\,\Re\left[\nu_{\epsilon}(\zeta_\tf(\epsilon))\right] \geq \frac{1}{|\log(\delta)|\epsilon_1}, \qquad \left|\nu_{\epsilon}(\zeta_\tf(\epsilon))\right| \leq \frac{C}{\epsilon_1}. \label{approxnu}
\end{align}
Since it holds $\Re \sqrt{-\mu(\epsilon)} > 0$ by~\eqref{approxmu}, the flow in~\eqref{dyninv} points into the left-half plane on the imaginary axis between the fixed points $\pm \ri$, whereas the flow points into the right-half plane on the other parts of the imaginary axis. Thus, since we have $\Re(\nu_{\epsilon}(0)) < 0 < \Re(\nu_{\epsilon}(\zeta_\tf(\epsilon)))$ by~\eqref{approxnu}, the curve $\nu_{\epsilon}$ must cross the imaginary at some point between the fixed points $-\ri$ and $\ri$ of~\eqref{dyninv}; see also Figure~\ref{fig:front_info}.

By~\eqref{approxnu} we can connect the point $\nu_{\epsilon}(0)$ to $\nu_{\epsilon}(\zeta_\tf(\epsilon))$ by a $C^1$-curve $\tilde{\nu}_{\epsilon} \colon [-1,1] \to \C$ of length $\leq C/\epsilon_1$ satisfying $C|\tilde{\nu}_{\epsilon}(\zeta)\epsilon_1\log(\delta)| \geq 1$ for all $\zeta \in [-1,1]$ and crossing the imaginary axis precisely once and doing so at a point above $\ri$. So, we can approximate
\begin{align} \left|\int_{\tilde{\nu}_{\epsilon}} \frac{1}{\left(z^2 + 1\right)^j}\de z\right| \leq C|\log(\delta)|^{2j} \epsilon_1^{j - 1/2}, \qquad j = 1,2. \label{intbound}\end{align}
By the previous considerations, the union of the curves $\nu_{\epsilon}$ and $\tilde{\nu}_{\epsilon}$ yields a closed $C^1$-curve $\check{\nu}_{\epsilon}$ winding $n_+ + 1$ times clockwise around $\ri$ and $n_-$ times counterclockwise around $-\ri$ for some $n_\pm \in \Z_{\geq 0}$. Applying separation of variables in~\eqref{dyninv} and using the residue theorem yields
\begin{align*}
\sqrt{-\mu(\epsilon)} \zeta_\tf(\epsilon) = \int_{\nu_{\epsilon}} \frac{1}{z^2 + 1} \de z &= \oint_{\check{\nu}_{\epsilon}} \frac{1}{z^2 + 1}\de z - \int_{\tilde{\nu}_{\epsilon}}\frac{1}{z^2 + 1} \de z = -\pi(n_+ + n_- + 1) - \int_{\tilde{\nu}_{\epsilon}} \frac{1}{z^2 + 1} \de z
\end{align*}
Combing the latter with~\eqref{triggerbound} in Theorem~\ref{t:ex02},~\eqref{approxmu} and~\eqref{intbound} yields $n_\pm = 0$. So, employing the residue theorem again, we compute
\begin{align*}
\int_{\nu_{\epsilon}} \frac{1}{\left(z^2 + 1\right)^2} \de z &= \oint_{\check{\nu}_{\epsilon}} \frac{1}{\left(z^2 + 1\right)^2}\de z - \int_{\tilde{\nu}_{\epsilon}}\frac{1}{\left(z^2 + 1\right)^2} \de z = -\frac{\pi}{2} - \int_{\tilde{\nu}_{\epsilon}}\frac{1}{\left(z^2 + 1\right)^2} \de z.
\end{align*}
Hence, by~\eqref{melnikov2},~\eqref{melnikov},~\eqref{approxmu},~\eqref{intbound} and
\begin{align*}
\int_{\zeta_\tf(\epsilon)}^0 \frac{1}{\partial_\zeta \hat{z}_0(\zeta;\epsilon)} \mathrm{d} \zeta = -\int_{\nu_{\epsilon}} \frac{1}{\mu(\epsilon_1)\sqrt{-\mu(\epsilon_1)} \left(z^2 + 1\right)^2}\de z,
\end{align*}
we thus obtain
\begin{align}
\left|\epsilon_1^3 \int_\R \tilde{\psi}_0(\zeta;\epsilon)^* \partial_\lambda A(\zeta;0,\epsilon) \phi_0(\zeta) \mathrm{d} \zeta - \pi\right|  \leq C\|\epsilon\|, \label{signmelnikov}
\end{align}
provided $0 < \epsilon_1 \ll \Theta_1 \ll 1$ and $0 \leq |\epsilon_2| \ll \delta \ll 1$. Finally, we observe from~\eqref{dyninvman} and~~\eqref{hatzapprox} that it holds
\begin{align*}\left|\partial_\zeta \hat{z}_0(0;\epsilon) + 2-2\ri\alpha\right| \leq C\|\epsilon\|,\end{align*}
noting $\chi(0) = -1$. So, using Lemma~\ref{lem:approxphipm}, estimate~\eqref{adjointplateau} in Lemma~\ref{lemma:adjoint} and~\eqref{signmelnikov} we approximate~\eqref{ricderiv} as~\eqref{derivest} and the result follows.
\end{proof}

We have now found a leading-order expression of the derivative $\E_\epsilon'(0)$, and conclude it must be of positive sign. In addition, it follows that $\lambda = 0$ is a simple zero of $\E_{\epsilon}$. Hence, by Proposition~\ref{prop:ric}, the eigenvalue $\lambda = 0$ of the operator $\hat{\El}_\tf$, posed $\smash{L^2_{\hat{\kappa}_-,\hat{\kappa}_+}(\R,\C^4)}$, has algebraic multiplicity $1$.

\subsection{Winding number computations} \label{sec:wind}

We start by computing the winding number of $\E_\epsilon$ on a simple contour $\Gamma$. As depicted in Figure~\ref{fig:disks}, we take $\Gamma$ to be a circle centered at the origin and tailor its radius such that, on the one hand, $\Gamma$ encloses the disk $D_1(\epsilon) = D_{-1}(\epsilon)$, but none of the other disks $D_j(\epsilon), j \in \Z \setminus\{0,\pm1\}$, and, on the other hand, the evolution of system~\eqref{evprobp} is well-approximated for $\lambda \in \Gamma$. Thus, we can, using variation of constants, approximate the Riccati-Evans function $\E_\epsilon(\lambda)$ for $\lambda \in \Gamma$ by a suitable analytic function and apply Rouch\'e's theorem to compute the winding number of $\E_\epsilon(\Gamma)$.

\begin{lemma} \label{lemma:windingnumber}
Provided $0 < \epsilon_1 \ll \Theta_1 \ll \delta \ll 1$ and $0 \leq |\epsilon_2| \ll \Theta_1 \ll \delta \ll 1$, the winding number of $\E_\epsilon(\Gamma)$ is $0$, where $\Gamma$ denotes the simple contour
\begin{align} \Gamma := \left\{\delta^{-1/3} M(\epsilon)^3 \re^{\ri \vartheta} : \vartheta \in [0,2\pi]\right\} \subset R_1(\Theta_1), \label{defGamma}\end{align}
and $M(\epsilon)$ is defined in~\eqref{defM}.
\end{lemma}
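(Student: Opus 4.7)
The approach is a meromorphic Rouché argument against an explicit approximating function on $\Gamma$, most naturally set up via the alternative Riccati-Evans function $\tilde{\E}_\epsilon$ from~\S\ref{sec:invric}. First I would verify that $\Gamma$ is well-placed in the enclosure-counting sense: since $|r_1| \leq C_\delta M(\epsilon)^3$ by~\eqref{radiusb} and $\delta^{-1/3} \gg C_\delta$ for $\delta$ small, the disk $D_1(\epsilon) = D_{-1}(\epsilon)$ (centered at the origin by~\eqref{e:dj} for $j = \pm 1$) is enclosed by $\Gamma$, whereas for $|j| \geq 2$ the center $\lambda_j$ has modulus $\geq 3\epsilon_1^2/\sqrt{1+\alpha^2} \gg \delta^{-1/3}\epsilon_1^3$, so those $D_j(\epsilon)$ lie strictly outside $\Gamma$. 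Hence $\Gamma$ sits in $R_1(\Theta_1) \setminus \bigcup_{j \in \Z \setminus \{0\}} D_j(\epsilon)$, where all the tracking estimates of~\S\ref{sec:track} apply.

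The plan is then to combine Lemma~\ref{lem:approx} (giving $\hat{T}_-(\zeta_\tf(\epsilon);\lambda,\epsilon) = \hat{z}_*(\zeta_\delta) I_2 + \ord(|\!\log\delta|^{-2})$) with Propositions~\ref{prop:plateau2} and~\ref{prop:plateau3} (giving $\hat{T}_+(\zeta_\tf(\epsilon);\lambda,\epsilon) = T_{d,+}(\zeta_\tf(\epsilon);\lambda,\epsilon) + \ord(\delta)$ on $\Gamma$) to obtain the approximation $\tilde{\E}_\epsilon(\lambda) = \E_*(\lambda) + \ord(\sqrt{\delta})$, where
\begin{align*}
\E_*(\lambda) := \bigl(t_+(\zeta_\tf(\epsilon);\lambda,\epsilon) - \hat{z}_*(\zeta_\delta)\bigr)\bigl(s_+(\zeta_\tf(\epsilon);\lambda,\epsilon) - \hat{z}_*(\zeta_\delta)\bigr).
\end{align*}
Since $|1/\hat{z}_*(\zeta_\delta)| \leq C|\!\log\delta| \leq \delta^{-1/4}$ by~\eqref{deltaineq}, the value $q_- = 1/\hat{z}_*(\zeta_\delta)$ lies in $\B_\delta$, so Proposition~\ref{prop:plateau1} applied to~\eqref{matching1} delivers a unique simple solution in each $D_j(\epsilon)$ with $j > 0$; inside $\Gamma$ only $D_1(\epsilon)$ matters, producing one simple zero of $t_+(\zeta_\tf(\epsilon);\cdot,\epsilon) - \hat{z}_*(\zeta_\delta)$ to match its one simple pole there. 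Proposition~\ref{prop:plateau1} applied to~\eqref{matching} gives the analogous statement for $s_+(\zeta_\tf(\epsilon);\cdot,\epsilon) - \hat{z}_*(\zeta_\delta)$ in $D_{-1}(\epsilon) = D_1(\epsilon)$. Thus $\E_*$ has precisely two simple zeros and two simple poles inside $\Gamma$, and the argument principle yields $\mathrm{winding}(\E_*(\Gamma)) = 0$.

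The Rouché step then reduces to the lower bound $|\E_*(\lambda)| \gtrsim 1$ on $\Gamma$. Using the biholomorphic parametrization of $D_1(\epsilon)$ by $q_- \in \B_\delta$ provided by the implicit function argument in the proof of Proposition~\ref{prop:plateau1}, each factor of $\E_*$ can be written near $D_1(\epsilon)$ as $h(\lambda)(\lambda - \lambda_z)/(\lambda - \lambda_p)$ with $\lambda_z,\lambda_p \in D_1(\epsilon)$ and $h$ analytic, bounded above and below by $\delta$-dependent constants. Since $\mathrm{dist}(\Gamma,D_1(\epsilon)) \gtrsim \delta^{-1/3} M(\epsilon)^3$ while the radius of $D_1(\epsilon)$ is at most $C_\delta M(\epsilon)^3$, the ratio $(\lambda - \lambda_z)/(\lambda - \lambda_p)$ is $1 + \ord(\delta^{1/3})$ uniformly on $\Gamma$, so $|\E_*(\lambda)|$ is bounded below by an $\epsilon$-independent constant, dominating the $\ord(\sqrt{\delta})$ error. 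Meromorphic Rouché then yields $\mathrm{winding}(\tilde{\E}_\epsilon(\Gamma)) = 0$. Finally, $\mathrm{winding}(\E_\epsilon(\Gamma)) = \mathrm{winding}(\tilde{\E}_\epsilon(\Gamma))$ because the ratio $\E_\epsilon/\tilde{\E}_\epsilon = S_-(\zeta_\tf(\epsilon))S_+(\zeta_\tf(\epsilon))/[S_-(0)S_+(0)]$ is analytic and nowhere vanishing on a neighborhood of $\overline{\mathrm{int}(\Gamma)}$: the gauge choice~\eqref{gaugechoice} normalizes the upper blocks at $\zeta=0$ and $\zeta=\zeta_\tf(\epsilon)$ respectively, and the tracked subspaces remain in the coordinate chart at the opposite endpoints, uniformly in $\lambda$ inside $\Gamma$, by a forward-tracking estimate analogous to Proposition~\ref{prop:plateau3}; such an estimate closes on $\Gamma$ because $\sqrt{\|\epsilon\|+|\lambda|}\,|\zeta_\tf(\epsilon)| \sim \pi\sqrt{1+\delta^{-1/3}M(\epsilon)}$ remains bounded there, so Lemma~\ref{lem:evolbound} gives an $\ord(1)$ evolution bound in both directions.

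The main obstacle is precisely this last piece: verifying nonvanishing of the ratio $\E_\epsilon/\tilde{\E}_\epsilon$ (equivalently, regularity of $\hat{T}_-(0;\lambda,\epsilon)$) throughout $\mathrm{int}(\Gamma)$ including inside $D_1(\epsilon)$, where the forward-tracking estimate of Remark~\ref{rem:forwardbackward} does not automatically close and where one must supplement the analysis with the bounded-evolution argument for $\lambda$ on and inside $\Gamma$. A secondary concern is keeping the $\ord(\sqrt{\delta})$ error in the approximation of $\tilde{\E}_\epsilon$ by $\E_*$ strictly smaller than the $\ord(1)$ lower bound on $|\E_*|$, which forces the ordering $\epsilon_1 \ll \Theta_1 \ll \delta \ll 1$ to be refined to ensure all constants compose correctly.
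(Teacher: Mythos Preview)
There is a genuine gap in your transfer step from $\tilde{\E}_\epsilon$ to $\E_\epsilon$. With the gauge~\eqref{gaugechoice}, the ratio $\E_\epsilon/\tilde{\E}_\epsilon$ equals, up to the nonvanishing factor $\det\mathcal{T}_*(0,\zeta_\tf;\lambda,\epsilon)$, the quotient $\det(\hat{X}_+(\zeta_\tf;\lambda,\epsilon))/\det(\hat{X}_-(0;\lambda,\epsilon))$, and this is \emph{not} analytic and nonvanishing on the closed disk bounded by $\Gamma$: the Rouch\'e computation carried out in Lemma~\ref{lemma:windingnumber2} shows that $\det(\hat{X}_-(0;\cdot,\epsilon))$ has exactly two zeros in $D_1(\epsilon)\subset\mathrm{int}(\Gamma)$, so the subspace $W_-(0;\lambda,\epsilon)$ \emph{does} leave the chart $\mathfrak{c}$ there. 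Your forward-tracking justification can at best control $\hat{T}_-(0;\lambda,\epsilon)$ for $\lambda\in\Gamma$; inside $D_1(\epsilon)$ no estimate from~\S\ref{sec:track} applies, and that is exactly the region you must cross. One could instead try to show the ratio has winding number zero by proving $\det(\hat{X}_+(\zeta_\tf;\cdot,\epsilon))$ also has exactly two zeros inside $\Gamma$, but that duplicates the content of Lemma~\ref{lemma:windingnumber2} and makes your route strictly longer than the paper's. (A secondary imprecision: your error and lower bounds on $\Gamma$ are $\ord(|\log\delta|^{-3})$ and $\ord(|\log\delta|^{-2})$, not $\ord(\sqrt{\delta})$ and $\ord(1)$; the Rouch\'e step for $\tilde{\E}_\epsilon$ still closes, but only after correcting this.)

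The paper avoids the transfer entirely by working directly with $\E_\epsilon$ at $\zeta=0$ and approximating it on $\Gamma$ by a \emph{nonzero constant}. The key observation (cf.\ Remark~\ref{rem:forwardbackward}) is that for $\lambda\in\Gamma$ the explicit expansion of $\varpi_{1,2}(\lambda,\epsilon)$ at the scale $|\lambda|=\delta^{-1/3}M(\epsilon)^3$ yields the sharpened bound $\|\mathcal{T}_{*,p}(\zeta_\tf,0;\lambda,\epsilon)\|\leq C\delta^{-1/3}$ in place of the generic $\ord(|\zeta_\tf|)$ from Lemma~\ref{lem:evolbound}; this makes a forward variation-of-constants argument for $W_-$ close and gives $\|\hat{T}_-(0;\lambda,\epsilon)\|\leq C\delta^{1/4}$ on $\Gamma$. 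Since $\hat{T}_+(0;\lambda,\epsilon)\approx\mathrm{diag}\bigl(-\sqrt{2(1-\ri\alpha)},-\sqrt{2(1+\ri\alpha)}\bigr)$ by~\eqref{approxricr} and~\eqref{expex0}, one obtains $|\E_\epsilon(\lambda)-2\sqrt{1+\alpha^2}|\leq C\delta^{1/4}$ on $\Gamma$, and Rouch\'e against the constant $2\sqrt{1+\alpha^2}$ gives winding number zero in one line---no meromorphic comparison function, no pole/zero bookkeeping, and no relation between $\E_\epsilon$ and $\tilde{\E}_\epsilon$ is needed.
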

\begin{proof} In this proof $C > 1$ denotes any $\epsilon$-, $\lambda$- and $\delta$-independent constant.

We start by approximating the evolution $\mathcal{T}_{*,p}(\zeta,y;\lambda,\epsilon)$ of system~\eqref{evprobp} for $\lambda \in \Gamma$. Define
\begin{align*}
\varpi_1(\lambda,\epsilon) &= \sqrt{\frac{\lambda}{m(\epsilon)^2} (1 - \ri \alpha) + \mu\left(\epsilon\right)},\\
\varpi_2(\lambda,\epsilon) &= \sqrt{\frac{\lambda}{m(\epsilon)^2} (1 + \ri \alpha) + \overline{\mu\left(\epsilon\right)}}.
\end{align*}
Fix $\vartheta \in [0,2\pi]$ and let $\lambda = \delta^{-1/3} M(\epsilon)^3 \re^{\ri \vartheta} \in \Gamma$. Using~\eqref{expex4},~\eqref{expex5} and~\eqref{expex1} we obtain the expansions
\begin{align}
\begin{split}
\left|\varpi_1(\lambda,\epsilon) - M(\epsilon)\left(\ri + \frac{\delta^{-1/3} M(\epsilon)}{2} \re^{\ri\left(\vartheta - \frac{\pi}{2}\right)} (1-\ri \alpha)\right)\right| \leq CM(\epsilon)^2,\\
\left|\varpi_2(\lambda,\epsilon) - M(\epsilon)\left(-\ri + \frac{\delta^{-1/3} M(\epsilon)}{2} \re^{\ri\left(\vartheta + \frac{\pi}{2}\right)} (1+\ri \alpha)\right)\right| \leq CM(\epsilon)^2,
\end{split} \label{approxpi}
\end{align}
provided $0 < \epsilon_1 \ll \Theta_1 \ll \delta \ll 1$ and $0 \leq |\epsilon_2| \ll \Theta_1 \ll \delta \ll 1$. Directly solving the constant-coefficient system~\eqref{evprobp} yields
\begin{align} \begin{split}\mathcal{T}_{*,p}(\zeta,y;\lambda,\epsilon) &= \left(\begin{array}{cc} \cosh\left(\varpi_1(\lambda,\epsilon)(\zeta - y)\right) & 0 \\
0 & \cosh\left(\varpi_2(\lambda,\epsilon)(\zeta - y)\right)  \\
\sinh\left(\varpi_1(\lambda,\epsilon)(\zeta-y)\right) \varpi_1(\lambda,\epsilon) & 0 \\
0 & \sinh\left(\varpi_2(\lambda,\epsilon)(\zeta-y)\right) \varpi_2(\lambda,\epsilon) \end{array}\right. \ldots\\
& \qquad \qquad \qquad \qquad \ldots \left.\begin{array}{cc} \frac{\sinh\left(\varpi_1(\lambda,\epsilon)(\zeta-y)\right)}{\varpi_1(\lambda,\epsilon)} & 0\\
 0 & \frac{\sinh\left(\varpi_2(\lambda,\epsilon)(\zeta-y)\right)}{\varpi_2(\lambda,\epsilon)}\\
\cosh\left(\varpi_1(\lambda,\epsilon)(\zeta - y)\right) & 0 \\
0 & \cosh\left(\varpi_2(\lambda,\epsilon)(\zeta - y)\right)\end{array}\right).\end{split} \label{explevol}\end{align}
Thus, using Lemma~\ref{lem:evolbound},~\eqref{expex5},~\eqref{expex1} and~\eqref{approxpi}, for $\zeta \in [\zeta_\tf(\epsilon),0]$ we arrive at
\begin{align}
\begin{split}
\left\|\mathcal{T}_{*,p}(\zeta,0;\lambda,\epsilon)\right\| &\leq \left\|\mathcal{T}_{*,p}(\zeta,\zeta_\tf(\epsilon);\lambda,\epsilon)\right\| \left\|\mathcal{T}_{*,p}(\zeta_\tf(\epsilon),0;\lambda,\epsilon)\right\| \\
&\leq C\delta^{-1/3} \left(1 + |\zeta - \zeta_\tf(\epsilon)|\right) \re^{\sqrt{\|\epsilon\| + |\lambda|}|\zeta - \zeta_\tf(\epsilon)|}, \\
 \left\|\mathcal{T}_{*,p}(0,\zeta;\lambda,\epsilon)\right\| &\leq \left\|\mathcal{T}_{*,p}(0,\zeta_\tf(\epsilon);\lambda,\epsilon)\right\| \left\|\mathcal{T}_{*,p}(\zeta_\tf(\epsilon),\zeta;\lambda,\epsilon)\right\| \\
&\leq C\delta^{-1/3} \left(1 + |\zeta - \zeta_\tf(\epsilon)|\right) \re^{\sqrt{\|\epsilon\| + |\lambda|}|\zeta - \zeta_\tf(\epsilon)|}.
\end{split}
\label{evolbound2}\end{align}

Next, we fix $\lambda \in \Gamma$ and approximate $\hat{T}_-(0;\lambda,\epsilon)$. Recall from~\S\ref{sec:invric} that, under the coordinate chart $\mathfrak{c}$, the matrix $\hat{T}_-(\zeta;\lambda,\epsilon) \in \C^{2 \times 2}$ represents the subspace $W_-(\zeta;\lambda,\epsilon) \in \mathrm{Gr}(2,\C^4)$ of solutions to~\eqref{evprob2}. As in the proof of Proposition~\ref{prop:plateau3}, we denote $B_{*,p}(\zeta;\epsilon) = A_*(\zeta;\lambda,\epsilon) - A_{*,-}(\lambda,\epsilon)$. The variation of constants formula
\begin{align}
\Zet(\zeta;\lambda,\epsilon) = \mathcal{T}_{*,p}(\zeta,0;\lambda,\epsilon) \begin{pmatrix} I_2 \\ \tilde{T}_-(0;\lambda,\epsilon)\end{pmatrix} + \int_{\zeta_\tf(\epsilon)}^\zeta
\mathcal{T}_{*,p}(\zeta,y;\lambda,\epsilon) B_{*,p}(y;\epsilon) \Zet(y;\lambda,\epsilon) \mathrm dy, \label{varcon3}
\end{align}
yields a matrix solution $\Zet(\zeta;\lambda,\epsilon) \in \C^{4 \times 2}$ to~\eqref{evprob2}, which spans the subspace $W_-(\zeta;\lambda,\epsilon)$ for each $\zeta \in \R$. Indeed, $\Zet(\zeta_\tf(\epsilon);\lambda,\epsilon)$ spans the subspace which is represented by $\tilde{T}_-(\zeta_\tf(\epsilon);\lambda,\epsilon) = \hat{T}_-(\zeta_\tf(\epsilon);\lambda,\epsilon) \in \C^{2 \times 2}$ under the chart $\mathfrak{c}$

By~\eqref{varcon3}, we find that $\hat{\Zet}(\zeta;\lambda,\epsilon) := \mathcal{T}_{*,p}(0,\zeta;\lambda,\epsilon)\Zet(\zeta;\lambda,\epsilon)$ satisfies
\begin{align}
\hat{\Zet}(\zeta;\lambda,\epsilon) = \begin{pmatrix} I_2 \\ \tilde{T}_-(0;\lambda,\epsilon)\end{pmatrix} + \int_{\zeta_\tf(\epsilon)}^\zeta
\mathcal{T}_{*,p}(0,y;\lambda,\epsilon) B_{*,p}(y;\epsilon) \mathcal{T}_{*,p}(y,0;\lambda,\epsilon) \hat{\Zet}(y;\lambda,\epsilon) \mathrm dy. \label{varcon4}
\end{align}
By estimate~\eqref{plateaubound} from the proof of Proposition~\ref{prop:plateau3} and~\eqref{evolbound2} the linear operator $\G_{\lambda,\epsilon}$ given by
\begin{align*} \left(\G_{\lambda,\epsilon} \hat{\Zet}\right)(\zeta) = \int_{\zeta_\tf(\epsilon)}^\zeta \mathcal{T}_{*,p}(0,y;\lambda,\epsilon) B_{*,p}(y;\epsilon) \mathcal{T}_{*,p}(y,0;\lambda,\epsilon) \hat{\Zet}(y) \de y,\end{align*}
on $C([\zeta_\tf(\epsilon),0],\C^{4 \times 2})$ has norm $\|\G_{\lambda,\epsilon}\| \leq C \delta^{1/3}$. Upon taking $\delta > 0$ sufficiently small (but independent of $\lambda$ and $\epsilon$), $I - \G_{\lambda,\epsilon}$ is invertible on $C([\zeta_\tf(\epsilon),0],\C^{4 \times 2})$ and
\begin{align*}
 \hat{\Zet}(\zeta;\lambda,\epsilon) = \left[(I - \G_{\lambda,\epsilon})^{-1} \begin{pmatrix} I_2 \\ \tilde{T}_-(0;\lambda,\epsilon)\end{pmatrix} \right](\zeta),
\end{align*}
is the solution to~\eqref{varcon4}. With the aid of Proposition~\ref{prop:plateau2}, we approximate
\begin{align*}
\left\|\hat{\Zet}(\zeta;\lambda,\epsilon) - \begin{pmatrix} I_2 \\ \tilde{T}_-(0;\lambda,\epsilon)\end{pmatrix}\right\| &\leq \|\G_{\lambda,\epsilon}\|\left\|(I - \G_{\lambda,\epsilon})^{-1}\right\| \left\|\begin{pmatrix} I_2 \\ \tilde{T}_-(0;\lambda,\epsilon)\end{pmatrix}\right\| \leq C\delta^{1/3}.
\end{align*}
Hence, as $\hat{\Zet}(0;\lambda,\epsilon) = \Zet(0;\lambda,\epsilon)$, we establish
\begin{align} \left\|\Zet(0;\lambda,\epsilon) - \begin{pmatrix} I_2 \\ \tilde{T}_-(0;\lambda,\epsilon)\end{pmatrix}\right\| \leq C\delta^{1/3}. \label{yboundl}\end{align}
Upon denoting $\Zet(0;\lambda,\epsilon) = (\Zet_1(\lambda,\epsilon),\Zet_2(\lambda,\epsilon))$, we approximate using Proposition~\ref{prop:plateau2} and~\eqref{yboundl}:
\begin{align*} \left\|\hat{T}_-(0;\lambda,\epsilon) - \tilde{T}_-(0;\lambda,\epsilon)\right\|
 &= \left\|\Zet_2 \Zet_1^{-1} - \tilde{T}_-\right\| \leq \left\|\hat{T}_-\right\|\|I_2 - \Zet_1\| + \|\Zet_2-\tilde{T}_-\|\\
 &\leq \left(\left\|\hat{T}_- - \tilde{T}_-\right\| + \|\tilde{T}_-\|\right) \|I_2 - \Zet_1\| + \|\Zet_2 - \tilde{T}_-\|\\
 &\leq C\left(\left\|\hat{T}_- - \tilde{T}_-\right\| + \delta^{1/4}\right) \delta^{1/3} + C\delta^{1/3},
\end{align*}
where we suppress the arguments on the right hand side. Hence, we conclude
\begin{align} \left\|\hat{T}_-(0;\lambda,\epsilon) - \tilde{T}_-(0;\lambda,\epsilon)\right\| \leq C\delta^{1/3}, \qquad \left\|\hat{T}_-(0;\lambda,\epsilon)\right\| \leq C\delta^{1/4}, \label{tboundl2}\end{align}
using~Proposition~\ref{prop:plateau2}.

All in all,~\eqref{approxricr} and~\eqref{expex0} in Lemma~\ref{lem:approx} and~\eqref{tboundl2} yield
\begin{align*}
\left|\E_{\epsilon}(\lambda) - 2\sqrt{1+\alpha^2}\right| \leq C\delta^{1/4} < 2\sqrt{1+\alpha^2}, \qquad \text{for } \lambda \in \Gamma,
\end{align*}
provided $0 < \epsilon_1 \ll \Theta_1 \ll \delta \ll 1$ and $0 \leq |\epsilon_2| \ll \Theta_1 \ll \delta \ll 1$. Hence, applying Rouch\'e's Theorem on the contour $\Gamma$ yield that $\E_\epsilon(\Gamma)$ has winding number $0$ and, thus, the meromorphic function $\E_\epsilon$ possesses an equal number of zeros and poles in the interior of $\Gamma$.
\end{proof}

Hence, Lemma~\ref{lemma:windingnumber} shows that the Riccati-Evans function $\E_\epsilon$ has an equal number of zeros and poles in the interior of the contour $\Gamma$. Thus, to find its number of zeros, we can compute the number of poles of $\E_\epsilon$ in the interior of $\Gamma$. Our plan is to use the formula~\eqref{relEvans} in Proposition~\ref{prop:ric} to write the Riccati-Evans function $\E_\epsilon$ as a quotient of two analytic functions. Then, by applying Rouch\'e's theorem, we determine the number of zeros of the denominator in the interior of $\Gamma$, which then provides us an upper bound of the number of poles of $\E_\epsilon$. An additional outcome of the upcoming analysis is that we can approximate the Riccati-Evans function at $\Gamma \cap \R_{> 0}$, which is useful for the parity argument. All in all, we obtain the following result.

\begin{lemma} \label{lemma:windingnumber2}
Provided $0 < \epsilon_1 \ll \Theta_1 \ll \delta \ll 1$ and $0 \leq |\epsilon_2| \ll \Theta_1 \ll \delta \ll 1$, $\E_\epsilon$ has at most two poles (including multiplicity) in the interior of the simple contour $\Gamma$, given by~\eqref{defGamma}. If $\E_\epsilon$ has precisely two poles within $\Gamma$, then the zeros of $\det(\hat{X}_-(0;\cdot,\epsilon))$ and the poles of $\E_\epsilon$ in the interior of $\Gamma$ coincide (including multiplicity), where $\hat{X}_-(\zeta;\lambda,\epsilon) \in \C^{2 \times 2}$ is defined in~\S\ref{sec:track} by~\eqref{gaugechoice}. Finally, it holds
\begin{align} \left|\E_\epsilon\left(\delta^{-1/3} M(\epsilon)^3\right) - 2\sqrt{1+\alpha^2}\right| \leq C|\log(\delta)|^{-2}, \label{epsest}\end{align}
and
\begin{align} \left|\det\left(\hat{X}_-\left(0;\delta^{-1/3} M(\epsilon)^3,\epsilon\right)\right) - \frac{\pi^2}{4} \delta^{-2/3} \hat{z}_*(\zeta_\delta)^2 \left(1+\alpha^2\right)\right| \leq C\delta^{-2/3}|\log(\delta)|^{-3}. \label{epsest2}\end{align}
where $C > 1$ is an $\epsilon$- and $\delta$-independent constant and $M(\epsilon)$ is defined in~\eqref{defM}.
\end{lemma}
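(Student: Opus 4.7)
The strategy is to use the factorization~\eqref{relEvans} from Proposition~\ref{prop:ric} to reduce the pole-count of $\E_\epsilon$ to a zero-count of the analytic function $S_-(\lambda) := \det \hat{X}_-(0;\lambda,\epsilon)$, and then count zeros of $S_-$ via Rouch\'e's theorem. Indeed, the normalization $\hat{X}_+(0;\lambda,\epsilon) = I_2$ from~\eqref{gaugechoice} makes $S_+(\lambda) \equiv 1$, so $\E_\epsilon(\lambda) = \tilde{E}_\epsilon(\lambda)/S_-(\lambda)$ with $\tilde{E}_\epsilon$ the classical Evans function associated with the bases~\eqref{bases}. Let $Z_S$ and $Z_E$ denote the number of zeros (with multiplicity) of $S_-$ and $\tilde{E}_\epsilon$ in the interior of $\Gamma$. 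Lemma~\ref{lemma:windingnumber} yields winding $\E_\epsilon(\Gamma) = Z_E - Z_S = 0$, hence $Z_E = Z_S$, and the number of poles of $\E_\epsilon$ equals $Z_S$ minus the joint multiplicity of common zeros of $\tilde{E}_\epsilon$ and $S_-$. Both assertion (a) (``at most two poles'') and assertion (b) (``coincidence with zeros of $S_-$ in the case of precisely two poles'') therefore reduce to showing $Z_S = 2$.

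\textbf{Diagonal comparison.} I will introduce $X_{d,-}(\zeta;\lambda,\epsilon) \in \C^{2\times 2}$ as the upper block of $\mathcal{T}_{*,p}(\zeta,\zeta_\tf(\epsilon);\lambda,\epsilon)\bigl(I_2;\,\hat{z}_*(\zeta_\delta) I_2\bigr)$; this is the basis evolution of the diagonal Riccati solution $T_{d,-}$ of Proposition~\ref{prop:plateau1} under the reduced system~\eqref{evprobp}. Because $A_{*,-}$ and $\D$ are diagonal, $X_{d,-}$ is diagonal and the explicit formula~\eqref{explevol} gives
\begin{equation*}
\det X_{d,-}(0;\lambda,\epsilon) = \prod_{k=1}^{2}\!\left(\cosh(\varpi_k|\zeta_\tf(\epsilon)|) + \hat{z}_*(\zeta_\delta)\,\frac{\sinh(\varpi_k|\zeta_\tf(\epsilon)|)}{\varpi_k}\right)\!,
\end{equation*}
with $\varpi_{1,2} = \varpi_{1,2}(\lambda,\epsilon)$ as in the proof of Lemma~\ref{lemma:windingnumber}. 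Combining the expansions~\eqref{approxpi} with $M(\epsilon)|\zeta_\tf(\epsilon)| = -\hat{\zeta}(M(\epsilon)) = \pi + O(M(\epsilon))$ from~\eqref{expex5}--\eqref{expex1}, the arguments $\varpi_1|\zeta_\tf|$ and $\varpi_2|\zeta_\tf|$ sit near $\pi\ri$ and $-\pi\ri$ respectively, at distance $O(\delta^{-1/3}M(\epsilon))$. Expanding $\cosh$ and $\sinh/\varpi_k$ around these integer multiples of $\pi\ri$ and using~\eqref{deltaineq} yields
\begin{equation*}
\det X_{d,-}(0;\delta^{-1/3}M(\epsilon)^3\re^{\ri\vartheta},\epsilon) = \tfrac{\pi^2}{4}\delta^{-2/3}\hat{z}_*(\zeta_\delta)^2(1+\alpha^2)\re^{2\ri\vartheta} + O\!\left(\delta^{-2/3}|\log\delta|^{-3}\right),
\end{equation*}
for $\lambda \in \Gamma$. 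Setting $\vartheta = 0$ provides the diagonal analogue of~\eqref{epsest2}; moreover, the leading term has modulus $\gtrsim \delta^{-2/3}|\log\delta|^{-2}$ on $\Gamma$ with winding number exactly $2$.

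\textbf{Rouch\'e inequality and conclusion.} Writing $\hat{\Phi}_-(\zeta) = (\hat{X}_-(\zeta);\hat{Y}_-(\zeta))$ and $\Phi_d(\zeta) = \mathcal{T}_{*,p}(\zeta,\zeta_\tf(\epsilon);\lambda,\epsilon)(I_2;\hat{z}_*(\zeta_\delta)I_2)$, variation of constants as in the proof of Lemma~\ref{lemma:windingnumber}, now applied at the basis level rather than at the level of $\hat{T}_-$, yields
\begin{equation*}
\hat{\Phi}_-(0) - \Phi_d(0) = \mathcal{T}_{*,p}(0,\zeta_\tf(\epsilon);\lambda,\epsilon)\bigl(0;\,\hat{T}_-(\zeta_\tf(\epsilon);\lambda,\epsilon) - \hat{z}_*(\zeta_\delta)I_2\bigr) + \int_{\zeta_\tf(\epsilon)}^0 \mathcal{T}_{*,p}(0,y;\lambda,\epsilon)B_{*,p}(y;\epsilon)\hat{\Phi}_-(y)\,\de y.
\end{equation*}
The initial-data correction $\hat{T}_-(\zeta_\tf;\lambda,\epsilon) - \hat{z}_*(\zeta_\delta)I_2$ is $O(|\log\delta|^{-2})$ by Lemma~\ref{lem:approx}~\eqref{approxricl}, while $\|B_{*,p}(y;\epsilon)\| \leq C\delta\,\re^{-\eta(y-\zeta_\tf(\epsilon))}$ from~\eqref{plateaubound}. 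Combining these with the ``hat trick'' conjugation introduced in~\eqref{varcon4} and the evolution bound $\|\mathcal{T}_{*,p}(0,\zeta;\lambda,\epsilon)\| \leq C\delta^{-1/3}$ on $\Gamma$ from~\eqref{evolbound2}, together with the standard $2\times 2$ inequality $|\det A - \det B| \leq 2\max(\|A\|,\|B\|)\|A-B\|$ and the magnitude $\|X_{d,-}(0;\lambda,\epsilon)\| = O(\delta^{-1/3}|\log\delta|^{-1})$, produces $|\det \hat{X}_-(0;\lambda,\epsilon) - \det X_{d,-}(0;\lambda,\epsilon)| = O(\delta^{-2/3}|\log\delta|^{-3})$ on $\Gamma$, strictly smaller than the $\gtrsim \delta^{-2/3}|\log\delta|^{-2}$ lower bound on $|\det X_{d,-}|$. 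Rouch\'e then delivers $Z_S = 2$, proving (a) and (b); adding the Rouch\'e correction to the diagonal computation at $\vartheta = 0$ yields~\eqref{epsest2}; and~\eqref{epsest} follows directly from the uniform estimate $|\E_\epsilon(\lambda) - 2\sqrt{1+\alpha^2}| \leq C\delta^{1/4}$ on $\Gamma$ already obtained in the proof of Lemma~\ref{lemma:windingnumber}, which majorizes the required $C|\log\delta|^{-2}$ bound. The technical crux will be executing the variation-of-constants estimate on the basis $\hat{X}_-$ rather than on the meromorphic ratio $\hat{T}_-$ for which we have previous control: the Rouch\'e margin ultimately rests on the $\delta^{-1/3}$-gap between the $\delta^{-2/3}|\log\delta|^{-2}$ magnitude of $\det X_{d,-}$ on $\Gamma$ and the available error coming from Lemma~\ref{lem:approx} and~\eqref{plateaubound} paired with the large plateau-evolution factor~\eqref{evolbound2}.
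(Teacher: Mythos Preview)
Your proposal is correct and follows essentially the same architecture as the paper's proof: factorize $\E_\epsilon$ using the normalization~\eqref{gaugechoice}, reduce the pole count to counting zeros of $\det\hat{X}_-(0;\cdot,\epsilon)$, compare $\hat{X}_-$ to the diagonal comparison basis $X_{d,-}$ (the paper calls this $\hat{X}_0$) via variation of constants with the conjugation trick of~\eqref{varcon4}--\eqref{varcon6}, and conclude by Rouch\'e on $\Gamma$.

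There are two minor tactical differences worth noting. First, to establish that $\det X_{d,-}(0;\cdot,\epsilon)$ has exactly two zeros inside $\Gamma$, you read off the winding number directly from the leading term $\propto e^{2\ri\vartheta}$; the paper instead invokes Proposition~\ref{prop:plateau1} to identify these zeros with the simple poles of $t_-(0;\cdot,\epsilon)$ and $s_-(0;\cdot,\epsilon)$ in $D_1(\epsilon)=D_{-1}(\epsilon)$. Both routes are valid; yours is slightly more self-contained, while the paper's ties the count back to the pole structure already established. Second, for~\eqref{epsest} you simply cite the bound $|\E_\epsilon(\lambda)-2\sqrt{1+\alpha^2}|\le C\delta^{1/4}$ from the proof of Lemma~\ref{lemma:windingnumber}, which indeed dominates the stated $C|\log\delta|^{-2}$; the paper re-derives this from $|\det\hat{T}_-(0)| = |\det\hat{Y}_-|/|\det\hat{X}_-|$ using~\eqref{gammabound2}--\eqref{Tboundr}. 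Your shortcut is legitimate and actually gives a sharper constant.
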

\begin{proof} In this proof $C > 1$ denotes any $\epsilon$-, $\lambda$- and $\delta$-independent constant.

Recall from~\S\ref{sec:invric} that the matrix solution
\begin{align*} \begin{pmatrix} \hat{X}_-(\zeta;\lambda,\epsilon) \\ \hat{Y}_-(\zeta;\lambda,\epsilon)\end{pmatrix} \in \C^{4 \times 2}, \qquad \text{with } \begin{pmatrix} \hat{X}_-(\zeta_\tf(\epsilon);\lambda,\epsilon) \\ \hat{Y}_-(\zeta_\tf(\epsilon);\lambda,\epsilon)\end{pmatrix} = \begin{pmatrix} I_2 \\ \hat{T}_-(\zeta_\tf(\epsilon);\lambda,\epsilon)\end{pmatrix},\end{align*}
to system~\eqref{evprob2} spans the relevant subspace $W_-(\zeta;\lambda,\epsilon)$, which is represented by $\hat{T}_-(\zeta;\lambda,\epsilon) = \hat{Y}_-(\zeta;\lambda,\epsilon)\hat{X}_-(\zeta;\lambda,\epsilon)^{-1}$, and is analytic in $\lambda \in R_1(\Theta_1)$, because the coefficient matrix of~\eqref{evprob2} depends analytically on $\lambda$ and the meromorphic function $\hat{T}_-(\zeta_\tf(\epsilon);\cdot,\epsilon)$ has no poles on $R_1(\Theta_1)$ by~\eqref{approxricl} in Lemma~\ref{lem:approx}. Thus, using~\eqref{approxricr} and~\eqref{expex0} in Lemma~\ref{lem:approx}, a pole of the Riccati-Evans function
\begin{align*}\E_\epsilon(\lambda) =\det\left(\hat{T}_+(0;\lambda,\epsilon) - \hat{T}_-(0;\lambda,\epsilon)\right) = \frac{\det\left(\hat{T}_+(0;\lambda,\epsilon)\hat{X}_-(0;\lambda,\epsilon) - \hat{Y}_-(0;\lambda,\epsilon)\right)}{\det(\hat{X}_-(0;\lambda,\epsilon))}, \end{align*}
of multiplicity $n$ at some $\lambda = \lambda_p$ yields a zero of $\det(\hat{X}_-(0;\lambda,\epsilon))$ at some point $\lambda = \lambda_p \in R_1(\Theta_1)$ of multiplicity $\geq n$. Note that, due to possible zero-pole cancellation, a zero of $\det(\hat{X}_-(0;\cdot,\epsilon))$ does not necessarily yield a pole of $\E_\epsilon$.

Our plan is to approximate $\det(\hat{X}_-(0;\lambda,\epsilon))$ for $\lambda \in \Gamma$ by an analytic function with a known number of zeros and find its number of zeros within the contour $\Gamma$ using Rouch\'e's theorem. Therefore, we consider the matrix solution
\begin{align*} \begin{pmatrix} \hat{X}_0(\zeta;\lambda,\epsilon) \\ \hat{Y}_0(\zeta;\lambda,\epsilon)\end{pmatrix} \in \C^{4 \times 2}, \qquad \text{with } \begin{pmatrix} \hat{X}_0(\zeta_\tf(\epsilon);\lambda,\epsilon) \\ \hat{Y}_0(\zeta_\tf(\epsilon);\lambda,\epsilon)\end{pmatrix} = \begin{pmatrix} I_2 \\ T_{d,-}(\zeta_\tf(\epsilon);\lambda,\epsilon)\end{pmatrix},\end{align*}
to system~\eqref{evprobp}, which spans the 2-dimensional subspace represented by the solution $T_{d,-}(\zeta;\lambda,\epsilon)$, defined in Proposition~\ref{prop:plateau1}, to~\eqref{matrixric} under the coordinate chart $\mathfrak{c}$. Because the coefficient matrix of~\eqref{evprobp} depends analytically on $\lambda$ and $T_{d,-}(\zeta_\tf(\epsilon);\lambda,\epsilon) = \mathrm{diag}\left(\hat{z}_*(\zeta_\delta),\hat{z}_*(\zeta_\delta)\right)$ is independent of $\lambda$, we find that $\hat{X}_0(0;\cdot,\epsilon)$ and $\hat{Y}_0(0;\cdot,\epsilon)$ are analytic on $R_1(\Theta_1)$. Consequently, the quotient
\begin{align}\hat{Y}_0(\zeta;\lambda,\epsilon)\hat{X}_0(\zeta;\lambda,\epsilon)^{-1} = T_{d,-}(\zeta;\lambda,\epsilon) = \mathrm{diag}\left(t_-(\zeta;\lambda,\epsilon),s_-(\zeta;\lambda,\epsilon)\right), \label{merom}\end{align}
is meromorphic.

It follows immediately by the diagonal structure of system~\eqref{evprobp} that $\hat{X}_0(\zeta;\lambda,\epsilon)$ and $\hat{Y}_0(\zeta;\lambda,\epsilon)$ are diagonal matrices for each $\zeta \in \R$. Consequently, the upper (or lower) diagonal elements of $\hat{X}_0(0;\lambda,\epsilon)$ and $\hat{Y}_0(0;\lambda,\epsilon)$ cannot vanish for the same $\lambda \in R_1(\Theta_1)$ as this would imply that the subspace spanned by $\left(\hat{X}_0, \hat{Y}_0\right)(\zeta;\lambda,\epsilon)$ is no longer 2-dimensional. So,~\eqref{merom} yields that the poles of $t_-(0;\cdot,\epsilon)s_-(0;\cdot,\epsilon)$ in $R_1(\Theta_1)$ coincide with the zeros of $\det(\hat{X}_0(0;\cdot,\epsilon))$, including multiplicity.

We obtained in Proposition~\ref{prop:plateau1} that the number of poles of $t_-(0;\cdot,\epsilon)$ and of $s_-(0;\cdot,\epsilon)$ in $D_1(\epsilon) = D_{-1}(\epsilon)$ is one (including multiplicity). We conclude that $\det(\hat{X}_0(0;\cdot,\epsilon))$ has precisely two zeros in $D_1(\epsilon)$, including multiplicity. In addition,~\eqref{dboundl} in Proposition~\ref{prop:plateau1} shows that $t_-(0;\cdot,\epsilon)^{-1}$ and $s_-(0,\cdot,\epsilon)^{-1}$ have no zeros in $R_1(\Theta_1)$ outside of the disks $D_j(\epsilon), j \in \Z \setminus \{0\}$. Thus, also $\det(\hat{X}_0(0;\cdot,\epsilon))$ admits no zeros in $R_1(\Theta_1)$ outside of the disks $D_j(\epsilon), j \in \Z \setminus \{0\}$. By~\eqref{radiusb} in Proposition~\ref{prop:plateau1}, the contour $\Gamma$ contains the disk $D_1(\epsilon) = D_{-1}(\epsilon)$, but none of the other disks $D_j(\epsilon), j \in \Z \setminus\{0,\pm1\}$, we establish that $\det(\hat{X}_0(0;\cdot,\epsilon))$ has precisely two zeros in the interior of $\Gamma$, including multiplicity.

All that remains is to approximate $\det(\hat{X}_-(0;\lambda,\epsilon))$ by $\det(\hat{X}_0(0;\lambda,\epsilon))$ for $\lambda \in \Gamma$ and apply Rouch\'e's theorem. The explicit expression~\eqref{explevol} of the evolution of~\eqref{evprobp}, obtained in the proof of Lemma~\ref{lemma:windingnumber}, yields that $\smash{\left(\begin{smallmatrix} \hat{X}_0\\ \hat{Y}_0\end{smallmatrix}\right)}(0;\lambda,\epsilon)$ is given by
{
\begin{align*} &\left(\begin{array}{c} \cosh\left(\varpi_1(\lambda,\epsilon)\zeta_\tf(\epsilon)\right) - \frac{\sinh(\varpi_1(\lambda,\epsilon)\zeta_\tf(\epsilon)) \hat{z}_*(\zeta_\delta)}{\varpi_1(\lambda,\epsilon)}  \\
0  \\
\cosh\left(\varpi_1(\lambda,\epsilon)\zeta_\tf(\epsilon)\right)\hat{z}_*(\zeta_\delta) - \sinh(\varpi_1(\lambda,\epsilon)\zeta_\tf(\epsilon))\varpi_1(\lambda,\epsilon) \\
0 \end{array}\right. \ldots \\
&\qquad \qquad \qquad \qquad \ldots
\left.\begin{array}{c}  0\\
\cosh\left(\varpi_2(\lambda,\epsilon)\zeta_\tf(\epsilon)\right) -  \frac{\sinh(\varpi_2(\lambda,\epsilon)\zeta_\tf(\epsilon)) \hat{z}_*(\zeta_\delta)}{\varpi_2(\lambda,\epsilon)}\\
0\\
\cosh\left(\varpi_2(\lambda,\epsilon)\zeta_\tf(\epsilon)\right)\hat{z}_*(\zeta_\delta) - \sinh(\varpi_2(\lambda,\epsilon)\zeta_\tf(\epsilon))\varpi_2(\lambda,\epsilon)
\end{array}\right).
\end{align*}}
So, using~\eqref{deltaineq} in Proposition~\ref{prop:left},~\eqref{expex5},~\eqref{expex1} and the estimate~\eqref{approxpi}, obtained in the proof of Lemma~\ref{lemma:windingnumber}, we establish
\begin{align}
\begin{split}
\left\|\hat{X}_0(0;\lambda,\epsilon) - \frac{\pi}{2} \delta^{-1/3}\re^{\ri\theta} \hat{z}_*(\zeta_\delta)\begin{pmatrix} (1-\ri \alpha) & 0 \\ 0 & (1+\ri\alpha)\end{pmatrix}\right\| &\leq C, \\
\left\|\hat{Y}_0(0;\lambda,\epsilon) + \hat{z}_*(\zeta_\delta)\begin{pmatrix} 1 & 0 \\ 0 & 1\end{pmatrix}\right\| &\leq CM(\epsilon),
\end{split}
\label{gammabound2}\end{align}
and therefore we arrive at
\begin{align}
\begin{split}
\left|\det(\hat{X}_0(0;\lambda,\epsilon)) - \frac{\pi^2}{4} \delta^{-2/3} \re^{2\ri\vartheta} \hat{z}_*(\zeta_\delta)^2 \left(1+\alpha^2\right)\right| &\leq C\delta^{-1/3},\\
\left|\det(\hat{Y}_0(0;\lambda,\epsilon)) - \hat{z}_*(\zeta_\delta)^2\right| &\leq CM(\epsilon),
\end{split}
\label{gammabound}\end{align}
for $\lambda = \delta^{-1/3} M(\epsilon)^3 \re^{\ri \vartheta} \in \Gamma$.

Next, we approximate $\det(\hat{X}_-(0;\lambda,\epsilon))$ by $\det(\hat{X}_0(0;\lambda,\epsilon))$. By~\eqref{gaugechoice} and the variation of constants formula we have
\begin{align*}
\begin{pmatrix} \hat{X}_-(\zeta;\lambda,\epsilon) \\ \hat{Y}_-(\zeta;\lambda,\epsilon)\end{pmatrix} = \mathcal{T}_{*,p}(\zeta,\zeta_\tf(\epsilon);\lambda,\epsilon) \begin{pmatrix} I_2 \\ \hat{T}_-(\zeta_\tf(\epsilon);\lambda,\epsilon)\end{pmatrix} + \int_{\zeta_\tf(\epsilon)}^\zeta \mathcal{T}_{*,p}(\zeta,y;\lambda,\epsilon) B_{*,p}(y;\epsilon) \begin{pmatrix} \hat{X}_-(y;\lambda,\epsilon) \\ \hat{Y}_-(y;\lambda,\epsilon)\end{pmatrix} \mathrm dy,
\end{align*}
where we denote $B_{*,p}(\zeta;\epsilon) = A_*(\zeta;\lambda,\epsilon) - A_{*,-}(\lambda,\epsilon)$ as in the proof of Proposition~\ref{prop:plateau3}. Therefore,
\begin{align*} \U(\zeta;\lambda,\epsilon) := \mathcal{T}_{*,p}(\zeta_\tf(\epsilon),\zeta;\lambda,\epsilon)\begin{pmatrix} \hat{X}_-(\zeta;\lambda,\epsilon) \\ \hat{Y}_-(\zeta;\lambda,\epsilon)\end{pmatrix},\end{align*}
solves
\begin{align}
\begin{split}
\U(\zeta;\lambda,\epsilon) &= \begin{pmatrix} I_2 \\ \hat{T}_-(\zeta_\tf(\epsilon);\lambda,\epsilon)\end{pmatrix}\\
&\qquad + \int_{\zeta_\tf(\epsilon)}^\zeta \mathcal{T}_{*,p}(\zeta_\tf(\epsilon),y;\lambda,\epsilon) B_{*,p}(y;\epsilon) \mathcal{T}_{*,p}(y,\zeta_\tf(\epsilon);\lambda,\epsilon) \U(y;\lambda,\epsilon) \mathrm dy.
\end{split}\label{varcon6}
\end{align}
By estimate~\eqref{plateaubound} from the proof of Proposition~\ref{prop:plateau3} and Lemma~\ref{lem:evolbound} the linear operator $F_{\lambda,\epsilon}$ given by
\begin{align*} \left(F_{\lambda,\epsilon} \U\right)(\zeta) = \int_{\zeta_\tf(\epsilon)}^\zeta \mathcal{T}_{*,p}(\zeta_\tf(\epsilon),y;\lambda,\epsilon) B_{*,p}(y;\epsilon) \mathcal{T}_{*,p}(y,\zeta_\tf(\epsilon);\lambda,\epsilon) \U(y) \mathrm dy,\end{align*}
on $C([\zeta_\tf(\epsilon),0],\C^{4 \times 2})$ has norm $\|F_{\lambda,\epsilon}\| \leq C \delta$. Upon taking $\delta > 0$ sufficiently small (but independent of $\epsilon$ and $\lambda$), $I - F_{\lambda,\epsilon}$ is invertible on $C([\zeta_\tf(\epsilon),0],\C^{4 \times 2})$ and
\begin{align*}
\U(\zeta;\lambda,\epsilon) = \left[(I - F_{\lambda,\epsilon})^{-1} \begin{pmatrix} I_2 \\ \hat{T}_-(\zeta_\tf(\epsilon);\lambda,\epsilon)\end{pmatrix} \right](\zeta),
\end{align*}
is the solution to~\eqref{varcon6} satisfying
\begin{align*}
\left\|\U(\zeta;\lambda,\epsilon) - \begin{pmatrix} I_2 \\ \hat{T}_-(\zeta_\tf(\epsilon);\lambda,\epsilon)\end{pmatrix}\right\|
\leq \|F_{\lambda,\epsilon}\|\|(I - F_{\lambda,\epsilon})^{-1}\| \left\|\begin{pmatrix} I_2 \\ \hat{T}_-(\zeta_\tf(\epsilon);\lambda,\epsilon)\end{pmatrix}\right\|
\leq C\delta,
\end{align*}
for $\zeta \in [\zeta_\tf,0]$, where we used~\eqref{deltaineq} from Proposition~\ref{prop:left} and~\eqref{approxricl} from Lemma~\ref{lem:approx}. So, combining the latter with~\eqref{approxricl} from Lemma~\ref{lem:approx} and identity~\eqref{evolbound2}, obtained in the proof of Lemma~\ref{lemma:windingnumber}, yields
\begin{align}
 \begin{split}
\left\|\begin{pmatrix} \hat{X}_-(0;\lambda,\epsilon) \\ \hat{Y}_-(0;\lambda,\epsilon)\end{pmatrix} - \begin{pmatrix} \hat{X}_0(0;\lambda,\epsilon) \\ \hat{Y}_0(0;\lambda,\epsilon)\end{pmatrix}\right\|
&= \left\|\mathcal{T}_{*,p}(0,\zeta_\tf(\epsilon);\lambda,\epsilon) \left(\U(0;\lambda,\epsilon) - \begin{pmatrix} I_2 \\ T_{d,-}(\zeta_\tf(\epsilon);\lambda,\epsilon)\end{pmatrix}\right)\right\|\\
&\leq \left\|\mathcal{T}_{*,p}(0,\zeta_\tf(\epsilon);\lambda,\epsilon)\right\|\left(\left\|\U(0;\lambda,\epsilon) - \begin{pmatrix} I_2 \\ \hat{T}_-(\zeta_\tf(\epsilon);\lambda,\epsilon)\end{pmatrix}\right\|\right. \\
&\left. \phantom{\left\|\begin{pmatrix} I_2 \\ T_{d,-}\end{pmatrix}\right\|} + \left\|T_{d,-}(\zeta_\tf(\epsilon);\lambda,\epsilon) - \hat{T}_-(\zeta_\tf(\epsilon);\lambda,\epsilon)\right\|\right)\\
&\leq C\delta^{-1/3} |\log(\delta)|^{-2},
\end{split}\label{Tboundr}\end{align}
for $\lambda \in \Gamma$. Thus, using~\eqref{deltaineq} from Proposition~\ref{prop:left},~\eqref{gammabound2},~\eqref{gammabound} and~\eqref{Tboundr}, we obtain
\begin{align}|\det(\hat{X}_-(0;\lambda,\epsilon)) - \det(\hat{X}_0(0;\lambda,\epsilon))| \leq C\delta^{-2/3}|\log(\delta)|^{-3} <
|\det(\hat{X}_0(\zeta_\tf(\epsilon);\lambda,\epsilon))|, \label{gammabound3}\end{align}
for $\lambda \in \Gamma$, provided $0 < \epsilon_1 \ll \Theta_1 \ll \delta \ll 1$ and $0 \leq |\epsilon_2| \ll \Theta_1 \ll \delta \ll 1$. Hence, by Rouch\'e's theorem the numbers of zeros within the contour $\Gamma$ of the analytic functions $\det(\hat{X}_-(0;\cdot,\epsilon))$ and $\det(\hat{X}_0(0;\cdot,\epsilon))$ coincide (including multiplicity).

We conclude that $\det(\hat{X}_-(0;\cdot,\epsilon))$ has precisely two zeros in the interior of $\Gamma$ and, therefore, the Riccati-Evans function $\E_\epsilon$ has at most two poles (including multiplicity) inside of $\Gamma$.

Finally, by combining~\eqref{deltaineq} in Proposition~\ref{prop:left},~\eqref{gammabound} and~\eqref{gammabound3} we establish~\eqref{epsest2}. On the other hand, with the aid of~\eqref{gammabound2},~\eqref{gammabound},~\eqref{Tboundr} and~\eqref{gammabound3} we obtain
\begin{align*}\left|\det\left(\hat{T}_-\left(0;\delta^{-1/3} M(\epsilon)^3,\epsilon\right)\right)\right| =
\left|\frac{\det\left(\hat{Y}_-\left(0;\delta^{-1/3} M(\epsilon)^3,\epsilon\right)\right)}{\det\left(\hat{X}_-\left(0;\delta^{-1/3} M(\epsilon)^3,\epsilon\right)\right)}\right| \leq C|\log(\delta)|^{-2},
\end{align*}
provided $0 < \epsilon_1 \ll \Theta_1 \ll \delta \ll 1$ and $0 \leq |\epsilon_2| \ll \Theta_1 \ll \delta \ll 1$. Thus,~\eqref{approxricr} and~\eqref{expex0} in Lemma~\ref{lem:approx} yield~\eqref{epsest}.
\end{proof}

\subsection{Restriction to the real line} \label{sec:restr}

We prove that the Riccati-Evans function $\E_\epsilon(\lambda)$ is real for real $\lambda \in \Omega \setminus \Es_\epsilon$ by exploiting that~\eqref{evprob2} obeys a conjugation symmetry for real $\lambda \in \R \cap \Omega \setminus \Es_\epsilon$.

\begin{proposition} \label{prop:Evans}
Provided $0 < \epsilon_1 \ll \Theta_1 \ll \delta \ll 1$ and $0 \leq |\epsilon_2| \ll \Theta_1 \ll \delta \ll 1$, it holds $\E_\epsilon(\lambda) \in \R$ for $\lambda \in \R \cap \Omega \setminus \Es_\epsilon$. In addition, $\det(\hat{X}_-(\zeta;\lambda,\epsilon))$ is real for $\zeta \in \R$ and $\lambda \in \R \cap R_1(\Theta_1)$, where $\hat{X}_-(\zeta;\lambda,\epsilon) \in \C^{2 \times 2}$ is defined in~\S\ref{sec:track} by~\eqref{gaugechoice}.
\end{proposition}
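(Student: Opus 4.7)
The plan is to exploit a discrete conjugation symmetry of the transformed eigenvalue problem~\eqref{evprob2} that holds precisely when $\lambda \in \R$. Let $J = \left(\begin{smallmatrix}0 & 1\\ 1 & 0\end{smallmatrix}\right) \in \C^{2\times2}$ and set $K = \left(\begin{smallmatrix}J & 0\\ 0 & J\end{smallmatrix}\right) \in \C^{4\times4}$. A direct inspection of the entries of $A_*(\zeta;\lambda,\epsilon)$ in~\S\ref{sec:rep} shows that, for $\lambda \in \R$, complex conjugation swaps the ``$w$-equation'' with the ``$y$-equation,'' which corresponds to the algebraic identity
\begin{align*}
K\,\overline{A_*(\zeta;\lambda,\epsilon)}\,K = A_*(\zeta;\lambda,\epsilon), \qquad \zeta \in \R, \lambda \in \R.
\end{align*}
Thus, for real $\lambda$, if $\hat{\phi}(\zeta)$ solves~\eqref{evprob2}, then so does $K\overline{\hat{\phi}(\zeta)}$. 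The same check, using the explicit form~\eqref{lincoordtransform} of $B(\zeta;\epsilon)$, yields $K\overline{B(\zeta;\epsilon)} = B(\zeta;\epsilon)K$, so the symmetry is equally present in the untransformed system~\eqref{evprob}.

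Next, I would observe that the subspaces $W_\pm(\zeta;\lambda,\epsilon)$ defined in~\eqref{subspaces} are invariant under the involution $\hat{\phi} \mapsto K\overline{\hat{\phi}}$ for $\lambda \in \R$. Indeed, since $K$ is a real permutation matrix commuting with the weighted norms, the decay characterization of $\ker(P_-)$ and $P_+[\C^4]$ is preserved by this involution. Now pick any analytic basis $\left(\begin{smallmatrix} \hat{X}_\pm\\ \hat{Y}_\pm\end{smallmatrix}\right)$ of $W_\pm(\zeta;\lambda,\epsilon)$ as in~\S\ref{sec:invric}; invariance of $W_\pm$ yields an invertible matrix $\tau_\pm(\zeta;\lambda,\epsilon) \in \C^{2\times2}$ such that
\begin{align*}
K\,\overline{\begin{pmatrix}\hat{X}_\pm\\ \hat{Y}_\pm\end{pmatrix}} = \begin{pmatrix}\hat{X}_\pm\\ \hat{Y}_\pm\end{pmatrix}\tau_\pm,\qquad\text{i.e.,}\qquad J\overline{\hat{X}_\pm} = \hat{X}_\pm\tau_\pm, \quad J\overline{\hat{Y}_\pm} = \hat{Y}_\pm\tau_\pm.
\end{align*}
Solving the first equation for $\tau_\pm$ and substituting into the second gives, on the open set where $\det(\hat{X}_\pm) \neq 0$,
\begin{align*}
J\,\overline{\hat{T}_\pm(\zeta;\lambda,\epsilon)}\,J = \hat{T}_\pm(\zeta;\lambda,\epsilon),\qquad \zeta \in \R,\ \lambda \in \R \cap R_1(\Theta_1).
\end{align*}
Taking determinants in $\hat{T}_+(0;\lambda,\epsilon)-\hat{T}_-(0;\lambda,\epsilon)$ and using $\det J = -1$ then yields $\E_\epsilon(\lambda) = \overline{\E_\epsilon(\lambda)}$ for all $\lambda\in\R\cap\Omega\setminus\Es_\epsilon$, by the representation~\eqref{Eveq}.

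Finally, for the second assertion, I exploit the specific gauge choice~\eqref{gaugechoice} which sets $\hat{X}_-(\zeta_\tf(\epsilon);\lambda,\epsilon) = I_2$. Since this initial condition is real, and since the conjugation identity just derived at $\zeta = \zeta_\tf(\epsilon)$ gives $\hat{T}_-(\zeta_\tf(\epsilon);\lambda,\epsilon)J = J\overline{\hat{T}_-(\zeta_\tf(\epsilon);\lambda,\epsilon)}$, one reads off that
\begin{align*}
K\,\overline{\begin{pmatrix}\hat{X}_-\\ \hat{Y}_-\end{pmatrix}(\zeta_\tf(\epsilon);\lambda,\epsilon)} = \begin{pmatrix}\hat{X}_-\\ \hat{Y}_-\end{pmatrix}(\zeta_\tf(\epsilon);\lambda,\epsilon)\cdot J,
\end{align*}
that is, $\tau_-(\zeta_\tf(\epsilon);\lambda,\epsilon) = J$ with our gauge. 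By uniqueness of solutions to the linear system~\eqref{evprob2} together with the symmetry $K\overline{A_*}K = A_*$, the same identity with $\tau_- \equiv J$ persists for all $\zeta \in \R$. In particular $J\overline{\hat{X}_-(\zeta;\lambda,\epsilon)} = \hat{X}_-(\zeta;\lambda,\epsilon)J$, and taking determinants gives $\det(\hat{X}_-(\zeta;\lambda,\epsilon)) \in \R$ for $\zeta\in\R,\ \lambda\in\R\cap R_1(\Theta_1)$, completing the proof. The main (essentially bookkeeping) obstacle is verifying the identity $K\overline{A_*}K = A_*$ by inspection of the $4\times4$ block structure; everything that follows is a clean consequence of this symmetry together with the real, gauge-fixed initial condition.
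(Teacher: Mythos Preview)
Your proof is correct and follows essentially the same approach as the paper: both exploit the conjugation symmetry $S_4\overline{A_*}S_4 = A_*$ for real $\lambda$ (your $K$, $J$ are the paper's $S_4$, $S_2$) to show $\hat{T}_\pm = S_2\overline{\hat{T}_\pm}S_2$ and hence $\E_\epsilon(\lambda)\in\R$, and then use the gauge~\eqref{gaugechoice} to fix $\tau_-\equiv J$ and conclude $\det(\hat X_-)\in\R$. The only stylistic difference is that the paper constructs an explicit symmetric basis of $W_-$ in $R_2$ (via projecting a concrete vector and pairing it with its $S_4$-conjugate) and then extends to all of $\Omega$ by analytic continuation, whereas you argue invariance of $W_\pm$ directly from the decay characterization for all $\lambda\in\R\cap\Omega$ and use an abstract transition-matrix argument; your route is slightly more streamlined.
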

\begin{proof}  First, note that by conjugation symmetry, that if $\hat{\phi}(\zeta)$ is a solution to~\eqref{evprob2} for $\lambda \in \R$, then so is $\hat{\phi}_2(\zeta) = S_4\overline{\hat{\phi}(\zeta)}$ with
\begin{align*} S_4 := \begin{pmatrix} S_2 & 0 \\ 0 & S_2 \end{pmatrix}, \qquad S_2 := \begin{pmatrix} 0 & 1 \\ 1 & 0 \end{pmatrix}.\end{align*}

Let $\lambda \in \R \cap R_2(\theta_2,\Theta_1,\Theta_2)$. Consider the vector $\phi_r^-(\lambda) := (1, 0, \sqrt{\lambda(1-\ri \alpha)}, 0)$. System~\eqref{evprob2} admits, by Proposition~\ref{prop:expdirightR2}, an exponential dichotomy on $(-\infty,0]$ with projections $P_{*,l}(\zeta;\lambda,\epsilon)$. By~\eqref{leftbound6} and Lemma~\ref{lem:subspace}, $\psi_r^-(\lambda,\epsilon) := (I_4 - P_{*,l}(0;\lambda,\epsilon))\phi_r^-(\lambda)$ lies in $W_-(0;\lambda,\epsilon) = \ker(P_{*,l}(0;\lambda,\epsilon))$ and satisfies
\begin{align} \|\psi_r^-(\lambda,\epsilon) - \phi_r^-(\lambda)\| \leq C_\delta\|\epsilon\|^\tau, \qquad \lambda \in R_2, \label{vectorest}\end{align}
for some $\delta$-, $\epsilon$- and $\lambda$-independent constant $\tau > 0$ and some constant $C_\delta > 1$, which depends on $\delta$ only. Solutions $\hat{\phi}(\zeta)$ to~\eqref{evprob2} are bounded as $\zeta \to -\infty$ if and only if $\hat{\phi}(0) \in W_-(0;\lambda,\epsilon)$. Thus, the solution $\hat{\phi}(\zeta)$ to~\eqref{evprob2} with initial condition $\hat{\phi}(0) = \psi_r^-(\lambda,\epsilon)$ is bounded as $\zeta \to -\infty$. By the conjugation symmetry of~\eqref{evprob2}, $\hat{\phi}_2(\zeta) = S_4\overline{\hat{\phi}(\zeta)}$ is also a bounded solution to~\eqref{evprob2} as $\zeta \to -\infty$. Consequently, it holds $S_4\overline{\psi_r^-(\lambda,\epsilon)} \in W_-(0;\lambda,\epsilon)$ and, by~\eqref{vectorest}, $\tilde{\Phi}(\lambda,\epsilon) := \left(\psi_r^-(\lambda,\epsilon) \mid S_4\overline{\psi_r^-(\lambda,\epsilon)}\right)$ forms a basis of $W_-(0;\lambda,\epsilon)$ satisfying \begin{align*}\tilde{\Phi}(\lambda,\epsilon) = S_4\overline{\tilde{\Phi}(\lambda,\epsilon)}S_2,\end{align*}
Thus, since $\hat{T}_-(0;\lambda,\epsilon)$ is independent on the choice of basis of $W_-(0;\lambda,\epsilon)$, we establish
\begin{align*} \hat{T}_-(0;\lambda,\epsilon) = S_2 \overline{\hat{T}_-(0;\lambda,\epsilon)}S_2.\end{align*}
Analogously, one obtains
\begin{align*} \hat{T}_+(0;\lambda,\epsilon) = S_2 \overline{\hat{T}_+(0;\lambda,\epsilon)}S_2,\end{align*}
using Proposition~\ref{prop:expdirightR2} instead.

It follows the Riccati-Evans function $\E_{\epsilon}$ is real on $\R \cap R_2(\theta_2,\Theta_1,\Theta_2) \setminus \Es_\epsilon$. Since $\Omega \setminus \Es_\epsilon$ is a connected set containing $\R \cap R_2(\theta_2,\Theta_1,\Theta_2) \setminus \Es_\epsilon$, we conclude $\E_\epsilon$ is real on $\R \cap \Omega \setminus \Es_\epsilon$.

By~\eqref{gaugechoice}, it holds $\hat{Y}_-(\zeta_\tf(\epsilon);\lambda,\epsilon) = \hat{T}_-(\zeta_\tf(\epsilon);\lambda,\epsilon)$ for all $\lambda \in R_1(\Theta_1) \cap \R$. Thus, there exists a solution $\hat{\phi}(\zeta;\lambda,\epsilon)$ to~\eqref{evprob2} such that
\begin{align*}\begin{pmatrix} \hat{X}_-(\zeta;\lambda,\epsilon) \\ \hat{Y}_-(\zeta;\lambda,\epsilon)\end{pmatrix} = \left(\hat{\phi}(\zeta;\lambda,\epsilon) \mid S_4 \overline{\hat{\phi}(\zeta;\lambda,\epsilon)}\right),\end{align*}
for $\lambda \in \R \cap R_1$ and $\zeta \in \R$. We arrive at $\hat{X}_-(\zeta;\lambda,\epsilon) = S_2 \overline{\hat{X}_-(\zeta;\lambda,\epsilon)}S_2$, implying $\overline{\det(\hat{X}_-(\zeta;\lambda,\epsilon))} = \det(\hat{X}_-(\zeta;\lambda,\epsilon)$, and thus $\det(\hat{X}_-(\zeta;\lambda,\epsilon))$ is real for $\zeta \in \R$ and $\lambda \in \R \cap R_1(\Theta_1)$.
\end{proof}

\subsection{Parity argument} \label{sec:parity}

Consider the contour $\Gamma$ defined by~\eqref{defGamma}. By estimate~\eqref{radiusb} in Proposition~\ref{prop:plateau1}, the contour $\Gamma$ encloses the disk $D_1(\epsilon) = D_{-1}(\epsilon)$, but none of the other disks $D_j(\epsilon), j \in \Z \setminus\{0,\pm1\}$. We derived in Lemma~\ref{lemma:windingnumber} that the number of zeros of $\E_\epsilon$ in the interior of $\Gamma$ equals its number of poles within $\Gamma$ (including multiplicity). In addition, by Lemma~\ref{lemma:windingnumber2}, $\E_\epsilon$ has at most two poles, and thus at most two zeros, in the interior of $\Gamma$ (including multiplicity). We derived in~\S\ref{sec:ricderiv2} that $\lambda = 0$ is a simple zero of $\E_\epsilon$, which lies inside $\Gamma$. The following parity argument shows that, if a second zero of $\E_\epsilon$ exists in $\Gamma$, it must be real and negative.

\begin{theorem} \label{theo:parity}
Provided $0 < \epsilon_1 \ll \Theta_1 \ll \delta \ll 1$ and $0 \leq |\epsilon_2| \ll \Theta_1 \ll \delta \ll 1$, the Riccati-Evans function $\E_\epsilon$ has at most two zeros in the interior of the contour $\Gamma$, given by~\eqref{defGamma}. One of these zeros is the simple root $\lambda = 0$. If a second zero of $\E_\epsilon$ exists, it must be real and negative.
\end{theorem}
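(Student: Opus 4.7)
The bound of at most two zeros is immediate from the two preceding lemmas: Lemma~\ref{lemma:windingnumber} shows the numbers of zeros and poles of $\E_\epsilon$ inside $\Gamma$ are equal, while Lemma~\ref{lemma:windingnumber2} bounds the number of poles by two. Since Theorem~\ref{theo:deriv} identifies $\lambda = 0$ as a simple zero of $\E_\epsilon$, the only remaining task is, assuming a second zero $\lambda_*$ exists, to prove $\lambda_* < 0$. I first observe $\lambda_* \in \R$: by Proposition~\ref{prop:Evans}, $\E_\epsilon$ is real on $\R$, so Schwarz reflection forces non-real zeros to come in conjugate pairs, and a non-real $\lambda_*$ would produce a third distinct zero $\overline{\lambda_*}$ in $\Gamma$, contradicting the bound.

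The plan to pin down the sign of $\lambda_*$ is a parity argument applied to the classical Evans function $N(\lambda) := \E_\epsilon(\lambda)\det(\hat{X}_-(0;\lambda,\epsilon))$. By Lemma~\ref{lemma:windingnumber2} the two simple poles of $\E_\epsilon$ in $\Gamma$ coincide with the two simple zeros of $\det(\hat{X}_-(0;\cdot,\epsilon))$, so $N$ is analytic on the interior of $\Gamma$ with exactly two simple zeros, namely $0$ and $\lambda_*$, and is real on $\R$ by Proposition~\ref{prop:Evans}. Suppose for contradiction $\lambda_* > 0$. I will show (i) $N(\pm\delta^{-1/3}M(\epsilon)^3) > 0$ and (ii) $N'(0) > 0$; these together with $N(0)=0$ simply force $N < 0$ just to the left of $0$, so that by the intermediate value theorem $N$ must vanish somewhere in $(-\delta^{-1/3}M(\epsilon)^3,0)$, contradicting $\{0,\lambda_*\}$ being the only zeros of $N$ in $\Gamma$. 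For (i), the estimate~\eqref{epsest2} already yields $\det(\hat{X}_-(0;\delta^{-1/3}M(\epsilon)^3,\epsilon)) > 0$; the expansion~\eqref{gammabound} from the proof of Lemma~\ref{lemma:windingnumber2} holds on all of $\Gamma$ and its leading-order piece depends on $\re^{2\ri\vartheta}$, which takes the same positive real value at $\vartheta = 0$ and $\vartheta = \pi$, so $\det(\hat{X}_-(0;-\delta^{-1/3}M(\epsilon)^3,\epsilon)) > 0$ as well. The $\delta^{1/4}$-smallness of $\hat{T}_-(0;\lambda,\epsilon)$ on $\Gamma$ established inside the proof of Lemma~\ref{lemma:windingnumber2}, combined with Lemma~\ref{lem:approx} giving $\hat{T}_+(0;\lambda,\epsilon) \approx \mathrm{diag}(-\sqrt{2(1-\ri\alpha)},-\sqrt{2(1+\ri\alpha)})$, yields $\E_\epsilon(\lambda) \approx 2\sqrt{1+\alpha^2} > 0$ uniformly for $\lambda \in \Gamma$, in particular at both real endpoints.

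For (ii), $N'(0) = \E_\epsilon'(0)\det(\hat{X}_-(0;0,\epsilon))$ since $\E_\epsilon(0) = 0$, and Theorem~\ref{theo:deriv} gives $\E_\epsilon'(0) > 0$, so the sign of $N'(0)$ is that of $\det(\hat{X}_-(0;0,\epsilon))$. I compute this sign using the explicit variational solutions $\hat{\phi}_0(\zeta;\epsilon)$ and $\hat{\phi}_-(\zeta;\epsilon)$ from~\eqref{twosolstrans}, which span $W_-(\zeta;0,\epsilon)$ for all $\zeta$ by Proposition~\ref{prop:expdileftR1} and Lemma~\ref{lem:subspace2}. Let $\hat{X}_{\hat{\Phi}}(\zeta;\epsilon) \in \C^{2\times 2}$ denote the upper block of $(\hat{\phi}_0 \mid \hat{\phi}_-)(\zeta;\epsilon)$; the gauge choice~\eqref{gaugechoice} then gives
\begin{align*}
\det(\hat{X}_-(0;0,\epsilon)) = \frac{\det(\hat{X}_{\hat{\Phi}}(0;\epsilon))}{\det(\hat{X}_{\hat{\Phi}}(\zeta_\tf(\epsilon);\epsilon))},
\end{align*}
and a direct computation from~\eqref{twosolstrans} yields
\begin{align*}
\det(\hat{X}_{\hat{\Phi}}(\zeta;\epsilon)) = |\beta(\zeta;\epsilon)|^2 \bigl(2\Re[\hat{z}_\tf(\zeta;\epsilon)] - 2 + \epsilon_1^2\bigr).
\end{align*}
At $\zeta = \zeta_\tf(\epsilon)$ we have $\beta(\zeta_\tf(\epsilon);\epsilon) = 1$ and, by Proposition~\ref{prop:left} combined with~\eqref{deltaineq}, $\hat{z}_\tf(\zeta_\tf(\epsilon);\epsilon) \approx \hat{z}_*(\zeta_\delta) \in (0, C|\log\delta|^{-1}) \subset (0,1)$, so the bracket is $\approx 2(\hat{z}_*(\zeta_\delta) - 1) < 0$; at $\zeta = 0$, Proposition~\ref{prop:right} combined with~\eqref{hatzapprox} gives $\hat{z}_\tf(0;\epsilon) \approx -\sqrt{2(1-\ri\alpha)}$, whose real part is negative, so the bracket is again negative. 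The ratio is therefore positive, $\det(\hat{X}_-(0;0,\epsilon)) > 0$, yielding $N'(0) > 0$ and closing the parity argument.

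The main obstacle I anticipate is (ii), specifically extracting the sign of $\det(\hat{X}_-(0;0,\epsilon))$: the preceding analysis has been devoted to magnitudes rather than signs, and the sign here can only be read off by exploiting the explicit form of the two linearly independent variational solutions $\hat{\phi}_0$ and $\hat{\phi}_-$ at $\lambda = 0$ furnished by gauge and (almost) translational symmetries of~\eqref{e:cgl0} and collected in~\S\ref{sec:lambda0}. The extension of the Lemma~\ref{lemma:windingnumber2} estimates from the single point $\delta^{-1/3}M(\epsilon)^3$ to the full contour $\Gamma$ needed for (i) is straightforward but must be checked, since the original statement only asserted~\eqref{epsest} and~\eqref{epsest2} at one real point.
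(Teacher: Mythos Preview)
Your proof is correct and the key computations coincide with the paper's: the bound of two zeros from Lemmas~\ref{lemma:windingnumber} and~\ref{lemma:windingnumber2}, reality of $\lambda_*$ via Schwarz reflection and Proposition~\ref{prop:Evans}, the sign $\E_\epsilon'(0)>0$ from Theorem~\ref{theo:deriv}, and the explicit evaluation of $\det(\hat X_-(0;0,\epsilon))>0$ using the variational solutions $\hat\phi_0,\hat\phi_-$ from~\eqref{twosolstrans}.

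The packaging of the final parity step differs from the paper. You pass to the analytic numerator $N(\lambda)=\E_\epsilon(\lambda)\det(\hat X_-(0;\lambda,\epsilon))$ and run the intermediate value theorem on the \emph{negative} real segment $(-\delta^{-1/3}M(\epsilon)^3,0)$; the paper instead works with the meromorphic $\E_\epsilon$ on the \emph{positive} segment $(0,\delta^{-1/3}M(\epsilon)^3)$, uses the sign data $\E_\epsilon'(0)>0$ and $\E_\epsilon(\delta^{-1/3}M(\epsilon)^3)>0$ to force exactly one simple real pole of $\E_\epsilon$ there, hence exactly one sign change of $\det(\hat X_-(0;\cdot,\epsilon))$ on that interval, and then contradicts this by showing $\det(\hat X_-(0;0,\epsilon))$ and $\det(\hat X_-(0;\delta^{-1/3}M(\epsilon)^3,\epsilon))$ are both positive. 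Your route is a touch cleaner because IVT for the analytic $N$ avoids tracking sign changes across poles of a meromorphic function; the price is that you need positivity of $\E_\epsilon$ and $\det(\hat X_-)$ at the negative endpoint as well, which, as you note, follows from the same estimates proved uniformly on $\Gamma$ (the bound $|\E_\epsilon(\lambda)-2\sqrt{1+\alpha^2}|\le C\delta^{1/4}$ in the proof of Lemma~\ref{lemma:windingnumber}, and~\eqref{gammabound} in the proof of Lemma~\ref{lemma:windingnumber2} with $\re^{2\ri\vartheta}=1$ at $\vartheta=\pi$). One small mis-attribution: the $\delta^{1/4}$-smallness of $\hat T_-(0;\lambda,\epsilon)$ on $\Gamma$ you cite is~\eqref{tboundl2} in the proof of Lemma~\ref{lemma:windingnumber}, not Lemma~\ref{lemma:windingnumber2}.
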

\begin{proof}
We derived in~\S\ref{sec:ricderiv2} that $\lambda = 0$ is a zero of $\E_\epsilon$. By Theorem~\ref{theo:deriv} it holds $\E_\epsilon'(0) > 0$. Hence, if $\E_\epsilon$ only contains one zero in the interior of $\Gamma$, we are done.

Now, assume this is not the case. Then, $\E_\epsilon$ possesses a second simple zero $\lambda_\epsilon \neq 0$ in the interior of $\Gamma$. By Lemmas~\ref{lemma:windingnumber} and~\ref{lemma:windingnumber2}, $\E_\epsilon$ possesses precisely two poles and two zeros (including multiplicity) inside $\Gamma$, and the zeros of $\det(\hat{X}_-(0;\cdot,\epsilon))$ and the poles of $\E_\epsilon$ inside $\Gamma$ coincide (including multiplicity), where we recall $\hat{X}_-(\zeta;\lambda,\epsilon) \in \C^{2 \times 2}$ is defined in~\S\ref{sec:track} by~\eqref{gaugechoice}.

We established in Proposition~\ref{prop:Evans} that both $\E_\epsilon(\lambda)$ and $\det(\hat{X}_-(0;\lambda,\epsilon))$ are real for real $\lambda \in \R \cap R_1(\Theta_1) \setminus \Es_\epsilon$. Therefore, $\lambda_\epsilon$ must be real. In addition, note that $\Gamma \cap \R_{>0} = \{\delta^{-1/3} M(\epsilon)^3\}$. So, by~\eqref{epsest} in Lemma~\ref{lemma:windingnumber2}, we must have $\E_\epsilon(\delta^{-1/3} M(\epsilon)^3) > 0$.

Assume now $\lambda_\epsilon > 0$. Then, since $\E_\epsilon$ has two poles within $\Gamma$ and it holds $\E_\epsilon'(0) > 0$ and $\E_\epsilon(\delta^{-1/3}M(\epsilon)^3) > 0$, it follows that $\E_\epsilon$ has precisely one positive real pole within $\Gamma$, which must be simple. Hence, $\det(\hat{X}_-(0;\cdot,\epsilon))$ has precisely one positive real zero inside the contour $\Gamma$, which must be simple.

Since $\det(\hat{X}_-(0;\cdot,\epsilon))$ has precisely one positive real zero inside the contour $\Gamma$, $\det(\hat{X}_-(0;0,\epsilon))$ and $\det(\hat{X}_-(0;\delta^{-1/3} M(\epsilon)^3,\epsilon))$ must have opposite signs. On the one hand, by~\eqref{epsest2} in Lemma~\ref{lemma:windingnumber2} we establish $\det(\hat{X}_-(0;\delta^{-1/3} M(\epsilon)^3,\epsilon)) > 0$. On the other hand, by Lemma~\ref{lem:subspace2}, the solution $\smash{\left(\begin{smallmatrix} \hat{X}_-\\\hat{Y}_-\end{smallmatrix}\right)}(\zeta;0,\epsilon)$ spans the subspace $W_-(\zeta;0,\epsilon) = \ker(P_{*,l}(\zeta;0,\epsilon))$. So, by Lemma~\ref{lem:subspace2}, estimate~\eqref{leftapprox2} in Proposition~\ref{prop:expdileftR1} and~\eqref{gaugechoice}, we have
\begin{align*} \begin{pmatrix} \hat{X}_-(\zeta;0,\epsilon) \\ \hat{Y}_-(\zeta;0,\epsilon)\end{pmatrix} = \left(\hat{\phi}_0(\zeta;\epsilon) \mid \hat{\phi}_-(\zeta;\epsilon)\right) \cdot \left[\left(I_2 \mid 0_2\right) \cdot \left(\hat{\phi}_0(\zeta_\tf(\epsilon);\epsilon) \mid \hat{\phi}_-(\zeta_\tf(\epsilon);\epsilon)\right)\right]^{-1}.\end{align*}
 Thus, we establish
\begin{align*} \det(\hat{X}_-(0;0,\epsilon)) &= \det\begin{pmatrix} \beta(0;\epsilon)  & \beta(0;\epsilon) \left(\hat{z}_\tf(0;\epsilon) - 1 + \frac{1}{2}\epsilon_1\right) \\ -\overline{\beta(0;\epsilon)}  & \overline{\beta(0;\epsilon)} \left(\overline{\hat{z}_\tf(0;\epsilon)} - 1 + \frac{1}{2}\epsilon_1\right)\end{pmatrix}\cdot\det\begin{pmatrix} 1 & \hat{z}_\tf(\zeta_\tf(\epsilon);\epsilon) - 1 + \frac{1}{2}\epsilon_1 \\
-1  & \overline{\hat{z}_\tf(\zeta_\tf(\epsilon);\epsilon)} - 1 + \frac{1}{2}\epsilon_1 \end{pmatrix}^{-1}\\
&= \frac{|\beta(0;\epsilon)|^2 \left(2\Re(\hat{z}_\tf(0;\epsilon)) - 2 + \epsilon_1\right)}{2\Re(\hat{z}_\tf(\zeta_\tf(\epsilon);\epsilon)) - 2 + \epsilon_1}.
\end{align*}
Thus, by~\eqref{hatzapprox},~\eqref{deltaineq} and~\eqref{estex1} in Proposition~\ref{prop:left} and estimate~\eqref{estex2} in Proposition~\ref{prop:right}, we conclude $\det(\hat{X}_-(0;0,\epsilon)) > 0$ for $0 < \epsilon_1 \ll \delta \ll 1$ and $0 \leq |\epsilon_2| \ll \delta \ll 1$. So, $\det(\hat{X}_-(0;0,\epsilon))$ and $\det(\hat{X}_-(0;\delta^{-1/3} M(\epsilon)^3,\epsilon))$ are both positive and do not have opposite signs. We have arrived at a contradiction. Therefore, our assumption that the real zero $\lambda_\epsilon$ of $\E_\epsilon$ is positive, was false, which concludes the proof.
\end{proof}

Hence, Theorems~\ref{conclR1outD1} and~\ref{theo:parity} yield that the region $R_1(\Theta_1)$ has no point spectrum of positive real part, except for the simple eigenvalue $\lambda = 0$ residing at the origin.

\begin{corollary} \label{conclR1}
Provided $0 < \epsilon_1 \ll \Theta_1 \ll \delta \ll 1$ and $0 \leq |\epsilon_2| \ll \Theta_1 \ll \delta \ll 1$, the operator $\hat{\El}_\tf$, posed on $\smash{L^2_{\hat{\kappa}_-,\hat{\kappa}_+}(\R,\C^2)}$, has no point spectrum in $R_1(\Theta_1) \cap \{\lambda \in \C : \Re(\lambda) \geq 0\}$, except for a simple eigenvalue at $\lambda = 0$.
\end{corollary}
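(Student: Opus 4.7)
The plan is to assemble three earlier ingredients into a short reduction argument. First, by Theorem~\ref{conclR1outD1}, any point spectrum of $\hat{\El}_\tf$ in the region $R_1(\Theta_1)$ must be contained in the union $\bigcup_{j\in\Z\setminus\{0\}} D_j(\epsilon)$ of the discrete family of disks produced in Proposition~\ref{prop:plateau1}. Next, Remark~\ref{junifrem} asserts that for every $|j|\geq 2$ the disk $D_j(\epsilon)$ is uniformly bounded away from the closed right-half plane; the decisive feature here is that the separation of the center $\lambda_j$ from the imaginary axis and the radius bound $|r_j|\leq C_\delta j^2 M(\epsilon)^3$ from~\eqref{radiusb} scale in $j$ in exactly the same way, so that the pessimistic radius estimate is just sharp enough to keep $D_j(\epsilon)$ in the open left-half plane uniformly in $j$. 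Consequently only $D_1(\epsilon)=D_{-1}(\epsilon)$ can meet $\{\Re\lambda\geq 0\}\cap R_1(\Theta_1)$, and it is strictly enclosed by the contour $\Gamma$ from~\eqref{defGamma}.

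Having localized the critical spectrum inside $\Gamma$, I would then quote Theorem~\ref{theo:parity}, which asserts that the Riccati-Evans function $\E_\epsilon$ admits at most two zeros in the interior of $\Gamma$ (counted with multiplicity), that $\lambda=0$ is a simple zero — established via the gauge eigenfunction~\eqref{eigenfunction0} and the positive derivative $\E_\epsilon'(0)>0$ computed in Theorem~\ref{theo:deriv} — and that any additional zero must be real and strictly negative. Proposition~\ref{prop:ric} then identifies the zeros of $\E_\epsilon$ in $R_1(\Theta_1)\setminus\Es_\epsilon$ with the eigenvalues of $\hat{\El}_\tf$, preserving algebraic multiplicity. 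Combining these statements forces the only eigenvalue in $R_1(\Theta_1)\cap\{\Re\lambda\geq 0\}$ to be the algebraically simple eigenvalue at the origin.

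Because the genuinely hard work — the winding-number count of Lemma~\ref{lemma:windingnumber}, the pole bookkeeping of Lemma~\ref{lemma:windingnumber2}, and the sign-parity argument of Theorem~\ref{theo:parity} — has already been carried out upstream, the corollary itself is almost entirely bookkeeping. The only subtlety worth flagging is to verify that no eigenvalue is hidden inside the exceptional locus $\Es_\epsilon$ on which $\E_\epsilon$ is undefined: this is immediate since Lemma~\ref{lemma:bases} places $0\notin\Es_\epsilon$, making $0$ a bona fide simple root of $\E_\epsilon$, and any further candidate eigenvalue inside $D_1(\epsilon)$ is controlled via the representation of $\E_\epsilon$ as a quotient of analytic functions used in the proof of Theorem~\ref{theo:parity}, so that potential pole/zero cancellation is already accounted for.
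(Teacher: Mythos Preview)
Your proposal is correct and follows essentially the same route as the paper's proof: both invoke Theorem~\ref{conclR1outD1} to confine point spectrum to the disks $D_j(\epsilon)$, use the radius estimate~\eqref{radiusb} (which you access via Remark~\ref{junifrem}) to place all disks with $|j|\geq 2$ strictly in the open left-half plane and $D_1(\epsilon)=D_{-1}(\epsilon)$ inside $\Gamma$, and then combine Theorem~\ref{theo:parity} with Proposition~\ref{prop:ric} to conclude. Your added paragraph on the exceptional set $\Es_\epsilon$ is a reasonable elaboration beyond what the paper makes explicit, though your phrase ``already accounted for'' slightly overstates what Theorem~\ref{theo:parity} actually proves about eigenvalues versus zeros of $\E_\epsilon$.
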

\begin{proof}
By estimate~\eqref{radiusb} in Proposition~\ref{prop:plateau1}, the simple contour $\Gamma$, defined in~\eqref{defGamma}, encloses the disk $D_1(\epsilon) = D_{-1}(\epsilon)$ in its interior, but none of the other disks $D_j(\epsilon), j \in \Z \setminus\{0,\pm1\}$. The result now follows by combining Proposition~\ref{prop:ric} and Theorems~\ref{conclR1outD1} and~\ref{theo:parity}.
\end{proof}

\section{Proof of Theorem~\ref{theospecstab}} \label{sec:proof}

The proof of Theorem~\ref{theospecstab} follows, as outlined in~\S\ref{secsetup}, from the proof of Theorem~\ref{theospecstab2}, which is a direct result of Theorems~\ref{concless},~\ref{concR3} and~\ref{conclR2} and Corollary~\ref{conclR1}. $\hfill \Box$

\section{Numerical results and discussion}\label{disc}

\subsection{Numerical results and evidence} \label{ss:numpulled}

Figures~\ref{fig:num_spec} and~\ref{fig:spec_ep2} give the results of numerical computation of the spectrum of $\mathcal{L}_\tf$ posed on a finite, but large spatial domain, $\xi\in[-500,500]$.  We took the numerically continued front solution $\Psi_\tf$ depicted in Figure~\ref{fig:front_info} (obtained using AUTO07p~\cite{AUTO}, see~\cite{GS14} for more detail), interpolated onto a uniform grid, and discretized $\mathcal{L}_\tf$ with fourth-order finite differences, Neumann boundary conditions, and step-size $d\xi = 0.05$.  The linearization was also conjugated with exponential weights to precondition the resulting discretized operator, aiding in the convergence of the eigenvalue algorithms employed. Both the ``eigs" and ``eig" functions of MATLAB2019b were used to locate eigenvalues of the discretized operator\footnote{The repository \url{https://github.com/ryan-goh/cgl_stability} contains a set of MATLAB and AUTO07p codes used to create the numerical results and figures in ~\S\ref{disc}}. 

\begin{figure}[h!]
\centering
\vspace{-.3in}
  \hspace{-0.5in}
  \includegraphics[trim=0.4in 2.5in 0 2in, clip,width = 0.5\textwidth]{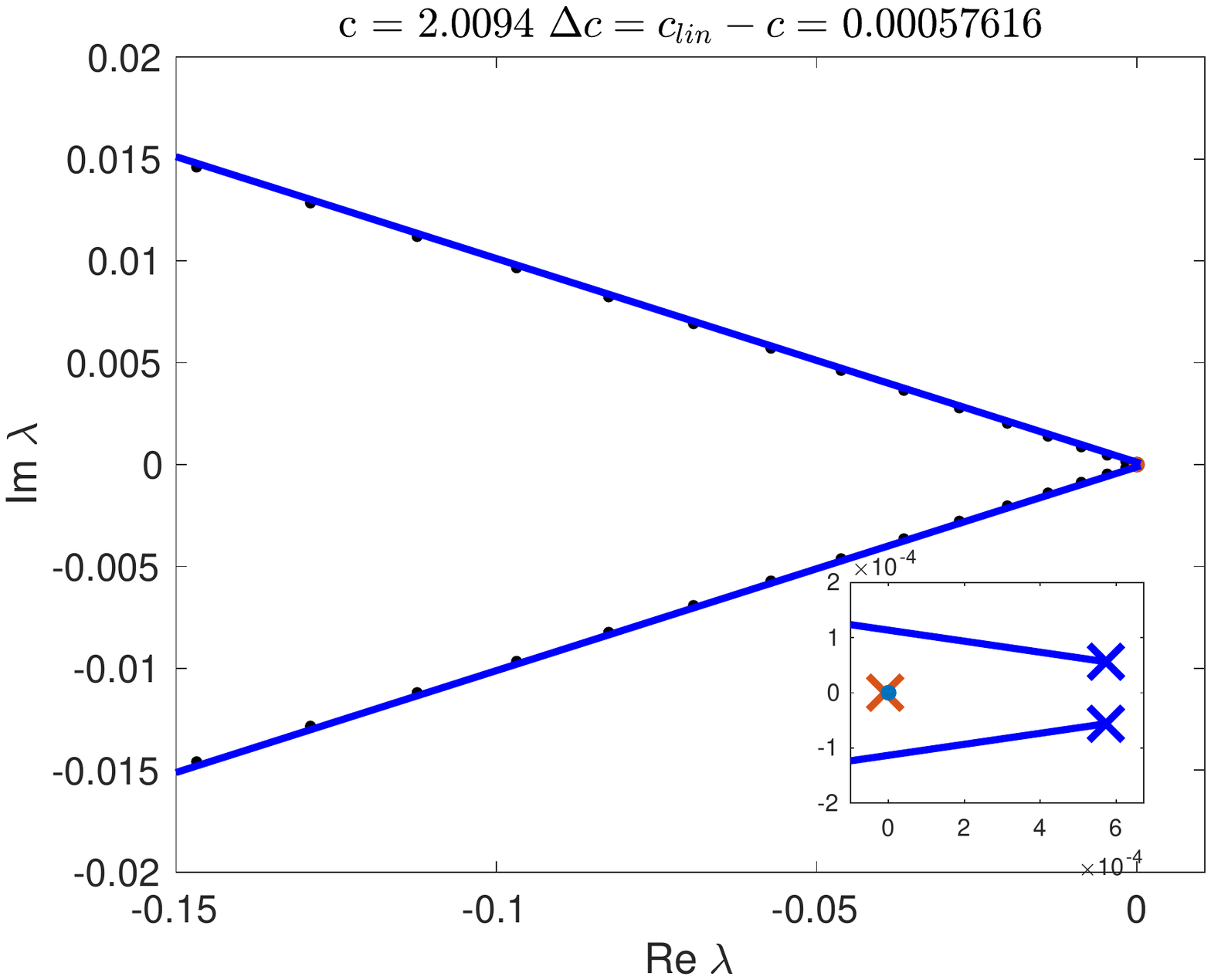}
  \hspace{-0.5in}
  \includegraphics[trim=0 2.5in 0.4in 2in, clip,width = 0.5\textwidth]{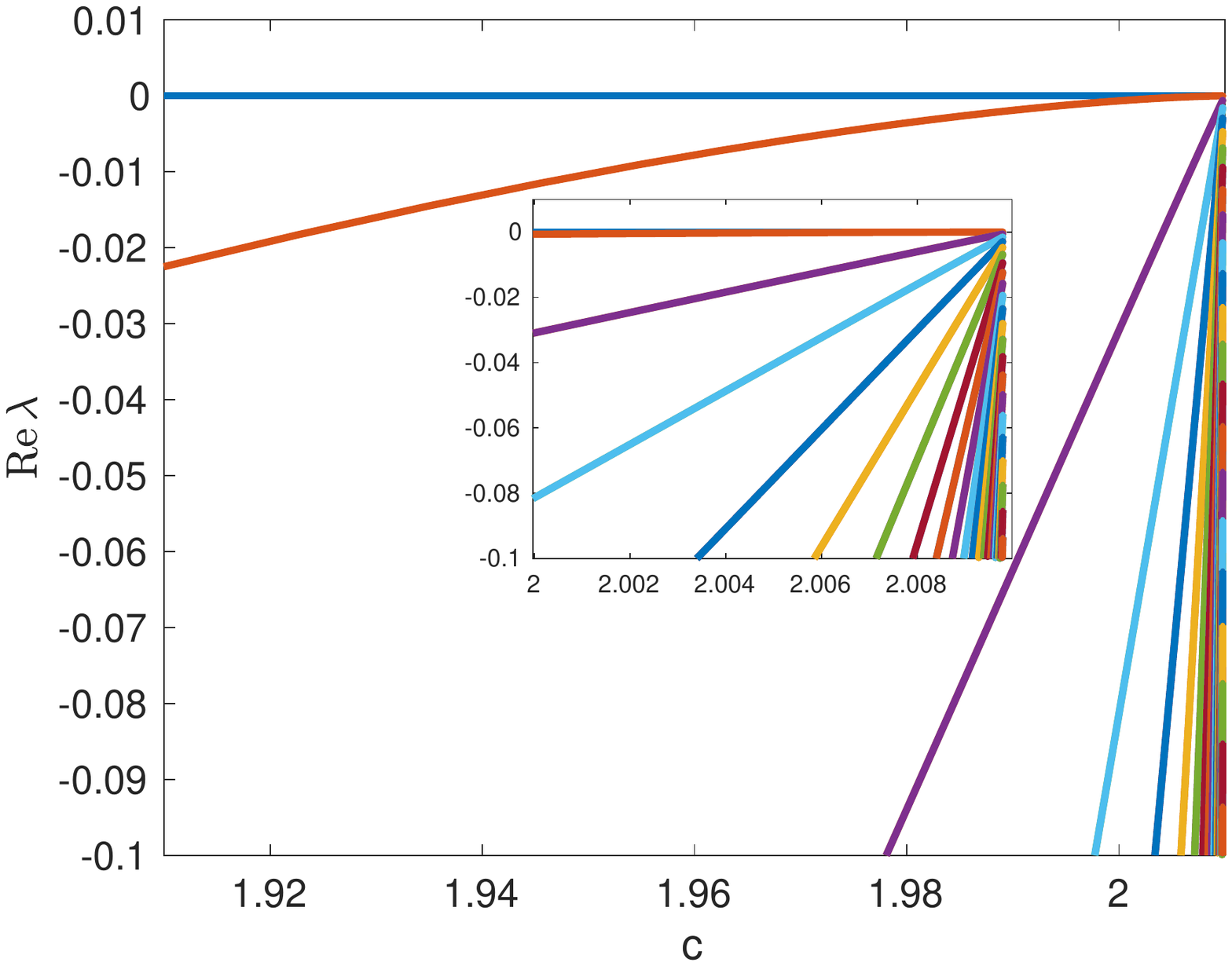}\vspace{-0.2in}\\
   \hspace{-0.5in}
     \hspace{-0.25in}
   \includegraphics[trim=0 2.5in 0 2.5in, clip,width = 0.5\textwidth]{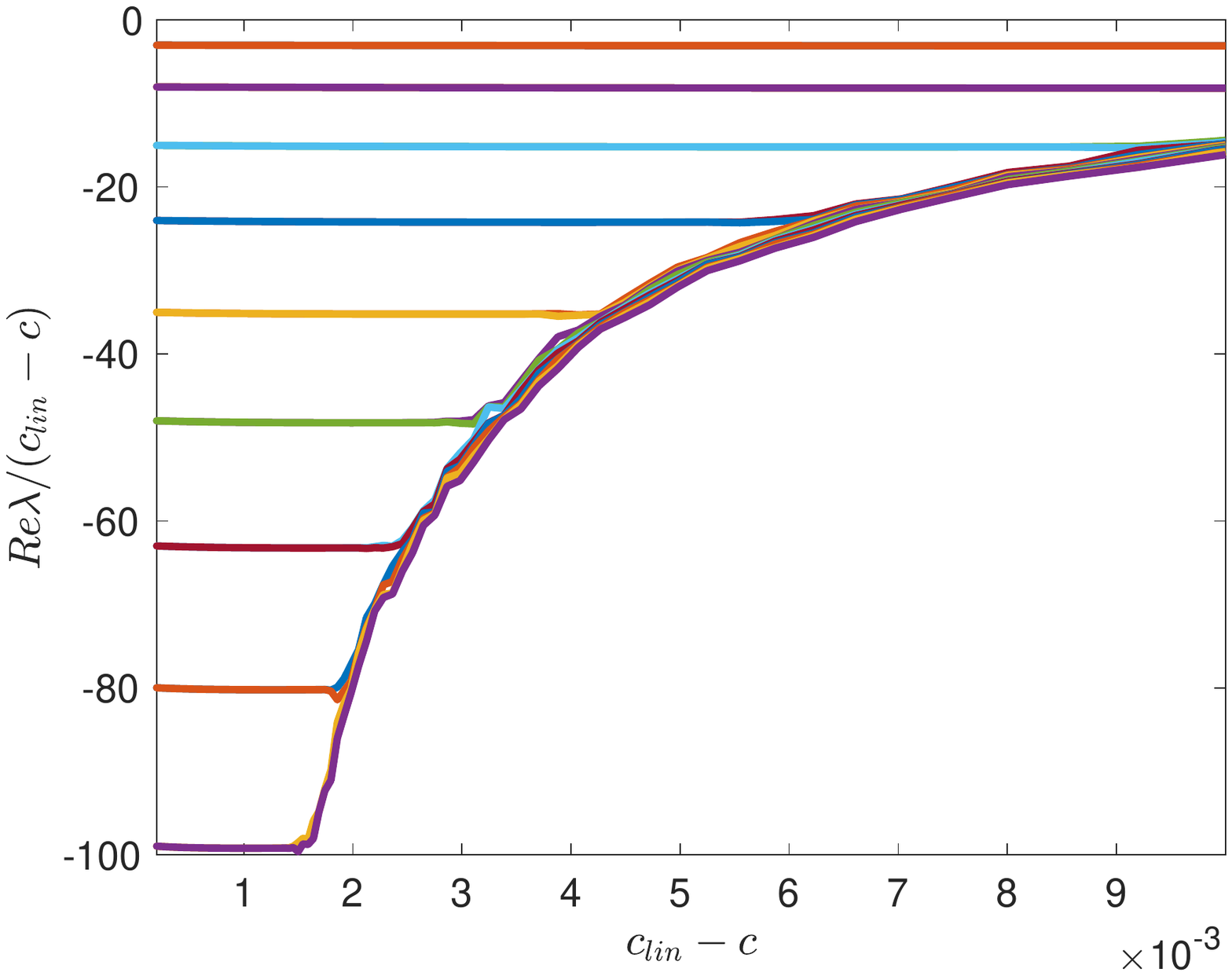}
  \hspace{-0.3in}
  \includegraphics[trim=0 2.55in 0 2.5in, clip,width = 0.5\textwidth]{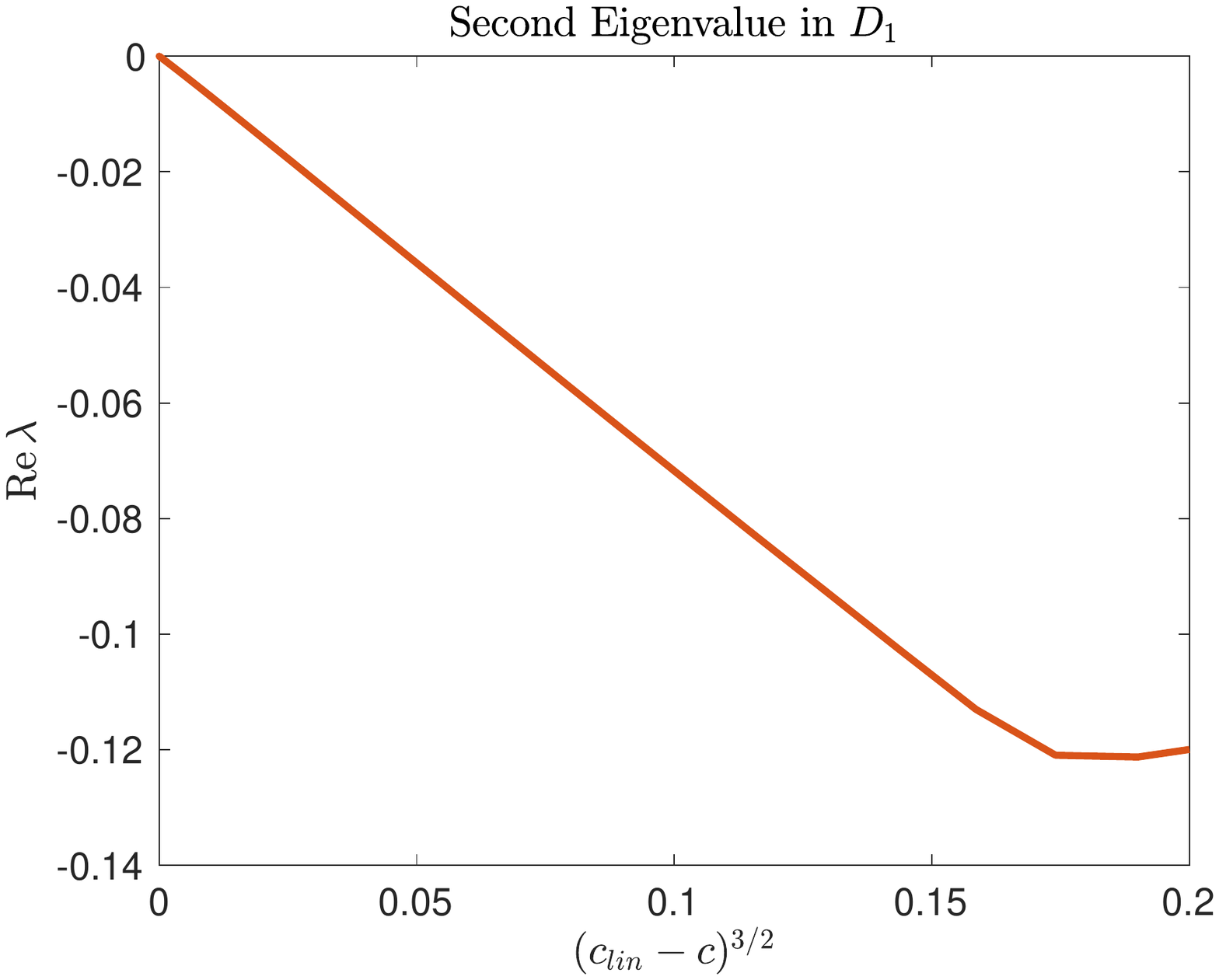}
   \hspace{-0.5in}
   \caption{Equation~\eqref{pertubeq} with parameters: $\alpha = -0.1,\gamma = -0.2;$ note $c_\mathrm{lin} \approx 2.01$; (top left): Numerical plot of the point spectrum of $\El_\tf$ (black dots) with absolute spectrum $\Sigma_{0,\mathrm{abs}}\cup\overline{\Sigma_{0,\mathrm{abs}}}$ (blue lines), light-blue dot gives the time-translation zero eigenvalue,  orange x denotes $\mathcal{O}(\Delta c^{3/2})$-eigenvalue, inset zooms into region near origin, with blue x's denoting branch points of $\Sigma_{0,\mathrm{abs}}\cup\overline{\Sigma_{0,\mathrm{abs}}}$; (top right): Real part of the 20 least negative eigenvalues for $0 \ll c < c_\mathrm{lin}$, all curves are asymptotically linear except for the $\mathcal{O}(\Delta c^{3/2})$-eigenvalue in orange, inset zooms in for speeds $c$ even closer to $c_\mathrm{lin}$; (bottom left): Plot confirming leading-order linear dependence of eigenvalues lying outside of $D_{1}(\epsilon)$ as $c_\mathrm{lin}-c\searrow0$,
(bottom right): Numerical evidence of $\mathcal{O}((\Delta c)^{3/2})$-dependence and stability of non-zero mode in $D_{\pm1}(\epsilon)$.} 
   \label{fig:num_spec}
\end{figure}

The lower left panel in Figure~\ref{fig:num_spec} confirms the $\mathcal{O}(\Delta c)$-dependence of the (real part of the) eigenvalues lying outside of the disk $D_1(\epsilon)$. Moreover, while our rigorous spectral stability result does not determine whether one or two eigenvalues lie inside of $D_1(\epsilon)$, the numerics strongly suggest that the latter case is indeed true. In both of the top right and bottom right plots of Figure~\ref{fig:num_spec} as well as the top two panels in Figure~\ref{fig:spec_ep2},  the second eigenvalue accumulates onto the origin with rate $\mathcal{O}(\epsilon_1^{3}) = \mathcal{O}(\Delta c^{3/2})$, consistent with our rigorous results (i.e. the radius of $D_1(\epsilon)$).  Heuristically, one can make sense of this interfacial eigenvalue as an approximate spatial translation, or Goldstone mode.  Here of course the heterogeneity at $\xi = 0$ precludes such a mode, but as $\epsilon_1\searrow0$, we found that the associated eigenfunction, being localized near the front interface at $\xi = \xi_\tf$, which moves farther and farther away $\xi = 0$ as $c\nearrow c_\lin$, approximately resembles the spatial derivative of the front, and hence an approximate translational mode. We note that similar behavior can be observed for the standard Fitzhugh-Nagume pulse with a small ``critical" eigenvalue resembling an approximate spatial derivative of the Nagumo pulse along the back, see~\cite{CRS} for numerical computations.

While we focused our rigorous efforts in the regime $0 \ll c < c_\mathrm{lin}, \alpha\sim \gamma$, where the existence results of~\cite{GS14} hold, our numerical continuation and spectral computations indicate that pattern-forming fronts continue to exist and are in fact spectrally stable for $0 < c_\mathrm{lin} - c = \mathcal{O}(1)$ and  $\gamma - \alpha = \mathcal{O}(1)$ as long as the asymptotic periodic pattern is still diffusively stable. Furthermore, eigenvalues still accumulate onto the branch points $(\lambda_\mathrm{br},\overline{\lambda}_\mathrm{br}$ as $\Delta c\searrow0$ with rate $\mathcal{O}(\Delta c)$, and one negative real eigenvalue approaches the origin with faster rate $\mathcal{O}((\Delta c)^{3/2})$;  see Figure~\ref{fig:spec_ep2}. Figure~\ref{fig:spec_ep2} also depicts the spectrum found for a range of speeds $c$. Here we note that, since our numerical approach uses separated Neumann boundary conditions, much of the numerical spectrum of the bounded domain approximation lies near  the absolute spectrum of the stable asymptotic states of the pattern-forming front, cf.~\cite{SAN}.

\begin{figure}[h!]
\centering
\includegraphics[trim=1in 0in 1in 0in, clip,width = 1\textwidth]{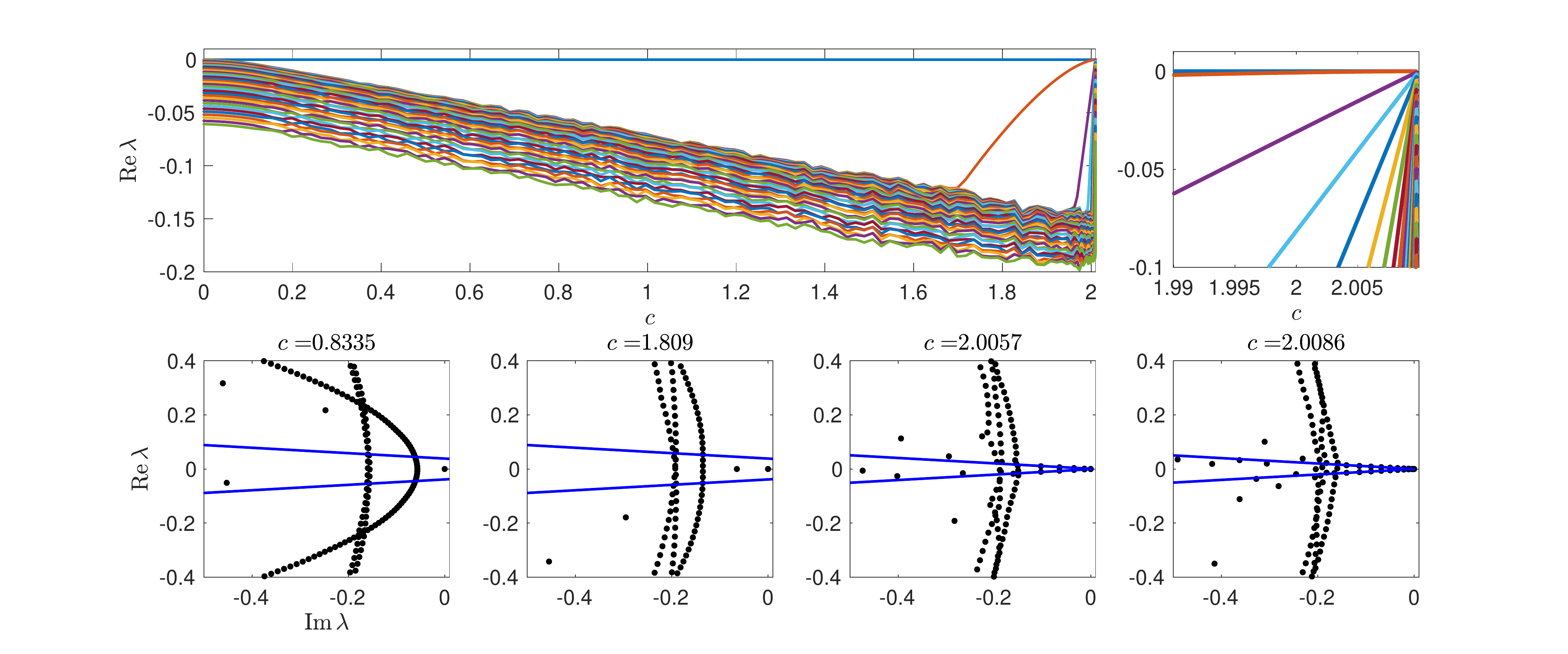}
\caption{Equation~\eqref{pertubeq} with parameters: $\alpha = -0.1,\gamma = -0.9$; (top): Plot of real parts of eigenvalues which are least negative, for speeds $c\in [0,c_\mathrm{lin}]$ (left) and $0 \ll c < c_\mathrm{lin}$ (right). The zero eigenvalue is denoted by the horizontal blue line on top, the $\mathcal{O}((\Delta c)^{3/2})$-eigenvalue is depicted in orange, and the other $\mathcal{O}(\Delta c)$-eigenvalues converge linearly to 0 near the right-hand side of the plot; (bottom row): Plots of eigenvalues for four quenching speeds $c$ with the absolute spectrum $\Sigma_{0,\mathrm{abs}}\cup\overline{\Sigma_{0,\mathrm{abs}}}$ of the plateau state (blue lines), all other spectra not depicted was found to be contained in the open left-half plane bounded away from the imaginary axis. }\label{fig:spec_ep2}
\end{figure}

\subsection{Stability of quenched pushed fronts}

We compare our findings in the supercritical cGL equation~\eqref{e:cgl0} with numerical results for the spectral stability and instability of quenched fronts in the cGL equation with a \emph{subcritical}, cubic-quintic nonlinearity,
\begin{align}
A_t = (1+\mathrm{i}\alpha) A_{xx} + \chi(x-ct)A +(\rho + \mathrm{i}\gamma)A|A|^2 - (1+\mathrm{i}\beta)A|A|^4,\qquad \rho>1,\label{e:cglp}
\end{align}
having dispersion parameters $\alpha,\gamma,\beta \in \R$. For $\chi\equiv1$ it is known that patterned fronts invade the unstable state $A\equiv0$ at a speed $c_\mathrm{p}$ greater than the linear spreading speed $c_\mathrm{lin} = 2\sqrt{1+\alpha^2}$. Since the strong nonlinear growth behind the front interface dictates the invasion process, such fronts are known as \emph{pushed fronts},~\cite{vS,van1992fronts}.   It was shown in~\cite{gs2} that for a quenching inhomogeneity, $\chi(\xi) = -\mathrm{sign}(\xi)$, traveling at speeds $c$ close to $c_\mathrm{p}$, the equation~\eqref{e:cglp} has pattern-forming front solutions. Here the interaction of the heterogeneity with the strongly decaying oscillatory tail of the front interface induces a multi-valued and non-monotonic wavenumber selection curve $(c,\omega) = (c(\xi_\tf),\omega_\tf(\xi_\tf))$, parameterized by the front interface distance, which takes the form of a deformed logarithmic spiral limiting on the free invasion parameters $(c_\mathrm{p},\omega_{p})$. The spirally nature of this curve indicates existence of multiple fronts for a single quenching speed $c$, and hence hysteretic front dynamics. Furthermore, this oscillatory tail interaction is reminiscent of ``collapsed snaking'' phenomenon in localized pattern-formation~\cite{TSEL}. The left plots of Figure~\ref{fig:pushed}, which are results of numerical continuation of the front solution, depict the wavenumber selection curve as well as the relationship of the front interface location and quenching speed $c$.

The right plot of Figure~\ref{fig:pushed} depicts the results of numerical eigenvalue computations of the linearization about front solutions to~\eqref{e:cglp}. Front solutions were obtained using numerical continuation as in the pulled case (see~\cite{gs2} for more detail on these computations) and the linearization was obtained using the same type of perturbation used to derive $\mathcal{L}_\tf$ in the pulled case above in~\S\ref{ss:numpulled}. Parameters $\alpha,\gamma,\rho,\beta$ were chosen such that the asymptotic periodic pattern is diffusively stable as a solution to~\eqref{e:cglp} with $\chi \equiv 1$.

\begin{figure}[h!]
\centering
\hspace{-0.5in}
\includegraphics[trim=0.3in 0.1in 0.3in 0.3in, clip,width = 0.5\textwidth]{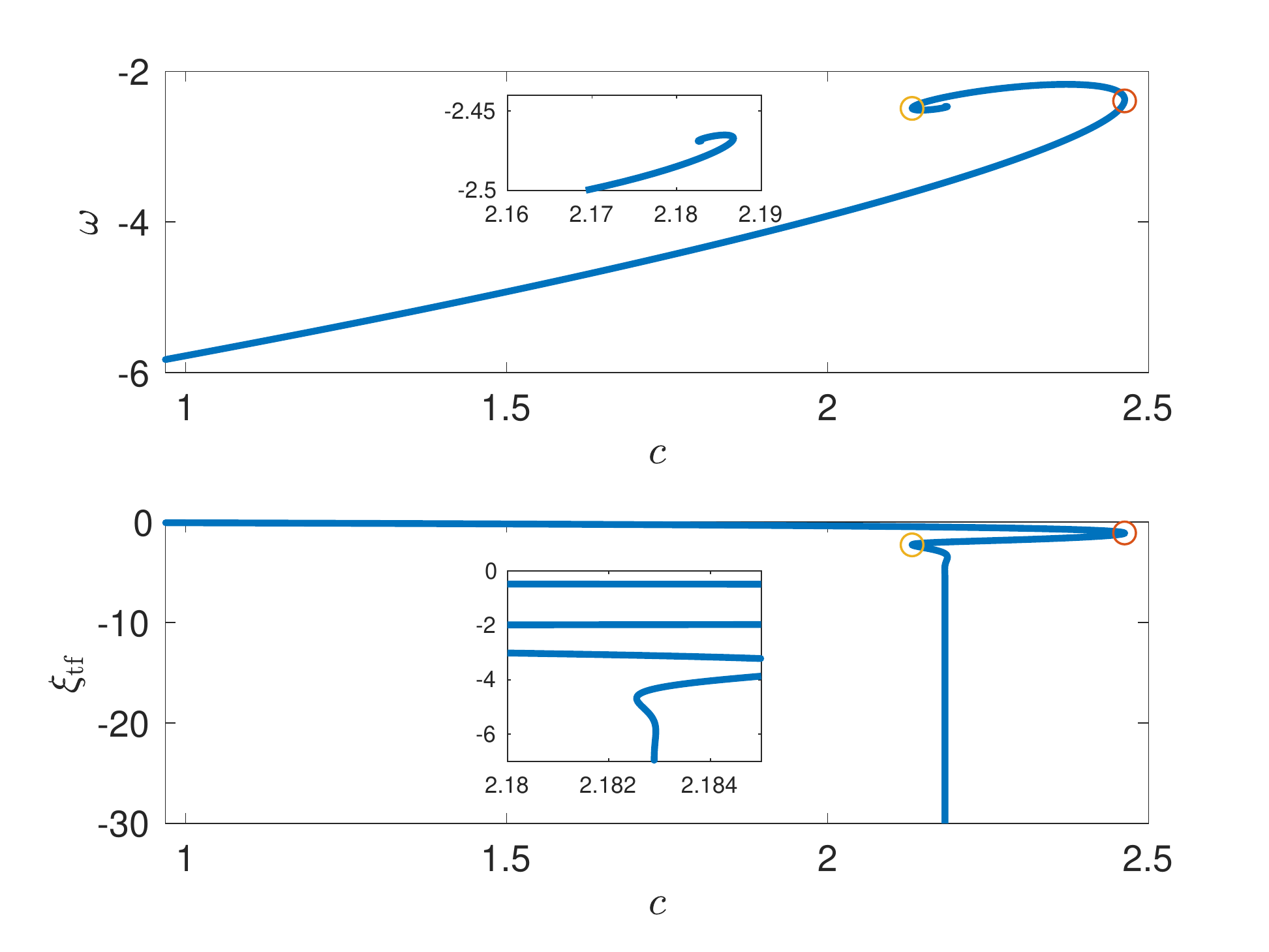}
\includegraphics[trim=0.2in 0in 0.25in 0in, clip,width = 0.5\textwidth]{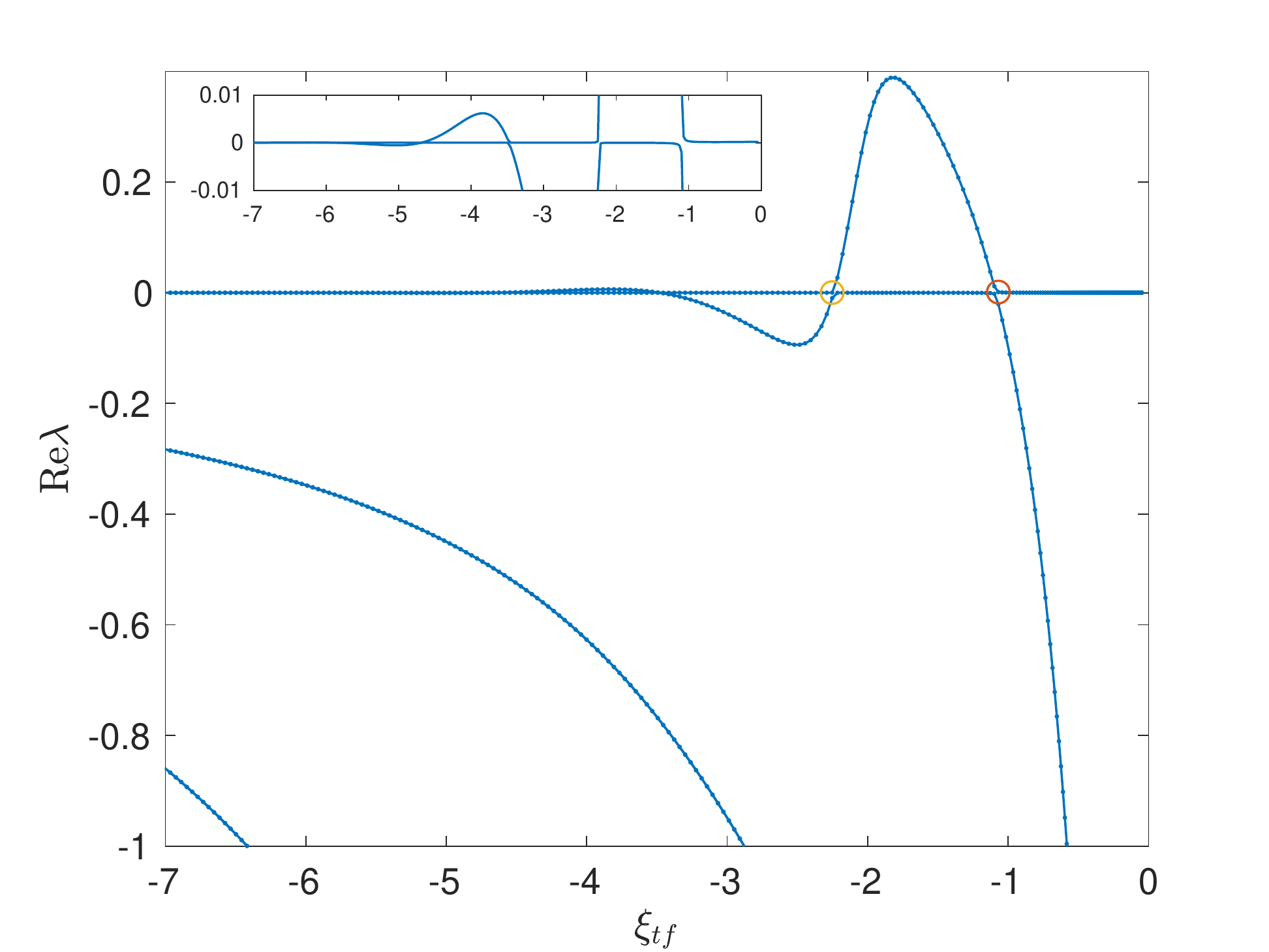}
\hspace{-0.5in}
\caption{Equation~\eqref{e:cglp} with parameters: $\alpha = 0.3,\gamma = -0.2;\rho = 4,\beta = 0.4;  c_\mathrm{p} = 2.183, \omega_\mathrm{p} = -2.468$, spatial domain $\xi\in[-100,100]$  (upper left): non-monotonic wavenumber selection curve spiralling into $(c_\mathrm{p},\omega_\mathrm{p})$; (lower left): plot of the front interface location $\xi_\tf$ against the quenching speed $c$; (right): results of eigenvalue computations of discretized linearization with $dx = 0.01$ and Neumann boundary conditions; Plot of real parts of right-most eigenvalues for a range of interface values $\xi_\tf\in(-7,0)$ which corresponds to $c$-values ranging roughly from 1 to 2.5. We note that the first two eigenvalues are purely real, with one fixed at 0 and the other oscillating back and forth. The other eigenvalues are bounded away from the imaginary axis in the left-half plane. In all plots, the orange and yellow circles denote the first two bifurcation points. Insets illustrate the oscillatory nature of the wavenumber selection curves and the corresponding fold-eigenvalue. }\label{fig:pushed}
\end{figure}

Further strengthening the connection to snaking phenomena, we find that as $\xi_\tf\rightarrow-\infty$, a single real eigenvalue controls the stability of the front, oscillating back and forth across the origin (which once again has an eigenvalue at zero due to the gauge symmetry present in~\eqref{e:cglp}), with bifurcations occurring at the fold points $c'(\xi_\tf) = 0$ of the wavenumber selection curve; see Figure~\ref{fig:pushed}. We anticipate such changes in stability could be rigorously tracked using an orientation index calculation similar to those in~\S\ref{sec:pointR1}. For such pushed fronts, we anticipate the analysis of the Riccati-Evans function to be less delicate due to the fact that we have $c>c_\mathrm{lin}$ which makes the plateau state $A\equiv0$ for $\xi\in(\xi_\tf,0)$ only convectively unstable, not absolutely unstable as is the case in this paper. This analysis will be the subject of future research.

\subsection{Nonlinear stability} \label{ss:nonlstab}

Looking outward from our results, a natural next step would be to consider nonlinear stability of the pattern-forming front solutions studied here.  Several approaches have been developed which could be implemented to obtain nonlinear stability in a suitable sense from the spectral results established in this paper. First, one could follow the approach of~\cite{B09,SAT77}. Posing the system in a co-moving frame with the quenching speed $c$, one could use an exponential weight to push continuous spectrum away from the imaginary axis, and then perform a center-manifold reduction onto the zero-mode coming from gauge symmetry. The small spectral gap caused by the presence of the $\mathcal{O}(\epsilon_1^2)$ and $\mathcal{O}(\epsilon_1^{3})$ stable eigenvalues would prohibitively restrict the size of perturbations considered.  To partially alleviate this, one could perform a center-stable manifold reduction onto the $0$-mode and the $\mathcal{O}(\|\epsilon\|^3)$-mode, with nearby dynamics foliated by the $\mathcal{O}(\|\epsilon\|^2)$ directions.

However, to obtain nonlinear stability against perturbations from spaces with translational invariant norms an alternative approach would have to be adopted. An option would be to use pointwise methods as developed in~\cite{BEC,Z98}. The fronts considered here behave somewhat like the source defects in~\cite{BEC}, where the heterogeneity at $\xi = 0$ behaves like the source center and the group velocity of waves points outwards towards $\xi = +\infty$, with $L^2$-spectrum consisting of a quadratic tangency at the origin and a single embedded eigenvalue at zero. One would hope to form the resolvent kernel using exponential dichotomies and formulate the temporal Green's function with the aid of a contour integral outside the essential spectrum. One would then obtain pointwise algebraic decay by separating the contribution of the 0-mode in the Green's function. Then, setting up a suitable perturbation ansatz, one would aim to characterize the precise pointwise decay of the perturbation in different regions of spacetime and close a nonlinear iteration scheme, showing that the perturbed front converges towards a time translate of the original front.

\subsection{Stability mechanism in other quenched pattern-forming models and higher spatial dimensions}\label{ss:push}
As mentioned in the introduction, we expect the phenomena and stability mechanisms observed here to be prototypical. In particular, we expect that, given any pattern-forming system where patterns are diffusively stable and invade an unstable homogenous state with a fixed speed predicted by the linear information ahead of the front (i.e.~the free invasion front is pulled), then, under directional quenching with speeds $0 \ll c < c_\mathrm{lin}$, the absolute spectrum of the base state governs both the leading-order behavior of the selected wavenumber of the pattern formed in the wake and the point spectrum of the linearization about the quenched front in an appropriate exponentially weighted space.  For example, such wavenumber selection dynamics have been observed in other important models of pattern formation, such as a directionally quenched Swift-Hohenberg model with supercritical cubic nonlinearity~\cite{avery2019growing,gs3}
$$
u_t = -(1+\partial_x^2)^2u + \chi(x-ct) u - u^3
$$ or the Cahn-Hilliard equation~\cite{GS15} $$u_t = -\partial_{x}^2\left[\partial_x^2u + \chi(x-ct) u - u^3\right],$$ and it would be interesting to characterize them for suitable reaction-diffusion models. In such models, since periodic patterns are generally not relative equilibria under the action of a gauge symmetry so temporal dynamics cannot be factored out, pattern-forming front solutions are \emph{modulated traveling waves}. This means the associated spatial dynamical system for the nonlinear existence problem as well as the linearized spectral problem is infinite-dimensional.  One would hope to perform a center-manifold analysis to reduce the eigenvalue problem down to a finite-dimensional ODE system whose dynamics resemble the system considered in this work. It would also be interesting to study the spectral stability problem for directionally quenched patterns in higher spatial dimensions where perturbations can act transversely to the direction of the front in addition to along it.

  \begin{Acknowledgment}
R.~Goh was partially supported by the National Science Foundation through grant NSF-DMS-1603416 and would like to acknowledge the kind hospitality of the IADM at the University of Stuttgart during several visits where some of this work was completed. B.~de Rijk would like to acknowledge the kind hospitality of Eugene Wayne during his visit at Boston University where part of this work was completed. The authors would like to acknowledge Margaret Beck for useful conversations which aided in this work.
  \end{Acknowledgment}

\appendix

\section{Exponential dichotomies} \label{appexpdi}

Exponential dichotomies enable us to track solutions in linear systems by separating the solution space in solutions that either decay exponentially in forward time or else in backward
time. For an extensive introduction on exponential dichotomies the reader is referred to~\cite{COP}. Throughout this paper, we employ the following standard definition.

\begin{definition} {\upshape
Let $n \in \N_{0}$, $J \subset \R$ an interval and $A \in C(J,\mathrm{Mat}_{n \times n}(\C))$. Denote by $\Phi(x,y)$ the evolution operator of
\begin{align} \phi_x = A(x)\phi, \quad \phi \in \C^n. \label{linsys}\end{align}
Equation~\eqref{linsys} has \emph{an exponential dichotomy on $J$ with constants $K,\mu > 0$ and projections $P(x) \colon \C^n \to \C^n$} if for all $x,y \in J$ it holds
\begin{itemize}
 \item $P(x)\Phi(x,y) = \Phi(x,y) P(y)$;
 \item $\|\Phi(x,y)P(y)\| \leq Ke^{-\mu(x-y)}$ for $x \geq y$;
 \item $\|\Phi(x,y)(I_n-P(y))\| \leq Ke^{-\mu(y-x)}$ for $y \geq x$.
\end{itemize}}
\end{definition}

\section{Properties of the essential spectrum and absolute spectrum} \label{appess}
\subsection{Essential spectrum}\label{appess1}
Let $k_\pm \in \R$ and $n \in \mathbb N$. The essential spectrum of a second-order operator $L$, posed on $L^2_{k_-,k_+}(\R,\C^n)$, given by
\begin{align*} LX = DX_{\xi\xi} + A_1(\xi) X_{\xi} + A_0(\xi) X, \end{align*}
where $D \in \C^{n \times n}$ is a positive matrix and $A_0,A_1 \colon \R \to \C^{n \times n}$ are piecewise continuous coefficient functions with well-defined limits $A_{i,\pm} = \lim_{\xi \to \pm \infty} A_i(\xi)$, is defined as the set of those $\lambda \in \C$ such that $L-\lambda$ is not Fredholm of index $0$. We shortly collect the necessary results to determine the essential spectrum. For more background information we refer to~\cite{KAP,SAN2}.

The essential spectrum of $L$ is determined by the \emph{spatial eigenvalues} of its limiting operators $L_\pm$ on $L^2_{k_-,k_+}(\R,\C^n)$ given by
\begin{align*} L_\pm X = D X_{\xi\xi} + A_{1,\pm} X_{\xi} + A_{0,\pm} X.\end{align*}
The spatial eigenvalues arise as the roots $\nu \in \C$ of the linear dispersion relations
\begin{align}
 \det\left(D \nu^2 + A_{1,\pm} \nu + A_{0,\pm} - \lambda\right) = 0, \label{spatialev}
\end{align}
which can be ordered by their real parts
\begin{align*} \Re\,\nu_{1,\pm}(\lambda) \leq \ldots \leq \Re\,\nu_{2n,\pm}(\lambda),\end{align*}
when counted with multiplicities. The \emph{Morse indices} $i_\pm(\lambda)$ equal the number of roots $\nu$ of~\eqref{spatialev} with $\Re(\nu) > k_\pm$. One then finds that $L-\lambda$ is Fredholm if and only if $\lambda$ lies in the intersection $\rho(L_+) \cap \rho(L_-)$ of the resolvent sets of $L_+$ and $L_-$. Using the Fourier transform one readily observes that this is precisely the case if there are no roots $\nu \in \C$ of~\eqref{spatialev} with $\Re(\nu) = k_\pm$. In addition, the Fredholm index of $L$ equals the difference $i_-(\lambda) - i_+(\lambda)$ of the Morse indices.

Thus, the essential spectrum
\begin{align*} \sigma_{\text{ess}}(L) = \C \setminus \{\lambda \in \rho(L_+) \cap \sigma(L_-) : i_-(\lambda) = i_+(\lambda)\},\end{align*}
is the union of $\sigma(L_+) \cup \sigma(L_-)$ together with some connected components of $\rho(L_+) \cap \rho(L_-)$. The second-order character of $L$ dictates that the spatial eigenvalues admit the splitting
\begin{align} \Re\,\nu_{n,\pm}(\lambda) < k_\pm < \Re\,\nu_{n+1,\pm}(\lambda). \label{splitev}\end{align}
for $\lambda > 0$ sufficiently large. Therefore, the right-most connected component of $\rho(L_+) \cap \rho(L_-)$ does not lie in the essential spectrum of $L$ and the right-most boundary of the essential spectrum of $L$ is contained in the set
\begin{align} \left\{\lambda \in \C : \Re\,\nu_{n,+(\lambda)} = k_+ \vee \Re\,\nu_{n+1,+}(\lambda) = k_+ \right\} \cup \left\{\lambda \in \C : \Re\,\nu_{n,-(\lambda)} = k_- \vee \Re\,\nu_{n+1,-}(\lambda) = k_- \right\}. \label{rightmost} \end{align}

\subsection{Absolute spectrum}\label{ss:app_absp}
The absolute spectrum of one of the asymptotic operators $L_\pm$ above is defined as the set
\begin{align*}
\Sigma_{\pm, \mathrm{abs}}:= \{ \lambda\in \C\,:\, \mathrm{Re}\,\nu_{n,\pm}(\lambda) = \mathrm{Re} \nu_{n+1,\pm}(\lambda)\}.
\end{align*}
The absolute spectrum of $L$ is then defined as the union $\Sigma_{\abs} = \Sigma_{+,abs} \cup \Sigma_{-,abs}$. Observe that for $\lambda$-values in $\Sigma_{\pm,\abs}$, one cannot choose a weight $k_\pm$ to recover the splitting~\eqref{splitev}. Thus, as the set~\eqref{rightmost} is contained in the essential spectrum, there must be essential spectrum of $L$ lying to the right of $\Sigma_{\abs}$, no matter the choice of $k_\pm \in \R$.

In addition to locating absolute instabilities we remark that the absolute spectrum is also important in the approximation of spectra of operators posed on large bounded domains. The work~\cite{SAN} showed that, when posed on a large bounded domain $[-h,h]\subset\R$ with separated boundary conditions, all but finitely many points of the spectrum of $L$ (which consists entirely of point spectrum) accumulate onto the absolute spectrum as $h\rightarrow+\infty.$

In the case of the constant-coefficient operator $\mathcal{L}_0$ defined in Section~\ref{ss:inv}, spatial eigenvalues are given as roots of the dispersion relation~\eqref{e:ldisp} and one can thus find curves of absolute spectrum by solving the following set of equations
$$
0 = d(\lambda,\nu;c),\qquad 0=d(\lambda,\nu + \mathrm{i}\ell;c),\qquad \ell\in \R,
$$
for $(\lambda,\nu)$ in terms of $\ell$. One readily finds
$$
\nu_{\mathrm{abs}}(\ell) = -\frac{c}{2(1+\mathrm{i}\alpha)} - \frac{\mathrm{i} \ell}{2},\qquad
\lambda_\mathrm{abs}(\ell) = 1 - \frac{c^2}{4(1+\mathrm{i}\alpha)} - \frac{(1+\mathrm{i}\alpha)\ell^2}{4}.
$$
Here the branch points discussed in Section~\ref{ss:inv} satisfy $$\lambda_{*,\mathrm{br}}(c) = \lambda_\mathrm{abs}(0) = 1 - \frac{c^2}{4(1+\mathrm{i}\alpha)}.$$
We remark that when we detune the linear operator with the change of variables $A\mapsto \re^{\ri \omega t}A$, this shifts the location of the absolute spectrum vertically in $\lambda$, so that $$\lambda_\mathrm{abs}(\ell) =  1 - \frac{c^2}{4(1+\mathrm{i}\alpha)} - \frac{(1+\mathrm{i}\alpha)\ell^2}{4} - \ri\omega.$$

\subsection{Analysis of spatial eigenvalues} \label{appesscalc}

In this appendix we analyze the spatial eigenvalues of the operator $\hat{\El}_\tf$, which confirms the claims made in~\S\ref{sec:ess}, explains the schematic depicted in Figure~\ref{fig:sp_ev} and thereby provides a proof of Theorem~\ref{concless}.

We will first show that the spatial eigenvalues $\nu_{i,+}(\lambda,\epsilon), i = 1,\ldots,4$ of $\hat{\El}_\tf$ at $+\infty$ have $\lambda$- and $\epsilon$-uniform spectral gaps at $\Re(\nu) = 1 + \Re\sqrt{2+2\ri \alpha}$ and $\Re(\nu) = 1 + \Re\sqrt{2+2\ri \alpha} + \kappa\sqrt{1+\alpha^2}$ for all $\lambda \in R_1 \cup R_2 \cup R_3$. Since it holds
\begin{align*}
\lim_{\epsilon \to (0,0)} \hat{\kappa}_+(\epsilon) := 1+\Re \sqrt{2+2\ri\alpha} +  \kappa \sqrt{1+\alpha^2}, \qquad \lim_{\epsilon \to (0,0)} \hat\kappa_0(\epsilon) = 1+\Re \sqrt{2+2\ri\alpha},
\end{align*}
by~\eqref{defkappa},~\eqref{limmum},~\eqref{hatzapprox} and~\eqref{defkapp0}, this proves the splitting~\eqref{split2} for $\lambda \in R_1 \cup R_2 \cup R_3$.

Next, we will show for $\lambda \in R_1$ that the spatial eigenvalues $\nu_{i,-}(\lambda,\epsilon), i = 1,\ldots,4$ of $\hat{\El}_\tf$ at $-\infty$ have a $\lambda$- and $\epsilon$-uniform spectral gap at $\Re(\nu) =  -\kappa\sqrt{1+\alpha^2}$. In addition, we will prove that the curve of $\lambda \in R_1$ such that there is a spatial eigenvalue $\nu$ of $\hat{\El}_\tf$ at $-\infty$ with $\Re(\nu) = 0$ lies in the open left-half plane, except for a parabolic touching with the imaginary axis at the origin. Finally, we will establish for $\lambda \in R_2 \cup R_3$ that the spatial eigenvalues $\nu_{i,-}(\lambda,\epsilon)$ of $\hat{\El}_\tf$ at $-\infty$ admit $\lambda$- and $\epsilon$-uniform spectral gaps at $\Re(\nu) = 0$ and $\Re(\nu) = -\kappa\sqrt{1+\alpha^2}$. Since it holds
\begin{align*}
\lim_{\epsilon \to (0,0)} \hat{\kappa}_-(\epsilon) = -\kappa\sqrt{1+\alpha^2},
\end{align*}
by~\eqref{defkappa}, this proves~\eqref{split1} for $\lambda \in R_2 \cup R_3$ and~\eqref{split3} for $\lambda \in R_1$. The proof of Theorem~\ref{concless} follows then as outlined in~\S\ref{sec:ess}.

\paragraph*{Leading-order expressions.} The four spatial eigenvalues $\nu_{i,-}(\lambda,\epsilon)$ associated with~\eqref{evprobmin} arise as the roots of the linear dispersion relation~\ref{spatev2}. By Theorem~\ref{t:ex02} and by identities~\eqref{defmm} and~\eqref{hatzapprox}, we find that $\nu_{i,-}(\lambda,\epsilon)$ depend continuously on $(\lambda,\epsilon)$ and satisfy
\begin{align*}
\nu_{1/2,-}(\lambda,0) = -1 - \sqrt{2 + \lambda \pm \sqrt{1 - \alpha^2 \lambda^2}}, \qquad
\nu_{3/4,-}(\lambda,0) = -1 + \sqrt{2 + \lambda \pm \sqrt{1 - \alpha^2 \lambda^2}}.
\end{align*}
Similarly, the four spatial eigenvalues $\nu_{i,+}(\lambda,\epsilon)$ associated with~\eqref{evprobplus}, which arise as the roots of the linear dispersion relation~\eqref{spatev1}, also depend continuously on $(\lambda,\epsilon)$ and satisfy
\begin{align*}
\nu_{1/2,+}(\lambda,0) = \sqrt{2 \pm 2 \ri \alpha} - \sqrt{(1 \pm \ri \alpha) (2 + \lambda)}, \qquad
\nu_{3/4,+}(\lambda,0) = \sqrt{2 \pm 2 \ri \alpha} + \sqrt{(1 \pm \ri \alpha) (2 + \lambda)}.
\end{align*}

\paragraph*{Analysis of spatial eigenvalues for $\lambda \in R_3$.} Provided $\Theta_2 > 1$ is sufficiently large and $\epsilon_1 > 0$ and $|\epsilon_2| \geq 0$ are sufficiently small, one readily observes that for $\lambda \in R_3(\theta_3,\Theta_2)$ it holds:
\begin{align*}
\left\|\frac{\nu_{1/2,\cdot}(\lambda,\epsilon)}{\sqrt{|\lambda|}} - \sqrt{\frac{\lambda}{|\lambda|} (1\pm \ri\alpha)}\right\|, \left\|\frac{\nu_{3/4,\cdot}(\lambda,\epsilon)}{\sqrt{|\lambda|}} + \sqrt{\frac{\lambda}{|\lambda|} (1\pm \ri\alpha)}\right\| \leq \frac{C}{\sqrt{|\lambda|}}.
\end{align*}
So, upon taking $\Theta > 1$ sufficiently large and $\theta_3 > 0$ sufficiently small, the spatial eigenvalues $\nu_{i,+}(\lambda,\epsilon)$ of~\eqref{evprobplus} admit a $\lambda$- and $\epsilon$-uniform spectral gap at $\Re(\nu) = 1 + \Re\sqrt{2+2\ri \alpha}$ and $\Re(\nu) = 1 + \Re\sqrt{2+2\ri \alpha} + \kappa\sqrt{1+\alpha^2}$, whereas the spatial eigenvalues $\nu_{i,-}(\lambda,\epsilon)$ of~\eqref{evprobmin} admit a $\lambda$- and $\epsilon$-uniform spectral gap at $\Re(\nu) = 0$ and $\Re(\nu) = -\kappa\sqrt{1+\alpha^2}$ for $\lambda \in R_3(\theta_3,\Theta_2)$.

\paragraph*{Analysis of spatial eigenvalues at $+\infty$.} Recall $\kappa > 0$ was fixed such that $\kappa\sqrt{1+\alpha^2} \in (0,\frac{1}{4})$. Therefore, $\Re \sqrt{(1 \pm \ri \alpha) (2 + \lambda)} \leq 1 + \kappa\sqrt{1+\alpha^2}$ implies $\Re(\lambda) \leq -\frac{7}{16}$. Hence, provided $\Theta_1,\theta_2 > 0$ sufficiently small, the spatial eigenvalues $\nu_{i,+}(\lambda,\epsilon)$ of~\eqref{evprobplus} have $\lambda$- and $\epsilon$-uniform spectral gaps at $\Re(\nu) = 1 + \Re \sqrt{2+2\ri \alpha}$ and at $\Re(\nu) =  1 + \Re \sqrt{2+2\ri \alpha} + \kappa\sqrt{1+\alpha^2}$ for $\lambda \in R_1(\Theta_1) \cup R_2(\theta_2,\Theta_1,\Theta_2)$.

\paragraph*{Analysis of spatial eigenvalues at $-\infty$.} At $\lambda = 0$ and $\epsilon = (0,0)$ the spatial eigenvalues of~\eqref{evprobmin} are $0,-2,-1\pm\sqrt{3}$. Hence, taking $\Theta_1 > 0$ sufficiently small, we may conclude that the spatial eigenvalues $\nu_{i,-}(\lambda,\epsilon)$ of~\eqref{evprobmin} admit a $\lambda$- and $\epsilon$-uniform spectral gap at $\Re(\nu) = -\kappa\sqrt{1+\alpha^2} \in (-\tfrac{1}{4},0)$ for $\lambda \in R_1(\Theta_1)$.

Next, we solve $\Re(\nu_{i,-}(\lambda,\epsilon)) = 0$ for $\lambda \in R_1(\Theta_1)$. Let $H \colon \C \times \R \times U$ be given by
\begin{align*}
 H(\lambda,l,\epsilon) = \det\begin{pmatrix} \displaystyle \vartheta(\ri l,\epsilon) -\frac{\lambda (1 - \ri \alpha)}{m(\epsilon)^2}  & - (1 + \ri \dell) \left(1- \hat{k}_\tf(\epsilon)^2\right)\\
-(1 - \ri \dell) \left(1- \hat{k}_\tf(\epsilon)^2\right) & \displaystyle \overline{\vartheta(-\ri l,\epsilon)}-\frac{\lambda (1 + \ri \alpha)}{m(\epsilon)^2}\end{pmatrix},
\end{align*}
where $\vartheta$ is defined as~\eqref{defvartheta}, and $U$ is a neighborhood of $(0,0)$ in $\R^2$. It holds $\partial_{\lambda} H(0,0,0) = -4\ri \neq 0$ and $H(0,0,\epsilon) = 0$ for all $\epsilon \in U$. Hence, by the implicit function theorem, we find a neighborhood $V \subset \R^3$ of $(0,0,0)$ and a locally unique curve $\lambda_* \colon V \to \C$ satisfying $H(\lambda_*(l,\epsilon),l,\epsilon) = 0$ and
\begin{align*}
\lambda_*(0,\epsilon) = 0, \qquad \partial_l \lambda_*(0,0,0) = 2\ri, \qquad \partial_{ll} \lambda_*(0,0,0) = -2 + 4 \alpha^2
\end{align*}
for all $(l,\epsilon) \in V$. Since by assumption it holds $\alpha^2 < 1/2$ the curve $\lambda_*(\cdot,\epsilon)$ intersects, upon making $V$ smaller if necessary, the closed right-half plane only in the origin as a parabolic curve. Hence, the set of $\lambda \in R_1(\Theta_1)$ such that there are spatial eigenvalues $\nu_{i,-}(\lambda,\epsilon)$ of~\eqref{evprobmin} having $\Re(\nu) = 0$ does not intersect the closed right-half plane, except at the origin as a parabolic curve.

Finally, we solve $\Re(\nu_{i,-}(\lambda,\epsilon)) = 0$ for $\lambda \in R_2(\theta_2,\Theta_1,\Theta_2)$. Thus, equating $F(\lambda,l,0) = 0$ yields two solution continuous curves $\lambda_\pm \colon \R \to \C$ given by
\begin{align*} \lambda_\pm(l) = \frac{-l^2+2 \ri l-1\pm\sqrt{-\alpha^2 l^4+4 \ri \alpha^2 l^3+2 \alpha^2 l^2+4 \ri \alpha^2 l+1}}{\alpha^2+1}.
\end{align*}
Clearly, for $|l|$ large, the curves lie in the left-half plane. Hence, if for some $\lambda \in R_2(\theta_2,\Theta_1,\Theta_2)$ there is a spatial eigenvalue $\nu_{i,-}(\lambda,0)$ of nonnegative real part, then one of the curves $\lambda_\pm(l)$ must intersect the imaginary axis, i.e.~there exists $l_*,\varrho \in \R$ such that $\lambda_{\pm}(l_*) = \ri \varrho$ . Consequently, it holds $H(\ri \varrho,l_*,0) = 0$. Equating real and imaginary parts of $H(\ri \varrho,l_*,0) = 0$ yields
\begin{align*}
(\varrho,l_*) = (0,0) \quad \text{ or } \quad (\varrho,l_*) = \left(\pm 2\sqrt{-2 + 4 \alpha^2}, \pm\sqrt{-2 + 4 \alpha^2}\right).
\end{align*}
Since $l_*$ must be real and $\alpha^2 < \frac{1}{2}$, it must hold $(\varrho,l_*) = (0,0)$. So, for $\theta_2 > 0$ sufficiently small, the curves $\lambda_\pm(l)$ do not intersect the region $R_2(\theta_2,\Theta_1,\Theta_2)$. Thus, it must hold $\Re(\nu_{1/2,-}(\lambda,0)) \leq -1 < -\kappa\sqrt{1+\alpha^2} < 0 < \Re(\nu_{3/4,-}(\lambda,0))$ for all $\lambda \in R_2(\theta_2,\Theta_1,\Theta_2)$. So, by compactness of the region $R_2$, we conclude that the spatial eigenvalues $\nu_{i,-}(\lambda,\epsilon)$ of~\eqref{evprobmin} admit a $\lambda$- and $\epsilon$-uniform spectral gap at $\Re(\nu) = -\kappa\sqrt{1+\alpha^2}$ and $\Re(\nu) = 0$ for $\lambda \in R_2(\theta_2,\Theta_1,\Theta_2)$.

We also note that since the eigenvalues $\nu_{1/2,j}(\lambda,\epsilon)$ and $\nu_{3/4,j}(\lambda,\epsilon)$ have uniformly separated real parts in $R_1\cup R_2\cup R_3$ for $j = \pm$, the absolute spectrum of the operator $\hat{\El}_\tf$ is contained in the open left-half plane bounded away from $\ri\R$.

\bibliographystyle{plain}
\bibliography{cgl_stab}

\end{document}